\noindent\rule{\textwidth}{0.8pt}%
\providecommand{\keywords}[1]{\noindent\textbf{\textit{Keywords: }} #1}
\newcommand\abs[1]{\lvert #1\rvert}
\newcommand\spn[1]{\langle #1 \rangle}
\newtheorem{THM}{Theorem}[section]
\newtheorem{LEM}[THM]{Lemma}
\newtheorem*{THMMAIN}{Theorem \ref{thm:summary-brw}}
\newtheorem{COR}[THM]{Corollary}
\newtheorem{PROP}[THM]{Proposition}
\theoremstyle{remark}
\newtheorem*{LEMsum}{Lemma~\ref{lem:sum}} 
\newtheorem*{THMcomputebases}{Theorem~\ref{thm:computebases}}
\newtheorem*{PROPFSroot}{Proposition~\ref{prop:FSroot}}
\newtheorem*{PROPfullset}{Proposition~\ref{prop:fullset}}
\theoremstyle{definition}
\newcommand\rank{\operatorname{rank}}
\newcommand\etabar{\overline{\eta}}
\newcommand\col{\operatorname{col}}
\newcommand\rt{\mathfrak{r}}
\newcommand\F{\mathbb F}
\newcommand\FF{\mathcal F}
\newcommand\FS{\operatorname{FS}_k}
\renewcommand\L{\mathcal L}
\newcommand\V{\mathcal V}
\newcommand\B{\mathfrak B}
\newcommand\T{\mathsf T}
\newcommand\sle\precsim
\newcommand\sge\succsim
\newcommand\seq\sim
\newcommand\tle{\preccurlyeq}
\newcommand\tge\succcurlyeq
\newcommand\teq\cong
\newcommand\ple\sqsubseteq
\newcommand\pge\sqsupseteq
\newcommand \splus \oplus
\newcommand \tplus \oplus
\newcommand \fplus {\operatorname{\oplus}}
\newcommand\trim{\operatorname{trim}}
\newcommand\I{\mathcal I}
\newenvironment{listspec}[1][]{
\mdframed
\textsc{#1}
\description\setlength{\itemsep}{0pt}}{\enddescription\endmdframed}
\newcommand\up{\operatorname{\sf up_k}}
\newcommand\Hlineny{Hlin{\v e}n{\'y}}
\begin{document}
\abovedisplayshortskip=-3pt plus3pt
\belowdisplayshortskip=6pt

\title{Finding branch-decompositions of matroids, hypergraphs,\\ and more\thanks{An extended abstract of this paper appeared in \cite{JKO2018}.}}
\author{Jisu Jeong%
    \thanks{Supported by
      the National Research Foundation of
      Korea (NRF) grant funded by the Korea government (MSIT)
      (No. NRF-2017R1A2B4005020).}%
}
\affil{\small NAVER AI LAB, NAVER Clova, NAVER Corporation, Seongnam, 13561, South Korea}

\author{Eun Jung Kim%
    \thanks{Supported by the French National Research Agency (ANR) under No.:~ANR-17-CE23-0010 
    and No.:~ANR-18-CE40-0025-01.}%
}
\affil{Universit\'{e} Paris-Dauphine, PSL Research University, CNRS, UMR 7243, LAMSADE, 75016, Paris,~France}
\author[$\dagger$34]{Sang-il Oum%
    \thanks{Supported by the Institute for Basic Science (IBS-R029-C1).}}
\affil[3]{Discrete Mathematics Group, Institute for Basic Science (IBS), Daejeon, 34126, South Korea}
\affil[4]{Department of Mathematical Sciences, KAIST, Daejeon, 34141, South Korea}
\affil[ ]{E-mail addresses: \texttt{jisujeong89@gmail.com},
\texttt{eunjungkim78@gmail.com}, \texttt{sangil@ibs.re.kr}
}

\date{\today}
\maketitle

\newcommand\decomposition{de\-com\-po\-si\-tion}

\begin{abstract}
  Given $n$ subspaces of a finite-dimensional vector space over a fixed finite field $\F$,
  we wish to find a ``branch-decomposition''
  of these subspaces of width at most~$k$
  that is a subcubic tree~$T$ with $n$ leaves mapped bijectively to the subspaces
  such that for every edge $e$ of~$T$,
  the sum of subspaces associated to the leaves in one component of $T-e$
  and the sum of subspaces associated to the leaves in the other component
  have the intersection of dimension at most~$k$.
  This problem includes the problems of computing branch-width of
  $\F$-represented matroids,
  rank-width of graphs,
  branch-width of hypergraphs,
  and carving-width of graphs.

  We present a fixed-parameter algorithm to construct
  such a branch-decomposition of width at most~$k$, if it exists,
  for input subspaces of a finite-dimensional vector space over~$\F$.
  Our algorithm is analogous to the algorithm of Bodlaender and Kloks (1996) 
  on tree-width of graphs.
  To  extend their framework to branch-decompositions of vector spaces,
  we developed highly generic tools for branch-decompositions on vector spaces.
  The only known previous fixed-parameter algorithm for
  branch-width of $\F$-represented matroids
  was due to Hlin\v en\'y and Oum (2008) that runs in time $O(n^3)$ 
  where $n$ is the number of elements of the input $\F$-represented matroid.
  But their method is highly indirect.
  Their algorithm uses the nontrivial fact by Geelen et al. (2003) 
  that the number of forbidden minors is finite
  and uses the algorithm of Hlin\v en\'y (2006) on checking monadic second-order formulas on $\F$-represented matroids of small branch-width.
  Our result does not depend on such a fact and is completely self-contained, and yet matches their asymptotic running time for each fixed~$k$.
\end{abstract}
\keywords{branch-width, rank-width, carving-width, matroid, fixed-parameter tractability}

\section{Introduction}\label{sec:intro}

Let $\F$ be a finite field and $r$ be a positive integer.
A \emph{subspace arrangement} $\V$ is a multiset of subspaces of $\F^r$,
which can be represented by an $r\times m$ matrix~$M$
with an ordered partition $\mathcal I=\{I_1,I_2,\ldots,I_n\}$ of $\{1,2,\ldots,m\}$
such that for every $1\le i\le n$
the $i$th element of~$\V$ is 
the column space of the submatrix of~$M$ induced by the columns in $I_i$.
Here, 
an \emph{ordered partition} $\mathcal I=\{I_1,I_2,\ldots,I_n\}$ of $\{1,2,\ldots,m\}$ is
a partition of $\{1,2,\ldots,m\}$ such that $x<y$ for all  $x\in I_i, y\in I_j$ with $i<j$. 
We remark that subspace arrangements can be regarded as representable \emph{partitioned matroids} used in \cite{HO2006}. A partitioned matroid is  a matroid equipped with a partition of its ground set.

Robertson and Seymour~\cite{RS1991} introduced the notion of branch-width for graphs, hypergraphs, and more generally, for connectivity functions.
We are going to define the branch-width of a subspace arrangement as follows.
First, a tree is \emph{subcubic} if every node has degree at most $3$.
We define a \emph{leaf} of a tree as a node of degree at most $1$.
A \emph{branch-decomposition} of~$\V$ is a pair $(T,\L)$ of a subcubic tree $T$ 
with no degree-$2$ nodes and a bijective function $\L$ from the set of all leaves of~$T$ to $\V$.
For a node~$v$ of~$T$ and an edge $e$ incident with~$v$, let us write $A_v(T-e)$ to denote the set of all leaves of~$T$ in the component of $T-e$ containing~$v$.
For a branch-decomposition $(T,\L)$ of~$\V$ and each edge $e=uv$ of~$T$, 
we define the \emph{width} of $e$ to be 
\[\dim \Bigl(\bigl(\sum_{x\in A_u(T-e)} \L(x)\bigr)\cap 
\bigl(\sum_{y\in A_v(T-e)} \L(y)\bigr)\Bigr).\]
The \emph{width} of $(T,\L)$ is the maximum width of all edges of~$T$.
(If $T$ has no edges, then the width of $(T,\L)$ is $0$.)
The \emph{branch-width} of~$\V$ is the minimum $k$ such that 
there exists a branch-decomposition of~$\V$ having width at most~$k$.

Here is our main theorem.
\begin{THM}\label{thm:summary-brw}
Let $\F$ be a finite field, let $r$ be a positive integer, and let $k$ be a nonnegative integer.
Let $\V=\{V_1,V_2,\ldots,V_n\}$ be a subspace arrangement in $\F^r$
where each $V_i$ is given by its spanning set of $d_i$ vectors and $m=\sum_{i=1}^n d_i$.
In time $O(rm^2+(k+1)rmn + k^3n^3 + f(\abs{\F},k) n^2)$ for some function $f$, one can either
 find a branch-decomposition of~$\V$ having width at most~$k$ or confirm that no such branch-decomposition exists.
\end{THM}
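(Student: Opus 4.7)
The plan is to adapt the two-phase strategy of Bodlaender and Kloks from tree-width of graphs to subspace arrangements. In the first phase, I would quickly produce a branch-decomposition $(T',\L')$ of $\V$ of width at most $g(k)$ for some function $g$, or correctly report that $\bw(\V)>k$; in the second phase, this approximate decomposition is used as the scaffold for a dynamic program that decides whether an exact width-$k$ decomposition exists (and constructs one if it does). The hope is that the dominant arithmetic cost $O(rm^2 + (k+1)rmn + k^3n^3)$ is absorbed by the approximation phase plus the linear-algebraic bookkeeping, while the combinatorial search contributes the $f(\abs{\F},k)n^2$ term.

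For the approximation phase, I would follow the Robertson--Seymour/Oum--Seymour paradigm for branch-width: iteratively split the current leaf set of a subcubic tree along a low-width separation of $\V$, certified by linear algebra over $\F$. Reducing the input matrix $M$ once by Gaussian elimination costs $O(rm^2)$, and the incremental updates to ranks of sums across edges during the splitting account for the $O((k+1)rmn)$ and $O(k^3 n^3)$ terms. The output is a branch-decomposition whose every edge has width bounded by some function of $k$, so in particular the ``cut subspace'' at each edge $e$, namely the intersection of the sums on the two sides, has dimension $\le g(k)$.

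The dynamic-programming phase processes $T'$ from its leaves toward a chosen root. For each rooted subtree of $T'$ carrying a sub-arrangement $\V'\subseteq\V$, I would maintain a set of \emph{characteristics}: each characteristic represents an equivalence class of partial width-$k$ branch-decompositions of $\V'$, where two partial decompositions are equivalent when they admit the same families of completions. Following the Bodlaender--Kloks philosophy, the characteristic should retain only enough information about each edge of the partial tree to describe how its cut subspace sits inside the reference subspace $W = \sum_{V\in\V'}V$ restricted to its interface with the rest of $\V$; since this interface has dimension $\le g(k)$ and $\F$ is finite, the number of equivalence classes is at most $f(\abs{\F},k)$. Merging the characteristic sets of the two children at an internal node is done by enumerating pairs and checking width-$k$ feasibility via linear algebra on subspaces of dimension $O(k)$.

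The main obstacle is designing the characteristic correctly. One must show both that (i) only $f(\abs{\F},k)$ classes arise and (ii) the equivalence is a \emph{congruence} for composition of partial decompositions, so that replacing a child's characteristic with an equivalent one cannot change feasibility at the parent. Typical sequences serve this purpose for tree-width, but no off-the-shelf analogue exists for branch-decompositions of subspace arrangements; developing it is where the paper's ``highly generic tools'' should enter. Once this is in place, the rest is routine bottom-up propagation along $T'$, with back-pointers to reconstruct an exact decomposition at the root.
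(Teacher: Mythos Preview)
Your two-phase plan diverges from the paper at the very first phase, and that is where the real gap lies. The paper does \emph{not} build an approximate branch-decomposition up front; it uses \emph{iterative compression}. Starting from the trivial decomposition of $\{V_1,V_2\}$, it adds one subspace at a time: given an exact width-$k$ decomposition of $\V_{i-1}$, attaching a single leaf for $V_i$ yields a decomposition of $\V_i$ of width at most $2k$ (since $\dim V_i\le k$ after preprocessing), and then the dynamic program is run on that. This is repeated $n-2$ times. The terms $O(rm^2)$ and $O((k+1)rmn)$ are pure preprocessing (row and column reduction); the $k^3n^3$ term is $n$ calls to the transcript computation at $O(k^3n^2)$ each; and $f(\abs{\F},k)n^2$ is $n$ runs of the dynamic program at $f(\abs{\F},k)n$ each. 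Your attribution of these costs to ``incremental rank updates during splitting'' and a single DP pass does not match the bound you are trying to prove: a single DP pass would cost $f(\abs{\F},k)n$, not $n^2$.

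More seriously, your approximation phase is not an algorithm. Invoking the ``Robertson--Seymour/Oum--Seymour paradigm'' and ``iteratively split along a low-width separation'' does not give a $g(k)$-width branch-decomposition of a subspace arrangement in the time you claim; the known approximation schemes for branch-width of symmetric submodular functions have much heavier polynomial dependence on $n$ and would dominate the stated bound. The whole point of the paper's iterative-compression detour is precisely to avoid needing any such approximation oracle and to remain self-contained. Your second-phase intuition (bounded-size characteristics because the boundary has dimension $\le g(k)$ over a finite field, a congruence property for merging) is on target and is exactly what the paper's $B$-namus, typical-sequence-style compactification, and totally pure decompositions accomplish; but you cannot reach that phase without either supplying a genuine approximation algorithm with the right running time or switching to iterative compression as the paper does.
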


Various width parameters of discrete structures have been introduced
and used for algorithmic and structural applications. 
One popular way of creating a width parameter of a discrete structure 
is to define it as the branch-width of some connectivity function
defined on that discrete structure.
Theorem~\ref{thm:summary-brw} immediately gives rise to  analogous algorithms
for many of them, such as 
carving-width of graphs, rank-width of graphs, and branch-width of graphs, hypergraphs, and
matroids. 
We will give a brief overview of each application.
Later Section~\ref{sec:appl} will discuss these applications and necessary definitions in detail.

\begin{itemize}
\item \textbf{Branch-width of matroids represented over a finite field $\F$.}
Let $\V=\{V_1,V_2,\ldots,V_n\}$ be a subspace arrangement in $\F^r$. 
If each $V_i$ is the span of a vector $v_i$ in $\F^r$ for each $i=1,2,\ldots,n$,
then 
$\V$ can be identified with the matroid $M$ represented by the vectors $v_1$, $v_2$, $\ldots$, $v_n$.
Furthermore, branch-width and branch-decompositions of $M$
are precisely branch-width and branch-decompositions of~$\V$, respectively.

\item \textbf{Rank-width of graphs.}
Rank-width, introduced by Oum and Seymour~\cite{OS2004}, is a width parameter of graphs expressing how easy it is to decompose a graph into a tree-like structure, called a \emph{rank-decomposition}, while keeping every edge cut to have a small ``complexity,'' called the \emph{width} of a rank-decomposition, where the complexity is measured by the matrix rank function. 
It will be shown in Section~\ref{sec:appl} that each vertex of a graph $G$ 
can be associated with a subspace of dimension at most $2$
so that the subspace arrangement $\V$ consisting of all subspaces associated with the vertices of~$G$ 
has branch-width $2k$ if and only if $G$ has rank-width~$k$. 
Furthermore, a branch-decomposition of~$\V$ of width $2k$ 
corresponds to a rank-decomposition of~$G$ of width~$k$.

\item \textbf{Branch-width of graphs and hypergraphs.}
Let $\F=GF(2)$ be the binary field and let $\{v_1,v_2,\ldots,v_n\}$ be the standard basis of $\F^n$.
For a hypergraph $G$ with $n$ vertices $v_1,v_2,\ldots,v_n$,
we associate each edge $e$ with the span of the vertices incident with $e$.
Let $\V$ be  the subspace arrangement consisting of all subspaces 
associated with the edges of~$G$.
Then it is straightforward to show that branch-width 
and branch-decomposition of~$G$ are precisely branch-width and branch-decomposition of~$\V$, respectively. 

\item \textbf{Carving-width of graphs.}
Seymour and Thomas~\cite{ST1994} introduced carving-width of graphs.
Let $\F=GF(2)$ be the binary field and let $\{e_1,e_2,\ldots,e_m\}$ be the standard basis of $\F^m$.
For a graph $G$ with edges $e_1,e_2,\ldots,e_m$, 
we associate each vertex $v$ with the span of the edges incident with~$v$.
If $\V$ is the subspace arrangement consisting of all subspaces associated with the vertices of~$G$, then carving-width and carving of~$G$ are precisely 
branch-width and branch-decomposition of~$\V$, respectively.
\end{itemize}

For the first two applications,
the analogous theorems were proved earlier by 
\Hlineny{} and Oum~\cite{HO2006}.
However, their approach was completely indirect; they use a nontrivial fact
shown by Geelen et al.~\cite{GGRW2003a}
that the class of matroids of branch-width at most~$k$
has finitely many forbidden minors, each having at most $O(6^k)$ elements.
Then they use a monadic second-order formula to describe
whether a matroid contains a fixed minor
and use the dynamic programming algorithm to decide a monadic second-order formula
aided by  a given branch-decomposition of bounded width.
So far this describes the decision algorithm of \Hlineny~\cite{Hlineny2004} that decides whether branch-width is at most~$k$.
On top of this algorithm, \Hlineny{} and Oum use a sophisticated reduction to modify the input 
and use the decision algorithm repeatedly to recover a branch-decomposition.
Roughly speaking, this reduction attaches a gadget
to the input matroid
and this step requires extending the underlying finite field to an extension field,
because this gadget is not representable if the underlying field is too small.
As the list of forbidden minors is unknown, 
their algorithm should generate the list of minor-minimal
matroids having branch-width larger than $k$.
Thus, even for small values of~$k$, it would be practically impossible
to implement their algorithm.
Contrary to the previous algorithm,
our algorithm does not depend on the finiteness of obstructions 
and yet matches their asymptotic running time for each fixed $k$.

We do not know any previous analogous theorems for branch-width of hypergraphs.
For branch-width of graphs, 
Thilikos and Bodlaender~\cite{TB2000}
posted a $50$-page technical report in 2000 proving that for every fixed $k$,
one can check in linear time whether a graph has branch-width at most~$k$ and, if so, output a branch-decomposition of minimum width. 
This work was presented at a conference in 1997~\cite{BT1997}. 
For carving-width of graphs, 
the conference paper of 
Thilikos, Serna, and Bodlaender~\cite{TSB2000}
presented 
a linear-time algorithm for each fixed $k$ that 
determines whether the carving-width of an input graph $G$ is at most~$k$
and, if so, constructs a carving of~$G$ with minimum carving-width.

Let us turn our attention to the ingredients of our algorithm.
We aim to solve the following problem.
  \begin{listspec}[Branch-Width]
  \item [Parameters:] A finite field $\F$ and an integer $k$.
  \item [Input:] An $r\times m$ matrix~$M$ over $\F$ with an ordered partition $\mathcal I=\{I_1,I_2,\ldots,I_n\}$ of $\{1,2,\ldots,m\}$ and an integer $k$.
  \item [Output:] A branch-decomposition $(T,\L)$ of width at most~$k$ of a subspace arrangement $\V$ consisting of the column space of the submatrix of $M$ induced by the columns in $I_i$ for each $i$
or a confirmation that the branch-width of~$\V$ is larger than $k$.
  \end{listspec}

We develop
the framework inspired by the approach of Bodlaender and Kloks~\cite{BK1996} on their work on tree-width.
(A similar framework was also given independently by
Lagergren and Arnborg~\cite{LA1991}.)
They created a linear-time algorithm that can find a tree-decomposition of width at most~$k$ or confirm that the tree-width of an input graph is larger than $k$ for each fixed $k$. 
They used dynamic programming based on a given tree-decomposition of bounded width.
For the dynamic programming, they designed a special encoding of all possible tree-decompositions of width at most~$k$ that can arise from certain parts of a graph.

Our algorithm also uses a dynamic
programming approach based on a given branch-\decomposition{} of small width. 
Then, how do we generate a branch-decomposition of small width 
in the first place? For this purpose, we use the technique called the iterative compression, a technique initiated by Reed, Smith, and Vetta~\cite{ReedSV04}. 
This technique reduces a problem to a ``compression version'' of the problem so that iteratively solving the compression version leads to a solution to the original problem. 
For our purpose, we consider the following compression version of the problem \textsc{Branch-Width}.

  \begin{listspec} [Branch-Width Compression] 
  \item [Parameters:] A finite field $\F$ and an integer $k$.
  \item [Input:] An $r\times m$ matrix~$M$ over $\F$ with an ordered partition $\mathcal I=\{I_1,I_2,\ldots,I_n\}$ of $\{1,2,\ldots,m\}$, a branch-decomposition $(T',\L')$ of~$\V$ of width at most $2k$, and  an integer $k$, where $\V$ is a subspace arrangement of the column spaces of the submatrix of $M$ induced by the columns in $I_i$ for each $i$.
  \item [Output:] A branch-decomposition $(T,\L)$ of width at most~$k$ of a subspace arrangement $\V$
or a confirmation that the branch-width of~$\V$ is larger than $k$.
  \end{listspec}

The next proposition is a simple proof of how the iterative compression solves the original problem   \textsc{Branch-Width}. 
The detailed analysis is in the proof of Theorem~\ref{thm:summary-brw}.
\begin{PROP}\label{prop:compression}
  \textsc{Branch-Width} can be solved by solving \textsc{Branch-width Compression} at most $n-2$ times.
\end{PROP}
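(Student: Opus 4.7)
The plan is to apply the iterative compression paradigm, processing the subspaces one at a time and invoking \textsc{Branch-Width Compression} each time a new subspace is added. Write $\V_i=\{V_1,\ldots,V_i\}$ for $2\le i\le n$. I maintain the invariant that either a branch-decomposition $(T_i,\L_i)$ of $\V_i$ of width at most $k$ is available, or the algorithm has already concluded that $\bw(\V)>k$. The base case $i=2$ is immediate: the only subcubic tree with two leaves and no degree-$2$ node is a single edge, whose width is $\dim(V_1\cap V_2)$; I keep this decomposition when $\dim(V_1\cap V_2)\le k$, and otherwise conclude $\bw(\V)>k$ by the monotonicity of branch-width under taking sub-arrangements.

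For the inductive step, assume $(T_i,\L_i)$ is in hand. I first test whether $\dim(V_{i+1}\cap \sum_{j\le i}V_j)\le k$; if this fails, I return failure, since every branch-decomposition of $\V_{i+1}$ has a pendant edge at the leaf mapped to $V_{i+1}$ whose width equals this quantity, so it is a lower bound on $\bw(\V_{i+1})\le \bw(\V)$. Otherwise, I construct an extended decomposition $(T'_{i+1},\L'_{i+1})$ of $\V_{i+1}$ by choosing any leaf $\ell$ of $T_i$ with neighbour $p$, subdividing the edge $\ell p$ with a new internal node $q$, and attaching a new leaf $m$ with $\L'(m)=V_{i+1}$ to $q$. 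I then feed $(T'_{i+1},\L'_{i+1})$ to \textsc{Branch-Width Compression}: it either returns a width-$k$ decomposition, which I take as $(T_{i+1},\L_{i+1})$, or certifies $\bw(\V_{i+1})>k$, in which case I conclude $\bw(\V)>k$.

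The key technical point, and what I expect to be the main obstacle, is to verify that $(T'_{i+1},\L'_{i+1})$ has width at most $2k$, so that it meets the precondition of \textsc{Branch-Width Compression}. The pendant edge $qm$ has width exactly $\dim(V_{i+1}\cap \sum_{j\le i}V_j)\le k$ by the preceding test. For any other edge, let $X$ and $Y$ denote the sums of subspaces on the two sides of that edge within $T_i$, where $Y$ is the side that will acquire $m$ after extension. The width of the edge in $T'_{i+1}$ is $\dim(X\cap(Y+V_{i+1}))$, and applying the linear map $v\mapsto v+Y$ to $X\cap(Y+V_{i+1})$ (with kernel $X\cap Y$, image sitting inside $(X+Y)/Y$) yields the identity
\[
\dim(X\cap(Y+V_{i+1}))-\dim(X\cap Y)=\dim(V_{i+1}\cap(X+Y))-\dim(V_{i+1}\cap Y).
\]
Since $X+Y=\sum_{j\le i}V_j$, the right-hand side is at most $\dim(V_{i+1}\cap \sum_{j\le i}V_j)\le k$, while $\dim(X\cap Y)\le k$ by the inductive hypothesis, so the edge has width at most $2k$. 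The two edges created by subdividing $\ell p$ are handled by the same identity with $X$ and $Y$ chosen appropriately.

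Iterating from $i=2$ up to $i=n-1$ invokes \textsc{Branch-Width Compression} once per step, for a total of at most $n-2$ calls. Once the dimension identity above is in hand, the rest of the argument is bookkeeping.
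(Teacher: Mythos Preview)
Your argument is correct and follows the same iterative compression scheme as the paper's proof sketch: build up $\V_i=\{V_1,\dots,V_i\}$, graft the new leaf onto a width-$k$ decomposition of $\V_{i-1}$, check that the result has width at most $2k$, and call \textsc{Branch-Width Compression}.

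The only noteworthy difference is how you certify the $2k$ bound. The paper defers this to its preprocessing step (Subsection~\ref{subsec:preprocessing}), which guarantees $\dim V_i\le k$ for every $i$; once that holds, adding a leaf can raise the width of any edge by at most $\dim V_i\le k$, and the bound is immediate. You instead avoid any preprocessing assumption by first testing $\dim\bigl(V_{i+1}\cap\sum_{j\le i}V_j\bigr)\le k$ (which is necessary for $\bw(\V_{i+1})\le k$ anyway, being the pendant width at the new leaf) and then using the identity
\[
\dim\bigl(X\cap(Y+V_{i+1})\bigr)-\dim(X\cap Y)=\dim\bigl(V_{i+1}\cap(X+Y)\bigr)-\dim(V_{i+1}\cap Y)
\]
to bound each edge. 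Your identity is correct (the map $v\mapsto v+Y$ has kernel $X\cap Y$ and image $(V_{i+1}\cap(X+Y)+Y)/Y$, of dimension $\dim(V_{i+1}\cap(X+Y))-\dim(V_{i+1}\cap Y)$), so the argument goes through. This makes your proof self-contained at the cost of a short computation, whereas the paper's version is a one-liner once preprocessing is in place.
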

\begin{proof}[Proof sketch]
  Let $V_i$ be the column space of the submatrix of $M$ induced by $I_i$ for each $i=1,2,\ldots,n$ and 
  let $\V=\{V_1,V_2,\ldots,V_n\}$. 
  We may assume that $n\ge 2$.
  Let $\V_i=\{V_1,V_2,\ldots,V_i\}$ and 
  let $M_i$ be the submatrix of $M$ consisting of the columns in $I_1\cup I_2\cup \cdots \cup I_i$ for each $i=2,3,\ldots,n$. Note that $M_i$ with  a partition $\mathcal I_i=\{I_1,I_2,\ldots,I_i\}$ corresponds to $\V_i$.
  We will later show that we may assume  $\abs{I_i}\le k$  by preprocessing in Subsection~\ref{subsec:preprocessing}. 

  For $i=2$, there is only one  branch-decomposition $(T_2,\L_2)$ of $\V_2$. If it has width larger than $k$, then we can stop, because the branch-width of $\V_i$ is less than or equal to the branch-width of~$\V$.
  So we may assume that $(T_2,\L_2)$ has width at most~$k$.

  Now let us assume that $i\in \{3,4,\ldots,n\}$
  and suppose that we have a branch-decomposition $(T_{i-1},\L_{i-1})$ of $\V_{i-1}$ with width at most~$k$. 
  Let us obtain $T_i'$ from $T_{i-1}$ by adding a leaf $v$ adjacent to a new node created by subdividing one edge arbitrary. We extend $\L_{i-1}$ to create $\L_{i}'$ by mapping $v$ to $V_i$.
  Then $(T_i',\L_i')$ is a branch-decomposition of $\V_i$ and 
  it can be easily shown that $(T_i',\L_i')$ has width at most $2k$, because $\dim V_i\le k$ by the outcome of preprocessing.
  Now we use \textsc{Branch-Width Compression} to find a branch-decomposition $(T_i,\L_i)$ of width at most~$k$ for~$\V_i$, or confirm that the branch-width of $\V_i$ is larger than $k$. If the branch-width of $\V_i$ is larger than $k$, then so is the branch-width of~$\V$ and therefore we can stop. 
  
  We repeat this for all $i\in \{3,4,\ldots,n\}$ and then $(T_n,\L_n)$ is a branch-decomposition of width at most~$k$ for $\V$. We called  \textsc{Branch-Width Compression} at most $n-2$ times to solve \textsc{Branch-Width}.
\end{proof}

Therefore, our algorithm focuses on solving \textsc{Branch-Width Compression}, in which 
we are given a branch-decomposition of width at most $2k$. We use a dynamic programming approach, taking advantage of having a tree-like structure of the given branch-decomposition.
To use a branch-decomposition for dynamic programming, 
we need a concept of a ``boundary,'' that plays the role of a bag in a tree-decomposition. For a branch-decomposition $(T,\L)$ of~$\V$ and an edge $e$ of~$T$, we consider the boundary $B$ as the intersection of the sum of subspaces associated to the leaves in one component of $T-e$ and the sum of subspaces associated to the leaves of the other component of $T-e$. As the branch-width is at most~$k$, the boundary $B$ has dimension at most~$k$. 
Furthermore we restrict our attention to the finite field $\F$ and so the number of subspaces of $B$ is finite. 
For the convenience of dynamic programming on branch-decompositions, 
we define \emph{transcripts} of a branch-decomposition in Section~\ref{sec:keyalg}, 
which is essentially a precomputed list of bases and linear transformations
useful for computing with boundaries.

As usual, we need a data structure to store partial solutions 
that may be extended to a branch-decomposition of width at most~$k$, if it exists.
We have two important aspects here.

First of all, we will restrict our search to a smaller set of branch-decompositions. 
Namely, if $(T',\L')$ is a given branch-decomposition of width at most $2k$ in 
\textsc{Branch-Width Compression},
then the algorithm will find a branch-decomposition of width at most~$k$ which is 
``well-behaved'' with respect to the given decomposition $(T',\L')$, which we call \emph{totally pure with respect to $(T',\L')$}.
To prove that this smaller search space is sufficient to certify that  branch-width is at most~$k$, 
in Section~\ref{sec:pure} we introduce two operations, fork and split,
that can modify a branch-decomposition of width~$k$
into another branch-decomposition of width at most~$k$.
By using these operations, 
we will show that if there is a branch-decomposition of width at most~$k$, 
then there is a branch-decomposition of width at most~$k$ that is ``well-behaved''
(totally pure) with respect to $(T',\L')$.

Second, we will define a \emph{$B$-namu} in Section~\ref{sec:keyalg}, which will be a data 
structure to express partial solutions.
Here, $B$ is going to be the boundary for some edge in $(T',\L')$.
A $B$-namu is, roughly speaking, 
a subcubic tree whose incidences are decorated by subspaces of $B$
and whose edges are labeled by a nonnegative integer 
so that it represents  the ``shadow'' of a branch-decomposition of width at most~$k$ on~$B$.
We will define an operation $\tau$ on $B$-namus that will compress a $B$-namu into a ``compact'' $B$-namu and prove that there are only finitely many compact $B$-namus of width at most~$k$, when $B$ has bounded dimension and $\F$ is a finite field. This operation $\tau$ on $B$-namus contains  the operation to compress an integer sequence introduced by Bodlaender and Kloks~\cite{BK1996} 
for their work on tree-decompositions.

\medskip
In the previous work \cite{JKO2016}, 
the authors found a similar algorithm for path-de\-com\-po\-si\-tions
of a subspace arrangement.
A path-decomposition of a subspace arrangement is a linearized variant of 
a branch-decomposition that restricts the subcubic trees 
to caterpillar trees.
They defined  $B$-trajectories, analogous to $B$-namus of this paper. 
Here are the key technical differences.

First, we improve the running time of an algorithm computing transcripts.
They also use transcripts for the algorithms, though this name was not introduced.
But their algorithm to compute a transcript was slow 
and so inevitably they presented two approaches,
one that is self-contained but only giving the running time $O(n^4)$ based on iterative compressions
and the other giving the running time $O(n^3)$ but using an earlier result of \Hlineny{} and Oum~\cite{HO2006}. 
If we adapt our new method to the result of \cite{JKO2016}, we also get an $O(n^3)$-time algorithm for path-decompositions of subspace arrangements
based on iterative compressions.

Second, the concept of totally pure branch-decompositions was not needed in \cite{JKO2016}.
For our algorithms, we sometimes insert a whole subtree into a branch-decomposition
and we had to make sure that our algorithm finds at least one branch-decomposition
of width at most~$k$ if it exists. That was guaranteed by proving that the operations of fork and split to modify a branch-decomposition do not increase the width.
For path-decompositions, %
there was no need to define such a concept.

\medskip
This paper is organized as follows. Section~\ref{sec:appl} will explain several applications of Theorem~\ref{thm:summary-brw} to various width parameters such as 
branch-width of matroids, graphs and hypergraphs, rank-width of graphs, and carving-width of graphs.
Recall that by Proposition~\ref{prop:compression}, at the heart of designing an efficient algorithm for \textsc{Branch-Width} lies an algorithm for \textsc{Branch-Width Compression}. 
In Section~\ref{sec:keyalg}, we present an overview of the algorithm for \textsc{Branch-Width Compression} along with key technical notions for this algorithm. 
Section~\ref{sec:pure} introduces the notion of totally pure branch-decompositions and proves that totally pure branch-decompositions of optimum width exist.
Section~\ref{sec:namu} proves properties of $B$-namus for a subspace $B$ of $\F^r$. 
Section~\ref{sec:fullset} establishes the correctness of the algorithm in Section~\ref{sec:keyalg} using results from Sections~\ref{sec:pure} and~\ref{sec:namu}. 
Section~\ref{sec:algo} describes our main algorithm for solving \textsc{Branch-Width}. This section encompasses the preprocessing step, manipulating the data structure, computing transcripts, and 
analyzing the running time of the algorithm in Section~\ref{sec:keyalg}
and culminates in the exposition of the algorithm for \textsc{Branch-Width} based on iterative applications of the algorithm in Section~\ref{sec:keyalg} as a subroutine.

\section{Applications}\label{sec:appl}

In this section, we will describe how our main theorem,
Theorem~\ref{thm:summary-brw}, can be applied to various width
parameters of graphs, hypergraphs, and matroids easily.
To introduce various width parameters efficiently, we first define the branch-width of a connectivity function. 
First of all, for a finite set $E$, 
a \emph{connectivity function} $f:2^E\to \mathbb Z$ is a function such that 
\begin{enumerate}[(i)]
\item $f(X)=f(E-X)$ for all $X\subseteq E$, 
\item $f(X)+f(Y)\ge f(X\cap Y)+f(X\cup Y)$ for all $X,Y\subseteq E$, 
\item $f(\emptyset)=0$.
\end{enumerate}
The \emph{branch-decomposition} of $E$ is a pair $(T,\L)$ of 
a subcubic tree without degree-$2$ nodes and a bijection $\L$
from the set of all leaves of~$T$ to $E$.
The \emph{width} of $(T,\L)$ is the maximum of $f(A_e)$
for all edges $e$ of~$T$ and a partition $(A_e,B_e)$ of $E$ induced by leaves in distinct components of $T-e$ with $\L$.
The \emph{branch-width} of $f$ is the minimum width of all possible branch-decompositions of $E$. 
(If $T$ has no edges, then we define its width to be $0$.)

For a set $X$ of vectors, $\spn{X}$ denotes the span of $X$ and 
we may write $\spn{x}$ instead of $\spn{X}$ if $X=\{x\}$ for convenience.
For a set $\V$ of subspaces, 
we denote $\spn{\V}=\sum_{V\in\V} V$.

\subsection{Application to branch-width of matroids}

A \emph{matroid} $M$ is a pair $(E,\mathcal{I})$ of 
a finite set $E$ and a set $\mathcal{I}$ of subsets of $E$ such that 
\begin{enumerate}[({I}1)]
\item $\emptyset\in \mathcal{I}$,
\item if $X\in \mathcal{I}$ and $Y\subseteq X$, then $Y\in \mathcal{I}$,
\item if $X,Y\in \mathcal{I}$ and $\abs{X}<\abs{Y}$,
then $Y-X$ has an element $e$ such that $X\cup \{e\}\in \mathcal{I}$.
\end{enumerate}
We call $E$ a \emph{ground set} and $\mathcal{I}$ a set of \emph{independent sets}.
See the book of Oxley~\cite{Oxley2011a} for more terminology on matroids.
For a subset $X$ of $E$, 
the \emph{rank} $r_{M}(X)$ is the maximum size of an independent subset of $X$
and the \emph{connectivity function} of a matroid $M$ is defined by 
\[\lambda_{M}(X)=r_{M}(X)+r_{M}(E-X)-r_{M}(E).\]
It is easy to verify that $\lambda_M$ is a connectivity function on $E$.

A \emph{branch-decomposition} of a matroid $M=(E,\mathcal{I})$ 
is a branch-decomposition of~$E$.
The \emph{width} of a branch-decomposition $(T,\L)$ of $M$
is measured by $\lambda_M$.
The \emph{branch-width} of a matroid $M$ is the branch-width of $\lambda_M$.

For a matrix $A$ over a field $\F$, 
let $E(A)$ be the set of all column vectors of $A$
and let $\mathcal{I}(A)$ be the set of all linearly independent subsets of $E(A)$.
Then it is known that $M(A)=(E(A),\mathcal{I}(A))$ is a matroid, which is called a \emph{vector matroid}.
We say that a matroid $M$ is \emph{$\F$-representable} 
if %
$M=M(A)$ for some matrix $A$ over a field $\F$, 
and such $A$ is an \emph{$\F$-representation} of $M$.
Let us say that a matroid is \emph{$\F$-represented} if it is given with its $\F$-representation.

A vector matroid $M=M(A)$ on the ground set $E=\{v_1,v_2,\ldots,v_n\}\allowbreak \subseteq \F^r$
is associated with a subspace arrangement $\V_{M}=\{\spn{v_1},\spn{v_2},\ldots,\spn{v_n}\}$.
For a subset $X$ of $E$, %
$r_{M}(X)=\dim \spn{X}$.
Thus, %
for each subset $X$ of $E$, %
\begin{align*}
\lambda_{M}(X)&=\dim \spn{X} +\dim\spn{E-X} - \dim\spn{E} \\
&=\dim (\spn{X} \cap \spn{E - X}).
\end{align*}
Therefore, the branch-width of $M$ is equal to 
the branch-width of $\V_{M}$.

As a corollary of Theorem~\ref{thm:summary-brw},
we have the following.

\begin{THM}\label{thm:main-matroid}
  Let $k$ be an integer.
  Given an input $n$-element matroid represented by an $r\times n$ matrix over a finite field $\F$ with $r\le n$, we can find its branch-decomposition of width at most~$k$
  or confirm that its branch-width is larger than $k$
  in time $f(k,\abs{\F})n^3$ for some function $f$.
\end{THM}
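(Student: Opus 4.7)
The plan is to derive Theorem~\ref{thm:main-matroid} directly from Theorem~\ref{thm:summary-brw} by encoding the input matroid as a subspace arrangement in the most natural way. Given the $r\times n$ representation matrix $A$ with columns $v_1,v_2,\ldots,v_n$, I would take the ordered partition $\mathcal I=\{\{1\},\{2\},\ldots,\{n\}\}$ of $\{1,2,\ldots,n\}$, so that the associated subspace arrangement is exactly $\V_M=\{\spn{v_1},\spn{v_2},\ldots,\spn{v_n}\}$ with $\dim V_i\le 1$ for every $i$. The identity derived in the excerpt,
\[
\lambda_M(X)=\dim\spn{X}\cap\spn{E-X},
\]
then immediately tells us that the widths of branch-decompositions of $M$ (with respect to $\lambda_M$) and of $\V_M$ (in the sense of Section~\ref{sec:intro}) coincide; in particular, their branch-widths agree, and any branch-decomposition produced for $\V_M$ is simultaneously a branch-decomposition of $M$.

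Now I would invoke Theorem~\ref{thm:summary-brw} on this input. The parameters become $m=\sum_{i=1}^n d_i=n$ (since each $d_i=1$) and we are given $r\le n$. Plugging these into the stated running time yields
\[
O\bigl(rm^2+(k+1)rmn+k^3n^3+f(\abs{\F},k)n^2\bigr)
=O\bigl(n\cdot n^2+(k+1)n\cdot n^2+k^3n^3+f(\abs{\F},k)n^2\bigr),
\]
which is bounded by $g(k,\abs{\F})n^3$ for the function $g(k,x)=c(k^3+k+1)+f(x,k)$ with a suitable absolute constant $c$. The algorithm either returns a branch-decomposition of $\V_M$ of width at most $k$, which is simultaneously a branch-decomposition of $M$ of width at most $k$, or certifies that no such decomposition exists, which by the equality of branch-widths certifies $\bw(M)>k$.

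There is essentially no obstacle here: the work is entirely in Theorem~\ref{thm:summary-brw}, and the corollary amounts to observing that the subspace-arrangement formulation strictly generalizes the $\F$-represented matroid formulation once one identifies a single column $v_i$ with the one-dimensional subspace $\spn{v_i}$. The only care needed is to check that $m=n$ in this encoding so that every term in the running time of Theorem~\ref{thm:summary-brw} becomes at most a polynomial in $n$ of degree at most $3$ (using $r\le n$), which is immediate.
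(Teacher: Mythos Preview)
Your proposal is correct and is exactly the approach the paper takes: the discussion immediately preceding Theorem~\ref{thm:main-matroid} derives $\lambda_M(X)=\dim\spn{X}\cap\spn{E-X}$ and identifies $M$ with the subspace arrangement $\V_M=\{\spn{v_1},\ldots,\spn{v_n}\}$, after which the theorem is simply stated ``as a corollary of Theorem~\ref{thm:summary-brw}'' with no further proof. Your computation with $m=n$ and $r\le n$ correctly verifies the $f(k,\abs{\F})n^3$ bound.
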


\subsection{Application to rank-width of graphs}\label{subsec:rankwidth}
We review the definition of rank-width.
For a graph $G=(V,E)$, 
let $A_G$ be the adjacency matrix $(m_{ij})_{i,j\in V}$ over the binary field $GF(2)$ such that 
\[
m_{ij}=
\begin{cases}
1 \qquad &\text{if $i$ is adjacent to $j$ in $G$,} \\
0 \qquad &\text{otherwise.}
\end{cases}
\]
For an $\abs{A}\times \abs{B}$ matrix~$M$ and $X\subseteq A$ and $Y\subseteq B$, 
we denote $M[X,Y]$ a submatrix of~$M$
induced by rows in~$X$ and columns in $Y$.
The \emph{cut-rank function} $\rho_G$ of a graph~$G$ is a function on subsets $X$ of $V$ 
such that 
\[\rho_G(X)=\rank (A_G[X,V-X]).\]
Oum and Seymour~\cite{OS2005} showed that the cut-rank function of a graph is 
a connectivity function on its vertex set.
A \emph{rank-decomposition} of a graph is a branch-decomposition of $V$.
The \emph{rank-width} of~$G$ is the branch-width of $\rho_G$.

Let $\F=GF(2)$ be the binary field and 
let $\{e_1,e_2,\ldots,e_n\}$ be the standard basis of $\F^n$. In other words, $e_i$ is the $0$-$1$ vector in $\F^n$ such that the $j$th entry is $1$ if and only if $j=i$.

For a graph $G$ 
with a vertex set $V(G)=\{1,2,\ldots,n\}$
and $i\in V(G)$, 
let $v_i$ be the sum of all vectors $e_j$ for all $j$ adjacent to $i$ in $G$.
In other words, $v_i$ is the $i$th column vector of $A_G$.
For all $i=1,2,\ldots,n$, let $V_i$ be the subspace of $\F^n$ spanned by $\{v_i,e_i\}$.
The following lemma in \cite{JKO2016} gives the 
relation between the rank-width of a graph $G$ 
and the branch-width of a subspace arrangement $\{V_1,V_2,\ldots,V_n\}$.

\begin{LEM}[{\cite[Lemma 8.1]{JKO2016}}]\label{lem:rankwidth}
Let $G$ be a graph and $V_i$ be a subspace defined as above for each $i\in V(G)$. Then
for $X\subseteq V(G)$,
\[
\dim \Bigl(\bigl(\sum_{i\in X} V_i\bigr) \cap \bigl(\sum_{j\in V(G)-X}V_j\bigr)\Bigr)=2\rho_G(X).
\]
\end{LEM}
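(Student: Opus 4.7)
The plan is to exploit the coordinate decomposition $\F^n=\spn{e_i:i\in X}\oplus\spn{e_j:j\in V(G)-X}$ and compute the intersection componentwise. For each $i\in V(G)$, write $v_i=v_i^X+v_i^{V-X}$, where $v_i^X=\sum_{j\sim i,\,j\in X}e_j$ is the projection onto $\spn{e_j:j\in X}$ and $v_i^{V-X}=\sum_{j\sim i,\,j\in V(G)-X}e_j$ is the projection onto $\spn{e_j:j\in V(G)-X}$. Let $W_X=\sum_{i\in X}V_i$ and $W_{V-X}=\sum_{j\in V(G)-X}V_j$.

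The first step is to show
\[
W_X=\spn{e_i:i\in X}\oplus\spn{v_i^{V-X}:i\in X}.
\]
Indeed, $W_X$ is spanned by $\{e_i:i\in X\}\cup\{v_i:i\in X\}$, and since $v_i^X\in\spn{e_j:j\in X}\subseteq W_X$ for $i\in X$, the vector $v_i^{V-X}=v_i-v_i^X$ also lies in $W_X$; this replaces each $v_i$ by $v_i^{V-X}$ in the spanning set. Directness is immediate because the two summands live in complementary coordinate subspaces of $\F^n$. Symmetrically,
\[
W_{V-X}=\spn{e_j:j\in V(G)-X}\oplus\spn{v_j^{X}:j\in V(G)-X}.
\]

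Now decompose an arbitrary $w\in\F^n$ as $w=w_X+w_{V-X}$ along this coordinate split. Membership in $W_X$ forces $w_{V-X}\in\spn{v_i^{V-X}:i\in X}$ and leaves $w_X$ free in $\spn{e_i:i\in X}$; membership in $W_{V-X}$ forces $w_X\in\spn{v_j^X:j\in V(G)-X}$ and leaves $w_{V-X}$ free. Intersecting the two conditions and using that the two coordinate blocks are independent,
\[
\dim\bigl(W_X\cap W_{V-X}\bigr)=\dim\spn{v_j^X:j\in V(G)-X}+\dim\spn{v_i^{V-X}:i\in X}.
\]

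Finally, observe that $\spn{v_j^X:j\in V(G)-X}$ is exactly the column space of $A_G[X,V(G)-X]$ (rewritten in the basis $\{e_i:i\in X\}$), so its dimension equals $\rank A_G[X,V(G)-X]=\rho_G(X)$. Likewise, $\spn{v_i^{V-X}:i\in X}$ is the column space of $A_G[V(G)-X,X]=A_G[X,V(G)-X]^{\T}$ (the adjacency matrix of $G$ is symmetric over $GF(2)$), which has the same rank $\rho_G(X)$. Summing yields $2\rho_G(X)$, as required. The proof is essentially bookkeeping; the only mildly delicate point, and the one I would be most careful about, is justifying that the two spanning-set manipulations really produce direct sums, which follows from the block structure of the coordinate decomposition.
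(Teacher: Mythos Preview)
Your proof is correct. The paper does not actually prove this lemma; it quotes it from \cite{JKO2016} without argument, so there is no in-paper proof to compare against. Your coordinate-block approach---splitting $\F^n$ along $X$ and $V(G)-X$, rewriting each $W_X$ as $\spn{e_i:i\in X}\oplus\spn{v_i^{V-X}:i\in X}$, and then reading off the intersection as a direct sum of two column spaces each of rank $\rho_G(X)$---is clean and self-contained. The only step worth a second glance is the equivalence ``$w\in W_X$ iff $w_{V-X}\in\spn{v_i^{V-X}:i\in X}$,'' and you handle it correctly: the direct-sum decomposition of $W_X$ lines up with the coordinate split, so the projection of any element of $W_X$ onto the $(V-X)$-block must land in $\spn{v_i^{V-X}:i\in X}$, and conversely any such pair assembles to an element of $W_X$.
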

By Lemma~\ref{lem:rankwidth}, we can deduce the following.
\begin{THM}\label{thm:main-rank-width}
  Given an input $n$-vertex graph $G$, 
  we can find a rank-\decomposition{} of width at most~$k$ 
  or confirm that the rank-width of~$G$ is larger than $k$
  in time $f(k)n^3$ for some function $f$.
\end{THM}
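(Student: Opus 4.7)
The plan is to reduce the rank-width problem to the subspace-arrangement problem solved by Theorem~\ref{thm:summary-brw}, using the construction recalled in Subsection~\ref{subsec:rankwidth}. First I would, in $O(n^2)$ time, build an $n \times 2n$ matrix $M$ over $\F = GF(2)$ whose columns, grouped in consecutive pairs $I_i = \{2i-1, 2i\}$ for $i=1,\ldots,n$, span the subspace $V_i = \spn{v_i, e_i}$. Here $v_i$ is the $i$-th column of $A_G$, so $M$ simply interleaves the columns of $A_G$ with the standard basis vectors $e_1,\ldots,e_n$, and $\mathcal{I}=\{I_1,\ldots,I_n\}$ is by construction an ordered partition of $\{1,\ldots,2n\}$ as required by Theorem~\ref{thm:summary-brw}.

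Next, I would invoke the algorithm of Theorem~\ref{thm:summary-brw} on $(M,\mathcal{I})$ with width bound $2k$ in place of $k$. For correctness, observe that any subcubic tree $T$ with leaves bijected to $\V=\{V_1,\ldots,V_n\}$ doubles, under the bijection $V_i \leftrightarrow i$, as a subcubic tree whose leaves biject to $V(G)$; hence the same pair $(T,\L)$ is simultaneously a branch-decomposition of $\V$ and a rank-decomposition of $G$. By Lemma~\ref{lem:rankwidth}, on each edge of $T$ inducing a bipartition $(X,V(G)-X)$, the width as a branch-decomposition of $\V$ equals $2\rho_G(X)$, exactly twice its width as a rank-decomposition. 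Consequently, $\V$ admits a branch-decomposition of width at most $2k$ if and only if $G$ admits a rank-decomposition of width at most $k$, and each one is converted into the other in $O(n)$ time.

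For the running time, substituting $r=n$, $m=2n$, $\abs{\F}=2$, and parameter $2k$ into the bound $O(rm^2 + (k+1)rmn + k^3 n^3 + f(\abs{\F},k)n^2)$ of Theorem~\ref{thm:summary-brw} yields $O(n^3 + kn^3 + k^3 n^3 + f(2,2k) n^2)$, which is $O(g(k)\, n^3)$ for some function $g$. Together with the $O(n^2)$ preprocessing, this meets the claimed $f(k)\, n^3$ bound. The only substantive point beyond routine bookkeeping is the factor-of-two conversion between the widths, which is handled cleanly by feeding Theorem~\ref{thm:summary-brw} the parameter $2k$ rather than $k$; I do not foresee any real obstacle beyond correctly invoking Theorem~\ref{thm:summary-brw} and Lemma~\ref{lem:rankwidth}.
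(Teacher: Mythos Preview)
Your proposal is correct and follows essentially the same route as the paper: construct the subspace arrangement $\V=\{V_1,\ldots,V_n\}$ over $GF(2)$ in $O(n^2)$ time, invoke Theorem~\ref{thm:summary-brw} with parameter $2k$, use Lemma~\ref{lem:rankwidth} for the factor-of-two correspondence between branch-width of $\V$ and rank-width of $G$, and substitute $r=n$, $m=2n$ into the running-time bound. The paper's proof is nearly identical, differing only in minor presentational details.
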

\begin{proof}[Proof assuming Theorem~\ref{thm:summary-brw}]
For a given graph $G$ with a vertex set $V(G)=\{1,2,\ldots,n\}$,
we define a subspace $V_i$ for each $i\in \{1,2,\ldots,n\}$ as above, which can be done in time $O(n^2)$. 
Let $\V=\{V_1,V_2,\ldots,V_n\}$.
By Lemma~\ref{lem:rankwidth},
the branch-width of~$\V$ is precisely twice the rank-width of~$G$.

By applying Theorem~\ref{thm:summary-brw},
we can find 
a branch-decomposition $(T,\L)$ of~$\V$ of width at most $2k$
or confirm that the branch-width of~$\V$ is larger than $2k$
in time $O(n\cdot (2n)^2 + (2k+1) \cdot n \cdot 2n \cdot n+ (2k)^3 \cdot n^3 + f(2,2k) \cdot n^2)=g(k) n^3$ for some function~$g$.

If we have  a branch-decomposition $(T,\L)$ of~$\V$ of width at most $2k$, 
then we can convert it to 
a rank-decomposition $(T,\L')$ of~$G$ of width at most~$k$
by constructing $\L'$ so that 
$\L(v)=V_i$ if and only if $\L'(v)=i$. 
This completes the proof.
\end{proof}

We remark that our method here can be extended naturally for $\F$-rank-width, a concept introduced by Kant\'e and Rao~\cite{KR2013}. 
According to the terminology of \cite{KR2013}, we may state as follows.
\begin{THM}
  Let $\F$ be a finite field and $\F^*=\F\setminus\{0\}$.
  Let $k$ be an integer.
  For an input $n$-vertex $\F^*$-graph, 
  we can either find an $\F$-rank-decomposition of width at most~$k$
or confirm that its $\F$-rank-width  is larger than $k$
 in time $f(\abs{\F}, k) n^3$ for some function $f$.
\end{THM}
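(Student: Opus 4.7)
The plan is to imitate the proof of Theorem~\ref{thm:main-rank-width} by associating each vertex with a subspace of dimension at most $2$ and applying Theorem~\ref{thm:summary-brw}. Let $G$ be an $\F^*$-graph on vertex set $V(G)=\{1,2,\ldots,n\}$, and let $A_G$ be its $\F$-valued adjacency matrix (with entry in $\F^*$ on each edge, $0$ elsewhere). Let $\{e_1,\ldots,e_n\}$ be the standard basis of $\F^n$, and for each $i\in V(G)$ let $v_i$ denote the $i$-th column of $A_G$. Define $V_i=\spn{v_i,e_i}\subseteq \F^n$; observe that $\dim V_i\le 2$ and each $V_i$ is given by a spanning set of at most $2$ vectors, so $m\le 2n$. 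Let $\V=\{V_1,\ldots,V_n\}$.

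The first step is to prove the $\F$-analog of Lemma~\ref{lem:rankwidth}: for every $X\subseteq V(G)$,
\[
\dim\left(\sum_{i\in X}V_i\right)\cap\left(\sum_{j\in V(G)\setminus X}V_j\right)=2\,\rho_G(X),
\]
where $\rho_G(X)=\rank A_G[X,V(G)\setminus X]$ is the $\F$-cut-rank. The argument is essentially identical to the binary case in \cite{JKO2016}: a vector in the intersection is determined by its projection onto coordinates indexed by $X$ and by $V(G)\setminus X$, and the two projections are forced to lie in the column space and row space of $A_G[X,V(G)\setminus X]$ respectively; since the $e_i$'s on either side are free, the intersection splits as a direct sum of two isomorphic copies, each of dimension $\rho_G(X)$. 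With this identity in hand, the branch-width of $\V$ equals $2\,\rw_\F(G)$, where $\rw_\F$ denotes $\F$-rank-width.

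Now I would apply Theorem~\ref{thm:summary-brw} with parameter $2k$ to $\V$. Since $r\le n$ and $m\le 2n$, the running time becomes $O(rm^2+(2k+1)rmn+(2k)^3n^3+f(\abs{\F},2k)n^2)=g(\abs{\F},k)\,n^3$ for some function $g$. The algorithm either returns a branch-decomposition $(T,\L)$ of $\V$ of width at most $2k$ or confirms that the branch-width of $\V$ exceeds $2k$; in the latter case, by the identity above, $\rw_\F(G)>k$ and we report this. Otherwise, we convert $(T,\L)$ into an $\F$-rank-decomposition $(T,\L')$ of $G$ by setting $\L'(v)=i$ whenever $\L(v)=V_i$; this is a valid $\F$-rank-decomposition of width at most $k$ because every edge cut witness has its boundary dimension exactly $2\,\rho_G(X)$ by the identity.

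The only step that is not an immediate transcription from Theorem~\ref{thm:main-rank-width} is the dimension identity. The main obstacle is therefore to check that the two coordinate-projection arguments still work over arbitrary finite $\F$: one must verify that the kernel of the map sending a representative in the intersection to its pair of ``projections'' is trivial, and that every pair arising from $\rank A_G[X,V(G)\setminus X]=\rho_G(X)$ can be lifted. Both follow by a routine linear algebra computation using the freedom of the $e_i$ coordinates, and no new idea beyond the $GF(2)$ case is needed.
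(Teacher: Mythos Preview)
Your proposal is correct and follows exactly the approach the paper intends: the paper does not give a separate proof of this theorem but merely remarks that the method for Theorem~\ref{thm:main-rank-width} ``can be extended naturally'' to $\F$-rank-width, and your sketch carries out precisely that extension. The only point worth noting is that the dimension identity $\dim(\sum_{i\in X}V_i)\cap(\sum_{j\notin X}V_j)=2\rho_G(X)$ relies on the adjacency matrix of an $\F^*$-graph being symmetric (so that $\rank A_G[X,Y]=\rank A_G[Y,X]$), which is indeed the case in the Kant\'e--Rao setting; your linear-algebra outline correctly reduces the intersection to $\operatorname{col}(A_G[Y,X])\oplus\operatorname{col}(A_G[X,Y])$.
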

\paragraph{Clique-width of graphs}
Rank-width was introduced by Oum and Seymour~\cite{OS2004} to augment the notion of clique-width, another width parameter defined by Courcelle and Olariu~\cite{CO2000}. 
Clique-width is defined as the minimum $k$ such that $G$ can be expressed as a \emph{$k$-expression}, which also describes a tree-like structure of graphs.
 
For a positive integer $k$, a \emph{$k$-expression} is an algebraic expression that consists of the following operations on graphs whose vertices are labeled by integers in $\{1,2,\ldots,k\}$.
\begin{itemize}
 \item $\cdot_i$: a graph with a single vertex labeled by $i\in \{1,2,\ldots,k\}$.
 \item $G_1 \oplus G_2$: the disjoint union of two vertex-labeled graphs $G_1$ and $G_2$.
 \item $\eta_{i,j}(G)$ with $i\neq j$: adding an edge from each pair of a vertex of label $i$ and a vertex of label $j$.
 \item $\rho_{i\to j}(G)$: relabeling all vertices of label $i$ to label $j$.
\end{itemize}
The \emph{clique-width} of a graph $G$ is the minimum $k$ such that there is a $k$-expression representing $G$ after ignoring labels of the vertices \cite{CO2000}.
Fellows, Rosamond, Rotics, and Szeider~\cite{FRRS2009} showed that 
it is NP-hard to compute the clique-width.

Oum and Seymour~\cite{OS2004} designed an algorithm that, in time
$O(n^2)$, 
converts a rank-\decomposition{} of width~$k$ of an input $n$-vertex graph 
into a $(2^{k+1}-1)$-expression and, conversely, 
converts a $k$-expression into a rank-decomposition of width at most~$k$. 
Using this fact, 
we deduce the following corollary from Theorem~\ref{thm:main-rank-width}.
\begin{COR}\label{cor:cwd}
For an input $n$-vertex graph $G$ and a positive integer $k$, 
in time $f(k)n^3$ for some function $f$,
we can find a $(2^{k+1}-1)$-expression of~$G$ %
or confirm that $G$ has clique-width larger than $k$.
\end{COR}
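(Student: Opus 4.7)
The plan is to obtain the conclusion by composing Theorem~\ref{thm:main-rank-width} with the conversion algorithm of Oum and Seymour cited just above the statement, using the standard inequality between rank-width and clique-width to justify the negative answer.

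First, I would run the algorithm of Theorem~\ref{thm:main-rank-width} on the input graph $G$, which in time $f_1(k) n^3$ either produces a rank-decomposition of $G$ of width at most $k$ or certifies that the rank-width of $G$ exceeds $k$. In the latter case, since Oum and Seymour~\cite{OS2004} proved that the rank-width of any graph is at most its clique-width, a rank-width exceeding $k$ forces the clique-width to exceed $k$ as well, and the algorithm halts by returning this confirmation.

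In the former case, I would feed the rank-decomposition of width at most $k$ into the Oum--Seymour conversion routine, which in time $O(n^2)$ produces a $(2^{k+1}-1)$-expression of $G$. Returning this expression completes the output. The total running time is $f_1(k) n^3 + O(n^2)$, which is bounded by $f(k) n^3$ for a suitable function $f$, matching the claimed bound.

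There is essentially no genuine obstacle here: the result is a direct corollary and the proof is a two-line combination of Theorem~\ref{thm:main-rank-width} with the known bound $\rw(G) \le \operatorname{cwd}(G) \le 2^{\rw(G)+1}-1$ and the algorithmic conversion between rank-decompositions and $k$-expressions. The only subtle point worth spelling out is that a negative answer in terms of rank-width correctly transfers to a negative answer in terms of clique-width precisely because $\rw(G) \le \operatorname{cwd}(G)$, so no separate obstruction analysis for clique-width is needed.
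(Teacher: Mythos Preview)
Your proposal is correct and matches the paper's approach exactly: the paper does not even spell out a proof, simply stating that the corollary is deduced from Theorem~\ref{thm:main-rank-width} together with the Oum--Seymour conversion between rank-decompositions and $k$-expressions, which is precisely the two-step argument you give.
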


\subsection{Application to branch-width of graphs and hypergraphs}

A \emph{branch-decomposition} of a hypergraph $G=(V,E)$ 
is a branch-decomposition of $E$. 
For a subset $X$ of $E$, we define $\beta_G(X)$ as the number of
vertices incident with both an edge in~$X$ and an edge not in~$X$.
The \emph{branch-width} of~$G$ is the branch-width of $\beta_G$.

\begin{LEM}\label{lem:hypergraphdensity}
  Let $G$ be a hypergraph on $n$ vertices and $m$ edges
  having no parallel edges.
  If the branch-width of~$G$ is at most~$k$,
  then $m\le 2^{2k}n$.
\end{LEM}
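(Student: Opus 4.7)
The plan is to fix a branch-decomposition $(T,\L)$ of $G$ of width at most $k$, which exists by hypothesis, and to bound its number of leaves $m$ in terms of $n$. For each leaf $\ell$ I would write $h_\ell = \L(\ell)$, let $e_\ell$ denote its unique incident tree-edge, and set $B_\ell := \partial(e_\ell)$; unpacking $\beta_G$ on a leaf-edge shows that $B_\ell$ is precisely the set of vertices of $h_\ell$ that also belong to some other hyperedge of $G$, so $B_\ell \subseteq h_\ell$ and $|B_\ell|\le k$ by the width hypothesis. Writing $P_\ell := h_\ell \setminus B_\ell$ for the set of private vertices of $h_\ell$, one obtains the disjoint decomposition $h_\ell = B_\ell \sqcup P_\ell$.

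The first key step is to observe that the sets $P_\ell$ are pairwise disjoint subsets of $V$, since a private vertex of $h_\ell$ lies only in $h_\ell$; hence $\sum_\ell |P_\ell|\le n$, and in particular at most $n$ leaves satisfy $P_\ell\ne\emptyset$. It then remains to bound by $(2^{2k}-1)n$ the number of \emph{internal} leaves, those with $P_\ell = \emptyset$; for such leaves $h_\ell = B_\ell$ is a subset of $V$ of size at most $k$, and different internal leaves give different hyperedges (by the no-parallel-edge assumption). Combined with the $n$ leaves that carry a private vertex, this would yield $m \le n + (2^{2k}-1)n = 2^{2k} n$.

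To bound the internal leaves I would proceed by induction on $m$, repeatedly picking a cherry $(\ell_1,\ell_2)$ of $T$ meeting at a node $u$ with outgoing edge $e$. Setting $P := (h_{\ell_1}\cup h_{\ell_2})\setminus\bigcup_{h\ne h_{\ell_1},h_{\ell_2}} h$, one has $h_{\ell_1}\cup h_{\ell_2} = \partial(e)\sqcup P$ and $|\partial(e)|\le k$. When $P\ne\emptyset$ the induction closes easily: removing a vertex of $P$ reduces $n$ by one while destroying at most one hyperedge through deparallelization, so the inductive bound on the smaller instance transfers. The delicate case is $P = \emptyset$, where no vertex can be removed directly; in that case both $h_{\ell_1}$ and $h_{\ell_2}$ are subsets of $\partial(e)$, hence each is one of at most $2^{|\partial(e)|}\le 2^k$ subsets, giving at most $2^{2k}$ ordered cherry-types. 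The main obstacle will be turning this combinatorial bound into the matching linear bound in $n$: one has to charge each internal leaf to a vertex of $\partial(e)$ so that each vertex accumulates at most $2^{2k}-1$ charges across the whole decomposition, while tracking the cascade of parallel-edge collisions that successive cherry-compressions may create when the hyperedges of $G$ are densely packed into small boundaries.
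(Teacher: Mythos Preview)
Your proposal has a genuine gap, and you identify it yourself: the case $P=\emptyset$ at the chosen cherry is not handled. Knowing that $h_{\ell_1},h_{\ell_2}\subseteq\partial(e)$ with $|\partial(e)|\le k$ tells you only that this particular cherry has at most $2^{2k}$ possible ordered types \emph{relative to that boundary}; it does not give you a vertex whose deletion removes a bounded number of hyperedges, and different cherries in the course of the induction may reuse the same boundary vertices arbitrarily many times. The charging scheme you sketch (``assign each internal leaf to a vertex of some $\partial(e)$ with total charge at most $2^{2k}-1$'') is exactly the missing idea, and nothing in the setup forces such a bound: as you compress cherries with $P=\emptyset$ you never decrease $n$, so the induction simply stalls. (A minor side remark: in the $P\ne\emptyset$ case, if $v\in h_{\ell_1}\cap h_{\ell_2}$ then deparallelisation may kill two hyperedges, not one; this is harmless for the inequality but worth stating correctly.)

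The paper closes precisely this gap by abandoning cherries and instead searching over \emph{all} tree edges $e$ for one whose ``left-only'' vertex set $L_e$ (vertices meeting only hyperedges on the $A_e$ side) satisfies $1\le |L_e|\le k$. Any $v\in L_e$ then has every incident hyperedge contained in $L_e\cup M_e$, a set of size at most $2k$; hence at most $2^{2k}$ hyperedges contain $v$, and deleting $v$ drops $n$ by one and $m$ by at most $2^{2k}$, closing the induction. When no such edge exists, every $L_e$ is empty or of size $>k$; a short minimality argument on $A_e$ among edges with $L_e\ne\emptyset$ then either finds an edge with $L_e=R_e=\emptyset$ (forcing $n\le k$, so $m\le 2^k\le 2^{2k}n$), or finds a leaf whose hyperedge has a private vertex, or reaches a contradiction via the two children of an internal node. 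The key difference from your approach is that the paper always deletes a \emph{vertex} whose entire hyperedge-neighbourhood is trapped in a $2k$-element set; your cherry reduction produces no such vertex when $P=\emptyset$.
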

\begin{proof}
  We proceed by induction on $n$. 
  We may assume that $G$ has no isolated vertex.
  Let $(T,\L)$ be a branch-decomposition of~$G$ of width at most~$k$.
  For each edge~$e$ of~$T$,
  we have a partition $(A_e,B_e)$ of $E(G)$ induced by the partition of the leaves of $T-e$
  and $\L$.
  Let $M_e$ be the set of vertices of~$G$ incident with both an edge in~$A_e$
  and an edge in $B_e$.
  Let $L_e$ be the set of vertices of~$G$ incident with an edge in~$A_e$ but 
  no edge in $B_e$.
  Let $R_e$ be the set of vertices of~$G$ incident with an edge in~$B_e$ but 
  no edge in~$A_e$.
  As the width of $(T,\L)$ is at most~$k$, $\abs{M_e}\le k$ for all $e\in E(T)$.

  Suppose that $T$ has an edge $e$ 
  such that $1\le \abs{L_e}\le k$.
  Let $v\in L_e$. 
  Then every edge incident with~$v$ is a subset of $L_e\cup M_e$.
  As $G$ has no parallel edges, there are at most $2^{2k}$ such edges containing~$v$ 
  because $\abs{L_e\cup M_e}\le 2k$.
  Let $H$ be the subhypergraph of~$G$ obtained by deleting $v$ and all edges incident
  with~$v$.
  As the branch-width of $H$ is at most~$k$, 
  by the induction hypothesis, $m-2^{2k}\le\abs{E(H)}\le 2^{2k} \abs{V(H)}= 2^{2k}(n-1)$. 
  Thus we may assume that 
  $L_e=\emptyset$ or 
  $\abs{L_e}\ge k+1$
  for each edge $e$ of~$T$.
  By symmetry, we may also assume that 
  $R_e=\emptyset$ or $\abs{R_e}\ge k+1$ 
  for every edge $e$ of~$T$.

  If $L_e=R_e=\emptyset$ for some edge $e$ of~$T$, 
  then $G$ has at most~$k$ vertices
  and therefore $m\le 2^{k}\le 2^{2k} n$.
  So we may assume that $T$ has no such edge $e$.

  Let $e$ be an edge of~$T$ such that
  $A_e$ is minimal if $L_e$ is nonempty
  or $B_e$ is minimal if $R_e$ is nonempty
  among all sets $A_f$ for which $L_f$ is nonempty
  and all sets $B_f$ for which $R_f$ is nonempty.
  By symmetry, we may assume that $L_e$ is nonempty and $A_e$ is minimal.
  Let $x$ be the end of $e$ so that $A_e$ is the set of all edges of~$G$
  corresponding by $\L$ to leaves in the component of $T-e$ containing $x$.

  If $x$ is a leaf of~$T$, then 
  for a vertex $v$ in $L_e$, 
  $G$ has only one edge incident with~$v$. 
  Let $H$ be the subhypergraph of~$G$ obtained by deleting $v$ and
  its unique incident edge.
  Then $H$ has branch-width at most~$k$ and so 
  by the induction hypothesis, $m-1=\abs{E(H)}\le 2^{2k} \abs{V(H)}= 2^{2k}(n-1)$.

  Thus, we may assume that $x$ is not a leaf of~$T$.
  Let $f$, $g$ be two edges incident with $x$ other than $e$.
  Let us assume that $A_f\cup A_g=A_e$
  and $B_f\cap B_g=B_e$.
  As the width of $f$ is at most~$k$, 
  there are at most~$k$ vertices of~$G$ incident with 
  both an edge in $A_f$
  and an edge in $A_g$.
  Note that if a vertex $v$ is $L_e$,
  then either $v\in L_f\cup L_g$
  or $v$ is incident with both an edge in $A_f$ and an edge in $A_g$.
  As $\abs{L_e}\ge k+1$, 
  either $L_f\neq \emptyset$ or $L_g\neq\emptyset$.
  By symmetry, we may assume that $L_f\neq\emptyset$.
  However, $A_f$ is a proper subset of $A_e$, contradicting the minimality assumption.
\end{proof}

Using Theorem~\ref{thm:summary-brw},
we deduce the following. 
\begin{THM}
Let $k$ be an integer.
Given an input hypergraph $G$ with $n$ vertices, %
we can either 
find a branch-decomposition of~$G$ whose width is at most~$k$
or 
confirm that the branch-width of~$G$ is greater than $k$ 
in time $f(k)n^3$ for some function~$f$. 
\end{THM}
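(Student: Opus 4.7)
The plan is to reduce to Theorem~\ref{thm:summary-brw} via the subspace arrangement for hypergraphs described in the introduction; the only nontrivial issue is preprocessing the input so that the bound of Theorem~\ref{thm:summary-brw} shrinks to $f(k)n^3$, where $n=\abs{V(G)}$.

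First, I would preprocess $G$ by (a) deleting isolated vertices; (b) removing duplicate parallel edges while recording their multiplicities; and (c) for each remaining edge $e$, removing from $e$ any vertex that lies in no other edge. Step (c) preserves $\beta_G$, hence $\bw(G)$, because a degree-$1$ vertex never lies in edges on both sides of any bipartition of $E(G)$. Step (b) preserves branch-width in the stronger sense that any branch-decomposition of the resulting simple hypergraph $G'$ of width at most $k$ can be extended to one of $G$ by attaching each deleted copy of an edge $e$ as a new leaf subdividing the pendant edge of the leaf for $e$; a direct check shows that the three new edges of $T$ created by this insertion have width at most $\abs{e}$.

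Second, I would check that $G'$ is small. Since every vertex in a remaining edge $e$ has degree at least $2$, the branch-decomposition edge isolating the leaf for $e$ has width $\abs{e}$, forcing $\abs{e}\le k$ whenever $\bw(G')\le k$. By Lemma~\ref{lem:hypergraphdensity} applied to $G'$, if $\bw(G)\le k$ then $\abs{E(G')}\le 2^{2k}n$. We report ``branch-width greater than $k$'' whenever either bound is violated. Otherwise $\abs{E(G')}\le 2^{2k}n$ and $\sum_{e\in E(G')}\abs{e}\le k\cdot 2^{2k}\cdot n$.

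Finally, build the subspace arrangement of the introduction: take $\F=GF(2)$, ambient dimension $n$, and for each edge $e$ of $G'$ let $V_e$ be the span of the standard basis vectors indexed by the vertices of $e$. Applying Theorem~\ref{thm:summary-brw} with $r=n$, with the number of subspaces at most $2^{2k}n$, and with the total number of spanning vectors at most $k\cdot 2^{2k}\cdot n$, gives running time $f(k)n^3$. By the hypergraph remark in the introduction, a branch-decomposition of $\{V_e : e\in E(G')\}$ of width at most $k$ is precisely one of $G'$; reinserting the deleted parallel copies as above returns a branch-decomposition of $G$ of width at most $k$. The main obstacle is only the bookkeeping that each preprocessing step truly preserves branch-width and that the reinsertion does not push the width above $k$, both of which reduce to the single observation that every surviving edge of $G'$ has size at most $k$.
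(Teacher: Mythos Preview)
Your approach is essentially the paper's: remove parallel duplicates, bound the edge count via Lemma~\ref{lem:hypergraphdensity}, arrange that each subspace has dimension at most $k$, and invoke Theorem~\ref{thm:summary-brw}. The paper achieves your step~(c) at the subspace level rather than by modifying the hypergraph, defining $V_i$ as the span of only those $e_j$ for which $v_j$ lies in $E_i$ \emph{and} in some other edge; the effect is identical.

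There is one genuine gap in your treatment of parallel edges. When you reinsert a deleted parallel copy $B$ of an edge $A$ into a branch-decomposition of $G$, the new leaf edge isolating $B$ has width $\beta_G(\{B\})=\lvert B\rvert$, since every vertex of $B$ also lies in $A$. You only verified that the \emph{shrunk} edge in $G'$ (after step~(c)) has size at most $k$; but a vertex lying only in $A$ and $B$ acquires degree~$1$ once step~(b) deletes $B$ and is then stripped by step~(c), so $\lvert A\rvert$ may exceed that bound. In that case your reinsertion produces width $\lvert A\rvert>k$, and the algorithm would wrongly report a width-$k$ decomposition when in fact $\bw(G)>k$. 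The paper closes this by testing $\lvert A\rvert\le k$ for each parallel pair \emph{before} removing the duplicate, correctly reporting $\bw(G)>k$ when $\lvert A\rvert>k$ (any branch-decomposition of $G$ must isolate $B$ with width $\lvert B\rvert$). Adding this single check to your step~(b) repairs the argument.

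A smaller technicality: step~(c) can create new parallel pairs, so Lemma~\ref{lem:hypergraphdensity} should strictly be applied to the hypergraph after step~(b) rather than to $G'$; since step~(c) does not change the number of edges, the conclusion $\lvert E(G')\rvert\le 2^{2k}n$ is unaffected.
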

\begin{proof}[Proof assuming Theorem~\ref{thm:summary-brw}]
If $G$ has parallel edges $A$, $B$
with $\abs{A}=\abs{B}> k$, then we conclude that the branch-width of~$G$ is greater than $k$.
If $G$ contains parallel edges $A$, $B$
with $\abs{A}=\abs{B}\le k$, then %
we consider $G-B$, which is a hypergraph obtained from $G$ by deleting the edge $B$. 
We may assume that $G$ has at least three edges.
If $G-B$ has branch-width greater than $k$, then the branch-width of~$G$ is also greater than $k$.
If $G-B$ has a branch-decomposition $(T,\L)$ of width at most~$k$,
then %
it is easy to construct a branch-decomposition of~$G$ from $(T,\L)$ 
by 
introducing a node~$v$ with $\L(v)=B$, 
subdividing an edge incident with $\L^{-1}(A)$ in~$T$, and 
joining $v$ and the subdividing node.
Thus, we may assume that $G$ has no parallel edges.

Let $m$ be the number of all edges in $G$.
We may assume that $m\le 2^{2k}n$ because otherwise 
we conclude that the branch-width of~$G$ is greater than $k$ by Lemma~\ref{lem:hypergraphdensity}.

Let $V=\{v_1,v_2,\ldots,v_n\}$ be the vertex set of~$G$ and 
$E=\{E_1,E_2,\ldots,E_m\}$ be the edge set of~$G$.
Let $\F=GF(2)$ be the binary field and let $\{e_1,e_2,\ldots,e_n\}$ be the standard basis of $\F^n$.
Let $\V=\{V_1,V_2,\ldots,V_m\}$ be a subspace arrangement of subspaces of $\F^n$
such that for every $i=1,2,\ldots,m$,
$V_i$ is the span of all $e_j$ such that $v_j$ is incident with $E_i$ and 
also incident with some $E_{i'}$ with $i'\neq i$.
Note that $\V$ can be constructed in time $O(m\cdot n \cdot m)=O(2^{4k}n^3)$.
We may assume that $\dim V_i \le k$ for every $i=1,2,\ldots, m$ because otherwise
$\dim \left(V_i \cap \left(\sum_{j\neq i} V_j\right)\right)= \dim V_i >k$ and then we can conclude that 
the branch-width of~$G$ is greater than $k$.

It is easy to see that for each subset $X$ of $E$,
\[
\dim\Bigl(\bigl(\sum_{E_i\in X} V_i\bigr)\cap \bigl(\sum_{E_i\in E-X} V_i\bigr) \Bigr)= \beta_G(X)
\]
because $V_i$ is the span of some $e_j$'s that represent the vertices in $E_i$ and 
such $e_j$'s are linearly independent.
Thus, the branch-width of~$\V$ is equal to the branch-width of~$G$.
So we can run the algorithm in Theorem~\ref{thm:summary-brw}.
The time complexity is, for some functions $g$ and $f$,  
\[
O(n \cdot (k2^{2k}n)^2 + (k+1) \cdot n \cdot k2^{2k}n \cdot 2^{2k}n + k^3\cdot (2^{2k}n)^3 + g(2,k) (2^{2k}n)^2)
=f(k)n^3. \qedhere
\]
\end{proof}

\subsection{Application to carving-width of graphs}

For a graph $G=(V,E)$ and a subset $X$ of $V$, 
we define  $\gamma_G(X)$ as the number of edges
incident with both a vertex in~$X$ and a vertex not in~$X$.
The \emph{carving} of a graph $G$ is the branch-decomposition of $V$.
The \emph{carving-width} of a graph $G$ 
is the branch-width of $\gamma_G$.

We can deduce the following theorem from Theorem~\ref{thm:summary-brw}.
\begin{THM}
Let $k$ be an integer.
Given an input $n$-vertex graph $G$, %
we can either 
find a carving of~$G$ whose width is at most~$k$
or 
confirm that the carving-width of~$G$ is greater than $k$
in time $f(k)n^3$ for some function~$f$.
\end{THM}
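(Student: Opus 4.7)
The plan is to realize carving-width as the branch-width of a subspace arrangement via the $GF(2)$ construction sketched in the introduction, and then invoke Theorem~\ref{thm:summary-brw}. Throughout, write $m_G=\abs{E(G)}$.

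\textbf{Degree pruning.} If $G$ has a vertex $v$ with $\deg(v)>k$, I confirm that the carving-width exceeds $k$: in any carving $(T,\L)$, the leaf $\ell$ with $\L(\ell)=v$ is incident with an edge $e$ of $T$ for which $A_e=\{v\}$, and so $\gamma_G(A_e)=\deg(v)$. Hence in $O(n+m_G)$ time I may assume $\Delta(G)\le k$, which forces $m_G\le kn/2$.

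\textbf{Reduction.} Let $\F=GF(2)$ with standard basis $\{f_1,\dots,f_{m_G}\}$ of $\F^{m_G}$, and for each $v\in V(G)$ let $V_v$ be the span of $\{f_i:e_i\text{ incident with }v\}$; set $\V=\{V_v:v\in V(G)\}$. Each $V_v$ is supplied by $\deg(v)\le k$ spanning vectors, so the input to Theorem~\ref{thm:summary-brw} consists of $n$ subspaces in $\F^r$ with $r=m_G\le kn/2$ and a total of $\sum_v\deg(v)=2m_G\le kn$ columns. For any $X\subseteq V(G)$, both $\sum_{v\in X}V_v$ and $\sum_{v\notin X}V_v$ are coordinate subspaces of $\F^{m_G}$ with respect to $\{f_i\}$, and so their intersection has dimension equal to the number of $f_i$ whose edge has one endpoint in $X$ and one in $V(G)-X$, namely $\gamma_G(X)$. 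Therefore the branch-width of $\V$ equals the carving-width of $G$, and a branch-decomposition $(T,\L_\V)$ of $\V$ of width at most $k$ yields a carving $(T,\L)$ of $G$ by $\L(\ell)=v\iff \L_\V(\ell)=V_v$, and vice versa.

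\textbf{Running time and obstacle.} Writing $r$ and $m$ for the parameters as named in Theorem~\ref{thm:summary-brw}, we have $r\le kn$ and $m\le kn$, so its stated bound becomes
\[
O\!\bigl(rm^2+(k+1)rmn+k^3n^3+f(\abs{\F},k)n^2\bigr)= f'(k)n^3
\]
for some function $f'$. The only genuinely delicate step is the degree pruning: without it, the subspace arrangement would live in $\F^{m_G}$ with $m_G$ unbounded, destroying the cubic-in-$n$ bound. Once $m_G\le kn/2$ is in hand, the remainder is a direct translation through the coordinate-basis identity connecting cuts in $G$ to intersections of coordinate subspaces.
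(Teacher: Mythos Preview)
Your proof is correct and follows essentially the same route as the paper: prune high-degree vertices to force $m_G\le kn/2$, encode each vertex by the coordinate subspace of its incident edges over $GF(2)$, observe that the resulting intersections compute $\gamma_G$, and invoke Theorem~\ref{thm:summary-brw}. The paper's argument is identical in structure and content.
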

\begin{proof}[Proof assuming Theorem~\ref{thm:summary-brw}]
Let $V=\{v_1,v_2,\ldots,v_n\}$ be the vertex set of~$G$ and 
$E=\{f_1,f_2,\ldots,\allowbreak f_m\}$ be the edge set of~$G$.
We may assume that the degree of each vertex is at most~$k$ so that $m\le kn/2$. 
Otherwise, the carving-width of~$G$ is greater than $k$.
Let $\F=GF(2)$ be the binary field.
Let $\V=\{V_1,V_2,\ldots,V_n\}$ be a subspace arrangement of subspaces of $\F^m$
such that for every $i=1,2,\ldots,n$,
\[V_i=\spn{\{e_j: f_j\text{ is incident with }v_i\}},\]
where $\{e_1,e_2,\ldots,e_m\}$ is the standard basis of $\F^m$.
Then it is easy to see that for each subset $X$ of $V$,
\[
\dim\Bigl(\bigl(\sum_{v_i\in X} V_i\bigr)\cap \bigl(\sum_{v_i\in V-X} V_i\bigr)\Bigr) = \gamma_G(X)
\]
because $V_i$ represents the neighbors of $v_i$.
Thus the carving-width of~$G$ is equal to the branch-width of~$\V$
and the carving of~$G$ of width at most~$k$ corresponds to the
branch-decomposition of~$\V$ of width at most~$k$.
Then we obtain our algorithm from the algorithm of
Theorem~\ref{thm:summary-brw}.
\end{proof}

\section{The key step of the algorithm}\label{sec:keyalg}
This section aims to provide the key step of the algorithm for \textsc{Branch-Width Compression}.
Our algorithm for computing a branch-decomposition of~$\V$ of width~$k$ deploys dynamic programming 
on a precomputed branch-\decomposition{} $(T^b,\L^b)$ of~$\V$ of width at most $\theta$. 
For the dynamic programming, we will need to take advantage of this structure to design
efficient algorithms.

Let $(T^b,\L^b)$ be a branch-decomposition of a subspace arrangement $\V$.
Though we define $T^b$ to have no degree-$2$ nodes in the definition of a branch-decomposition, we will sometimes relax this definition by allowing $T^b$ to have a special node, called the \emph{root}, of degree $2$ or $0$.
To be precise, a \emph{rooted binary tree} is a rooted tree such that 
every nonroot node has degree $1$ or $3$ and the root has degree at most $2$. 
We may assume that $T^b$ is
a rooted binary tree by picking an arbitrary edge $e$ and subdividing~$e$ to create a degree-$2$ root node $\rt$, and  a branch-decomposition $(T^b,\L^b)$ with such tree $T^b$ will be called a \emph{rooted} branch-decomposition. 
We may
orient each edge of $T^b$ toward the root $\rt$ and assume that $T^b$ is a
rooted binary directed tree.

Dynamic programming algorithms on a tree-\decomposition{} by Bodlaender and Kloks~\cite{BK1996} benefit from the  
small width by encoding solutions with respect to the bags. While the bags are explicit in a given tree-decomposition, 
a branch-decomposition of a subspace arrangement does not provide an easy-to-handle metric 
for encoding solutions to our problem. In order to make it more useful, we
will need some extra information. 
For a node~$v$ of $T^b$, let $\V_v$ be the set of all elements of~$\V$
associated with~$v$ and its descendants by $\L$.
The \emph{boundary space} $B_v$ at $v$ is defined as $B_v=\spn{\V_v}\cap \spn{\V-\V_v}$.
We shall encode partial branch-decompositions of $\V_v$ with respect to the boundary spaces $B_v$ of a given branch-decomposition $(T^b,\L^b)$.
In order to describe it, we need to introduce $B$-namus and relevant terminology to manipulate $B$-namus, which will be given in Subsections~\ref{subsec:namu}--\ref{subsec:ksafe}. 

Meanwhile, we need to know $B_v$ for each node~$v$ of a given branch-decomposition $(T^b,\L^b)$ in advance. Moreover, the information relating the boundary space at a node 
and at its parent node will be necessary, which is captured by the notion of a \emph{transcript}, which will be introduced in Subsection~\ref{subsec:transcript}. 
After Subsection~\ref{subsec:transcript}, we are ready to present the key step of our algorithm for \textsc{Branch-Width Compression} in Subsection~\ref{subsec:computing}.
We postpone all proofs of properties to later sections and focus on presenting the algorithm.

\subsection{Namu}\label{subsec:namu}
For a tree $T$, an \emph{incidence} is a pair $(v,e)$ of a node~$v$ of~$T$ and an edge $e$ incident with~$v$.
Let $\I(T)$ be the union of $\{(*,\emptyset),(0,\emptyset)\}$ and the set of all incidences of~$T$. 
Let $\F$ be a field, let $r$ be a positive integer, and let $B$ be a subspace of $\F^r$.
Let $A(T)$ be the set of all leaves of a tree $T$
and recall that %
for an incidence $(u,e)$ of a tree $T$, 
$A_u(T-e)$ is the set of all leaves of~$T$ in the component of $T-e$ containing~$u$.

What is the essence of a branch-decomposition $(T,\L)$ of a subspace arrangement~$\V$ with respect to $B$? From $(T,\L)$, we can define  a function $\alpha$ from $\I(T)$ to the set of all subspaces of $B$, 
a function $\lambda$ from the union of $\{\emptyset\}$ and the set of all edges of~$T$ to the set of integers,
and a subspace $U$ of $B$ as follows:
\begin{itemize}
\item 
for each node~$v$ of~$T$ and an edge $e$ incident with~$v$, 
\[\alpha(v,e)=B\cap \sum_{x\in A_v(T-e)} \L(x),\]
\item for each edge $e=uv$ of~$T$, 
\[\lambda(e)= 
\dim \Bigl( \sum_{x\in A_u(T-e)} \L(x)\cap  \sum_{y\in A_v(T-e)} \L(y)\Bigr),\]
\item $U=B\cap \sum_{x\in A(T)} \L(x)$, and
\item $\alpha(*,\emptyset)=U$, $\alpha(0,\emptyset)=\{0\}$, and $\lambda(\emptyset)=0$.
\end{itemize}
Such a quadruple $(T,\alpha,\lambda,U)$ is called the \emph{canonical $B$-namu} of a branch-\decomposition{} $(T,\L)$ of a subspace arrangement $\V$.
This motivates us to define a $B$-namu. ``Namu'' is a tree in Korean.

A \emph{$B$-namu} $\Gamma$ is a quadruple $(T,\alpha,\lambda, U)$ of 
\begin{itemize}
\item a tree $T$ of maximum degree at most $3$ having at least one node, 
\item a function $\alpha$ from $\I(T)$ to the set of all subspaces of $B$, 
\item a function $\lambda$ from the union of $\{\emptyset\}$ and the set of all edges of~$T$ to the set of integers, and 
\item a subspace $U$ of $B$
\end{itemize}
such that 
\begin{enumerate}[(i)]
\item for every two-edge path $v_0, e_1,v_1,e_2,v_2$ in~$T$, $\alpha(v_0,e_1)$ is a subspace of $\alpha(v_1,e_2)$,
\item for all incidences $(v,e)$ of~$T$, $\alpha(v,e)$ is a subspace of $U$,%

\item $\alpha(*,\emptyset)=U$, $\alpha(0,\emptyset)=\{0\}$, and $\lambda(\emptyset)=0$,
\item for every edge $e=uv$ of~$T$, $\lambda(e)\geq \dim (\alpha(v,e)\cap \alpha(u,e))$.
\end{enumerate}
Our idea of defining a function on incidences of a tree is from the definition of a tree-labeling in Robertson and Seymour~\cite{RS1991}. While a tree-labeling maps an incidence to a subset of the ground set, the function $\alpha$ in the canonical $B$-namu $(T,\alpha,\lambda,U)$ maps an incidence to a subspace of $U$.
Note that if $(T,\alpha,\lambda,U)$ is a canonical $B$-namu of a branch-decomposition $(T,\L)$ of~$\V$, then for every edge $e=uv$ of~$T$, we have 
\[\lambda(e) \geq \dim  \Bigl(B\cap \sum_{x\in A_u(T-e)} \L(x)\cap  \sum_{y\in A_v(T-e)} \L(y)\Bigr)= \dim (\alpha(v,e)\cap \alpha(u,e)),\]
which meets the condition (iv) of the definition of a $B$-namu. 

The \emph{width} of a $B$-namu $\Gamma=(T,\alpha,\lambda, U)$
is the maximum of $\lambda(e)$
over all edges $e=uv$ of~$T$.
(If $T$ has no edges, then the width of $\Gamma$ is defined to be $0$.)
For a $B$-namu $\Gamma=(T,\alpha,\lambda,U)$, we denote
$T(\Gamma)=T$ and call it the \emph{tree in $\Gamma$}. 

Let $\Gamma_1=(T_1,\alpha_1,\lambda_1,U_1)$, $\Gamma_2=(T_2,\alpha_2,\lambda_2,U_2)$ be two $B$-namus.
We say that two $B$-namus $\Gamma_1$, $\Gamma_2$ are \emph{equal}, denoted as $\Gamma_1=\Gamma_2$,
if $T_1$ is equal to $T_2$,
$\alpha_1(v,e)=\alpha_2(v,e)$ for every incidence $(v,e)$ in $T_1$, 
$\lambda_1(e)=\lambda_2(e)$ for every edge $e$ in $T_1$, 
and $U_1=U_2$.

Two $B$-namus $\Gamma_1=(T_1,\alpha_1,\lambda_1,U_1)$ and $\Gamma_2=(T_2,\alpha_2,\lambda_2,U_2)$
are \emph{isomorphic}
if $U_1=U_2$ and 
there is an isomorphism $\phi$ from $T_1$ to $T_2$
such that
$\alpha_1(v,e)=\alpha_2(\phi(v),\phi(e))$
and $\lambda_1(e)=\lambda_2(\phi(e))$ 
for every incidence $(v,e)$ in $T_1$.
A $B$-namu $(T',\alpha',\lambda',U)$ is 
a \emph{subdivision} of a $B$-namu $\Gamma=(T,\alpha,\lambda,U)$  
if $T'$ is a subdivision of~$T$,
$\alpha'(v',e')=\alpha(v,e)$, and $\lambda'(e')=\lambda(e)$
for every incidence $(v',e')$ of $T'$ and its corresponding incidence $(v,e)$ of~$T$.

For two $B$-namus $\Gamma_1=(T_1,\alpha_1,\lambda_1,U_1)$ and $\Gamma_2=(T_2,\alpha_2,\lambda_2,U_2)$,
we say that $\Gamma_1\le\Gamma_2$
if 
\begin{itemize}
\item $T_1=T_2$, $\alpha_1=\alpha_2$, $U_1=U_2$ and
\item $\lambda_1(e)\le\lambda_2(e)$ for every edge $e$ of $T_1$.
\end{itemize}
For two $B$-namus $\Gamma_1$ and $\Gamma_2$, 
we say that $\Gamma_1\tle\Gamma_2$
if 
there exist a $B$-namu $\Gamma_1'$ isomorphic to a subdivision of $\Gamma_1$ and 
a $B$-namu $\Gamma_2'$ isomorphic to a subdivision of $\Gamma_2$
such that $\Gamma_1'\le\Gamma_2'$.
For two subspaces $B$ and $B'$ with $B' \subseteq B$, we define a $B'$-namu obtained from a $B$-namu as follows. 
We define the \emph{projection} $\Gamma|_{B'}$ of a $B$-namu $\Gamma=(T,\alpha,\lambda,U)$ on $B'$
as the $B'$-namu $(T,\alpha',\lambda',U')$ such that
\begin{itemize}
\item $U'=U\cap B'$, 
\item $\alpha'(v,e)=\alpha(v,e)\cap B'$ for all incidences $(v,e)$ of~$T$, and
\item $\lambda'(e)=\lambda(e)$ for all edges $e$ of~$T$.
\end{itemize}
Note that the width of $\Gamma|_{B'}$ is equal to the width of $\Gamma$.

\subsection{Compact $\boldsymbol B$-namus}\label{subsec:compact}

In the previous subsection, we defined the data structure $B$-namu to store partial solutions at a node~$v$ 
of a given branch-decomposition $(T^b,\L^b)$. Naively, we can store all partial solutions that can be potentially extended to an optimal branch-decomposition of~$\V$. 
However, if we use the canonical $B$-namu to represent a partial solution at a node~$v$, 
then this representation can be arbitrarily large and there can be too many canonical $B$-namus to be stored. 
In order to bound the size and the number of partial solutions at each node~$v$ of $T^b$, we deploy two strategies; first we narrow down the search space of all 
branch-decompositions of $\V_v$ to the ``well-behaved'' branch-decompositions, and second we compress $B$-namus of such well-behaved branch-decompositions 
to so-called compact $B$-namus. The former is captured by the notion of totally pure branch-decomposition with respect to $(T^b,\L^b)$, which will be explored in the next section. 
In this subsection, we focus on the latter notion.

For a $B$-namu, we define two operations.
First, we introduce the notion of \emph{compressing} a $B$-namu $\Gamma=(T,\alpha,\lambda,U)$.
If there exists a two-edge path $v_0,e_1,v_1,e_2,v_2$ in~$T$ such that 
\begin{enumerate}[(i)]
\item the degree of $v_1$ is $2$ in~$T$,
\item $\alpha(v_{0},e_1)=\alpha(v_{1},e_{2})$, 
$\alpha(v_{1},e_1)=\alpha(v_{2},e_{2})$, and
\item $\lambda(e_1)=\lambda(e_2)$,
\end{enumerate}
then \emph{compressing $\Gamma$ by the path $v_0,e_1,v_1,e_2,v_2$} is an operation to obtain 
a new $B$-namu $\Gamma'=(T',\alpha',\lambda',U)$ such that 
\begin{itemize}
\item $T'$ is a tree obtained from $T$ by contracting $e_1$,
\item $\alpha'(v,e)=\alpha(v,e)$ for each incidence $(v,e)$ in~$T$ with $e\neq e_1$, and 
\item $\lambda'(e)=\lambda(e)$ for each edge $e\neq e_1$ of~$T$.
\end{itemize}
If there exists a path $v_0,e_1,v_1,e_2,v_2,\ldots, e_n,v_n$ in~$T$ with $n\ge3$ such that
\begin{enumerate}[(i)]
\item the degree of $v_i$ is $2$ for every $1\le i\le n-1$,
\item $\alpha(v_{i-1},e_i)=\alpha(v_{i},e_{i+1})$, 
$\alpha(v_{i},e_i)=\alpha(v_{i+1},e_{i+1})$ for every $1\le i\le n-1$, and 
\item $\lambda(e_1)\le\lambda(e_{j})\le\lambda(e_{n})$ for every $2\le j \le n-1$,
\end{enumerate}
then \emph{compressing $\Gamma$ by the path $v_0,e_1,v_1,e_2,v_2,\ldots, e_n,v_n$} is an operation to obtain 
a new $B$-namu $\Gamma'=(T',\alpha',\lambda',U)$ such that
\begin{itemize}
\item $T'$ is a tree obtained from $T$ by contracting all edges $e_2,\ldots,e_{n-1}$, %
\item $\alpha'(v,e)=\alpha(v,e)$ for each incidence $(v,e)$ in~$T$ with $e\notin \{e_2,\ldots,e_{n-1}\}$,
\item $\lambda'(e)=\lambda(e)$ for each edge $e$ of~$T$ with $e\notin \{e_2,\ldots,e_{n-1}\}$.
\end{itemize}

Second, we introduce the notion of \emph{trimming} a $B$-namu $\Gamma=(T,\alpha,\lambda,U)$.
In order to  define trimming, we introduce a few terminology. 
Let $\Gamma=(T,\alpha,\lambda,U)$ be a $B$-namu.
An edge $uv$ of~$T$ is called \emph{degenerate} in $\Gamma$ if \[\alpha(u,uv)=\alpha(v,uv).\]
We say an edge $uv$ of~$T$ \emph{guards} an end $v$ in $\Gamma$ if
\[\alpha(v,uv)\subsetneq \alpha(u,uv).\]
An edge of~$T$ is \emph{guarding} in $\Gamma$ if it guards one of its ends.
A node $w$ of~$T$ is \emph{blocked} by an edge $e$ in $\Gamma$ if 
$w$ is not an end of $e$ and 
$e$ guards its end closer to $w$ in~$T$.
An edge~$f$ of~$T$ is \emph{blocked} by an edge~$e$ 
if at least one of the ends of $f$ is blocked by $e$ in~$\Gamma$.
A path $xyz$ of length $2$ in~$T$ is called a \emph{blocking path} in~$\Gamma$ if 
\[\alpha(x,xy)=\alpha(y,yz) \text{ and }\alpha(z,zy)=\alpha(y,yx)\]
and 
neither $xy$ nor $yz$ is degenerate or guarding in $\Gamma$.
A node~$v$ of~$T$ is \emph{blocked} by a blocking path $xyz$ in $\Gamma$
if $v\neq y$ and $T-x-z$ has a path from $v$ to $y$.
An edge $e$ of~$T$ is \emph{blocked} by a blocking path $xyz$ in $\Gamma$
if %
some end of $e$ is blocked by $xyz$. 
A node~$v$ or an edge $e$ said to be \emph{blocked} in $\Gamma$ if it is blocked by some guarding edge or some blocking path in $\Gamma$. 
We omit ``in $\Gamma$'' if it is clear from the context.

\emph{Trimming $\Gamma$} is an operation to obtain a new $B$-namu $(T',\alpha',\lambda',U)$ by one of the following.
\begin{enumerate}[(1)]
\item If $T$ has degenerate edges, then $T'$ is a tree with a single node $u$.

\item If $T$ has no degenerate edges, then $T'$ is a tree obtained by removing all blocked nodes, and
\begin{itemize}
\item $\alpha'(v,e) = \alpha(v,e)$ for every incidence $(v,e)$ of $T'$, and 
\item $\lambda'(e)=\lambda(e)$ for every edge $e$ of $T'$.
\end{itemize}

\end{enumerate} 
We define the \emph{trim of $\Gamma$}, denoted by $\trim(\Gamma)$, as the $B$-namu 
obtained %
by trimming~$\Gamma$.
We say that a $B$-namu $\Gamma$ is \emph{trimmed} if $\Gamma=\trim(\Gamma)$.
Note that $\trim(\Gamma)$ has no blocked nodes.

We define the \emph{compactification of $\Gamma$}, denoted by $\tau(\Gamma)$, 
as the $B$-namu obtained from $\trim(\Gamma)$ 
by repeatedly applying compressing until no further compressing can 
be applied.

\begin{PROP}\label{prop:compactwelldefined}
  The compactification of a $B$-namu $\Gamma$ is well defined.
\end{PROP}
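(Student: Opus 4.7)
The plan is to prove that, starting from any $B$-namu $\Gamma$, the normal form obtained by first trimming and then exhaustively compressing is independent of the choices made. Termination is clear, since every compression strictly decreases the number of edges, so it suffices to establish uniqueness. The output of $\trim(\Gamma)$ is already unambiguous from its deterministic definition, so I would focus on proving that arbitrary sequences of compressions applied to a trimmed $B$-namu are confluent.

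The key observation is that compressions act locally on what I would call \emph{$\alpha$-uniform} paths, namely paths $v_0,e_1,v_1,\ldots,e_n,v_n$ in $T$ whose interior nodes $v_1,\ldots,v_{n-1}$ have degree $2$ and which satisfy $\alpha(v_{i-1},e_i)=\alpha(v_i,e_{i+1})$ and $\alpha(v_i,e_i)=\alpha(v_{i+1},e_{i+1})$ for every $1\le i\le n-1$. Condition (i) of the $B$-namu definition only guarantees inclusions in general, so this equality restriction genuinely singles out the paths eligible for compression. The edge set of $T$ decomposes into maximal $\alpha$-uniform paths, and these maximal paths meet one another only at their endpoints, which are leaves, branching nodes, or degree-$2$ nodes where the equality condition fails. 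None of these endpoints are ever deleted by a compression, because compression only contracts interior degree-$2$ nodes strictly inside its chosen path. Consequently, any compression is contained in a single maximal $\alpha$-uniform path, and hence compressions performed in distinct maximal $\alpha$-uniform paths commute.

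Within a single maximal $\alpha$-uniform path $P=v_0,e_1,\ldots,e_n,v_n$, the alpha values are entirely determined: they alternate between two fixed subspaces of $B$ as one traverses $P$. A compression along $P$ therefore modifies only the integer sequence $\sigma(P)=(\lambda(e_1),\ldots,\lambda(e_n))$. Direct inspection shows that the $2$-edge compression rule coincides with operation (ii) of the typical sequence definition (remove one of two equal consecutive entries), while the longer compression rule coincides with operation (i) (remove a monotone interior block). Thus the exhaustive application of compressions along $P$ replaces $\sigma(P)$ by the typical sequence $\tau(\sigma(P))$, which is well defined by Bodlaender and Kloks~\cite{BK1996}. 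Combining this uniqueness on each maximal $\alpha$-uniform path with the commutativity across distinct such paths yields a unique normal form $\tau(\Gamma)$.

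The main obstacle will be a careful verification that compressions preserve the decomposition into maximal $\alpha$-uniform paths, and in particular that a compression internal to one maximal path cannot inadvertently create or destroy alpha equalities at the endpoints of neighbouring maximal paths. Because the alpha and lambda values at the two preserved endpoint edges $e_1$ and $e_n$ of any compression are retained verbatim, this verification reduces to observing that the local data seen by neighbouring maximal paths is unchanged, so no additional subtlety should arise beyond careful bookkeeping at branching nodes.
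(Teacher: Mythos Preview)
Your argument is correct. The paper's own proof is a single sentence asserting that the order of compressing does not matter, with no supporting detail, so your decomposition into maximal $\alpha$-uniform paths and reduction of the problem on each such path to the well-definedness of Bodlaender--Kloks typical sequences supplies exactly the justification the paper leaves implicit; this is almost certainly the argument the authors had in mind.
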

\begin{proof}
The order of compressing does not change the resulting $B$-namu $\tau(\Gamma)$.
\end{proof}

We say that a $B$-namu $\Gamma$ is \emph{compact} if $\Gamma=\tau(\Gamma)$.
Let $U_k(B)$ be the set of all compact $B$-namus $\Gamma$ of width at most~$k$
such that $V(T(\Gamma))=\{1,2,\ldots,n\}$ for some integer $n$.

\subsection{A sum of $\boldsymbol B$-namus}\label{subsec:sum}
At each internal node of $T^b$, we want to compute partial solutions using the 
partial solutions computed at its two children. In this subsection, we introduce the operation \emph{sum} of $B$-namus 
to combine two partial solutions. 
 
For two subcubic trees $T$ and $T^*$, a \emph{$T$-model in $T^*$} is a subtree $\eta$ of $T^*$ isomorphic to a subdivision of~$T$. %
For brevity we regard $\eta$ as a function from $E(T^*)\cup\{\emptyset\}$ to $E(T)\cup \{\emptyset\}$ such that 
$\eta(\emptyset)=\emptyset$, 
$\eta(e)=\emptyset$ if $e$ is not in $E(\eta)$,
and 
$\eta(e)=f$ if $e$ is on the path of $\eta$ obtained by subdividing $f$.
For a $T$-model $\eta$ in $T^*$, we define $\etabar:V(T)\to V(T^*)$ so that $\etabar(v)=w$ 
if a node $w$ in $\eta$ is a branch node corresponding to $v$.
See Figure~\ref{fig:model} for an illustration.
\begin{figure}
  \centering
  \tikzstyle {v}=[draw,circle,inner sep=1.2pt]
  \tikzstyle {u}=[draw,circle,fill=black,inner sep=1.2pt]
  \begin{tikzpicture}
	\node [u,label=below:$u_1$] at (0,2) (u1){};
	\node [u,label=below right:$u_2$] at (2,2) (u2){};
	\node [u,label=below right:$u_3$] at (4,2) (u3){};
	\node [u,label=below:$u_4$] at (6,2) (u4){};
	\node [u,label=right:$u_5$] at (2,0) (u5){};
	\node [u,label=right:$u_6$] at (4,0) (u6){};
    \draw node [label=below:$T^*$] at (1,2) {};
    \draw (u1)--(u2)--(u3)--(u4);
    \draw (u2)--(u5);
    \draw (u3)--(u6);
    \node [v,label=above:$v_1$] at (0-0.15,2+0.15) (v1){};
    \node [v,label=above:$v_2$] at (4-0.15,2+0.15) (v2){};
    \node [v,label=above:$v_3$] at (6-0.15,2+0.15) (v3){};
    \node [v,label=left:$v_4$] at (4-0.15,0+0.15) (v4){};
	\draw node [label=above:$T$] at (3,2) {};
    \draw[dashed] (v1)--(v2)--(v3);
    \draw[dashed] (v2)--(v4);
  \end{tikzpicture}
  \caption{A function $\eta$ is a $T$-model in $T^*$ if
$\eta(u_1u_2)=\eta(u_2u_3)=v_1v_2$, $\eta(u_3u_4)=v_2v_3$, $\eta(u_3u_6)=v_2v_4$, 
and $\eta(u_2u_5)=\eta(\emptyset)=\emptyset$. Nodes $u_1,u_3,u_4,u_6$ are branch nodes in $\eta$.
Dashed lines represent $T$ and solid lines represent $T^*$.}
  \label{fig:model}
\end{figure}

A $T$-model $\eta$ in $T^*$ induces a function $\vec\eta:\I(T^*)\to \I(T)$ as follows.
\begin{itemize}
\item For an incidence $(v,e)$ of $T^*$, if $\eta(e)\neq\emptyset$, then let $\bar v$ be a branch node with the minimum distance to $v$ in $T^*-e$ and let $v'$ be the node of~$T$ such that $\etabar(v')=\bar v$. Then
	$\vec\eta(v,e)=(v',\eta(e))$.
\item For an incidence $(v,e)$ of $T^*$, if $\eta(e)=\emptyset$, then let $C_v$ be the component $T^*-e$ containing~$v$ and 
\begin{align*}
\vec\eta(v,e)=
\begin{cases}
(0, \emptyset) \qquad &%
\text{if $C_v$ has no node in $\etabar(V(T))$,}\\
(*, \emptyset) \qquad &\text{otherwise.}
\end{cases}
\end{align*}
\item $\vec\eta(0,\emptyset)=(0,\emptyset)$ and $\vec\eta(*,\emptyset)=(*,\emptyset)$.
\end{itemize}
Let $\eta^{-1}(e)=\{e':\eta(e')=e\}$ be the preimage of $e$. Then for an edge $e$ of~$T$, $\eta^{-1}(e)$ is a path in $T^*$.
A $B$-namu $\Gamma^*=(T^*,\alpha^*,\lambda^*,U^*)$ is 
an \emph{extension} of a $B$-namu $\Gamma=(T,\alpha,\lambda,U)$
if there exists a $T$-model $\eta$ in $T^*$ such that 
$U^*=U$, $\alpha^*(v,e)=\alpha(\vec\eta(v,e))$, $\lambda^*(e)=\lambda(\eta(e))$ 
for every incidence $(v,e)$ of $T^*$ with $\eta(e)\neq\emptyset$.
In this case, we say that $\eta$ \emph{extends} $\Gamma$ to $\Gamma^*$.

For three subcubic trees $T_1$, $T_2$, and $T^+$, a \emph{$(T_1,T_2)$-model in $T^+$} is a pair $(\eta_1,\eta_2)$ of a $T_1$-model $\eta_1$ and a $T_2$-model $\eta_2$ in $T^+$ satisfying the following:
\begin{enumerate}[(i)]
\item For each $i=1,2$, if two edges $e\in E(\eta_i)$ and $f\in E(T^+)\setminus E(\eta_i)$ share a node~$v$ of $T^+$, then $v$ is a subdividing node of $\eta_i$. %
\item $A(T^+)$ is the disjoint union of $\etabar_1(A(T_1))$ and $\etabar_2(A(T_2))$.
\item If $v$ has degree at most $2$ in $T^+$, then it is a branch node in exactly one of $\eta_1$ and $\eta_2$.
\end{enumerate}
Figure~\ref{fig:2model} illustrates the definition of a $(T_1,T_2)$-model in $T^+$.

\begin{figure}
  \centering
  \tikzstyle {u}=[draw,inner sep=2pt]
  \tikzstyle {v}=[draw,fill=black,inner sep=2pt]
  \tikzstyle {w}=[draw,circle,fill=black,inner sep=1pt]
  \tikzstyle {s}=[draw,circle,inner sep=1.2pt]
  \begin{tikzpicture}[xscale=0.85]
    \node [u] at (0,0) (u7){};
    \node [u] at (0,2) (u6){};  
    \node [u] at (3,2) (u5){};  
    \node [u] at (3,0) (u4){};  
    \node [u] at (2,1) (u3){};  
    \node at (2+0.04,1-0.3) {$u_3$};
    \node [u] at (1.5,1) (u2){};  
    \node at (1.5+0.04,1-0.3) {$u_2$};
    \node [u] at (1,1) (u1){};
    \node at (1+0.04,1-0.3) {$u_1$};
    \draw (u7)--(u1)--(u6);  
    \draw (u4)--(u3)--(u5);  
    \draw (u1)--(u2)--(u3); 
    \draw node [label=below:$T_1$] at (1.5,0-0.5) {};
    \node [v] at (2.5+5,0.5) (v8){};
    \node [v] at (0+5,0) (v7){};
    \node [v] at (0+5,2) (v6){};  
    \node [v] at (3+5,2) (v5){};  
    \node [v] at (3+5,0) (v4){};  
    \node [v] at (2+5,1) (v3){};  
    \node at (2+5+0.04,1-0.3) {$v_3$};
    \node [v] at (1.5+5,1) (v2){};  
    \node at (1.5+5+0.04,1-0.3) {$v_2$};
    \node [v] at (1+5,1) (v1){};  
    \node at (1+5+0.04,1-0.3) {$v_1$};
    \draw (v7)--(v1)--(v6);  
    \draw (v4)--(v8)--(v3)--(v5);  
    \draw (v1)--(v2)--(v3);  
    \draw node [label=below:$T_2$] at (1.5+5,0-0.5) {};
    \node [v] at (0+11-0.5,0) (w7v){};
    \node [u] at (0+11,0-0.5) (w7u){};
    \node [s] at (0+11,0) (w7){};
    \node [v] at (0+11,2+0.5) (w6v){};
    \node [u] at (0+11-0.5,2) (w6u){};
    \node [s] at (0+11,2) (w6){};  
    \node [v] at (3+11,2+0.5) (w5v){};
    \node [u] at (3+11+0.5,2) (w5u){};
    \node [s] at (3+11,2) (w5){};  
    \node [v] at (3+11+0.5,0) (w4v){};
    \node [v] at (3+11+0.25,0) (w8v){};
    \node [u] at (3+11,0-0.5) (w4u){}; 
    \node [s] at (3+11,0) (w4){}; 
    \node [w] at (2+11,1) (w3){};
    \node at (2+11+0.04,1-0.3) {$w_4$};
    \node [v] at (1.7+11,1) (w2v){};
    \node at (1.7+11+0.05,1+0.25) {$w_3$};
    \node [u] at (1.3+11,1) (w2u){};
    \node at (1.3+11+0.04,1-0.3) {$w_2$};
    \node [w] at (1+11,1) (w1){};
    \node at (1+11+0.05,1+0.25) {$w_1$};
    \draw (w7v)--(w7)--(w7u);  
    \draw (w6v)--(w6)--(w6u);  
    \draw (w5v)--(w5)--(w5u);  
    \draw (w4v)--(w4)--(w4u); 
    \draw (w7)--(w1)--(w6);   
    \draw (w4)--(w3)--(w5);
    \draw (w1)--(w2u)--(w2v)--(w3);
    \draw node [label=below:$T^+$] at (1.5+11,0-0.5) {};
  \end{tikzpicture}
  \caption[Trees $T_1$, $T_2$, and $T^+$ that give a $(T_1,T_2)$-model $(\eta_1,\eta_2)$ in $T^+$.]{Trees $T_1$, $T_2$, and $T^+$ that give a $(T_1,T_2)$-model $(\eta_1,\eta_2)$ in $T^+$
  such that $\eta_1(w_1w_2)=u_1u_2$, $\eta_1(w_2w_3)=\eta_1(w_3w_4)=u_2u_3$, $\eta_2(w_1w_2)=\eta_2(w_2w_3)=v_1v_2$, $\eta_2(w_3w_4)=v_2v_3$.
   In $T^+$, the nodes represented by \tikz \node [s] {}; are the subdividing nodes described in (i).}
  \label{fig:2model}
\end{figure}

\begin{LEMsum}
  Given two $B$-namus $\Gamma_1=(T_1,\alpha_1,\lambda_1,U_1)$, $\Gamma_2=(T_2,\alpha_2,\lambda_2,U_2)$, 
  and a $(T_1,T_2)$-model $(\eta_1,\eta_2)$ in a tree $T^+$,
  we define $\alpha^+$, $\lambda^+$, and $U^+$ as follows:
  \begin{enumerate}[(i)]
  \item $U^+=U_1+U_2$,
  \item $\alpha^+=\alpha_1\circ \vec\eta_1 +\alpha_2\circ \vec\eta_2$, 
  \item $\lambda^+(\emptyset)=0$, and 
  \item for all $e=uv\in E(T)$, %
  \begin{align*}
  \lambda^+(e)&=\lambda_1\circ\eta_1(e) + \lambda_2\circ\eta_2(e) 
  -\dim (\alpha_1(\vec\eta_1(v,e))\cap\alpha_2(\vec\eta_2(v,e))) 
  \\
  &\quad -\dim (\alpha_1(\vec\eta_1(u,e))\cap\alpha_2(\vec\eta_2(u,e)))
  +\dim (U_1 \cap U_2).
  \end{align*}
  \end{enumerate}
  Then $\Gamma^+=(T^+,\alpha^+,\lambda^+,U^+)$ is a $B$-namu.
  \end{LEMsum}

  Such $\Gamma^+$ is called the \emph{sum} of $\Gamma_1$ and $\Gamma_2$ by $(\eta_1,\eta_2)$, 
and we will denote by $\Gamma^+=\Gamma_1+_{(\eta_1,\eta_2)}\Gamma_2$.
We say that $(\eta_1,\eta_2)$ \emph{co-extends} $\Gamma_1$ and $\Gamma_2$ to $\Gamma^+$.
A $B$-namu $\Gamma^+$ is simply called a \emph{sum} of $\Gamma_1$ and $\Gamma_2$
if there exists a $(T(\Gamma_1),T(\Gamma_2))$-model $(\eta_1,\eta_2)$ in $T(\Gamma^+)$ 
such that $\Gamma^+=\Gamma_1+_{(\eta_1,\eta_2)}\Gamma_2$.
Given $B$-namus $\Gamma_1$ and $\Gamma_2$, 
let us denote by $\Gamma_1\tplus \Gamma_2$ the set of all sums of $\Gamma_1$ and $\Gamma_2$.

\subsection{$\boldsymbol k$-safe extensions of a $\boldsymbol B$-namu}\label{subsec:ksafe}
For a nonnegative integer $k$, a $B$-namu $\Gamma'=(T',\alpha',\lambda',U')$ is 
a \emph{$k$-safe} extension of a $B$-namu $\Gamma=(T,\alpha,\lambda,U)$
if there exists a $T$-model $\eta$ in $T'$
such that 
\begin{itemize}
\item
$U'=U$, $\alpha'(v,e)=\alpha(\vec\eta(v,e))$, $\lambda'(e)=\lambda(\eta(e))$ 
for every incidence $(v,e)$ of $T'$ with $\eta(e)\neq\emptyset$ and 
\item 
$\lambda'(e) + \dim U'- \dim \alpha'(v,e) \le k$ 
for every incidence $(v,e)$ of $T'$ with $\vec\eta(v,e)=(*,\emptyset)$.
\end{itemize}

\subsection{Transcript}\label{subsec:transcript}
A \emph{transcript} of $(T,\L)$ is a pair $\Lambda=(\{\B_v\}_{v\in V(T)},
\allowbreak
\{\B_v'\}_{v\in V(T)})$
of sets of ordered bases 
$\B_v$ and $\B_v'$ of subspaces $B_v=\spn{\B_v}$ and $B_v'=\spn{\B_v'}$ of $\F^r$,
respectively, such that %
\begin{itemize}
\item %
the first $\abs{\B_v}$ elements of $\B_v'$ are precisely $\B_v$ for each node~$v$,
\item $\spn{\B_v}=\spn{\V_v}\cap \spn{\V-\V_v}$ for each node~$v$,
\item $\spn{\B_v'}=\spn{\B_{w_1}}+\spn{\B_{w_2}}$  for each node~$v$ having two children $w_1$
and $w_2$,
\item $\spn{\B_v'}=\spn{\B_v}$  for each leaf $v$.
\end{itemize}
If a node $w$  of~$T$ is a parent of a node~$v$ of~$T$, 
then
there exists the unique $\abs{\B_w'}\times \abs{\B_v}$ matrix $T_v$ over
$\F$
such that 
\[
T_v [x]_{\B_v}=[x]_{\B_w'}
\]
for all $x\in \B_v$.
This matrix $T_v$ is called the \emph{transition matrix} of $\Lambda$
at a node~$v$. 
(For the root node $\rt$, let $T_{\rt}$ be the null matrix.)
The \emph{order} of a transcript $\Lambda$ is the maximum of
$\abs{\B_v'}$ for all nodes $v$.

The next theorem states that a transcript of a given branch-decomposition can be efficiently computed, 
which is a prerequisite to preparing the input of the  algorithm in the next subsection. 
\begin{THMcomputebases}
  Let $\V$ be a subspace arrangement of $\F^r$ represented by an $r\times m$ matrix~$M$ in reduced row echelon form with no zero rows
  such that each $V\in \V$ has dimension at most $\theta$.
  Let $n=\abs{\V}$.  
  Given a rooted branch-decomposition $(T,\L)$ of~$\V$,
  in time $O(\theta^3n^2)$, Algorithm~\ref{alg:bases} correctly computes a basis of 
  $ \spn{\V_v}\cap \spn{\V-\V_v}$
  for all nodes $v$ of~$T$ 
  or confirms that $(T,\L)$ has width larger than $\theta$.
  In addition, if $(T,\L)$ has width at most $\theta$, then  we can compute the transcript $\Lambda=(\{\B_v\},\{\B_v'\})$ of $(T,\L)$ with its transition matrices in time $O(\theta^3n^2)$.
\end{THMcomputebases}
\subsection{The main algorithm}\label{subsec:computing}

We present the key step of the algorithm for \textsc{Branch-Width Compression} assuming the following are given:
\begin{itemize}
\item a rooted branch-decomposition $(T^b, \L^b)$ of~$\V$ of width at most $\theta$, 
\item a set of transition matrices $\{T_v\}_{v\in V(T^b)}$ of some transcript $\Lambda$ of $(T^b,\L^b)$.  
\end{itemize}
The algorithm does not need to look at $\Lambda$.
We note that the above input can be represented in  $O(\theta^2 \cdot \abs{V(T^b)} \cdot \log\abs{\F})$ space.

\begin{algorithm}[t]
  \caption{Constructing the full sets}      \label{alg:fullset}
\begin{algorithmic}[1]
\Procedure{full-set}{$\V,k,(T^b,\L^b),\{T_v\}_{v\in V(T^b)}$}
\Repeat
\State choose an unmarked node~$x$ of $T^b$ farthest from the root
\If{$x$ is a leaf} 
\State let $\Delta_x=(T,\alpha,\lambda,U)$ be the $B_x$-namu such that $T$ is a tree having a single node and $U=B_x$
\State we represent $U$ by the $\abs{\B_x}\times \abs{\B_x}$ identity matrix
\State $\FF_x \gets \{\Delta_x\}$ \label{line:initial1} \Comment{Initialization}
\ElsIf{$x$ is an internal node with two children $x_1$ and $x_2$}
\State convert representations of subspaces in $\FF_{x_1}$ and $\FF_{x_2}$ to representations 
with respect to $\B_x'$ by using $T_{x_1}$ and $T_{x_2}$
\State $\FF_x^+\gets \FF_{x_1}\fplus \FF_{x_2}$ \Comment{Join} \label{line:join}
\State $\FF_x^S\gets \FF_x^+|_{B_x}$ \Comment{Shrink}  \label{line:shrink}
\State $\FF_x^T\gets \{\trim(\Gamma):\Gamma\in\FF_x^S\text{ and }\Gamma\text{ is a $k$-safe extension of }\trim(\Gamma)\}$ \Comment{Trim} \label{line:trim}
\State $\FF_x\gets \{\Delta\in U_k(B_x):\Gamma'\tle\Delta\text{ for some $\Gamma'\in\FF_x^T$}\}$ \Comment{Compare} \label{line:compare}
\EndIf
\State mark $x$
\Until{all nodes in~$T$ are marked}
\EndProcedure
\end{algorithmic}
\end{algorithm}

For a leaf $\ell$ of $T^b$ and a subspace $V=\L^b(\ell)$,
a branch-decomposition $(T,\L)$ of $\V_\ell$ is unique up to isomorphism of~$T$.
Let $\Delta_{\ell}=(T,\alpha, \lambda, U)$ be 
the canonical $B_{\ell}$-namu of $(T,\L)$ of $\V_\ell=\{V\}$.
Since $T$ has no edge, it is clear that 
$\alpha(*,\emptyset)=U$, $\alpha(0,\emptyset)=\{0\}$, $\lambda(\emptyset)=0$, and $U=B_{\ell}$.
Let $\FF_\ell=\{\Delta_\ell\}$. 

Let $x$ be a node of $T^b$ with two children $x_1$, $x_2$ such that $\FF_{x_1}$ and $\FF_{x_2}$ are already computed. We will describe how to compute $\FF_x$.
For two sets ${\mathcal R}_1$ and ${\mathcal R}_2$ of $B$-namus, we define ${\mathcal R}_1 \fplus {\mathcal R}_2$ as the set \[\bigcup_{\Gamma_1 \in {\mathcal R}_1,\Gamma_2 \in {\mathcal R}_2}\Gamma_1 \tplus \Gamma_2.\]
Note that for two children $x_1$ and $x_2$ of a node~$x$ in $T^b$,
when we compute $\FF_{x_1}\oplus \FF_{x_2}$
from a set $\FF_{x_1}$ of $B_{x_1}$-namus and a set $\FF_{x_2}$ of $B_{x_2}$-namus, 
we regard $\FF_{x_1}$ and $\FF_{x_2}$ as the sets of $(B_{x_1}+B_{x_2})$-namus.
Thus, $\FF_{x_1}\fplus\FF_{x_2}$ is well defined.

If $B'$ is a subspace of $B$, then 
we define ${\mathcal R}|_{B'}$ as the set of projections $\Gamma|_{B'}$ for all $\Gamma \in {\mathcal R}$. 
Let $\FF_x^T$ be the set of all $B_x$-namus $\Gamma'=\trim(\Gamma)$ 
for all $\Gamma\in(\FF_{x_1}\fplus \FF_{x_2})|_{B_x}$ such that $\Gamma$ is a $k$-safe extension of $\Gamma'$.
Let \[\FF_x= \{\Delta\in U_k(B_x):\Gamma'\tle\Delta\text{ for some $\Gamma'\in\FF_x^T$}\}.\]

We summarize the procedure to compute $\FF_x$ for all nodes $x$ of $T^b$ in Algorithm~\ref{alg:fullset}.
The following proposition implies that Algorithm~\ref{alg:fullset} correctly decides whether there exists a branch-decomposition of an input subspace arrangement $\V$ having width at most~$k$.

\begin{PROPFSroot}
  Let $k$ be a nonnegative integer. 
  Let $(T^b,L^b)$ be a rooted branch-decomposition of a subspace arrangement $\V$, and for each node~$x$ of $T^b$, let $\FF_x$ be the set computed by Algorithm~\ref{alg:fullset}.
  The branch-width of~$\V$ is at most~$k$ 
  if and only if $\FF_{r}\neq \emptyset$ at the root node $r$ of~$T^b$.
\end{PROPFSroot}

Though Proposition~\ref{prop:FSroot} holds without any assumptions on the input size, to achieve the running time claimed in Theorem~\ref{thm:summary-brw}, we will proprocess the input to make sure that all the subspaces given in the input have small dimension. This will be discussed in Subsection~\ref{subsec:preprocessing}.

\section{Pure branch-decompositions}\label{sec:pure}
Let us start this section with a few terminology on branch-decompositions.
We are going to assume that 
a subspace arrangement $\V$ and its rooted branch-decomposition $(T^b,\L^b)$ are given.
Generally $(T^b,\L^b)$ would have width larger than $k$, as our goal is to find a branch-decomposition of width at most~$k$.

For two nodes $x$, $y$ of $T^b$, %
we say that $x\le y$ if 
either $x=y$ or 
$x$ is a descendant of $y$. Clearly $\le$ is a partial order on $V(T^b)$.
We write $x<y$ if $x\le y$ and $x\neq y$.
For a node~$x$ of $T^b$, 
let $\V_x$ be the set of all subspaces $\L^b(\ell)$ where $\ell$ is a leaf of $T^b$ with $\ell \le x$ and 
let $B_x=\spn{\V_x}\cap \spn{\V-\V_x}$. 

Let $(T,\L)$ be a branch-decomposition of $\V_0\subseteq \V$. 
For an edge $uv$ of a tree $T$ and a node $w$ of~$T$, we say that the ordered pair $(u,v)$ \emph{points toward} 
$w$ if the unique path of~$T$ from $u$ to $w$ visits $v$. Note that this implies $w\neq u$.
We define  $\L(T,u,v)$ to be the set of all $\L(w)$ for each leaf $w$ of~$T$ 
such that $(u,v)$ points toward $w$.
In addition, if $\V_x\subseteq \V_0$ for a node~$x$ of $T^b$, then 
we write $\L_x(T,u,v)=\L(T,u,v)\cap \V_x$ 
and $\lambda_x^{(T,\L)}(uv)=\dim \spn{\L_x(T,u,v)}\cap \spn{\L_x(T,v,u)}$.

\begin{figure}[t]
  \begin{center}
  \tikzstyle{v}=[circle,draw,fill=black,inner sep=0pt,minimum width=2pt]
  \tikzstyle{box}=[rectangle,draw=black,thick, minimum size=3mm]
  \begin{tikzpicture}[level distance=0.5cm,
    level 1/.style={sibling distance=3cm},
    level 2/.style={sibling distance=1.5cm},    level 3/.style={sibling distance=0.5cm}]
    \node [v]{}
      child { node[v,label=$x$](x){}
        child { node[v,label=$y$](y){}
          child { node[v,label=below:$1$](z1){}}
          child { node[v,label=below:$2$](z2){}}}
        child { node[v,label=$y'$](yy){}
         child { node[v,label=below:$3$](z3){}}
          child { node[v,label=below:$4$](z4){}}}}
      child { node[v,label=$x'$](xx){}
        child { node[v,label=$y''$](yyy){}
          child { node[v,label=below:$5$](z5){}}
          child { node[v,label=below:$6$](z6){}}}
        child { node[v]{}
          child { node[v,label=below:$7$](z7){}}
          child { node[v,label=below:$8$](z8){}}}};
    \draw (z1)
    +(0,-7mm) node[box,fill=red]{}
    +(0,-10mm) node[box,fill=red]{}
    +(0,-13mm) node[box,fill=red]{}
    +(0,-16mm) node[box,fill=red]{};
    \draw (z2)
    +(0,-7mm) node[box]{}
    +(0,-10mm) node[box]{}
    +(0,-13mm) node[box,fill=red]{}
    +(0,-16mm) node[box,fill=red]{};
    \draw (z3)
    +(0,-7mm) node[box,fill=red]{}
    +(0,-10mm) node[box]{}
    +(0,-13mm) node[box]{}
    +(0,-16mm) node[box]{};
    \draw (z4)
    +(0,-7mm) node[box]{}
    +(0,-10mm) node[box]{}
    +(0,-13mm) node[box,fill=red]{}
    +(0,-16mm) node[box]{};
    \draw (z5)
    +(0,-7mm) node[box,fill=red]{}
    +(0,-10mm) node[box,fill=red]{}
    +(0,-13mm) node[box]{}
    +(0,-16mm) node[box]{};
    \draw (z6)
    +(0,-7mm) node[box]{}
    +(0,-10mm) node[box,fill=red]{}
    +(0,-13mm) node[box,fill=red]{}
    +(0,-16mm) node[box]{};
    \draw (z7)
    +(0,-7mm) node[box,fill=red]{}
    +(0,-10mm) node[box]{}
    +(0,-13mm) node[box,fill=red]{}
    +(0,-16mm) node[box]{};
    \draw (z8)
    +(0,-7mm) node[box]{}
    +(0,-10mm) node[box,fill=red]{}
    +(0,-13mm) node[box,fill=red]{}
    +(0,-16mm) node[box]{};  
    \draw (x) +(-.1,+.1) [draw=blue,bend right,arrows=->] to +(-3,0) node  (bx){};
    \draw (y) +(-.1,+.1) [draw=blue,bend right,arrows=->] to +(-1,.5) node  (by){};;
    \draw (yy) +(.1,.1) [draw=blue,bend left,arrows=->] to +(6,1) node  (byy){};;
    \draw (yyy) +(.1,.1) [draw=blue,bend left,arrows=->] to +(3,0) node  (byyy){};;
    \draw (xx) +(.1,.1) [draw=blue,bend left,arrows=->] to +(3,.5) node  (bxx){};;
    \draw (bx)   node [label=below:{$B_x$},inner sep=0pt,minimum width=2pt] {}
    +(0,-7mm) node[box,fill=red]{}
    +(0,-10mm) node[box,fill=red]{}
    +(0,-13mm) node[box]{}
    +(0,-16mm) node[box]{}
    +(3mm,-7mm) node[box]{}
    +(3mm,-10mm) node[box,fill=red]{}
    +(3mm,-13mm) node[box,fill=red]{}
    +(3mm,-16mm) node[box]{}    
    ;
    \draw (by)   node [label=below:{$B_y$},inner sep=0pt,minimum width=2pt] {}
    +(0,-7mm) node[box,fill=red]{}
    +(0,-10mm) node[box,fill=red]{}
    +(0,-13mm) node[box]{}
    +(0,-16mm) node[box]{} ;   
    \draw (byy)   node [label=below:{$B_{y'}$},inner sep=0pt,minimum width=2pt] {}
    +(0,-7mm) node[box,fill=red]{}
    +(0,-10mm) node[box]{}
    +(0,-13mm) node[box,fill=red]{}
    +(0,-16mm) node[box]{};   
    \draw (byyy)   node [label=below:{$B_{y''}$},inner sep=0pt,minimum width=2pt] {}
    +(0,-7mm) node[box,fill=red]{}
    +(0,-10mm) node[box,fill=red]{}
    +(0,-13mm) node[box]{}
    +(0,-16mm) node[box]{}
    +(3mm,-7mm) node[box]{}
    +(3mm,-10mm) node[box,fill=red]{}
    +(3mm,-13mm) node[box,fill=red]{}
    +(3mm,-16mm) node[box]{}    
    ;       
    \draw (bxx)   node [label=below:{$B_{x'}$},inner sep=0pt,minimum width=2pt] {}
    +(0,-7mm) node[box,fill=red]{}
    +(0,-10mm) node[box,fill=red]{}
    +(0,-13mm) node[box]{}
    +(0,-16mm) node[box]{}
    +(3mm,-7mm) node[box]{}
    +(3mm,-10mm) node[box,fill=red]{}
    +(3mm,-13mm) node[box,fill=red]{}
    +(3mm,-16mm) node[box]{}    
    ;             
  \end{tikzpicture}
\end{center}
  \caption{An example of a rooted  branch-decomposition {$(T^b,\mathcal{L}^b)$}
  of $\V=\{V_1,V_2,\ldots,V_8\}$
  and the boundary spaces $B_x$, $B_y$, $B_{y'}$. 
  Each grid represents the column space of the $0$-$1$ matrix over the binary field in which the red box corresponds to $1$
  and the white box corresponds to $0$.
  }\label{fig:bd}
\end{figure}
\begin{figure}
  \tikzstyle{v}=[circle,draw,fill=black,inner sep=0pt,minimum width=2pt]
  \begin{center} 
  \subfloat[$x$\=/mixed pairs (red arrows) and $x$\=/crossing edges (blue thick edges)]{
    \begin{tikzpicture}[grow cyclic,level distance=1cm,sibling angle=120,scale=.8]
      \node [v](u) {}
      child { 
        node [v,label=$6$] (z6) {}
      }
      child {
        node [v,label={\color{red}$1$}] (z1) {}
      }
      child {
        node [v] (v) {}
          child {
            node [v] (w) {}
              child {
                node [v,label={\color{red}$4$}] (z4) {}
              }
              child {
                node [v,label=left:$5$] (z5) {}
              }
          }
          child {
            node [v] (w2) {}
              child {
                node [v,label=$7$] (z7) {}
              }
              child {
                node [v] (w3) {}
                  child {
                    node [v,label={\color{red}$2$}] (z2){}
                  }
                  child {
                    node [v,label={\color{red}$3$}] (z3){}
                  }
              }
          }
      };
      \foreach \s/\t in {v/w,z6/u,w3/w2}
        \draw [->,red] ([shift=(60:4mm),shift=(150:1mm)]\s.center) -- ([shift=(60:-4mm),shift=(150:1mm)]\t.center);
      \foreach \s/\t in {w/v}      
        \draw [<-,red] ([shift=(60:4mm),shift=(150:-1mm)]\t.center) -- ([shift=(60:-4mm),shift=(150:-1mm)]\s.center);
      \foreach \s/\t in {v/w2,z4/w,z1/u}
        \draw [->,red] ([shift=(180:4mm),shift=(150+120:1mm)]\s.center) -- ([shift=(180:-4mm),shift=(150+120:1mm)]\t.center);
      \foreach \s/\t in {w2/v,z2/w3}      
        \draw [<-,red] ([shift=(180:4mm),shift=(150+120:-1mm)]\t.center) -- ([shift=(180:-4mm),shift=(150+120:-1mm)]\s.center); 
      \foreach \s/\t in {v/u,z5/w,z7/w2}
        \draw [->,red] ([shift=(-60:4mm),shift=(-150:-1mm)]\s.center) -- ([shift=(-60:-4mm),shift=(-150:-1mm)]\t.center);
      \foreach \s/\t in {u/v,z3/w3}      
        \draw [<-,red] ([shift=(-60:4mm),shift=(-150:1mm)]\t.center) -- ([shift=(-60:-4mm),shift=(-150:1mm)]\s.center);
      \foreach \s/\t in {v/w,v/u,v/w2}
        \draw [very thick,blue] (\s)--(\t);
     \end{tikzpicture}}
     \quad
     \subfloat[Edges cutting $\V_x$ (red thick edges)]{
      \begin{tikzpicture}[grow cyclic,level distance=1cm,sibling angle=120,scale=.8]
        \node [v](u) {}
        child { 
          node [v,label=$6$] (z6) {}
        }
        child {
          node [v,label={\color{red}$1$}] (z1) {}
        }
        child {
          node [v] (v) {}
            child {
              node [v] (w) {}
                child {
                  node [v,label={\color{red}$4$}] (z4) {}
                }
                child {
                  node [v,label=left:$5$] (z5) {}
                }
            }
            child {
              node [v] (w2) {}
                child {
                  node [v,label=$7$] (z7) {}
                }
                child {
                  node [v] (w3) {}
                    child {
                      node [v,label={\color{red}$2$}] (z2){}
                    }
                    child {
                      node [v,label={\color{red}$3$}] (z3){}
                    }
                }
            }
        };
        \foreach \s/\t in {v/w,v/u,v/w2,w/z4,u/z1,w2/w3,w3/z3,w3/z2}
        \draw [very thick,red] (\s)--(\t);
       \end{tikzpicture}}
       \quad
       \subfloat[(proper) $x$\=/degenerate edges]{
        \begin{tikzpicture}[grow cyclic,level distance=1cm,sibling angle=120,scale=.8]
          \node [v](u) {}
          child { 
            node [v,label=$6$] (z6) {}
          }
          child {
            node [v,label={\color{red}$1$}] (z1) {}
          }
          child {
            node [v] (v) {}
              child {
                node [v] (w) {}
                  child {
                    node [v,label={\color{red}$4$}] (z4) {}
                  }
                  child {
                    node [v,label=left:$5$] (z5) {}
                  }
              }
              child {
                node [v] (w2) {}
                  child {
                    node [v,label=$7$] (z7) {}
                  }
                  child {
                    node [v] (w3) {}
                      child {
                        node [v,label={\color{red}$2$}] (z2){}
                      }
                      child {
                        node [v,label={\color{red}$3$}] (z3){}
                      }
                  }
              }
          };
          \foreach \s/\t in {w3/w2,w2/v}
            \draw [very thick,blue] (\s)--(\t);
         \end{tikzpicture}}  
       \quad
       \subfloat[(improper) $y$\=/degenerate edges]{
        \begin{tikzpicture}[grow cyclic,level distance=1cm,sibling angle=120,scale=.8]
          \node [v](u) {}
          child { 
            node [v,label=$6$] (z6) {}
          }
          child {
            node [v,label={\color{red}$1$}] (z1) {}
          }
          child {
            node [v] (v) {}
              child {
                node [v] (w) {}
                  child {
                    node [v,label={$4$}] (z4) {}
                  }
                  child {
                    node [v,label=left:$5$] (z5) {}
                  }
              }
              child {
                node [v] (w2) {}
                  child {
                    node [v,label=$7$] (z7) {}
                  }
                  child {
                    node [v] (w3) {}
                      child {
                        node [v,label={\color{red}$2$}] (z2){}
                      }
                      child {
                        node [v,label={$3$}] (z3){}
                      }
                  }
              }
          };
          \foreach \s/\t in {z2/w3,w3/w2,w2/v,v/u,u/z1}
            \draw [very thick,blue] (\s)--(\t);
         \end{tikzpicture}}   
       \quad
         \subfloat[(improper) $y'$-degenerate edges]{
          \begin{tikzpicture}[grow cyclic,level distance=1cm,sibling angle=120,scale=.8]
            \node [v](u) {}
            child { 
              node [v,label=$6$] (z6) {}
            }
            child {
              node [v,label={$1$}] (z1) {}
            }
            child {
              node [v] (v) {}
                child {
                  node [v] (w) {}
                    child {
                      node [v,label={\color{red}$4$}] (z4) {}
                    }
                    child {
                      node [v,label=left:$5$] (z5) {}
                    }
                }
                child {
                  node [v] (w2) {}
                    child {
                      node [v,label=$7$] (z7) {}
                    }
                    child {
                      node [v] (w3) {}
                        child {
                          node [v,label={$2$}] (z2){}
                        }
                        child {
                          node [v,label={\color{red}$3$}] (z3){}
                        }
                    }
                }
            };
            \foreach \s/\t in {z3/w3,w3/w2,w2/v,v/w,w/z4}
              \draw [very thick,blue] (\s)--(\t);
           \end{tikzpicture}}      
           \quad
           \subfloat[$x$\=/guarding edges. (All $x$\=/guarding edges are proper in this example.)]{
    \begin{tikzpicture}[grow cyclic,level distance=1cm,sibling angle=120,scale=.8]
      \node [v](u) {}
      child { 
        node [v,label=$6$] (z6) {}
      }
      child {
        node [v,label={\color{red}$1$}] (z1) {}
      }
      child {
        node [v] (v) {}
          child {
            node [v] (w) {}
              child {
                node [v,label={\color{red}$4$}] (z4) {}
              }
              child {
                node [v,label=left:$5$] (z5) {}
              }
          }
          child {
            node [v] (w2) {}
              child {
                node [v,label=$7$] (z7) {}
              }
              child {
                node [v] (w3) {}
                  child {
                    node [v,label={\color{red}$2$}] (z2){}
                  }
                  child {
                    node [v,label={\color{red}$3$}] (z3){}
                  }
              }
          }
      };
      \foreach \s/\t in {v/w,w/z4,w/z5,u/z6,u/z1,w3/z2,w3/z3,w2/z7,v/u}
        \draw [very thick,red,->] (\s)--(\t);
     \end{tikzpicture}}     
  \end{center}
  \caption{The first example of $x$\=/mixed pairs, $x$\=/crossing edges, edges cutting $\V_x=\{V_1,V_2,V_3,V_4\}$, and $x$\=/guarding edges when $\V_0=\{V_1,V_2,\ldots,V_7\}$ from Figure~\ref{fig:bd}.
  All $y$\=/degenerate edges are improper $y$\=/degenerate
  and 
  all $y'$-degenerate edges are improper $y'$-degenerate.
  But all $x$\=/degenerate edges are proper $x$\=/degenerate.
  In this example, $(T,\mathcal L)$ is $y$\=/degenereate and $y'$-degenerate, but not $x$\=/degenerate. 
  No node of~$T$ is $x$\=/blocked by an $x$\=/guarding edge. 
  There is no $x$\=/blocking path.
  }
  \label{fig:mixed}
\end{figure}
\begin{figure}
  \tikzstyle{v}=[circle,draw,fill=black,inner sep=0pt,minimum width=2pt]
  \begin{center} 
  \subfloat[$x$\=/mixed pairs (red arrows) and $x$\=/crossing edges (blue thick edges)]{
    \begin{tikzpicture}[grow cyclic,level distance=1cm,sibling angle=120,scale=.8]
      \node [v](u) {}
      child { 
        node [v,label=$6$] (z6) {}
      }
      child {
        node [v,label={\color{red}$1$}] (z1) {}
      }
      child {
        node [v] (v) {}
          child {
            node [v] (w) {}
              child {
                node [v,label={\color{red}$2$}] (z2) {}
              }
              child {
                node [v,label=left:$5$] (z5) {}
              }
          }
          child {
            node [v] (w2) {}
              child {
                node [v,label=$7$] (z7) {}
              }
              child {
                node [v] (w3) {}
                  child {
                    node [v,label={\color{red}$4$}] (z4){}
                  }
                  child {
                    node [v,label={\color{red}$3$}] (z3){}
                  }
              }
          }
      };
      \foreach \s/\t in {v/w,z6/u,w3/w2}
        \draw [->,red] ([shift=(60:4mm),shift=(150:1mm)]\s.center) -- ([shift=(60:-4mm),shift=(150:1mm)]\t.center);
      \foreach \s/\t in {w/v}      
        \draw [<-,red] ([shift=(60:4mm),shift=(150:-1mm)]\t.center) -- ([shift=(60:-4mm),shift=(150:-1mm)]\s.center);
      \foreach \s/\t in {v/w2,z2/w,z1/u}
        \draw [->,red] ([shift=(180:4mm),shift=(150+120:1mm)]\s.center) -- ([shift=(180:-4mm),shift=(150+120:1mm)]\t.center);
      \foreach \s/\t in {w2/v,z4/w3}      
        \draw [<-,red] ([shift=(180:4mm),shift=(150+120:-1mm)]\t.center) -- ([shift=(180:-4mm),shift=(150+120:-1mm)]\s.center); 
      \foreach \s/\t in {v/u,z5/w,z7/w2}
        \draw [->,red] ([shift=(-60:4mm),shift=(-150:-1mm)]\s.center) -- ([shift=(-60:-4mm),shift=(-150:-1mm)]\t.center);
      \foreach \s/\t in {u/v,z3/w3}      
        \draw [<-,red] ([shift=(-60:4mm),shift=(-150:1mm)]\t.center) -- ([shift=(-60:-4mm),shift=(-150:1mm)]\s.center);
      \foreach \s/\t in {v/w,v/u,v/w2}
        \draw [very thick,blue] (\s)--(\t);
     \end{tikzpicture}}
     \quad
     \subfloat[Edges cutting $\V_x$ (red thick edges)]{
      \begin{tikzpicture}[grow cyclic,level distance=1cm,sibling angle=120,scale=.8]
        \node [v](u) {}
      child { 
        node [v,label=$6$] (z6) {}
      }
      child {
        node [v,label={\color{red}$1$}] (z1) {}
      }
      child {
        node [v] (v) {}
          child {
            node [v] (w) {}
              child {
                node [v,label={\color{red}$2$}] (z2) {}
              }
              child {
                node [v,label=left:$5$] (z5) {}
              }
          }
          child {
            node [v] (w2) {}
              child {
                node [v,label=$7$] (z7) {}
              }
              child {
                node [v] (w3) {}
                  child {
                    node [v,label={\color{red}$4$}] (z4){}
                  }
                  child {
                    node [v,label={\color{red}$3$}] (z3){}
                  }
              }
          }
      };
        \foreach \s/\t in {v/w,v/u,v/w2,w/z2,u/z1,w2/w3,w3/z3,w3/z4}
        \draw [very thick,red] (\s)--(\t);
       \end{tikzpicture}}
       \quad
       \subfloat[(improper) $y$\=/degenerate edges]{
        \begin{tikzpicture}[grow cyclic,level distance=1cm,sibling angle=120,scale=.8]
          \node [v](u) {}
      child { 
        node [v,label=$6$] (z6) {}
      }
      child {
        node [v,label={\color{red}$1$}] (z1) {}
      }
      child {
        node [v] (v) {}
          child {
            node [v] (w) {}
              child {
                node [v,label={\color{red}$2$}] (z2) {}
              }
              child {
                node [v,label=left:$5$] (z5) {}
              }
          }
          child {
            node [v] (w2) {}
              child {
                node [v,label=$7$] (z7) {}
              }
              child {
                node [v] (w3) {}
                  child {
                    node [v,label={$4$}] (z4){}
                  }
                  child {
                    node [v,label={$3$}] (z3){}
                  }
              }
          }
      };
          \foreach \s/\t in {v/u,u/z1,v/w,w/z2}
            \draw [very thick,blue] (\s)--(\t);
         \end{tikzpicture}}   
         \quad
         \subfloat[(improper) $y'$-degenerate edges]{
          \begin{tikzpicture}[grow cyclic,level distance=1cm,sibling angle=120,scale=.8]
            \node [v](u) {}
        child { 
          node [v,label=$6$] (z6) {}
        }
        child {
          node [v,label={\color{red}$1$}] (z1) {}
        }
        child {
          node [v] (v) {}
            child {
              node [v] (w) {}
                child {
                  node [v,label={\color{red}$2$}] (z2) {}
                }
                child {
                  node [v,label=left:$5$] (z5) {}
                }
            }
            child {
              node [v] (w2) {}
                child {
                  node [v,label=$7$] (z7) {}
                }
                child {
                  node [v] (w3) {}
                    child {
                      node [v,label={\color{red}$4$}] (z4){}
                    }
                    child {
                      node [v,label={\color{red}$3$}] (z3){}
                    }
                }
            }
        };
        \foreach \s/\t in {z3/w3,w3/z4}
              \draw [very thick,blue] (\s)--(\t);
           \end{tikzpicture}}  
           \quad
           \subfloat[(proper) $x$\=/blocking path]{
    \begin{tikzpicture}[grow cyclic,level distance=1cm,sibling angle=120,scale=.8]
      \node [v](u) {}
      child { 
        node [v,label=$6$] (z6) {}
      }
      child {
        node [v,label={\color{red}$1$}] (z1) {}
      }
      child {
        node [v] (v) {}
          child {
            node [v] (w) {}
              child {
                node [v,label={\color{red}$2$}] (z2) {}
              }
              child {
                node [v,label=left:$5$] (z5) {}
              }
          }
          child {
            node [v] (w2) {}
              child {
                node [v,label=$7$] (z7) {}
              }
              child {
                node [v] (w3) {}
                  child {
                    node [v,label={\color{red}$4$}] (z4){}
                  }
                  child {
                    node [v,label={\color{red}$3$}] (z3){}
                  }
              }
          }
      };
      \foreach \s/\t in {w3/w2,w2/v}
        \draw [very thick,red] (\s)--(\t);
     \end{tikzpicture}}
           \quad
             \subfloat[(proper) $y''$-blocking paths]{
              \begin{tikzpicture}[grow cyclic,level distance=1cm,sibling angle=120,scale=.8]
                \node [v](u) {}
            child { 
              node [v,label={\color{red}$6$}] (z6) {}
            }
            child {
              node [v,label={$1$}] (z1) {}
            }
            child {
              node [v] (v) {}
                child {
                  node [v] (w) {}
                    child {
                      node [v,label={$2$}] (z2) {}
                    }
                    child {
                      node [v,label=left:{\color{red}$5$}] (z5) {}
                    }
                }
                child {
                  node [v] (w2) {}
                    child {
                      node [v,label=$7$] (z7) {}
                    }
                    child {
                      node [v] (w3) {}
                        child {
                          node [v,label={$4$}] (z4){}
                        }
                        child {
                          node [v,label={$3$}] (z3){}
                        }
                    }
                }
            };
            \foreach \s/\t in {v/u,u/z6}
                  \draw [very thick,green] (\s)--(\t);
            \foreach \s/\t in {z5/w,w/v}
                  \draw [very thick,red] (\s)--(\t);            
            \foreach \s/\t in {w/v,v/u}
                  \draw [very thick,blue] ([xshift=-1mm]\s.center)--([xshift=-1mm]\t.center);                  
               \end{tikzpicture}}
     \quad
     \subfloat[(proper) $x$\=/guarding edges. (All $x$\=/guarding edges are proper in this example.)]{
    \begin{tikzpicture}[grow cyclic,level distance=1cm,sibling angle=120,scale=.8]
      \node [v](u) {}
      child { 
        node [v,label=$6$] (z6) {}
      }
      child {
        node [v,label={\color{red}$1$}] (z1) {}
      }
      child {
        node [v] (v) {}
          child {
            node [v] (w) {}
              child {
                node [v,label={\color{red}$2$}] (z2) {}
              }
              child {
                node [v,label=left:$5$] (z5) {}
              }
          }
          child {
            node [v] (w2) {}
              child {
                node [v,label=$7$] (z7) {}
              }
              child {
                node [v] (w3) {}
                  child {
                    node [v,label={\color{red}$4$}] (z4){}
                  }
                  child {
                    node [v,label={\color{red}$3$}] (z3){}
                  }
              }
          }
      };
      \foreach \s/\t in {w3/z4,w3/z3,w2/z7,v/w,w/z2,w/z5,u/z1,u/z6,v/u}
        \draw [very thick,red,->] (\s)--(\t);
     \end{tikzpicture}}  
     \quad
     \subfloat[$x$\=/blocked edges and nodes ($x$\=/blocked by the $x$\=/blocking path)]{
\begin{tikzpicture}[grow cyclic,level distance=1cm,sibling angle=120,scale=.8]
\node [v](u) {}
child { 
  node [v,label=$6$] (z6) {}
}
child {
  node [v,label={\color{red}$1$}] (z1) {}
}
child {
  node [v] (v) {}
    child {
      node [v] (w) {}
        child {
          node [v,label={\color{red}$2$}] (z2) {}
        }
        child {
          node [v,label=left:$5$] (z5) {}
        }
    }
    child {
      node [v] (w2) {}
        child {
          node [v,label=$7$] (z7) {}
        }
        child {
          node [v] (w3) {}
            child {
              node [v,label={\color{red}$4$}] (z4){}
            }
            child {
              node [v,label={\color{red}$3$}] (z3){}
            }
        }
    }
};
  \draw [very thick,blue] (w2)--(z7);
  \node [circle,draw,inner sep=0pt,minimum width=4pt] at (z7){};
\end{tikzpicture}}
\quad
     \subfloat[$y''$-blocked edges and nodes ($y''$-blocked by an $y''$-blocking path)]{
\begin{tikzpicture}[grow cyclic,level distance=1cm,sibling angle=120,scale=.8]
\node [v](u) {}
child { 
  node [v,label=$6$] (z6) {}
}
child {
  node [v,label={\color{red}$1$}] (z1) {}
}
child {
  node [v] (v) {}
    child {
      node [v] (w) {}
        child {
          node [v,label={\color{red}$2$}] (z2) {}
        }
        child {
          node [v,label=left:$5$] (z5) {}
        }
    }
    child {
      node [v] (w2) {}
        child {
          node [v,label=$7$] (z7) {}
        }
        child {
          node [v] (w3) {}
            child {
              node [v,label={\color{red}$4$}] (z4){}
            }
            child {
              node [v,label={\color{red}$3$}] (z3){}
            }
        }
    }
};
  \foreach \s/\t in {w/z2,u/z1,v/w2,w2/z7,w2/w3,w3/z4,w3/z3}
    \draw [very thick,blue] (\s)--(\t);
  \foreach \s in {z3,z4,z7,z2,z1,w3,w2}
    \node [circle,draw,inner sep=0pt,minimum width=4pt] at (\s){};
\end{tikzpicture}}  
  \end{center}
  \caption{The second example of various terms on a branch-decomposition $(T,\mathcal L)$ where $\V_x=\{V_1,V_2,V_3,V_4\}$ and $\V_0=\{V_1,V_2,\ldots,V_7\}$ from Figure~\ref{fig:bd}.
  No edge is $x$\=/degenerate. 
  In this example, $(T,\mathcal L)$ is $y$\=/degenerate and $y'$-degenerate, but not $x$\=/degenerate. All $y$\=/degenerate edges are improper $y$\=/degenerate
  and all $y'$-degenerate edges are improper $y'$-degenerate.
  Every $x$\=/guarding edge is proper.
  Also $(T,\mathcal L)$ is $y'$-disjoint but not $y$\=/disjoint.  
  By definition, $(T,\mathcal L)$ is $x$\=/pure and $y'$-pure, but not $y$\=/pure, and therefore it is not totally pure with respect to $(T^b,\mathcal L^b)$.
  }
  \label{fig:mixed2}
\end{figure}

For an edge $uv$ of~$T$ and a node~$x$ of $T^b$ with $\V_x\subseteq\V_0$, we say that the ordered pair $(u,v)$ is \emph{$x$\=/mixed} if $\L(T,u,v)\cap \V_x\neq \emptyset$ 
and $\L(T,u,v)\cap (\V_0- \V_x)\neq \emptyset$.
For $\mathcal{S}\subseteq \V_0$, an edge $uv$ of~$T$ is said to \emph{cut} $\mathcal{S}$ if $\L(T,u,v)\cap \mathcal{S}\neq \emptyset$ and 
$\L(T,v,u)\cap \mathcal{S}\neq \emptyset$. Note that $uv$ does \emph{not} cut $\mathcal{S}$ if and only if $\mathcal{S}\subseteq \L(T,u,v)$ 
or $\mathcal{S} \subseteq \L(T,v,u)$. 
We say that $uv$ is \emph{$x$\=/crossing} if both $(u,v)$ and $(v,u)$ are $x$\=/mixed, or equivalently $uv$ cuts both $\V_x$ and $\V_0-\V_x$.

For a node~$x$ of $T^b$ with $\V_x\subseteq \V_0$,
we say that an edge $uv$ of~$T$ is \emph{$x$\=/degenerate} if
\[ \spn{\L_x(T,u,v)}\cap B_x= \spn{\L_x(T,v,u)}\cap B_x.\]
We define the concept of $x$\=/degenerate branch-decompositions recursively as follows.
We say that $(T,\L)$ is \emph{$x$\=/degenerate} if 
$T$ has an $x$\=/degenerate edge $uv$ 
such that 
\begin{itemize}
\item $uv$ cuts $\V_x$ and 
\item for all $y<x$, if $(T,\L)$ is $y$\=/degenerate, then $uv$ does not cut $\V_y$.   
\end{itemize}
For an $x$\=/degenerate branch-decomposition, such an edge $uv$ is said to be \emph{improper} $x$\=/degenerate.
An $x$\=/degenerate edge is \emph{proper} if it is not improper.
Note that if $x$ is a leaf of $T^b$, then $(T,\L)$ is not $x$\=/degenerate, because $T$ has no edge cutting $\V_x$.
We say that $(T,\L)$ is \emph{$x$\=/disjoint}
if $\V_0=\V_x$ or 
$T$ has an edge $uv$ such that 
$\L(T,u,v)=\V_x$
and $v$ is incident with an improper $x$\=/degenerate edge.

We say that an edge $uv$ of~$T$ \emph{$x$\=/guards its end $v$}
if
\[
\spn{\L_x(T,u,v)}\cap B_x\subsetneq \spn{\L_x(T,v,u)}\cap B_x.
\]
We say an edge is \emph{$x$\=/guarding} if it $x$\=/guards one of its ends.
Note that $\L_x(T,v,u)\neq\emptyset$.
An edge  $uv$ that $x$\=/guards $v$ is called \emph{improper} $x$\=/guarding
if $v$ has two neighbors $v_1$, $v_2$ in $T-uv$
such that 
both  $\L_x(T,v,v_1)$ and $\L_x(T,v,v_2)$ are nonempty
and %
$(u,v)$ is $x$\=/mixed. 
An $x$\=/guarding edge is \emph{proper} if it is not improper.
A node $w$ of~$T$ is \emph{$x$\=/blocked} by an $x$\=/guarding edge $e$ 
if 
(1) $e$ $x$\=/guards an end $v$ having two neighbors $v_1$, $v_2$ in $T-e$ such that 
both $\L_x(T,v,v_1)$ and $\L_x(T,v,v_2)$ are nonempty,
(2) $w\neq v$, and 
(3) the component of $T-e$ containing $w$ contains $v$.
An edge $f$ of~$T$ is \emph{$x$\=/blocked} by an $x$\=/guarding edge $e$
if at least one end of $f$ is $x$\=/blocked by $e$.

A $2$-edge path $uvw$ of~$T$ is an \emph{$x$\=/blocking path} if 
  \begin{align*}
\spn{\L_x(T,u,v)}\cap B_x&=\spn{\L_x(T,v,w)}\cap B_x,\\
\spn{\L_x(T,w,v)}\cap B_x&=\spn{\L_x(T,v,u)}\cap B_x, 
  \end{align*}
and neither $uv$ nor $vw$ is $x$\=/degenerate or $x$\=/guarding.
We say that $v$ is the \emph{center} of the path $uvw$.
An $x$\=/blocking path $uvw$ is \emph{improper}
if $v$ has a neighbor $t$ in $T-u-w$ such that 
$(v,t)$ is $x$\=/mixed
and $\L_x(T,v,u)$, $\L_x(T,v,w)$, $\L_x(T,v,t)$ are nonempty.
An $x$\=/blocking path is \emph{proper} if it is not improper. 
A node~$z$ of~$T$ is called \emph{$x$\=/blocked} by an $x$\=/blocking path $uvw$
if $z\neq v$ and the path from $v$ to $z$ has neither $u$ nor $w$.
An edge~$f$ of~$T$ is \emph{$x$\=/blocked} by an $x$\=/blocking path $uvw$
if at least one end of $f$ is $x$\=/blocked by $uvw$.
A node or an edge of~$T$ is said to be \emph{$x$\=/blocked}
if it is $x$\=/blocked by an $x$\=/blocking path or an $x$\=/guarding edge.

We observe that 
if $uvw$ is an $x$\=/blocking path in~$T$ and $t$ is the neighbor of $v$ other than $u$ and $w$, then
$vt$ $x$\=/guards $t$ as 
\begin{align*}
\spn{\L_x(T,v,t)}\cap B_x &= \spn{\L_x(T,u,v)\cap \L_x(T,w,v)}\cap B_x \\
                          &\subseteq \spn{\L_x(T,u,v)}\cap \spn{\L_x(T,w,v)}\cap B_x\\
                            &= \spn{\L_x(T,v,u)}\cap \spn{\L_x(T,v,w)}\cap B_x\\ 
&\subseteq \spn{\L_x(T,v,u)}\cap B_x + \spn{\L_x(T,v,w)}\cap B_x \\
&\subseteq \spn{\L_x(T,v,u)\cup \L_x(T,v,w)}\cap B_x = \spn{\L_x(T,t,v)}\cap B_x
\end{align*}
and equality holds if and only if $\spn{\L_x(T,v,u)}\cap B_x \subseteq \spn{\L_x(T,v,w)}\cap B_x$ or 
$\spn{\L_x(T,v,w)}\cap B_x \subseteq \spn{\L_x(T,v,u)}\cap B_x$, which is impossible since 
neither $uv$ nor $vw$ is $x$\=/degenerate or $x$\=/guarding.
We remark that if $v$ is a node of~$T$ having three neighbors $v_1$, $v_2$, $v_3$,
then at most one of $v_1vv_2$, $v_2vv_3$, $v_3vv_1$ is an $x$\=/blocking path.
Otherwise, say $v_1vv_2$ and $v_2vv_3$ are $x$\=/blocking.
Then $\spn{\L_x(T,v_2,v)}\cap B_x=\spn{\L_x(T,v,v_3)}\cap B_x\subseteq \spn{\L_x(T,v_1,v)}\cap B_x=\spn{\L_x(T,v,v_2)}\cap B_x$, which implies that an edge $vv_2$ is $x$\=/degenerate or $x$\=/guards $v$.

We say that $(T,\L)$ is \emph{$x$\=/pure} 
if  the following hold.
\begin{itemize}
\item 
If  $(T,\L)$ is not $x$\=/degenerate, then
all $x$\=/blocking paths and all $x$\=/guarding edges of~$T$ are proper.

\item 
If $(T,\L)$ is $x$\=/degenerate, then $(T,\L)$ is $x$\=/disjoint. 
\end{itemize}
We will see later that if $(T,\L)$ is $x$\=/degenerate, then every edge of~$T$ is either $x$\=/degenerate or $x$\=/guarding.

We say that a branch-decomposition $(T,\L)$ of $\V_0$ is \emph{totally pure with respect to $(T^b,\L^b)$}
if $(T,\L)$ is $x$\=/pure for all nodes $x$ of $T^b$ with $\V_x\subseteq \V_0$.
We omit ``with respect to $(T^b,\L^b)$''
if it is clear from the context.
See Figures~\ref{fig:mixed} and \ref{fig:mixed2} for examples.
Our goal is to prove that if the branch-width of a subspace arrangement is at most~$k$, then there exists a totally pure branch-decomposition of the subspace arrangement whose width is at most~$k$. 

Here are useful operations that would transform a branch-decomposition of width~$k$ into an $x$\=/pure branch-decomposition of width at most~$k$.

\subsection{Fork}
Suppose that $(T,\L)$ is a branch-decomposition of $\V_0$,
$x$ is a node of $T^b$ with $\V_x\subseteq \V_0$,
and $v$ is a node of~$T$ having three neighbors $v_1$, $v_2$, $v_3$
such that 
$v_1vv_2$ is an improper $x$\=/blocking path, and 
$\lambda_x^{(T,\L)}(vv_1)\ge \lambda_x^{(T,\L)}(vv_2)$.

Now we describe the operation called \emph{forking}. 
First let $T^{vv_3}$ be the component of $T-vv_3$ containing $v_3$, regarded as a rooted binary tree with the root $v_3$.
\emph{Smoothing} a degree-$2$ node $w$ is to contract one of the edges incident with $w$.
Let $T^{vv_3}_x$ be the minimal subtree of $T^{vv_3}$ 
that contains every leaf of $T^{vv_3}$ mapped to a subspace in $\V_x$ by $\L$.
Let the root of $T_x^{vv_3}$ be the node of $T_x^{vv_3}$ closest to $v_3$ in $T^{vv_3}$.
Let $T^{v,x}$ be the rooted binary tree obtained from $T^{vv_3}_x$ 
by smoothing degree-$2$ nonroot nodes if necessary.
Let $T^{vv_3}_{\bar{x}}$ be the minimal subtree of $T^{vv_3}$ 
that contains every leaf of $T^{vv_3}$ mapped to a subspace in $\V_0-\V_x$ by $\L$.
Let the root of $T^{vv_3}_{\bar{x}}$ be the node of $T^{vv_3}_{\bar{x}}$ closest to $v_3$ in $T^{vv_3}$
such that every subspace in $\V_0-\V_x$ corresponds to a leaf of $T^{vv_3}_{\bar{x}}$
and the root of $T^{vv_3}_{\bar{x}}$ is the closest node to $v_3$ in $T^{vv_3}$.
Let $T^{v,\bar{x}}$ be the rooted binary tree obtained from $T^{vv_3}_{\bar{x}}$ 
by smoothing degree-$2$ nonroot nodes if necessary.

Then, \emph{forking at $v$ by $\V_x$} is an operation
to obtain a new branch-decomposition $(T',\L)$ from $(T,\L)$ 
by  
deleting all nodes $x$\=/blocked by $v_1vv_2$, 
subdividing the edge $vv_2$ to create a degree-$2$ node $v'$,
making the disjoint union with $T^{v,x}$ and $T^{v,\bar x}$,
and adding edges joining $v$ with the root of $T^{v,x}$
and $v'$ with the root of $T^{v,\bar{x}}$.
See Figure~\ref{fig:fork2}. 
\begin{figure}
  \centering
  \tikzstyle{v}=[circle,draw,fill=black,inner sep=0pt,minimum width=2pt]
  \tikzstyle{r}=[draw,fill=black,inner sep=2pt]
  \tikzstyle{b}=[draw,circle,inner sep=2pt]
  \subfloat[$T$]{\begin{tikzpicture}
      \draw node [v,label=below:$v_1$] (v1) at (0,0) {};
      \draw node [v,label=below:$v$] (v2) at (2,0) {};
      \draw node [v,label=below:$v_2$] (v3) at (4,0) {};
      \draw node [v] (v4) at (-0.7,-0.7) {};
      \draw node [v] (v5) at (-0.7,0.7) {};
      \draw node [v] (v6) at (2,1) {};
      \draw node [v] (v7) at (4.7,0.7) {};
      \draw node [v] (v8) at (1,1.4) {};
      \draw node [v] (v9) at (1,2.1) {};
      \draw node [v] (v10) at (3,1.4) {};
      \draw node [v] (v11) at (2.4,2.1) {};
      \draw node [v] (v12) at (3.7,2.1) {};
      \draw node [r] (r1) at (-1.4,-0.7) {};
      \draw node [r] (r2) at (-1.4,0.7) {};
      \draw node [r] (r3) at (0.5,2.5) {};
      \draw node [r] (r4) at (1.5,2.5) {};
      \draw node [r] (r5) at (2.1,2.5) {};
      \draw node [r] (r8) at (2.7,2.5) {};
      \draw node [r] (r6) at (4.7,1.4) {};
      \draw node [r] (r7) at (4.7,-0.7) {};
      \draw node [b] (b1) at (-0.7,-1.4) {};
      \draw node [b] (b2) at (-0.7,1.4) {};
      \draw node [b] (b3) at (0.3,1.4) {};
      \draw node [b] (b4) at (3.4,2.5) {};
      \draw node [b] (b5) at (4,2.5) {};
      \draw node [b] (b6) at (4+1.4,0.7) {};
      \draw (r1)--(v4)--(b1);
      \draw (r2)--(v5)--(b2);
      \draw (v5)--(v1)--(v4);
      \draw (v1)--(v2)--(v3);
      \draw (r7)--(v3)--(v7);
      \draw (r6)--(v7)--(b6);
      \draw (b3)--(v8)--(v9);
      \draw (r3)--(v9)--(r4);
      \draw (v8)--(v6)--(v10);
      \draw (v11)--(v10)--(v12);
      \draw (r5)--(v11)--(r8);
      \draw (b4)--(v12)--(b5); 
      \draw (v2)--(v6);
      \draw [dashed] plot [smooth cycle] coordinates {(2,0.8) (0.2,1) (0.1,2.8) (4.1,2.8) (4,1.6) } 
      node at (2,3.2)  {$T^{vv_3}$};
    \end{tikzpicture}}
    \qquad
  \subfloat[$T'$]{\begin{tikzpicture}
      \draw node [v,label=below:$v_1$] (v1) at (0,0) {};
      \draw node [v,label=below:$v$] (v2) at (1.1,0) {};
      \draw node [v,label=below:$v'$] (w2) at (2.7,0) {};
      \draw node [v,label=below:$v_2$] (v3) at (4,0) {};
      \draw node [v] (v4) at (-0.7,-0.7) {};
      \draw node [v] (v5) at (-0.7,0.7) {};
      \draw node [v] (v6) at (1.1,1) {};
      \draw node [v] (w6) at (2.7,1) {};
      \draw node [v] (v9) at (0.4,1.8) {};
      \draw node [v] (v10) at (1.8,1.8) {};
      \draw node [v] (v7) at (4.7,0.7) {};
      \draw node [v] (v8) at (3.5,1.8) {};
      \draw node [r] (r1) at (-1.4,-0.7) {};
      \draw node [r] (r2) at (-1.4,0.7) {};
      \draw node [r] (r3) at (0,2.5) {};
      \draw node [r] (r4) at (0.8,2.5) {};
      \draw node [r] (r5) at (1.5,2.5) {};
      \draw node [r] (r8) at (2.1,2.5) {};
      \draw node [r] (r6) at (4.7,1.4) {};
      \draw node [r] (r7) at (4.7,-0.7) {};
      \draw node [b] (b1) at (-0.7,-1.4) {};
      \draw node [b] (b2) at (-0.7,1.4) {};
      \draw node [b] (b3) at (2.3+0.1,1.4) {};
      \draw node [b] (b4) at (3+0.1,2.5) {};
      \draw node [b] (b5) at (4+0.05,2.5) {};
      \draw node [b] (b6) at (4+1.4,0.7) {};
      \draw (r1)--(v4)--(b1);
      \draw (r2)--(v5)--(b2);
      \draw (v5)--(v1)--(v4);
      \draw (v1)--(v2)--(w2)--(v3);
      \draw (r7)--(v3)--(v7);
      \draw (r6)--(v7)--(b6);
      \draw (v2)--(v6);
      \draw (w2)--(w6);
      \draw (r3)--(v9)--(r4);
      \draw (v9)--(v6)--(v10);
      \draw (b3)--(w6)--(v8);
      \draw (b4)--(v8)--(b5);
	  \draw (r5)--(v10)--(r8);
      \draw [dashed] plot [smooth cycle] coordinates {(1.1,0.8) (0.18,1.55) (-0.2,2.8) (2.3,2.8) (2,1.6) } 
      node at (1.1,3.2)  {$T^{v,x}$};
      \draw [dashed] plot [smooth cycle] coordinates {(2.7,0.8) (2.15,1.45) (3,2.8) (4.3,2.8) (3.8,1.6) } 
      node at (3.7,3.2)  {$T^{v,\bar{x}}$};
    \end{tikzpicture}}
  \caption[Constructing $T'$ by forking at $v$ by $\V_x$.]{Constructing $T'$ by forking at $v$ by $\V_x$. \tikz \node [r] {}; represents a leaf node mapped to an element of $\V_x$ by $\L$ and \tikz \node [b] {}; represents a leaf node mapped to an element of $\V_0-\V_x$ by~$\L$.} %
  \label{fig:fork2}
\end{figure}

\begin{PROP}\label{prop:fork}
  If a branch-decomposition $(T',\L)$ is obtained from a
  branch-\decomposition{} $(T,\L)$ by forking at $v$ by $\V_x$, 
  then the width of $(T',\L)$ is at most the width of $(T,\L)$.
\end{PROP}

For this proposition, we present the following lemmas.
\begin{LEM}\label{lem:forksub}
  Let $L_1$, $L_2$, $R_1$, $R_2$, $X_1$, $X_2$, $Y_1$, $Y_2$ be subspaces of $\F^r$ 
and $S_1=L_1+R_1+Y_1$ and $S_2=L_2+R_2+Y_2$. 
If 
$(X_1+Y_1)\cap (S_2+X_2)\subseteq 
(L_1+R_1)\cap (S_2+X_2)$,
then 
\begin{align*}
\dim (X_1\cap (S_1+S_2+X_2))&\le \dim ((X_1+X_2)\cap (S_1+S_2)),\\
\dim (X_2\cap (S_1+S_2+X_1))&\le \dim ((X_1+X_2)\cap (S_1+S_2)).
\end{align*}
\end{LEM}
\begin{proof}
  For the first inequality, 
  $\dim (X_1\cap (S_1+S_2+X_2))-\dim ((X_1+X_2)\cap (S_1+S_2))
  =\dim X_1+\dim (S_1+S_2)+\dim X_2-\dim(S_1+S_2)\cap X_2
  -\dim (X_1+X_2)-\dim (S_1+S_2)
  = \dim (X_1\cap X_2)-\dim ((S_1+S_2)\cap X_2)\le 0$
  because 
  $ X_1\cap X_2 
  \subseteq (X_1+Y_1)\cap (S_2+X_2) \cap X_2
  \subseteq    (L_1+R_1)\cap (S_2+X_2)\cap X_2
  \subseteq S_1\cap X_2\subseteq (S_1+S_2)\cap X_2$.

  For the second inequality, 
  $\dim (X_2\cap (S_1+S_2+X_1))-\dim ((X_1+X_2)\cap (S_1+S_2))
  =\dim (X_2\cap X_1)-\dim ((S_1+S_2)\cap X_1)\le 0$
  because $X_1\cap X_2
  \subseteq X_1\cap (X_1+Y_1) \cap (S_2+X_2)
  \subseteq X_1\cap (L_1+R_1) \cap (S_2+X_2)
  \subseteq X_1\cap (L_1+R_1)
  \subseteq X_1\cap (S_1+S_2)$.
\end{proof}

\begin{LEM}[{\cite[Lemma 3.18]{JKO2016}}]\label{lem:dim-join}
For finite-dimensional subspaces $X_1,X_2,Y_1,Y_2$, 
\begin{multline*}
  \dim((X_1+X_2)\cap (Y_1+Y_2)) \\
  =\dim (X_1\cap Y_1) + \dim (X_2\cap Y_2)
  -\dim (X_1\cap X_2) - \dim (Y_1\cap Y_2) \\
  + \dim ((X_1+Y_1)\cap (X_2+Y_2)).
\end{multline*}
\end{LEM}

\begin{LEM}\label{lem:forksub2}
  Let $L_1$, $L_2$, $R_1$, $R_2$, $Z_1$, $Z_2$ be subspaces of $\F^r$
  such that
\begin{align*}
  L_1\cap (L_2+R_2+Z_2)&= (L_1+Z_1)\cap (L_2+R_2+Z_2),\\
  R_1\cap (L_2+R_2+Z_2)&= (R_1+Z_1)\cap (L_2+R_2+Z_2),\\
  \dim (L_1\cap (Z_1+R_1))&\ge  \dim ((L_1+Z_1)\cap R_1).
\end{align*}
  Then 
  $\dim  ((L_1+L_2)\cap (R_1+R_2+Z_1+Z_2))
  \ge  \dim ((L_1+L_2+Z_1)\cap (R_1+R_2+Z_2))$.
\end{LEM}
\begin{proof}
  By the hypothesis, 
  we have 
  $(L_1+Z_1)\cap L_2
  = (L_1+Z_1)\cap (L_2+R_2+Z_2)\cap L_2
  =L_1\cap (L_2+R_2+Z_2)\cap L_2
  = L_1\cap L_2$
  and similarly 
  $R_1\cap (R_2+Z_2)
  = R_1\cap (L_2+R_2+Z_2)\cap (R_2+Z_2)
  = (R_1+Z_1)\cap (L_2+R_2+Z_2)\cap (R_2+Z_2)
  = (R_1+Z_1)\cap (R_2+Z_2)$.
  By Lemma~\ref{lem:dim-join}, we have 
  \begin{align*}
    \lefteqn{\dim ((L_1+Z_1+L_2)\cap (R_1+R_2+Z_2))}\\
    &= \dim ((L_1+Z_1)\cap R_1)
      + \dim (L_2\cap (R_2+Z_2))\\
    &\quad
      -\dim ((L_1+Z_1)\cap L_2)-\dim (R_1\cap (R_2+Z_2))\\
    &\quad +\dim ((L_1+Z_1+R_1)\cap (L_2+R_2+Z_2))\\
    &= \dim ((L_1+Z_1)\cap R_1)
      + \dim (L_2\cap (R_2+Z_2))\\
    &\quad
      -\dim (L_1\cap L_2)-\dim ((R_1+Z_1)\cap (R_2+Z_2))\\
    &\quad +\dim ((L_1+Z_1+R_1)\cap (L_2+R_2+Z_2))\\
    &\le  \dim (L_1\cap (Z_1+ R_1))
      + \dim (L_2\cap (R_2+Z_2))\\
    &\quad
      -\dim (L_1\cap L_2)-\dim ((R_1+Z_1)\cap (R_2+Z_2))\\
    &\quad +\dim ((L_1+Z_1+R_1)\cap (L_2+R_2+Z_2))\\
    &=\dim ((L_1+L_2)\cap (R_1+Z_1+R_2+Z_2)).\qedhere 
  \end{align*}
\end{proof}
\begin{proof}[Proof of Proposition~\ref{prop:fork}]
  Let $v_1$, $v_2$, $v_3$ be the neighbors of $v$ such that $v_1vv_2$ is an improper $x$\=/blocking path and $\lambda_x^{(T,\L)}(vv_1)\ge \lambda_x^{(T,\L)}(vv_2)$.
Let $L_1=\spn{\L_x(T,v,v_1)}$, $L_2=\spn { \L(T,v,v_1)-\V_x}$, 
$R_1=\spn{\L_x(T,v,v_2)}$, $R_2=\spn{\L(T,v,v_2)-\V_x}$, 
$Z_1=\spn{\L_x(T,v,v_3)}$, $Z_2=\spn{\L(T,v,v_3)-\V_x}$.
Let $v'$ be the new node of $T'$ on the path from $v$ to $v_2$.
See Figure~\ref{fig:fork}.
\begin{figure}
  \centering
  \tikzstyle{v}=[circle,draw,fill=black,inner sep=0pt,minimum width=4pt]
  \subfloat[$T$]{\begin{tikzpicture}
      \draw plot [smooth cycle] coordinates {(0,0) (1,0.1) (1.3,0.5) (2,1.4) (1.5,2.5) (0.8,2.5) (0.3,1.2) (-0.2,0.6) } node  [v] (v1) at (2,1.4)  {};
      \draw [xshift=4cm] plot [smooth cycle] coordinates {(0,1.4) (1,0.1) (1.3,0.3) (1.2,1.4) (1.1,2.5) (0.8,2.5) (0.3,1.9)}  node [v] (v2) at (0,1.4) {};
      \draw [xshift=3cm, yshift=1cm] plot [smooth cycle] coordinates {(0,1.4) (1,3.2) (-1,3.4)} 
      node [v] (v3) at (0,1.4) {}
      node at (0,2.6) {$Z_1$, $Z_2$};
      \draw node [v,label=below:$v$] (v) at (3,1.4) {};
      \draw node at (1.2,1.8) {$L_1$,};
      \draw node at (1.2,1.1) {$L_2$\phantom{,}};
      \draw node at (4.8,1.8) {$R_1$,};
      \draw node at (4.8,1.1) {$R_2$\phantom{,}};
      \draw (v1)--(v)--(v2);
      \draw (v)--(v3);      
    \end{tikzpicture}}
    \qquad
  \subfloat[$T'$]{\begin{tikzpicture}
      \draw plot [smooth cycle] coordinates {(0,0) (1,0.1) (1.3,0.5) (2,1.4) (1.5,2.5) (0.8,2.5) (0.3,1.2) (-0.2,0.6) } node  [v] (v1) at (2,1.4)  {};
      \draw [xshift=5cm] plot [smooth cycle] coordinates {(0,1.4) (1,0.1) (1.3,0.3) (1.2,1.4) (1.1,2.5) (0.8,2.5) (0.3,1.9)}  node [v] (v2) at (0,1.4) {};
      \draw [xshift=3cm, yshift=1cm] plot [smooth cycle] coordinates {(0,1.4) (0.2,2.7) (-1,2.4)} 
      node [v] (v3) at (0,1.4) {};
      \draw [xshift=4cm, yshift=1cm] plot [smooth cycle] coordinates {(0,1.4) (-0.25,2.6) (1,2.6)} 
      node [v] (v4) at (0,1.4) {};
      \draw node [v,label=below:$v$] (v) at (3,1.4) {};
      \draw node [v,label=below:$v'$] (v') at (4,1.4) {};
      \draw node at (1.2,1.8) {$L_1$,};
      \draw node at (1.2,1.1) {$L_2$\phantom{,}};
      \draw node at (5.8,1.8) {$R_1$,};
      \draw node at (5.8,1.1) {$R_2$\phantom{,}};
      \draw node at (2.8,3.1) {$Z_1$};
      \draw node at (4.2,3.1) {$Z_2$};
      \draw (v1)--(v)--(v')--(v2);
      \draw (v)--(v3);      
      \draw (v')--(v4);      
    \end{tikzpicture}} %
  \caption{The subspaces from $T$ and $T'$.}
  \label{fig:fork}
\end{figure}

For an edge $e'$ of $T'$,
if $e'$ is %
in $T'-V(T^{v,x})-V(T^{v,\bar{x}})-vv'$, 
then there exists the corresponding edge $e$ of~$T$ such that 
the width of $e'$ is equal to the width of $e$.

Now we show that the width of $vv'$ is at most the width of $vv_1$ in~$T$.
Since $v_1vv_2$ is an $x$\=/blocking path, 
\[(L_1+Z_1)\cap B_x = L_1 \cap B_x\text{ and }
(R_1+Z_1)\cap B_x = R_1\cap B_x.\]
Then, 
by taking the intersection with $L_2+R_2+Z_2$, we have
\begin{align*}
(L_1+Z_1)\cap (L_2+R_2+Z_2)\cap B_x& = L_1 \cap (L_2+R_2+Z_2)\cap B_x\text{ and }\\
(R_1+Z_1) \cap (L_2+R_2+Z_2)\cap B_x &= R_1 \cap (L_2+R_2+Z_2)\cap B_x.
\end{align*}
As $(L_1+Z_1) \subseteq \spn{\V_x}$ and $(L_2+R_2+Z_2) \subseteq \spn{\V_0-\V_x}$,
we deduce that $L_1\cap (L_2+R_2+Z_2) \subseteq (L_1+Z_1)\cap (L_2+R_2+Z_2)\subseteq \spn{\V_x}\cap\spn{\V_0-\V_x} \subseteq B_x$.
Thus, we have
$(L_1+Z_1)\cap (L_2+R_2+Z_2) = (L_1+Z_1)\cap (L_2+R_2+Z_2)\cap B_x 
=L_1 \cap (L_2+R_2+Z_2) \cap B_x = L_1 \cap (L_2+R_2+Z_2)$. 
Similarly we obtain 
\[ (R_1+Z_1) \cap (L_2+R_2+Z_2) = R_1 \cap (L_2+R_2+Z_2). \]
We can also see that \[\dim (L_1\cap (Z_1+R_1)) =\lambda_x^{(T,\L)}(vv_1) \ge \lambda_x^{(T,\L)}(vv_2) = \dim ((L_1 + Z_1) \cap R_1).\]
Thus the conditions for Lemma~\ref{lem:forksub2} hold. %
Then Lemma~\ref{lem:forksub2} implies that
the width of $vv'$ is at most the width of $v_1v$ in~$T$.

We claim that 
for an edge $e'$ in $E(T'[V(T^{v,x})\cup\{v\}])$, 
the width of $e'$ is at most the width of some edge %
in $E(T[V(T^{vv_3})\cup\{v\}])$.
Let $e'=x'y'$ where $z'$ is closer to $v$ than $y'$.
By the construction of $T'$,
there exists an edge $e=zy$, which is $x$\=/blocked by $v_1vv_2$, in $E(T[V(T^{vv_3})\cup\{v\}])$
such that 
$\L_x(T,z,y)=\L_x(T',z',y')$.

Let $X_1=\spn{\L_x(T,z,y)}$, $X_2=\spn{\L(T,z,y)-\V_x}$, 
$Y_1=\spn{\L_x(T,v,v_3)-\L_x(T,z,y)}$, 
$Y_2=\spn{\L(T,v,v_3)-\L(T,z,y)-\V_x}$.
Note that $X_1+Y_1=Z_1$ and $X_2+Y_2=Z_2$.
Then, we observe that 
$Z_1\cap (L_2+R_2+Z_2)\subseteq (L_1+Z_1)\cap (L_2+R_2+Z_2)
= L_1\cap (L_2+R_2+Z_2)$
and similarly $Z_1\cap (L_2+R_2+Z_2)\subseteq R_1\cap (L_2+R_2+Z_2)$.
Thus, 
\[Z_1\cap (L_2+R_2+Z_2)\subseteq 
  L_1\cap R_1\cap (L_2+R_2+Z_2) 
\subseteq (L_1+R_1)\cap (L_2+R_2+Z_2).\]
So the condition for Lemma~\ref{lem:forksub} holds.
And Lemma~\ref{lem:forksub} shows that the width of $e'$ is at most the width of $e$.

Similarly, %
it is easy to show that the claim is true
for an edge $e'$ in $E(T'[V(T^{v,\bar{x}})\cup\{v'\}])$.
\end{proof}

\subsection{Split}
Suppose again that $(T,\L)$ is a branch-decomposition of $\V_0$, 
$x$ is a node of $T^b$ with $\V_x\subseteq \V_0$,
and an edge $uv$ of~$T$ 
$x$\=/guards $v$ improperly 
or $uv$ is improper $x$\=/degenerate.
Under those circumstances, 
if $(u,v)$ is $x$\=/mixed, 
we define an operation called \emph{splitting}. 
Similar to the definitions for forking, 
we define $T^{uv}$ to be the component of $T-uv$ containing~$v$, regarded as a rooted binary tree with the root $v$. 
Let $T^{uv}_x$ be the minimal subtree of $T^{uv}$
that contains every leaf of $T^{uv}$ mapped to a subspace in $\V_x$ by $\L$. 
Let the root of $T_x^{uv}$ be the node of $T_x^{uv}$ closest to $v$ in $T^{uv}$. 
Let $T^{v,x}$ be the rooted binary tree obtained from $T^{uv}_x$ 
by smoothing degree-$2$ nonroot nodes if necessary.
Let $T^{uv}_{\bar{x}}$ be the minimal subtree of $T_{uv}$
that contains every leaf of~$T^{uv}$ mapped to a subspace in $\V_0-\V_x$ by $\L$. 
Let the root of $T_x^{uv}$ be the node of $T_x^{uv}$ closest to $v$ in $T^{uv}$
such that every subspace in $\V_0-\V_x$ corresponds to a leaf of $T^{uv}_{\bar{x}}$
and the root of $T^{uv}_{\bar{x}}$ is the closest node to $v$ in~$T^{uv}$.
Let $T^{v,\bar{x}}$ be the rooted binary tree obtained from $T^{uv}_{\bar{x}}$ 
by smoothing degree-$2$ nonroot nodes if necessary.

\begin{figure}
  \tikzstyle{v}=[circle,draw,fill=black,inner sep=0pt,minimum width=2pt]
\begin{center}
  \begin{tikzpicture}[grow cyclic,level distance=1cm,sibling angle=120,scale=.8]
    \node [v](u) {}
child { 
  node [v,label=$6$] (z6) {}
}
child {
  node [v] (n) {} 
    child {
      node [v,label={\color{red}$1$}] (z1) {}
    }
    child {
      node[v,label={\color{red}$2$}] (z2) {}
    }
}
child {
  node [v] (v) {}
    child {          
          node [v,label=left:$5$] (z5) {}           
    }
    child {
      node [v] (w2) {}
        child {
          node [v,label=$7$] (z7) {}
        }
        child {
          node [v] (w3) {}
            child {
              node [v,label={$4$}] (z4){}
            }
            child {
              node [v,label={$3$}] (z3){}
            }
        }
    }
};
    \draw [very thick, red] (z1)--(n);
   \end{tikzpicture}
  \end{center}
  \caption{A branch-decomposition $(T',\mathcal L')$ obtained from $(T,\mathcal L)$ by splitting an improper $y$\=/degenerate edge in $(T,\mathcal L)$ from 
  Figure~\ref{fig:mixed2}.
  Then $(T',\mathcal L')$ is $y$\=/pure, $y'$-pure, $x$\=/pure, and $y''$-pure and therefore $(T',\mathcal L')$ is totally pure with respect to $(T^b,\mathcal L^b)$.
  }\label{fig:bdsplit}
\end{figure}

\emph{Splitting at $uv$ by $\V_x$} is an operation to transform $(T,\L)$ to another branch-decomposition $(T',\L)$ by 
deleting all nodes in the component of $T-v$ not containing~$u$,
making the disjoint union with $T^{v,x}$ and $T^{v,\bar x}$, 
adding an edge joining $v$ with the root of $T^{v,x}$
and an edge joining $v$ with the root of $T^{v, \bar x}$.
See Figures~\ref{fig:bdsplit} and \ref{fig:split}.

\begin{figure}
  \centering
  \tikzstyle{v}=[circle,draw,fill=black,inner sep=0pt,minimum width=4pt]
  \subfloat[$T$]{\begin{tikzpicture}
      \draw plot [smooth cycle] coordinates {(0,0) (-0.8,1.5) (-1.1,0.2) (-0.8,-1.5)} 
      		node  [v,label=left:$u$] (u) at (0,0)  {};
      \draw [xshift=1.5cm] plot [smooth cycle] coordinates {(0,0) (1,-1.2) (1.8,-1.5) (2,0.1) (1.8,1.5) (1.2,1.5)}  node [v,label=right:$v$] (v) at (0,0) {};
      \draw (u)--(v);      
    \end{tikzpicture}}
    \qquad\quad
  \subfloat[$T'$]{\begin{tikzpicture}
      \draw plot [smooth cycle] coordinates {(0,0) (-0.8,1.5) (-1.1,0.2) (-0.8,-1.5)} 
      		node  [v,label=left:$u$] (u) at (0,0)  {};
      \draw node [v,label=right:$v$] (v) at (1,0) {};
      \draw [xshift=2cm, yshift=1cm] plot [smooth cycle] coordinates {(0,0) (1.2,1) (1.4,-0.1) (0.7,-0.4)} 
      node [v] (v1) at (0,0) {}
      node at (0.8,0.1) {$T^{v,x}$};
      \draw [xshift=2cm, yshift=-1cm] plot [smooth cycle] coordinates {(0,0) (1,0.6) (1,-1)} 
      node [v] (v2) at (0,0) {}
      node at (0.6,-0.1) {$T^{v,\bar{x}}$};
      \draw (u)--(v);
      \draw (v1)--(v)--(v2);     
    \end{tikzpicture}} 
  \caption{Constructing $T'$ by splitting at $uv$ by $\V_x$.}
  \label{fig:split}
\end{figure}

We prove that splitting does not increase the width. 
\begin{PROP}\label{prop:split}
  If a branch-decomposition $(T',\L)$ is obtained from a branch-decomposition $(T,\L)$ by
  splitting $uv$ by $\V_x$, then the width of $(T',\L)$ is less than or
  equal to the width of $(T,\L)$. 
\end{PROP}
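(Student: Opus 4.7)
The plan is to parallel the proof of Proposition~\ref{prop:fork}: partition the edges of $T'$ according to their relationship with $T$, and bound the width of each class using the hypothesis on $uv$. Introduce the notation $L_1=\spn{\L_x(T,v,u)}$, $L_2=\spn{\L(T,v,u)-\V_x}$, $R_1=\spn{\L_x(T,u,v)}$, $R_2=\spn{\L(T,u,v)-\V_x}$, so that $\spn{\L(T,v,u)}=L_1+L_2$ and $\spn{\L(T,u,v)}=R_1+R_2$. The first crucial step is to extract a uniform consequence of the two alternative hypotheses on $uv$: either $R_1\cap B_x=L_1\cap B_x$ (when $uv$ is improper $x$-degenerate) or $R_1\cap B_x\subsetneq L_1\cap B_x$ (when $uv$ improperly $x$-guards $v$), so in both cases $R_1\cap B_x\subseteq L_1$. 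Combined with $R_1\subseteq\spn{\V_x}$ and $L_2+R_2\subseteq\spn{\V-\V_x}$, this yields the key inclusion $R_1\cap(L_2+R_2)\subseteq B_x\cap R_1\subseteq L_1$.

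Next, the edges of $T'$ are partitioned into three classes. Class (a): edges on the $u$-side of $uv$ and $uv$ itself---these coincide with the corresponding edges of $T$, so widths are unchanged. Class (b): the two new edges $e_1,e_2$ joining $v$ to the roots of $T^{v,x}$ and $T^{v,\bar x}$. The width of $e_1$ equals $\dim(L_1+L_2+R_2)\cap R_1$, and a short inclusion--exclusion calculation reduces its difference from the width of $uv$ to $\dim(R_1\cap R_2)-\dim((L_1+L_2)\cap R_2)$; the key inclusion gives $R_1\cap R_2\subseteq L_1\cap R_2\subseteq(L_1+L_2)\cap R_2$, making this non-positive. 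A mirror-image computation handles $e_2$.

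Class (c): edges strictly inside $T^{v,x}$ or $T^{v,\bar x}$. Consider an edge $e'=(y',z')$ of $T^{v,x}$ with $y'$ closer to $v$; it corresponds to any edge $e=(y,z)$ on the smoothed path in $T^{uv}_x\subseteq T^{uv}$ that collapsed to $e'$. Let $X_1$ be the span of the $\V_x$-leaves on the $z'$-side of $e'$, let $X_2$ be the span of the non-$\V_x$-leaves on the $z$-side of $e$ in $T^{uv}$, and let $\tilde R_1,\tilde R_2$ be the complementary spans inside $R_1,R_2$. Then the width of $e'$ in $T'$ equals $\dim X_1\cap(L_1+L_2+\tilde R_1+R_2)$ and the width of $e$ in $T$ equals $\dim(X_1+X_2)\cap(L_1+L_2+\tilde R_1+\tilde R_2)$; these match the two sides of the first inequality of Lemma~\ref{lem:forksub} under the substitution $(L_1,L_2,R_1,R_2,Y_1,Y_2,X_1,X_2)\mapsto(L_1,L_2,\tilde R_1,\tilde R_2,0,0,X_1,X_2)$. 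The lemma's hypothesis reduces to $X_1\cap(L_2+R_2)\subseteq L_1+\tilde R_1$, which follows from the key inclusion because $X_1\subseteq R_1$. The second inequality of Lemma~\ref{lem:forksub}, applied symmetrically, handles the edges inside $T^{v,\bar x}$.

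The main obstacle will be verifying that the two a priori different forms of the hypothesis collapse into the single usable inclusion $R_1\cap B_x\subseteq L_1$, and then carefully matching the split geometry to the variables of Lemma~\ref{lem:forksub}. The ambiguity that an edge of $T^{v,x}$ corresponds to a whole smoothed path in $T^{uv}$ rather than a single edge is harmless because the lemma's hypothesis does not depend on which edge of the path is selected.
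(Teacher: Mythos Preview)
Your proof is correct and follows essentially the same approach as the paper: the paper's proof is the single sentence ``This is immediate from Lemma~\ref{lem:forksub},'' and you have spelled out in full the case analysis that this one-liner compresses. Your Class~(c) argument is exactly the intended application of Lemma~\ref{lem:forksub}, and your Class~(b) edges can also be handled by that lemma (take the lemma's $R_1=R_2=Y_1=Y_2=0$ and $X_1=R_1$, $X_2=R_2$ in your notation), so your direct inclusion--exclusion computation there is a harmless stylistic variation rather than a genuinely different route.
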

\begin{proof}
  This is immediate from Lemma~\ref{lem:forksub}.
\end{proof}

\subsection{Pure branch-decompositions exist}

Our goal is to prove the following proposition.
\begin{PROP}\label{prop:pure}
  Let $(T^b,\L^b)$ be a rooted branch-decomposition of a subspace arrangement $\V$
  and let $\V_0\subseteq \V$.
  If the branch-width of $\V_0$ is at most~$k$, %
  then $\V_0$ has a branch-decomposition of width at most~$k$ that is totally pure with respect to $(T^b,\L^b)$.
\end{PROP}

To prove this proposition, we need some lemmas.
\begin{LEM}\label{lem:Bx-to-By}
  Let $x$, $y$ be nodes of $T^b$ with $x\le y$. 
  If $L$ is a subspace of $\spn{\V_x}$,
  then
  \[
  L\cap B_y \subseteq L\cap B_x.
  \]
\end{LEM}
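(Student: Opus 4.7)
The plan is to unpack the definitions and observe that the descendant relation $x \le y$ in $T^b$ gives the containment $\V_x \subseteq \V_y$, which in turn flips into $\V - \V_y \subseteq \V - \V_x$. Everything will follow by taking spans and intersecting with $L$.

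First I would note that since $x \le y$, every leaf $\ell$ of $T^b$ with $\ell \le x$ also satisfies $\ell \le y$, so $\V_x \subseteq \V_y$. Consequently $\spn{\V_x} \subseteq \spn{\V_y}$ and $\spn{\V - \V_y} \subseteq \spn{\V - \V_x}$. Combined with the hypothesis $L \subseteq \spn{\V_x}$, these inclusions let me rewrite both sides of the desired inclusion in a convenient form.

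The main calculation is the chain
\begin{align*}
L \cap B_y
 &= L \cap \spn{\V_y} \cap \spn{\V - \V_y} \\
 &= L \cap \spn{\V - \V_y} \\
 &\subseteq L \cap \spn{\V - \V_x} \\
 &= L \cap \spn{\V_x} \cap \spn{\V - \V_x} \\
 &= L \cap B_x,
\end{align*}
where the second equality uses $L \subseteq \spn{\V_x} \subseteq \spn{\V_y}$, the inclusion uses $\spn{\V - \V_y} \subseteq \spn{\V - \V_x}$, and the penultimate equality uses $L \subseteq \spn{\V_x}$ again.

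There is no real obstacle here; the lemma is essentially a monotonicity statement about boundary spaces along the ancestor chain of $T^b$, and the only thing to be careful about is keeping track of which of the two defining factors of $B_x$ and $B_y$ is absorbed by $L$ and which is genuinely shrunk by the descendant relation.
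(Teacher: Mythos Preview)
Your proof is correct and follows essentially the same approach as the paper: both arguments use $L\subseteq\spn{\V_x}\subseteq\spn{\V_y}$ to absorb the first factor of each boundary space, reducing the claim to $L\cap\spn{\V-\V_y}\subseteq L\cap\spn{\V-\V_x}$, which follows from $\V_x\subseteq\V_y$.
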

\begin{proof}
  Since $B_x=\spn{\V_x}\cap \spn{\V-\V_x}$ and $L\subseteq \spn{\V_x} \subseteq \spn{\V_y}$, 
  we see that $L\cap B_x = L\cap \spn{\V-\V_x}$ and $L\cap B_y=L\cap \spn{\V-\V_y}$.
  As $\spn{\V-\V_y}\subseteq \spn{\V-\V_x}$,
  we have $L\cap B_y \subseteq L\cap B_x$.
\end{proof}

\begin{LEM}\label{lem:x-to-y}
  Let $x$, $y$ be nodes of $T^b$ with $x\le y$.
  Let $(T,\L)$ be a branch-\decomposition{} of $\V_0$ with $\V_y\subseteq \V_0$.
  If $\L_y(T,u,v)\subseteq \V_x$
  and an edge $e=uv$ of~$T$ is $x$\=/degenerate or 
  $x$\=/guards $v$, 
  then $e$ is $y$\=/degenerate or $y$\=/guards $v$.
\end{LEM}
\begin{proof}
  Let $L_x=\spn{\L_x(T,u,v)}$, $R_x=\spn{\L_x(T,v,u)}$, $L_y=\spn{\L_y(T,u,v)}$, and $R_y=\spn{\L_y(T,v,u)}$. 
  We will show that if $L_x\cap B_x\subseteq R_x\cap B_x$,
  then $L_y\cap B_y\subseteq R_y\cap B_y$.
  Since $\L_y(T,u,v)\subseteq\V_x$ and $\V_x\subseteq\V_y$, we have $L_x=L_y$ and $R_x\subseteq R_y$.
  By Lemma~\ref{lem:Bx-to-By}, $L_y\cap B_y\subseteq L_x\cap B_x$.
  Since $L_x\cap B_x\subseteq R_x\cap B_x$ and $R_x \subseteq R_y$,
  we have $L_y\cap B_y \subseteq R_y$. Thus, $L_y\cap B_y\subseteq R_y \cap B_y$.
\end{proof}

\begin{LEM}\label{lem:guardfar}
  Let $x$ be a node of $T^b$ and let $(T,\L)$ be a branch-\decomposition{} of~$\V_0$
  such that $\V_x\subseteq \V_0$. %
  Let $v_0v_1\cdots v_\ell$ be a path in~$T$.
  If $v_0v_1$ $x$\=/guards $v_1$ or $v_{0}v_1$ is $x$\=/degenerate,
  then $v_{\ell-1}v_\ell$ $x$\=/guards $v_\ell$ or $v_{\ell-1}v_\ell$ is $x$\=/degenerate.
\end{LEM}
\begin{proof}
It is easy to see that $\L_x(T,v_{\ell-1},v_{\ell})\subseteq \L_x(T,v_0,v_1)$ and $\L_x(T,v_1,v_0)\subseteq \L_x(T,v_{\ell},v_{\ell-1})$.
Thus, \[\spn{\L_x(T,v_{\ell-1},v_{\ell})}\cap B_x \subseteq \spn{\L_x(T,v_{\ell},v_{\ell-1})}\cap B_x\]
because $\spn{\L_x(T,v_{0},v_{1})}\cap B_x \subseteq \spn{\L_x(T,v_{1},v_{0})}\cap B_x$.
\end{proof}

\begin{LEM}\label{lem:easycase}
Let $x$, $y$ be nodes of $T^b$ with $y\nleq x$.
Let $(T,\L)$ be a branch-\decomposition{} of~$\V_0$ with $\V_x\cup\V_y\subseteq \V_0$.
Let $pq$ be an edge of~$T$ such that $\L(T,p,q)\subseteq \V_x$ and it either $x$\=/guards $q$ or is $x$\=/degenerate. 
If $(q,p)$ points toward a node $u$ of~$T$ and 
a branch-decomposition $(T',\L)$ is obtained from $(T,\L)$ by
either forking at $u$ by $\V_y$ or splitting at an edge $uv$ of~$T$ by $\V_y$, 
then $T'$ has an edge $p'q'$ such that $\L(T',p',q')=\L(T,p,q)$. Furthermore, $q'=q$.
\end{LEM}
\begin{proof}
We first claim that for two nodes $u$ and $v$ of~$T$, if $(u,v)$ is $y$\=/mixed, 
then $(u,v)\neq(p,q)$. Suppose not.
If $x<y$, then 
$\L(T,u,v)=\L(T,p,q)\subseteq\V_x\subseteq\V_y$,
which contradicts that $(u,v)$ is $y$\=/mixed.
If $x$ and $y$ are incomparable, then 
$\L(T,u,v)\subseteq\V_x\subseteq\V_0-\V_y$,
which also contradicts that $(u,v)$ is $y$\=/mixed.

Suppose that $(T',\L)$ is obtained from $(T,\L)$ by splitting at $uv$ by $\V_y$. 
Let $T_q$ be the component of $T-pq$ containing $q$.
Note that $(u,v)\neq (p,q)$ because $(u,v)$ is $y$\=/mixed. 
As $(u,v)\neq (p,q)$ and $(q,p)$ points toward $u$,
we know 
that $T_q$ is contained 
either in the component of $T-uv$ containing~$v$ or in the component of $T-uv$ containing~$u$.
In the former case, 
after splitting, $T_q$ is a subtree of $T^{uv}_y$ if $x<y$, and 
$T_q$ is a subtree of $T^{uv}_{\bar{y}}$ if $x,y$ are incomparable.

Then the statement trivially follows from $\L(T,p,q)\subseteq \V_x$. 
In the latter case, note that the component of $T'-uv$ containing~$u$ 
is identical to the component of $T-uv$ containing~$u$, and the statement follows. 

If $(T',\L)$ is obtained by forking at $u$, 
then let $v$ be the unique neighbor of $u$ which does not lie on the $y$\=/blocking path. 
We remark that $(u,v)$ is $y$\=/mixed and $uv$ cuts~$\V_y$. 
Thus similarly we can see that the statement holds. 
\end{proof}

\begin{LEM}\label{lem:disjoint}
Let $x$, $y$ be nodes of $T^b$ with $y\nleq x$.
Let $(T,\L)$ be a branch-\decomposition{} of $\V_0$ with $\V_x\cup\V_y\subseteq \V_0$.
Let $(T',\L)$ be a branch-\decomposition{} obtained from $(T,\L)$ by
either forking at $u$ by $\V_y$
or splitting at $uv$ by $\V_y$.
If $(T,\L)$ is $x$\=/disjoint, then $(T',\L)$ is $x$\=/disjoint.
\end{LEM}
\begin{proof}
If $(T',\L)$ is obtained by forking, 
then let $v_1uv_2$ be an improper $y$\=/blocking path 
and let $v$ be the neighbor of $u$ other than $v_1$ and $v_2$.
Then we observe that $uv$ $y$\=/guards $v$ and $uv$ cuts $\V_y$.
Note that if 
$(T',\L)$ is obtained by splitting at $uv$, 
then $uv$ also cuts $\V_y$.
We also remark that $(u,v)$ is $y$\=/mixed in both cases.
Since $(T,\L)$ is $x$\=/disjoint and $\V_0\neq \V_x$,
there exists an edge $pq$ of~$T$ such that $\L(T,p,q)=\V_x$ and $q$ is 
incident with an improper $x$\=/degenerate edge.
If $(v,u)$ points toward $p$ and $(p,q)$ points toward $v$, then 
$\L(T,u,v)\subseteq \L(T,p,q)=\V_x$. Then either $\L(T,u,v)\subseteq \V_y$ or $\L(T,u,v)\subseteq \V_0-\V_y$, 
depending on whether $x<y$ or $x,y$ are incomparable, contradicting 
that $(u,v)$ is $y$\=/mixed. 
If $(q,p)$ points toward $v$, 
then, by Lemma~\ref{lem:easycase},
there exists an edge $p'q'$ of $T'$ such that $\L(T',p',q')=\V_x$ and $q'$ is 
incident with an improper $x$\=/degenerate edge. That is, $(T',\L)$ is $x$\=/disjoint as desired.

Now we assume that $(u,v)$ points toward $p$ and $(p,q)$ points toward $u$. 
If $x$ and $y$ are incomparable, then we have $\L(T,v,u)\subseteq \L(T,p,q)=\V_x\subseteq \V_0-\V_y$.
This contradicts the fact that $uv$ cuts $\V_y$.
Hence, we assume that $x<y$.
Since $q$ is incident with an improper $x$\=/degenerate edge, if $(T',\L)$ is obtained by forking at $u$, then 
by Lemmas~\ref{lem:x-to-y} and~\ref{lem:guardfar}, $uv_1$ is 
either $y$\=/guarding or $y$\=/degenerate, a contradiction. 
Now we may further assume that $(T',\L)$ is obtained by splitting at $uv$. 
If $(p,q)=(v,u)$, then clearly $(T',\L)$ is $x$\=/disjoint as desired. 
If $(p,q)\neq (v,u)$, then again by applying Lemmas~\ref{lem:x-to-y} and~\ref{lem:guardfar}, 
we know that $uv$ $y$\=/guards $u$ or $uv$ is $y$\=/degenerate. 
Since $uv$ $y$\=/guards $v$, $uv$ can only be $y$\=/degenerate, which must be improper. 
On the other hand,  $uv$ cuts $\V_x$ because $\L(T,p,q)=\V_x$ and $(p,q)\neq (v,u)$.
This is a contradiction to the fact that $uv$ is an improper $y$\=/degenerate edge because $(T,\L)$ is $x$\=/degenerate.
\end{proof}

For a branch-decomposition $(T,\L)$ of a subspace arrangement $\V_0$, 
a subset $\V'$ of $\V_0$ is said to \emph{induce} a branch-decomposition $(T',\L')$ 
if a subdivision of $T'$ is the minimal subtree of~$T$ containing all leaves in $\L^{-1}(\V')$ and $\L'$ is the restriction of $\L$ on $\V'$.
We write $(T',\L')=(T,L)|_{\V'}$ if $(T',\L')$ is induced by $\V'$ from $(T,\L)$.

\begin{LEM}\label{lem:induced}
Let $x$, $y$ be nodes of $T^b$ with %
$y\nleq x$.
Let $(T,\L)$ be a branch-\decomposition{} of $\V_0$ with $\V_x\cup \V_y\subseteq \V_0$
and $(T',\L)$ be a branch-\decomposition{} obtained from $(T,\L)$ by 
either forking at $v$ by $\V_y$ or splitting at $uv$ by $\V_y$.
Then $(T,\L)|_{\V_x}=(T',\L)|_{\V_x}$.
\end{LEM}
\begin{proof}
If $x\le y$, then
by the construction of $T'$, 
subspaces of $\V_0$ mapped by the leaves of $T^{v,\bar y}$ are not in $\V_y$
and thus not in $\V_x$ because $\V_0-\V_y\subseteq\V_0-\V_x$. %
Therefore, it is clear by definition that $(T,\L)|_{\V_x}=(T',\L)|_{\V_x}$.

If $x$ and $y$ are incomparable, then
subspaces of $\V_0$ mapped by the leaves of $T^{v,y}$ are in $\V_y$ 
and thus not in $\V_x$ because $\V_y\subseteq \V_0-\V_x$.
Therefore, $(T,\L)|_{\V_x}=(T',\L)|_{\V_x}$.
\end{proof}

\begin{COR}\label{cor:xdegenerate}
Let $x$, $y$ be nodes of $T^b$ with %
$y\nleq x$.
Let $(T,\L)$ be a branch-decomposition of $\V_0$ with $\V_x\cup \V_y\subseteq \V_0$
and $(T',\L)$ be a branch-decomposition obtained from $(T,\L)$ by 
either forking at $v$ by $\V_y$ or splitting at $uv$ by $\V_y$.
Then $(T,\L)$ is $x$\=/degenerate if and only if 
$(T',\L)$ is $x$\=/degenerate.
\end{COR}
\begin{proof}
It follows immediately from Lemma~\ref{lem:induced} and the definition of $x$\=/degenerate 
branch-decompositions.
\end{proof}

For %
a node~$x$ of $T^b$ with $\V_x\subseteq\V_0$,
if $(T',\L')=(T,\L)|_{\V_x}$, then by definition, 
there exists a canonical injection $\phi_x:V(T')\to V(T)$
such that for every leaf $\ell$ of $T'$, $\L'(\ell)=\L(\phi_x(\ell))$. We say that a node~$v$ of~$T$ is an 
\emph{$x$\=/branch node} if $\phi_x$ maps a node of $T'$ to $v$, and such $v$ is said to \emph{correspond to} $\phi_x^{-1}(v)$. 
An edge $e$ of~$T$ is said to \emph{correspond to} an edge $f=ab$ of $T'$ if $e$ is on the unique path of~$T$ from $\phi_x^{-1}(a)$ to $\phi_x^{-1}(b)$.
Observe that if an edge $uv$ of~$T$ corresponds to $ab$ of $T'$ and $(u,v)$ points toward $\phi_x^{-1}(b)$, then $\L_x(T,u,v)=\L'(T',a,b)$ by definition of $(T',\L')$.

Lemmas~\ref{lem:notdegenpure} and~\ref{lem:degenpure} will show that
if $(T,\L)$ is $x$\=/pure and $(T',\L)$ is obtained from $(T,\L)$ by either forking or splitting,
then $(T',\L)$ is also $x$\=/pure.

\begin{LEM}\label{lem:notdegenpure}
Let $x$, $y$ be nodes of $T^b$ with $y\nleq x$.
Let $(T,\L)$ be a branch-decomposition of $\V_0$ with $\V_x\cup\V_y\subseteq\V_0$.
Let $(T',\L)$ be a branch-decomposition obtained from $(T,\L)$ by
either forking at $u$ by $\V_y$ for the center $u$ of an improper $y$\=/blocking path,
or splitting at $uv$ by $\V_y$ for an edge $uv$ of~$T$ $y$\=/guarding $v$ improperly.
If $(T,\L)$ is $x$\=/pure, then $(T',\L)$ is $x$\=/pure.
\end{LEM}
\begin{proof}
If $(T,\L)$ is $x$\=/degenerate, then $(T,\L)$ is $x$\=/disjoint.
By Lemma~\ref{lem:disjoint} and Corollary~\ref{cor:xdegenerate}, 
$(T',\L)$ is $x$\=/disjoint and $x$\=/degenerate, which implies that $(T',\L)$ is $x$\=/pure.
So we assume that $(T,\L)$ is not $x$\=/degenerate.

Let $v$ be the neighbor of $u$ that is not on the improper $y$\=/blocking path when $(T',\L)$ is 
obtained by forking at $u$.
Then for both cases, forking at $u$ and splitting at~$uv$, 
we observe that 
\begin{itemize}
\item $uv$ $y$\=/guards $v$, 
\item $uv$ cuts $\V_y$, and 
\item $(u,v)$ is $y$\=/mixed, which implies $\L_y(T,u,v)\neq\L(T,u,v)$.
\end{itemize}

First, we will show that every $x$\=/guarding edge in $(T',\L)$ is proper.
Suppose an edge $p'q'$ of $T'$ $x$\=/guards $q'$ improperly.
Let $(T_x,\V_x)=(T',\L)|_{\V_x}$.
Since $p'q'$ is improper, $q'$ is an $x$\=/branch node of $T'$ and thus 
there exists a node $q_x$ of $T_x$ with $\phi'(q_x)=q'$ 
where $\phi'$ is the canonical injection 
from the set of all nodes of $T_x$ to the set of all nodes of $T'$.
Let $p_xq_x$ be the edge of $T_x$ such that 
an edge $p'q'$ of $T'$ corresponds to $p_xq_x$.
By Lemma~\ref{lem:induced}, $(T,\L)|_{\V_x}=(T_x,\V_x)$.
For the canonical injection $\phi$ 
from the set of all nodes of $T_x$ to the set of all nodes of~$T$,
let $q=\phi(q_x)$
and let $pq$ be the edge of~$T$ corresponding to an edge $p_xq_x$ of $T_x$.
Then we observe that $pq$ $x$\=/guards $q$ so that it is proper because $(T,\L)$ is $x$\=/pure.
And thus, we have $\L(T,p,q)\subseteq \V_x$.

We will consider %
cases depending on the locations of $u$, $v$, $p$, $q$ in~$T$.
If an edge~$uv$ is in the component of $T-pq$ containing $p$,
then by Lemma~\ref{lem:easycase}, 
$\L(T',p',q')=\L(T,p,q)\subseteq \V_x$, and so $p'q'$ is proper, which is a contradiction.
Suppose $uv$ is not in the component of $T-pq$ containing $p$.
If $x$ and $y$ are incomparable, then 
it is also a contradiction because $uv$ cuts $\V_y$ and $\L(T,u,v)\subseteq\L(T,p,q)\subseteq \V_x$.
Thus we assume $x<y$.
If $(v,u)$ points toward $p$, then
it is a contradiction because $\L_y(T,u,v)\neq\L(T,u,v)$ and $\L(T,u,v)\subseteq\L(T,p,q)\subseteq \V_x$.
If $(u,v)$ points toward $p$, then by Lemmas~\ref{lem:x-to-y} and~\ref{lem:guardfar},
$vu$ $y$\=/guards $u$ or is $y$\=/degenerate because $pq$ $x$\=/guards~$q$.
It is a contradiction because $uv$ $y$\=/guards $v$.
Therefore, we conclude that $p'q'$ is proper, which implies that every $x$\=/guarding edge in $(T',\L)$ 
is proper.

It remains to show that every $x$\=/blocking path in $(T',\L)$ is proper.
Suppose $p_1'p'p_2'$ is an improper $x$\=/blocking path in $T'$
and $q'$ is a neighbor of $p'$ other than $p_1'$ and $p_2'$.
Let $(T_x,\V_x)=(T',\L)|_{\V_x}$.
Since $p_1'p'p_2'$ is improper, $p'$ is an $x$\=/branch node of $T'$ and thus 
there exists $p_x$ with $\phi'(p_x)=p'$ 
where $\phi'$ is the canonical injection 
from the set of all nodes of $T_x$ to the set of all nodes of $T'$.
Let $p_xq_x$ be the edge of $T_x$ such that 
an edge $p'q'$ of $T'$ corresponds to $p_xq_x$.
By Lemma~\ref{lem:induced}, let $(T,\L)|_{\V_x}=(T_x,\V_x)$.
For the canonical injection $\phi$ 
from the set of all nodes of $T_x$ to the set of all nodes of~$T$,
let $p=\phi(p_x)$
and let $pq$ be the edge of~$T$ corresponding to an edge $p_xq_x$ of $T_x$.
Similarly, $p_1p$ and $pp_2$ are the edges of~$T$ corresponding to the edges of $T_x$ incident with $p_x$.
Then $p_1pp_2$ is an $x$\=/blocking path in~$T$ by Lemma~\ref{lem:induced}.
Since $(T,\L)$ is $x$\=/pure, we observe that $\L(T,p,q)\subseteq \V_x$.
We also observe that $pq$ $x$\=/guards $q$. %
It is a contradiction because $uv$ $y$\=/guards $v$.
Therefore, $p_1'p'p_2'$ is proper, which implies that every $x$\=/blocking path in $(T',\L)$ is proper.
This completes the proof.
\end{proof}

Let $y$ be a node of $T^b$ and $(T,\L)$ be a branch-decomposition of $\V_0$ with $\V_y\subseteq\V_0$.
We say that an improper $y$\=/degenerate edge $uv$ of~$T$ is \emph{$y$\=/acceptable} 
if either $uv$ is $y$\=/crossing, or $\V_0-\V_y\subseteq \L(T,u,v)$ and there is no edge $ab\neq uv$ of~$T$ such that 
$\V_0-\V_y\subseteq \L(T,a,b) \subseteq \L(T,u,v)$.

\begin{LEM}\label{lem:degenpure}
Let $x$, $y$ be nodes of $T^b$ with $y\nleq x$.
Let $(T,\L)$ be a branch-\decomposition{} of $\V_0$ with $\V_x\cup\V_y\subseteq\V_0$.
Let $(T',\L)$ be a branch-decomposition of $\V_0$ obtained from $(T,\L)$ by
splitting at a $y$\=/acceptable edge $uv$ by $\V_y$. 
If $(T,\L)$ is $x$\=/pure, then $(T',\L)$ is $x$\=/pure.
\end{LEM}
\begin{proof}
First, consider the case when $(T,\L)$ is $x$\=/degenerate. 
Then $(T,\L)$ is $x$\=/disjoint and by Lemma~\ref{lem:disjoint}, $(T',\L)$ is $x$\=/disjoint.
Thus $(T',\L)$ is $x$\=/pure because it is $x$\=/degenerate by Corollary~\ref{cor:xdegenerate}.

Now suppose that $uv$ is an edge of $T[V(T_q)\cup \{p\}]$, where $T_q$ is the component 
of $T-pq$ containing $q$. If $(p,q)=(u,v)$ or $uv$ is an edge of $T_q$ such that $(v,u)$ points toward $q$, then 
$\L(T,u,v)\subseteq \L(T,p,q)= \V_x$. Recall that $\V_x \subseteq \V_y$ or $\V_x\cap \V_y=\emptyset$, depending on whether 
$x<y$ or $x,y$ are incomparable nodes of $T^b$, and therefore we have $\L(T,u,v) \subseteq \V_y$ or $\L(T,u,v)\cap \V_y=\emptyset$. 
This, however, contradicts to the assumption that $(u,v)$ is $y$\=/mixed, which is the requirement 
for splitting at $uv$. Therefore, we conclude that either $(p,q)=(v,u)$ or $uv$ is an edge of $T_q$ such that 
$(u,v)$ points toward $q$. Now observe that $\L(T,v,u)\subseteq \L(T,p,q)\subseteq \V_x$. 
This means that $x<y$: indeed, if $x,y$ are incomparable, then $\L(T,v,u)\cap \V_y=\emptyset$, which contradicts 
that $uv$ cuts $\V_y$ as an improper $y$\=/degenerate edge.

Since $uv$ is an improper $y$\=/degenerate edge and $(T,\L)$ is $x$\=/degenerate, $uv$ does not cut $\V_x$. 
Therefore, we deduce that $(p,q)=(v,u)$. Observe that splitting at $uv$ by $\V_y$ keeps $T_q$ intact, that is, $T_q$ 
is identical to a subtree of $T'$. It follows that $(T',\L)$ is $x$\=/disjoint as desired. This establishes the case 
when $(T,\L)$ is $x$\=/degenerate.

Second, we consider the case when $(T,\L)$ is not $x$\=/degenerate. Suppose that the statement does not hold. 
That is, $T'$ has an improper $x$\=/guarding edge 
or an improper $x$\=/blocking path. We take an edge $p'q'$ of $T'$ as follows: in the former case, $p'q'$ is 
an edge of $T'$ improperly $x$\=/guarding $q'$. In the latter case, we take $p'$ so that it is the center of an improper $x$\=/blocking path 
and $p'q'$ is an edge incident with $p'$ that is not on the $x$\=/blocking path. Notice that $p'q'$ $x$\=/guards $q'$ in both cases. 
Since $(p',q')$ is $x$\=/mixed and $\L_x(T',q',p')\neq \emptyset$ in both cases, 
$p'q'$ cuts $\V_x$. Therefore, $p'q'$ corresponds to an edge $f$ of $(T',\L)|_{\V_x}$. 

Now we shall choose an edge of~$T$ corresponding to $f$ of $(T',\L)|_{\V_x}$. 
Clearly, $T$ has an edge corresponding to $f$  because $(T,\L)|_{\V_x}=(T',\L)|_{\V_x}$.
Note that in the case when $p'q'$ is an improper $x$\=/guarding edge, $q'$ is an $x$\=/branch node by definition.
When $p'q'$ is chosen as an edge incident with an improper $x$\=/blocking path, 
$p'$ is an $x$\=/branch node by definition. 
We choose $pq$ so that $q$ and $q'$ correspond to the same node of $(T,\L)|_{\V_x}$ in the former case, and 
$p$ and $p'$ correspond to the same node of $(T,\L)|_{\V_x}$ in the latter case. 
Once $p$ or $q$ is fixed, $pq$ is uniquely determined by the condition that $pq$ and $p'q'$ 
correspond to $f$. 
Observe that $pq$ is $x$\=/guarding in the former case and it must be proper as $(T,\L)$ is $x$\=/pure.
If $p'q'$ is chosen from an improper $x$\=/blocking path of $T'$, then $p$ is the center of an $x$\=/blocking path of~$T$, which 
must be proper,  
and $pq$ is the edge incident with $p$ that is not on this $x$\=/blocking path. 
Therefore, $pq$ $x$\=/guards $q$. 
Since $(p',q')$ is $x$\=/mixed and $\L_x(T',p',q')=\L_x(T,p,q)$ while $(p,q)$ is not $x$\=/mixed in~$T$, it follows that $\L(T,p,q)\subseteq \V_x$. 
This also implies that $q$ is an $x$\=/branch node.

Suppose that $u$ is a node of the component of $T-pq$ containing $p$. 
For $\L(T,p,q)\subseteq \V_x$, we can apply Lemma~\ref{lem:easycase} and deduce that
$T'$ contains an edge $p''q''$ such that $\L(T,p,q)=\L(T',p'',q'')$ and $q''=q$. 
Then $q''$ is an $x$\=/branch node of $T'$ which correspond to 
the same node of $(T,\L)|_{\V_x}$ as $q$, implying $q''=q'$ because $\L(T',p'',q'')\subseteq \V_x$.
It follows that $p''q''=p'q'$ as $\L_x(T',p',q')=\L_x(T,p,q)=\L_x(T',p'',q'')$. 
Now, $\L(T',p',q')=\L(T',p'',q'')\subseteq \V_x$, which contradicts the choice of $p'q'$. 

Therefore, $uv$ is either an edge of $T_q$, where $T_q$ is the component of $T-pq$ containing $q$, or $(p,q)=(v,u)$. 
Especially, if $uv$ is an edge of $T_q$ then $(u,v)$ must point toward $q$ since otherwise $(u,v)$ is not $y$\=/mixed. 
Also similarly as in the previous case, we have $x<y$. 

Let $P=w_0w_1\cdots w_s$ be the unique path of~$T$ from $p=w_0$ to $u=w_s$ and recall that the first edge is $pq$ and the last edge is $uv$. 
We claim that every edge on $P$ is an improper $y$\=/degenerate edge. 
Lemma~\ref{lem:guardfar} and the existence of $uv$ such that $(u,v)$ pointing toward $q$ 
imply that every edge $w_{i-1}w_i$ $y$\=/guards $w_{i-1}$ or $w_{i-1}w_i$ is $y$\=/degenerate. 
On the other hand, $pq=w_0w_1$ $x$\=/guards $q=w_1$. By Lemma~\ref{lem:guardfar}, 
every edge $w_{i-1}w_i$ $x$\=/guards $w_i$ or is $x$\=/degenerate, and thus 
$w_{i-1}w_i$ $y$\=/guards $w_i$ or is $y$\=/degenerate due to Lemma~\ref{lem:x-to-y}.
It follows that every edge on $P$ is $y$\=/degenerate, including $pq$. Now, recall that 
$uv$ is $y$\=/acceptable and $uv$ is not $y$\=/crossing due to $\L(T,v,u)\subseteq \V_x$. 
Therefore, $uv$ is an improper $y$\=/degenerate edge such that $\L(T,u,v)$ is a minimal set containing $\V_0-\V_y$. 
We conclude that $(p,q)=(v,u)$ and $(T',\L)$ is obtained by splitting at $uv=qp$. It is easy to see 
that $q'=q$ and $\L(T',p',q')=\L(T,p,q)\subseteq \V_x$. However, this contradicts 
the choice of $p'q'$.
\end{proof}

\begin{LEM}\label{lem:terminate}
Let $y$ be a node of $T^b$ and 
let $(T,\L)$ be a branch-\decomposition{} of a subspace arrangement $\V_0$ such that $\V_y\subseteq \V_0$ 
and $(T,\L)$ is not $y$\=/degenerate.
By repeatedly applying forking if it has an improper $y$\=/blocking path, 
and applying splitting if it has an improper $y$\=/guarding edge to $(T,\L)$,
we can construct a branch-\decomposition{}
that has neither improper $y$\=/blocking paths nor improper $y$\=/guarding edges.
\end{LEM}
\begin{proof}
Suppose $(T,\L)$ has an improper $y$\=/blocking path or an improper $y$\=/guarding edge.
We are going to show that 
if we apply forking or splitting to $(T,\L)$,
then 
the number of improper $y$\=/blocking paths and improper $y$\=/guarding edges 
must decrease strictly.

Let $v_1vv_2$ be an improper $y$\=/blocking path in $(T,\L)$.
Let $(T',\L)$ be a branch-\decomposition{} obtained from $(T,\L)$ by forking at $v$ by $\V_y$
such that %
$v'$ is the new node in $T'$ created by subdividing $vv_2$.
Let us consider the number of improper $y$\=/blocking paths and improper $y$\=/guarding edges in $T'$ as follows.
\begin{itemize}
\item
Let $V_1$ and $V_2$ be the set of all nodes in the components of $T-v$ containing $v_1$ and $v_2$, respectively. 
The total number of improper $y$\=/blocking paths and improper $y$\=/guarding edges 
in $T'[V_1\cup\{v\}]$ and $T'[V_2\cup\{v'\}]$
is equal to the number of those in $T[V_1\cup\{v\}]$ and $T[V_2\cup\{v\}]$.
\item
By the construction of $T'$, 
all $y$\=/blocking paths and all $y$\=/guarding edges 
in $T'[V(T^{v,\bar{y}})\cup\{v,v',v_2\}]$
are proper 
because a subspace mapped from a nonroot leaf of $T^{v,\bar{y}}$ by $\L$ is not in $\V_y$.
Note that edges $vv'$ and $v'v_2$ of $T'$ are not $y$\=/guarding because $v_1vv_2$ in~$T$ is $y$\=/blocking.
\item
For a $y$\=/blocking path $p_1pp_2$ in $T'[V(T^{v,y})\cup\{v\}]$, 
if $p_3$ is the neighbor of $p$ other than $p_1$ and $p_2$,
then $(p,p_3)$ does not point toward $v$ by Lemma~\ref{lem:guardfar}
because $vv_3$ $y$\=/guards $v_3$.
Thus, every $y$\=/blocking path in $T'[V(T^{v,y})\cup\{v\}]$ is proper
because a subspace mapped from a nonroot leaf of $T^{v,\bar{y}}$ by $\L$ is in~$\V_y$.
Similarly, every $y$\=/guarding edge in $T'[V(T^{v,y})\cup\{v\}]$ is proper.
\item 
As $v_1vv'$ is a $y$\=/blocking path, neither $v_1vv_3$ nor $v'vv_3$ in $T'$ is a $y$\=/blocking path
where $v_3$ is the root of $T^{v,y}$.
\end{itemize}
The number of improper $y$\=/blocking paths and improper $y$\=/guarding edges
should decrease strictly
because $v_1vv_2$ in~$T$ is improper but neither $v_1vv'$ nor $vv'v_2$ is proper.

Similarly,
it is easy to check that the number of improper $y$\=/blocking paths and improper $y$\=/guarding edges 
decreases strictly when we apply splitting to $(T,\L)$.
This completes the proof.
\end{proof}

Finally we give the proof of Proposition~\ref{prop:pure} that is our goal of this section.
\begin{proof}[Proof of Proposition~\ref{prop:pure}]
We may assume that $\abs{\V_0}\ge2$.
For a branch-\decomposition{} $(T,\L)$ of $\V_0$, let
\[S_{(T,\L)}=\{x\in V(T^b) : \V_x\subseteq\V_0 \text{ and } (T,\L) \text{ is $z$-pure for all } z\le x\}.\]
We choose a branch-decomposition $(T,\L)$ of $\V_0$ having width at most~$k$
so that $\abs{S_{(T,\L)}}$ is maximal.
If $(T,\L)$ is totally pure with respect to $(T^b,\L^b)$, then we are done.
If $(T,\L)$ is not totally pure with respect to $(T^b,\L^b)$,
then there exists a node $y$ of $T^b$ with $\V_y\subseteq\V_0$ such that 
$y\notin S_{(T,\L)}$ and two children of $y$ are in $S_{(T,\L)}$.

First let us consider the case when $(T,\L)$ is $y$\=/degenerate.
\begin{itemize}
\item If $T$ has an improper $y$\=/degenerate edge $u'v'$ such that $\V_0-\V_y\subseteq\L(T,u',v')$,
then choose a $y$\=/acceptable edge $uv$ of~$T$.
And then we apply splitting at $uv$ by $\V_y$ to $(T,\L)$ so that 
we obtain a new branch-decomposition~$(T',\L)$.

\item If there is no such edge, 
then there must exist an improper $y$\=/degenerate edge $ab$ that is $y$\=/crossing.
We apply splitting at $ab$ by $\V_y$ to $(T,\L)$ so that we obtain a new branch-decomposition $(T'',\L)$.
Let $c$ be the root of $T^{b,y}$. We claim that $cb$ is an improper $y$\=/degenerate edge such that 
$\V_0-\V_y\subseteq\L(T'',c,b)$. %
Let $d$ be the neighbor of $b$ other than $a$ and $c$, so that $d$ is the root of $T^{b,\bar{y}}$.
Then $\L_y(T'',b,d)=\emptyset$.
Thus, $\L_y(T'',b,a)=\L_y(T'',c,b)$ and $\L_y(T'',a,b)=\L_y(T'',b,c)$. 
This proves the claim because $ab$ is an improper $y$\=/degenerate edge.
And then we choose a $y$\=/acceptable edge $uv$ of $T''$.
Now we apply splitting at $uv$ by $\V_y$ to $(T'',\L)$ so that 
we obtain a new branch-decomposition $(T',\L)$.
\end{itemize}
By the construction of $(T',\L)$, $(T',\L)$ is $y$\=/disjoint.
Note that $(T,\L)$ is $y$\=/degenerate if and only if $(T',\L)$ is $y$\=/degenerate because 
$(T,\L)|_{\V_y}=(T',\L)|_{\V_y}$.
This implies that $(T',\L)$ is $y$\=/pure because $(T',\L)$ is $y$\=/degenerate.
Then $S_{(T',\L)}\supseteq S_{(T,\L)}\cup\{y\}$ by Lemma~\ref{lem:degenpure},
which contradicts the maximality of $S_{(T,\L)}$.
Note that $(T',\L)$ has width at most~$k$ by Proposition~\ref{prop:split}.

If $(T,\L)$ is not $y$\=/degenerate, then 
there exists an improper $y$\=/blocking path or an improper $y$\=/guarding edge.
By repeatedly applying forking if it has an improper $y$\=/blocking path 
and applying splitting if it has an improper $y$\=/guarding edge to $(T,\L)$,
we can construct a new branch-decomposition $(T',\L)$
that has neither improper $y$\=/blocking paths nor improper $y$\=/guarding edges by Lemma~\ref{lem:terminate}.
Note that $(T,\L)$ is $y$\=/degenerate if and only if $(T',\L)$ is $y$\=/degenerate because 
$(T,\L)|_{\V_y}=(T',\L)|_{\V_y}$.
Thus $(T',\L)$ %
is not $y$\=/degenerate and so it is $y$\=/pure.
Also, for a node~$x$ of $T^b$ with $y\nleq x$,
if $(T,\L)$ is $x$\=/pure, then $(T',\L)$ is $x$\=/pure by Lemma~\ref{lem:notdegenpure}.
Note that the resulting branch-decomposition has width at most~$k$
by Propositions~\ref{prop:fork} and~\ref{prop:split}.
Then $S_{(T',\L)}\supseteq S_{(T,\L)}\cup\{y\}$, 
which contradicts the maximality of~$S_{(T,\L)}$.
\end{proof}

The following lemma will be used later.
\begin{LEM}\label{lem:pure}
Let $x_1$ and $x_2$ be two children of a node~$x$ in $T^b$.
Let $(T,\L)$ be a branch-decomposition of $\V_x$. 
If $(T,\L)$ is totally pure with respect to $(T^b,\L^b)$,
then %
$(T,\L)|_{\V_{x_i}}$ is totally pure with respect to $(T^b,\L^b)$ 
for each $i=1,2$.
\end{LEM}
\begin{proof}
By symmetry, it suffices to show that $(T,\L)|_{\V_{x_1}}$ is 
$y$\=/pure for every node~$y$ of $T^b$ with $y\le x_1$.
Trivially, $(T,\L)|_{\V_{x_1}}$ is $x_1$-pure.

Let $(T',\L')=(T,\L)|_{\V_{x_1}}$ and let $y$ be a node of $T^b$ with $y<x_1$.
As $(T',\L')|_{\V_y}$ is equal to $(T,\L)|_{\V_y}$,
$(T',\L')$ is $y$\=/degenerate if and only if $(T,\L)$ is $y$\=/degenerate.

If $(T,\L)$ is $y$\=/degenerate, then $(T,\L)$ is $y$\=/disjoint so that $T$ has an edge $uv$
such that $\L(T,u,v)=\V_y$ and $v$ is incident with an improper $y$\=/degenerate edge.
By the construction of $T'$, 
$T'$ contains an edge $u'v'$,
which corresponds to an edge $uv$ of~$T$,
such that $\L'(T',u',v')=\V_y$ and $v'$ is incident with an improper $y$\=/degenerate edge.
So $(T',\L')$ is $y$\=/disjoint and thus $y$\=/pure because $(T',\L')$ is $y$\=/degenerate.

If $(T,\L)$ is not $y$\=/degenerate, then all $y$\=/blocking paths and all $y$\=/guarding edges of~$T$ are proper
because $(T,\L)$ is $y$\=/pure.
For each $y$\=/blocking path $P'$ in $(T',\L')$, 
there exists the corresponding $y$\=/blocking path $P$ in $(T,\L)$. 
As $P'$ is in $(T',\L')$ and $P$ is proper, if $p$ is the center of $P$ and $q$ is the neighbor of $p$
that is not on $P$, then $(p,q)$ is not $y$\=/mixed.
Thus, we can conclude that $P'$ is proper.
Similarly, it is easy to see that every $y$\=/guarding edge in $(T',\L')$ is proper.
This completes the proof because $(T',\L')$ is not $y$\=/degenerate.
\end{proof}

\section{Properties of namus}\label{sec:namu}

In this section we review the properties of $B$-namus which will be pivotal for proving the correctness of Algorithm~\ref{alg:fullset} and analyzing its time complexity. 

\subsection{Typical sequences}
We review the notion of \emph{typical sequences} 
proposed by Bodlaender and Kloks~\cite{BK1996}.
Let $s=a_1,a_2,\ldots,a_n$ be a sequence of integers.
Let $\tau(s)$ be a subsequence of $s$ obtained
by applying the following two operations as many as possible.
(i) Delete $a_{i+1}, \ldots, a_{j-1}$ 
if there exist $i<j$ such that 
$a_i\le a_k \le a_j$ or $a_i\ge a_k \ge a_j$
for all $k$ with $i\le k \le j$.
(ii) Remove $a_k$ if $a_k=a_{k+1}$ for some $k$.
For example, if $s=1,2,5,3,4,2,4,4$, then $\tau(s)=1,5,2,4$.
We say that a sequence $s$ is \emph{typical} if $s=\tau(s)$.
The following lemmas give upper bounds for the length of a typical sequence and the number of typical sequences consisting of integers in $\{0,\ldots,k\}$.

\begin{LEM}[{\cite[Lemma 3.3]{BK1996}}]\label{lem:lengthtypical}
The length of a typical sequence consisting of integers in $\{0,\ldots , k\}$ is at most $2k+1$.
\end{LEM}

\subsection{Compact $\boldsymbol B$-namus}
In Subsection~\ref{subsec:namu}, $B$-namus are introduced.
We will show that for a vector space $B$ of small dimension over a finite field $\F$, each compact $B$-namu of width at most~$k$ has small size and the number of such $B$-namus is bounded by a function of $k$.

The authors introduced the notion of \emph{$B$-trajectories} in~\cite{JKO2016}, which can be seen as a linear version of $B$-namus related to path-decompositions.
The following lemma can be easily deduced from Lemma~10 in~\cite{JKO2016}, which uses Lemma~\ref{lem:lengthtypical}.

\begin{LEM}\label{lem:trajlength}
Let $\F$ be a field and let $B$ be a subspace of $\F^r$ of dimension $\theta$.
If $\Gamma$ is a compact $B$-namu of width at most~$k$,
then the diameter of $T(\Gamma)$ is at most $(2\theta+1)(2k+1)$.
\end{LEM}

\begin{LEM}\label{lem:numedge}
Let $d$ be an odd integer.
If $T$ is a subcubic tree of diameter at most~$d$,
then the number of edges in~$T$ is at most $2^{(d+3)/2}-3$. %
\end{LEM}

\begin{proof}
We may assume that the diameter of~$T$ is exactly $d$.
Let $e$ be the middle edge in a longest path of~$T$.
There are two components in $T-e$ and
each component is a rooted tree of height $(d-1)/2$ such that 
each vertex has at most two children.
It follows that each component of $T-e$ has at most $2+4+\cdots+2^{(d-1)/2}=2 \times (2^{(d-1)/2}-1)$ edges.
Therefore, the number of edges in~$T$ is at most 
\[4\times \left(2^{(d-1)/2}-1\right)+1=2^{(d+3)/2}-3. \qedhere \] 
\end{proof}

\begin{LEM}\label{lem:ukbtime}
Let $\F$ be a finite field and let $B$ be a subspace of $\F^r$ of dimension~$\theta$. 
If $\Gamma$ is a compact $B$-namu of width at most~$k$,
then the number of edges in $T(\Gamma)$ is at most $2^{2\theta k+\theta+k+2}-3$. %
Furthermore, the set $U_k(B)$ contains at most $f(k,\theta,\abs{\F})$ elements and 
can be generated from $B$ in $g(k,\theta,\abs{\F})$ steps
for some functions $f$ and~$g$. %
\end{LEM}

\begin{proof}
Let $\Gamma=(T,\alpha,\lambda,U)\in U_k(B)$.
By Lemmas~\ref{lem:trajlength} and~\ref{lem:numedge},
the number of edges in~$T$ is at most 
$2^{2\theta k+\theta +k+2}-3$.
Therefore, the number of subcubic trees on the node set $\{1,\ldots,n\}$ for some $n\le 2^{2\theta k+\theta +k+2}-2$
is bounded by some function of $k$ and $\theta$.

Let $q=\abs{\F}$. 
Note that $U$ is a subspace of $B$ and there exist $\sum_{i=0}^{\theta}\binom{\theta}{i}_q$ subspaces of $B$
where $\binom{\theta}{i}_q=\frac{(q^\theta-1)(q^{\theta-1}-1)\cdots(q^{\theta-i+1}-1)}{(q^i-1)(q^{i-1}-1)\cdots (q-1)}$ 
for $i\le \theta$, which is known as Gaussian binomial coefficients. 

Given a fixed ordered basis $\B$ of $B$, we explain how to generate all subspaces of $B$ of dimension $i$ in the claimed running time. 
A subspace $B'$ of $B$ with a fixed ordered basis $\B'$ can be represented a $\theta\times i$ matrix $M_{\B,\B'}$ over $\F$, %
where each column represents the coordinate vector of an element of $\B'$ with respect to $\B$. 
We note that two matrices $M_{\B,\B'}$ and $M_{\B,\B''}$ represent the same subspace if and only if $M_{\B,\B'}=M_{\B,\B''} A$ for some $i\times i$ invertible matrix $A$ over $\F$. %
By enumerating all possible independent columns of $M_{\B,\B'}$ and all $i\times i$ invertible matrices over $\F$, %
we get representations of distinct subspaces exactly as many as the Gaussian binomial coefficients. That is, by sieving out matrices
representing the same subspace, we can obtain all (and distinct) subspaces of $B$ as $\theta\times i$ matrices over $\F$. %

We first fix a subcubic tree $T$ on the node set $\{1,\ldots,n\}$ for some integer $n\le 2^{2\theta k+\theta +k+2}-2$
and fix a subspace $U$ of $B$.
For every incidence $(v,e)$ in~$T$, we assign a subspace $\alpha(v,e)$ of $U$.
For every edge $e$ in~$T$, assign a nonnegative integer $\lambda(e)$ with $\lambda(e)\le k$.
For every two-edge path $xyz$ in~$T$, 
we check whether $\alpha(x,xy)$ is a subspace of $\alpha(y,yz)$.
Lastly, we check whether $\lambda(e)\ge\dim(\alpha(v,e)\cap\alpha(u,e))$ for every edge $e=uv$ of~$T$.
If they hold, then $(T,\alpha,\lambda,U)$ is a $B$-namu.

Since 
the number of incidences in~$T$ and the number of subspaces of $U$
 are at most some function of $k$, $\theta$, $q$, 
we conclude that $\abs{U_k(B)}$ is at most $f(k,\theta,q)$ for some function $f$ of $k$, $\theta$, $q$.
Also, the set $U_k(B)$ can be computed in time $g(k,\theta,q)$ for some function $g$ of $k$, $\theta$, $q$.
\end{proof}

\subsection{A sum of $\boldsymbol B$-namus}

In Subsection~\ref{subsec:sum}, we defined the sum of two $B$-namus by a $(T_1,T_2)$-model in a tree. We provide the missing proof that this sum is well defined.
 
\begin{LEM}\label{lem:sum}
Given two $B$-namus $\Gamma_1=(T_1,\alpha_1,\lambda_1,U_1)$, $\Gamma_2=(T_2,\alpha_2,\lambda_2,U_2)$, 
and a $(T_1,T_2)$-model $(\eta_1,\eta_2)$ in a tree $T^+$,
we define $\alpha^+$, $\lambda^+$, and $U^+$ as follows:
\begin{enumerate}[(i)]
\item $U^+=U_1+U_2$,
\item $\alpha^+=\alpha_1\circ \vec\eta_1 +\alpha_2\circ \vec\eta_2$, 
\item $\lambda^+(\emptyset)=0$, and 
\item for all $e=uv\in E(T)$, %
\begin{align*}
\lambda^+(e)&=\lambda_1\circ\eta_1(e) + \lambda_2\circ\eta_2(e) 
-\dim (\alpha_1(\vec\eta_1(v,e))\cap\alpha_2(\vec\eta_2(v,e)))  
\\&\quad
-\dim (\alpha_1(\vec\eta_1(u,e))\cap\alpha_2(\vec\eta_2(u,e)))
+\dim (U_1 \cap U_2).
\end{align*}
\end{enumerate}
Then $\Gamma^+=(T^+,\alpha^+,\lambda^+,U^+)$ is a $B$-namu.
\end{LEM}

\begin{proof}%
It suffices to verify that $\lambda^+(e)\geq \dim (\alpha^+(u,e)\cap \alpha^+(v,e))$,
which is the fourth condition of the definition of a $B$-namu. 
By Lemma~\ref{lem:dim-join},
\begin{align*}
\lefteqn{\dim (\alpha^+(u,e)\cap \alpha^+(v,e))}\\&= \dim \Bigl(\bigl(\alpha_1(\vec\eta_1(u,e))+\alpha_2(\vec\eta_2(u,e)) \bigr)\cap \bigl(\alpha_1(\vec\eta_1(v,e))+\alpha_2(\vec\eta_2(v,e))\bigr)\Bigr)\\
&=\dim \bigl(\alpha_1(\vec\eta_1(u,e))\cap \alpha_1(\vec\eta_1(v,e))\bigr) + \dim \bigl(\alpha_2(\vec\eta_2(u,e))\cap \alpha_2(\vec\eta_2(v,e))\bigr) \\
& \qquad - \dim (\alpha_1(\vec\eta_1(u,e))\cap \alpha_2(\vec\eta_2(u,e))) - \dim (\alpha_1(\vec\eta_1(v,e))\cap \alpha_2(\vec\eta_2(v,e)))\\
&\qquad 
+\dim  \Bigl(\bigl(\alpha_1(\vec\eta_1(u,e))+\alpha_1(\vec\eta_1(v,e))\bigr) \cap \bigl(\alpha_2(\vec\eta_2(u,e))+\alpha_2(\vec\eta_2(v,e))\bigr)\Bigr),
\end{align*}
and hence,
\begin{align*}
&\lambda^+(e)- \dim (\alpha^+(u,e)\cap \alpha^+(v,e))\\
  &\geq \lambda_1(\eta_1(e)) \\
  &\quad- \dim (\alpha_1(\vec\eta_1(u,e))\cap \alpha_1(\vec\eta_1(v,e))) + \lambda_2(\eta_2(e)) - \dim (\alpha_2(\vec\eta_2(u,e))\cap \alpha_2(\vec\eta_2(v,e)))\\
&\quad +\dim (U_1\cap U_2) - \dim  \bigl((\alpha_1(\vec\eta_1(u,e))+\alpha_1(\vec\eta_1(v,e))) \cap (\alpha_2(\vec\eta_2(u,e))+\alpha_2(\vec\eta_2(v,e)))\bigr)\\
& \geq 0. \qedhere 
\end{align*}
\end{proof}

For algorithms, it is necessary to bound the size of the set $\Gamma_1\tplus \Gamma_2$. For this, the next  lemma places an upper bound on  the size of a sum of two $B$-namus. 
 
\begin{LEM}\label{lem:sumsize}
Let $T_1$, $T_2$, and $T$ be three subcubic trees.
If there exists a $(T_1,T_2)$-model in~$T$, then 
\[\abs{V(T)}=
  \begin{cases}
\abs{V(T_1)}+\abs{V(T_2)}+2 &\text{if } \abs{V(T_1)}\ge 2, ~\abs{V(T_2)}\ge 2,\\
\abs{V(T_1)}+\abs{V(T_2)} &\text{if } \abs{V(T_1)}=1, ~\abs{V(T_2)}=1,\\
\abs{V(T_1)}+\abs{V(T_2)}+1 &\text{otherwise.}
  \end{cases}
\]
\end{LEM}
\begin{proof}
It is trivial if $\abs{V(T_1)}=1$ or $\abs{V(T_2)}=1$. So we may assume $ \abs{V(T_1)}\ge 2$, $\abs{V(T_2)}\ge 2$.
Let $(\eta_1,\eta_2)$ be a $(T_1,T_2)$-model in~$T$.
By condition (ii) of a $(T_1,T_2)$-model, the number of degree-$1$ nodes in~$T$ is 
$\abs{A(T_1)}+\abs{A(T_2)}$. 
Since $T$ is subcubic, $T$~contains $\abs{A(T_1)}+\abs{A(T_2)}-2$ nodes of degree $3$. 
Every degree-$2$ node in~$T$ is a branch node in exactly one of $\eta_1$ and $\eta_2$ by condition (iii), and thus the number of degree-$2$ nodes in~$T$ is $\abs{V(T_1)}-\abs{A(T_1)}-(\abs{A(T_1)}-2)+\abs{V(T_2)}-\abs{A(T_2)}-(\abs{A(T_2)}-2)$. To sum up, we have \[\abs{V(T)}=\abs{V(T_1)}+\abs{V(T_2)}+2.\qedhere  \]
\end{proof}

Using Lemma~\ref{lem:sumsize}, we can bound the size of the set $\Gamma_1\tplus \Gamma_2$ for the sets of compact $B$-namus.
\begin{LEM}\label{lem:generatesum}
Let $B$ be a subspace of $\F^r$ of dimension at most $\theta$ 
and let $\Gamma_1$, $\Gamma_2$ be compact $B$-namus of width at most~$k$. 
Then the set $\Gamma_1\tplus \Gamma_2$ contains at most $2^{2^{2(2\theta k+\theta+k+3)}}$ $B$-namus.
Furthermore, the number of sum operations needed to generate the set $\Gamma_1 \tplus \Gamma_2$ is bounded by a function  depending only on $\theta$ and $k$. %
\end{LEM}
\begin{proof}
  For each $i=1,2$, let $\Gamma_i=(T_i,\alpha_i,\lambda_i,U_i)$ and 
  $n_i$ be the number of leaves of $T_i$. 
  By Lemma~\ref{lem:ukbtime}, $T_i$ has at most $2^{2\theta k+\theta+k+2}-2$ nodes and therefore 
  \[
  n_i \le 2^{2\theta k+\theta+k+1}, 
  \]
  because $2n_i-2\le \abs{V(T_i)}$ as $T_i$ is subcubic.
  If $n_1=1$ or $n_2=1$, then it is trivial.  So let us assume that $n_1>1$ and $n_2>1$.

  To enumerate all $\Gamma=(T,\alpha,\lambda,U)\in \Gamma_1\tplus \Gamma_2$, 
  we will enumerate all subcubic trees~$T$ on $\abs{V(T_1)}+\abs{V(T_2)}+2$ labeled nodes with $n_1+n_2$ labeled leaves.
  Then we map the first $n_1$ leaves of~$T$ to the leaves of $T_1$ in the given order
  and the last $n_2$ leaves of~$T$ to the leaves of $T_2$. 
  If this mapping of leaves induces a $(T_1,T_2)$-model $(\eta_1,\eta_2)$ in~$T$, then compute $\Gamma$ so that $(\eta_1,\eta_2)$ co-extends $\Gamma_1$ and $\Gamma_2$ to $\Gamma$.
  
  Observe that $T$ has precisely $\abs{V(T_1)}+\abs{V(T_2)}+2\le 2^{2\theta k +\theta+k+3}-2$ nodes by Lemma~\ref{lem:sumsize}.
  There are at most  $F(\theta,k)$ labeled subcubic trees with $\abs{V(T_1)}+\abs{V(T_2)}+2$ nodes
  and exactly $n_1+n_2$ leaves
  for some function $F(\theta,k)$. 
  (As the number of $n$-node labeled trees is $n^{n-2}$, $F(\theta,k)\le (2^{2\theta k+\theta+k+3}-2)^{2^{2\theta k+\theta+k+3}-4}$ trivially.)
  We try to map the first $n_1$ leaves (in the ordering of node labels) of~$T$ to the leaves of $T_1$ by~$\eta_1$
  and the other $n_2$ leaves of~$T$ to the leaves of $T_2$ by $\eta_2$
  and check if this induces a $(T_1,T_2)$-model.
  Thus, the number of $B$-namus in $\Gamma_1\tplus \Gamma_2$ is at most 
  \[F(\theta,k) 
  < 2^{(2\theta k+\theta+k+3){2^{2\theta k+\theta+k+3}}}
  \le 2^{2^{2(2\theta k+\theta+k+3)}}.
  \]
  Clearly $\Gamma_1\tplus \Gamma_2$ can be generated in time depending only on $\theta$ and $k$ by following the steps of this proof.
\end{proof}

\begin{LEM}\label{lem:eachwidthk}
Let $\Gamma_1$ and $\Gamma_2$ be two $B$-namus and $\Gamma\in\Gamma_1\tplus\Gamma_2$.
Then the width of~$\Gamma_1$ is at most the width of $\Gamma$.
\end{LEM}
\begin{proof}
Let $\Gamma=(T,\alpha,\lambda,U)$,
$\Gamma_1=(T_1,\alpha_1,\lambda_1,U_1)$, 
and $\Gamma_2=(T_2,\alpha_2,\lambda_2,U_2)$.
We may assume that $\abs{V(T_1)}>1$ or $\abs{V(T_2)}>1$.
Let $(\eta_1,\eta_2)$ be a $(T_1,T_2)$-model in~$T$
co-extending $\Gamma_1$ and $\Gamma_2$ to $\Gamma$.
Then for each edge $f$ of $T_1$, $T$ has an edge $e=uv$ such that $\eta_1(e)=f$.
It is enough to show that $\lambda_1(f)\le\lambda(e)$.
Since $\alpha=\alpha_1\circ\vec\eta_1+\alpha_2\circ\vec\eta_2$,
by Lemma~\ref{lem:dim-join}, 
we obtain
\begin{align*}
  \lefteqn{\dim \alpha(u,e)\cap \alpha(v,e)}\\
  &=\dim \alpha_1(\vec\eta_1(u,e))\cap \alpha_1(\vec\eta_1(v,e)) + \dim \alpha_2(\vec\eta_2(u,e))\cap \alpha_2(\vec\eta_2(v,e)) \\
& \qquad - \dim \alpha_1(\vec\eta_1(u,e))\cap \alpha_2(\vec\eta_2(u,e)) - \dim \alpha_1(\vec\eta_1(v,e))\cap \alpha_2(\vec\eta_2(v,e))\\
&\qquad +\dim  (\alpha_1(\vec\eta_1(u,e))+\alpha_1(\vec\eta_1(v,e))) \cap (\alpha_2(\vec\eta_2(u,e))+\alpha_2(\vec\eta_2(v,e))).
\end{align*}
After rearranging terms, we have
\begin{align*}
\lambda(e)&=\lambda_1(\eta_1(e))+\lambda_2(\eta_2(e)) -\dim \alpha_1(\vec\eta_1(v,e))\cap\alpha_2(\vec\eta_2(v,e)) \\
&\qquad  -\dim \alpha_1(\vec\eta_1(u,e))\cap\alpha_2(\vec\eta_2(u,e))+\dim U_1\cap U_2 \\
&=\lambda_1(\eta_1(e))\\
&\qquad +\dim \alpha(u,e)\cap \alpha(v,e)- \dim \alpha_1(\vec\eta_1(u,e))\cap \alpha_1(\vec\eta_1(v,e)) \\
&\qquad + \lambda_2(\eta_2(e))-  \dim \alpha_2(\vec\eta_2(u,e))\cap \alpha_2(\vec\eta_2(v,e)) +\dim U_1\cap U_2 \\
&\qquad - \dim  (\alpha_1(\vec\eta_1(u,e))+\alpha_1(\vec\eta_1(v,e))) \cap (\alpha_2(\vec\eta_2(u,e))+\alpha_2(\vec\eta_2(v,e)))\\
&\geq \lambda_1(\eta_1(e))=\lambda_1(f). \qedhere   
\end{align*}
\end{proof}

\begin{LEM}\label{lem:decomp2model}
Let $\V_1$ and $\V_2$ be subspace arrangements of subspaces of $\F^r$. 
Let $(T,\L)$ be a branch-decomposition of $\V_1 \dot{\cup} \V_2$ and 
let $(T_i,\L_i)=(T,\L)|_{\V_i}$ for $i=1,2$.
If $\eta_1$ is a $T_1$-model in~$T$ and $\eta_2$ is a $T_2$-model in~$T$,
then $(\eta_1,\eta_2)$ is a $(T_1,T_2)$-model in~$T$.
\end{LEM}
\begin{proof}
Clearly, the set of all leaves $A(T)$ is a disjoint union of $\etabar_1(A(T_1))$ and $\etabar_2(A(T_2))$ since $\L$ is a bijection from $A(T)$ to $\V_{1}\dot{\cup}\V_{2}$. Hence, (i) in the definition of a $(T_1,T_2)$-model is satisfied. Note that $T$ does not have a node of degree 2, and each leaf $v$ is a leaf in exactly one of $\eta_1$ and $\eta_2$ due to (i). Therefore, (ii) is trivial. See Figure~\ref{fig:decomposition} for an illustration.
\end{proof}
\begin{figure}
  \centering
  \tikzstyle {v}=[draw,circle,fill=black,inner sep=1pt]
  \tikzstyle {v1}=[draw,fill=black,inner sep=2pt]
  \tikzstyle {v2}=[draw,circle,inner sep=2pt]
  \tikzstyle {h}=[inner sep=0pt]
  \begin{tikzpicture}
    \foreach \x in {1,2,3,5,7} {
      \node [v1] at (\x*36:1.5) (v\x) {};
    }
    \foreach \x in {4,6,8,9,0}{
      \node [v2] at (\x*36:1.5) (v\x) {};
    }
    \foreach \x in {0,3,5,8}{
      \node[v] at (\x*36+18:1) (w\x){};
      \pgfmathtruncatemacro{\y}{\x + 1}
      \draw (v\x)--(w\x)--(v\y);
    }
    \node [v] at (-18:0.5) (x9) {};
    \draw (w8)--(x9)--(w0);
    \node [v] at (180-18:0.5) (r){};
    \draw (w3)--(r);
    \node [v] at (18+36:0.5) (r2) {};
    \draw (v2)--(r2)--(x9);
    \draw (r2)--(r);
    \node [v] at (180+36+18:1) (r3){};
    \draw (w5)--(r3)--(v7);
    \draw (r)--(r3);
    \draw node [label=below:$T$] at (-90:1.5) {};
    \begin{scope}[xshift=4cm]
      \foreach \x in {1,2,3,5,7} {
        \node [v1] at (\x*36:1.5) (v\x) {};
      }
      \node [v] at (180-18:0.5) (r){};
      \draw (r)--(3*36+18:1)--(v3);
      \node [v] at (18+36:0.5) (r2) {};
      \draw (v2)--(r2)--(-18:0.5)--(18:1)--(v1);
      \draw (r2)--(r);
      \node [v] at (180+36+18:1) (r3){};
      \draw (v5)--(5*36+18:1)--(r3)--(v7);
      \draw (r)--(r3);
    \draw node [label=below:$T_1$] at (-90:1.5) {};
    \end{scope}
    \begin{scope}[xshift=8cm]
      \foreach \x in {4,6,8,9,0}{
        \node [v2] at (\x*36:1.5) (v\x) {};
      }
      \foreach \x in {8}{
        \node[v] at (\x*36+18:1) (w\x){};
        \pgfmathtruncatemacro{\y}{\x + 1}
        \draw (v\x)--(w\x)--(v\y);
      }
      \node [v] at (-18:0.5) (x9) {};
      \draw (w8)--(x9)--(18:1)--(v0);
      \node [v] at (180-18:0.5) (r){};
      \draw (v4)--(36*3+18:1)--(r);
      \draw (x9)--(18+36:0.5)--(r);
      \draw (v6)--(36*5+18:1)--(180+36+18:1)--(r);
    \draw node [label=below:$T_2$] at (-90:1.5) {};
    \end{scope}
  \end{tikzpicture}
  \caption[Constructing $T_1$ and $T_2$ in the proof of Proposition~\ref{lem:decomp2model}.]{Constructing $T_1$ and $T_2$ in the proof of Lemma~\ref{lem:decomp2model}.  \tikz \node [v1] {}; represents a leaf node mapped to an element of $\V_1$ by $\L$ and \tikz \node [v2] {}; represents a leaf node mapped to an element of $\V_2$ by $\L$.}
  \label{fig:decomposition}
\end{figure}

\begin{LEM}[{\cite[Lemma 3.17]{JKO2016}}]\label{lem:join-key}
Let $\V_1$ and $\V_2$ be subspace arrangements of subspaces of $\F^r$, and let $B$ be a subspace of $\F^r$.
If $(\spn{\V_1}+B)\cap (\spn{\V_2}+B)=B$, then \[(X\cap B)+ (Y\cap B)=(X+Y)\cap B\] 
for every subspace $X$ of $\spn{\V_1}$ and every subspace $Y$ of $\spn{\V_2}$.
\end{LEM}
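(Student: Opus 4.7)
The plan is to prove the equality by showing both inclusions, where one direction is immediate and the other is where the hypothesis $(\spn{\V_1}+B)\cap (\spn{\V_2}+B)=B$ is used.

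For the easy inclusion $(X\cap B)+(Y\cap B)\subseteq (X+Y)\cap B$, I would just observe that any sum $x+y$ with $x\in X\cap B$ and $y\in Y\cap B$ lies in $X+Y$ (by choice of $x,y$) and also in $B$ (since $B$ is a subspace). This direction does not need the hypothesis.

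For the reverse inclusion $(X+Y)\cap B\subseteq (X\cap B)+(Y\cap B)$, I would take an arbitrary element $z\in (X+Y)\cap B$ and write $z=x+y$ with $x\in X\subseteq \spn{\V_1}$ and $y\in Y\subseteq \spn{\V_2}$. The idea is to show that already $x\in B$ and $y\in B$ individually. To see $x\in B$, rewrite the equation as $x=z-y$; since $z\in B$ and $-y\in \spn{\V_2}$, this expresses $x$ as an element of $B+\spn{\V_2}=\spn{\V_2}+B$. Combined with $x\in \spn{\V_1}\subseteq \spn{\V_1}+B$, we get
\[
x\in (\spn{\V_1}+B)\cap (\spn{\V_2}+B)=B,
\]
where the hypothesis is applied in the final equality. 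Hence $x\in X\cap B$, and by the symmetric argument $y\in Y\cap B$, so $z=x+y\in (X\cap B)+(Y\cap B)$.

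I expect no real obstacle here: the lemma is essentially a modular-law style identity, and the only subtlety is recognizing that the hypothesis forces the decomposition $z=x+y$ to live componentwise in $B$, rather than merely summing into $B$. The entire argument is a short chain of subspace inclusions once that observation is made.
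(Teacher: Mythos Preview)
Your proof is correct. Note that the paper does not actually prove this lemma; it is quoted without proof from \cite[Lemma~3.17]{JKO2016}. Your argument is the natural elementary one and is presumably what the cited reference does as well.
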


\begin{LEM}\label{lem:decomp2sum}
Let $\V_1$ and $\V_2$ be subspace arrangements of subspaces of $\F^r$ and $B$ be a subspace of $\F^r$. 
Let $\Delta$ be the canonical $B$-namu of 
a branch-\decomposition{} $(T,\L)$ of $\V_1 \dot{\cup} \V_2$, and for $i=1,2$, 
let $\Delta_i$ be the canonical $B$-namu of $(T_i,\L_i)=(T,\L)|_{\V_{i}}$. 
If $(\spn{\V_1}+B)\cap (\spn{\V_2}+ B)=B$, 
then $\Delta$ is a sum of $\Delta_1$ and $\Delta_2$.
Furthermore, $\Delta = \Delta_1 +_{(\eta_1,\eta_2)} \Delta_2$ 
if $\eta_i$ is the $T_i$-model in~$T$ for $i=1,2$. 
\end{LEM}
\begin{proof}
Let $\Delta=(T,\alpha,\lambda,U)$ and $\Delta_i=(T_i,\alpha_i,\lambda_i,U_i)$ for $i=1,2$.
Let $\eta_i$ be a $T_i$-model in~$T$ for $i=1,2$.
By Lemma~\ref{lem:decomp2model}, $(\eta_1,\eta_2)$ is a $(T_1,T_2)$-model in~$T$.
Let $\Delta^+=(T,\alpha^+,\lambda^+,U^+)$ be a $B$-namu such that $(\eta_1,\eta_2)$ co-extends $\Delta_1$ and $\Delta_2$ to $\Delta^+$. That is,
\begin{align*}
\alpha^+&=\alpha_1\circ \vec\eta_1+\alpha_2\circ \vec\eta_2,\\
 U^+&=U_1+U_2,\\
\lambda^+(\emptyset)&=0, \qquad\text{ and }\\
\lambda^+(e)& =\lambda_1(\eta_1(e))+\lambda_2(\eta_2(e))+\dim (U_1\cap U_2)\\
&\quad  -\dim (\alpha_1(\vec\eta_1(v,e))\cap\alpha_2(\vec\eta_2(v,e))) -\dim (\alpha_1(\vec\eta_1(u,e))\cap\alpha_2(\vec\eta_2(u,e)))
\end{align*}
for all $e=uv\in E(T)$. %
The proof is completed by showing that $\Delta^+= \Delta$.
First, by Lemma~\ref{lem:join-key},
\[
 U=B\cap \sum_{X\in \V_1\dot\cup\V_2} X=B\cap \sum_{X\in \V_1}X+B\cap \sum_{X\in \V_2}X =U_1+U_2=U^+.
\]
Note that here we use the assumption that $(\spn{\V_1}+B)\cap (\spn{\V_2}+ B)=B$ to apply  Lemma~\ref{lem:join-key}.

Second, let us prove that $\alpha(v,e)=\alpha^+(v,e)$ for each incidence $(v,e)$ of~$T$.
Observe that
\begin{align*}
\alpha(v,e)&=B\cap \sum_{x\in A_v(T-e)}\L(x)=B\cap \Big( \sum_{\substack{x\in A_v(T-e)\\ \L(x)\in \V_1}}\L(x) +\sum_{\substack{x\in A_v(T-e)\\ \L(x)\in \V_2}} \L(x) \Big) \\
			&=B\cap \sum_{\substack{x\in A_v(T-e)\\ \L(x)\in \V_1}}\L(x)+B\cap \sum_{\substack{x\in A_v(T-e)\\ \L(x)\in \V_2}} \L(x) 
\qquad \text{by Lemma~\ref{lem:join-key}}.
\end{align*}
To complete the proof for this case, we claim that for each $i\in\{1,2\}$, 
\[\alpha_i(\eta_i(v,e))=B\cap \sum_{\substack{x\in A_v(T-e)\\ \L(x)\in \V_i}}\L(x).\]
Let $u$ be the end of $e$ other than $v$.
If $\vec\eta_i(v,e)=(0,\emptyset)$, then there is no $x\in A_v(T-e)$ such that $\L(x)\in \V_i$ and therefore $B\cap  \sum_{\substack{x\in A_v(T-e)\\ \L(x)\in \V_i}}\L(x)=\{0\}=\alpha_i(0,\emptyset)$.
If $\vec\eta_i(v,e)=(*,\emptyset)$, then 
$A_v(T-e)$ contains every leaf mapped to a member of $\V_i$ and so $B\cap  \sum_{\substack{x\in A_v(T-e)\\ \L(x)\in \V_i}}\L(x)=B\cap \sum_{X\in \V_i}X=U_i=\alpha(*,\emptyset)$.
If $\eta_i(e)\neq \emptyset$, then say  $\vec\eta_i(v,e)=(v',e')$. We can rewrite 
\[
\alpha_i(\vec\eta_i(v,e))=B\cap \sum_{x\in A_{v'}(T_i-e')}\L_i(x)=B\cap \sum_{x\in A_{v'}(T_i-e')}\L(x)=B\cap \sum_{\substack{x\in A_v(T-e)\\ \L(x)\in \V_i}}\L(x).
\]
This proves the claim and so we conclude that 
\[\alpha=\alpha_1\circ \vec\eta_1+\alpha_2\circ\vec\eta_2 =\alpha^+.\]

Finally, let us prove that $\lambda(e)=\lambda^+(e)$ for each edge $e$ of~$T$.
Suppose $\eta_1(e)\neq\emptyset$ and $\eta_2(e)\neq\emptyset$.
Let $u$, $v$ be the ends of $e$ and let $(v_i,e_i)=\vec\eta_i(v,e)$, $(u_i,e_i)=\vec\eta_i(u,e)$ for $i=1,2$
where $e_i$ is an edge of $T_i$ and $v_i,u_i$ are two ends of $e_i$. 
For $i=1,2$, let 
\[
  L_i=\sum_{\substack{x\in A_v(T-e)\\ \L(x)\in \V_i}} \L_i(x),~R_i=\sum_{\substack{x\in A_u(T-e)\\ \L(x)\in \V_i}} \L_i(x),\]
\[
L_i'=\sum_{x\in A_{v_i}(T_i-e_i)}\L_i(x),~ R_i'=\sum_{x\in A_{u_i}(T_i-e_i)}\L_i(x).
\]
Note that $L_i=L_i'$, $R_i=R_i'$, and $L_i,R_i\subseteq\spn{\V_i}$ for $i=1,2$.
Also, we have $L_1\cap L_2\subseteq(L_1+R_1)\cap(L_2+R_2)\subseteq(\spn{\V_1}+B)\cap(\spn{\V_2}+B)=B$.
Then 
by Lemma~\ref{lem:dim-join}
\begin{align*}
\lambda(e)&=\dim ((L_1+L_2)\cap  (R_1+R_2)) \\ 
          &=\dim (L_1\cap R_1) + \dim (L_2\cap R_2) - \dim (L_1 \cap L_2) - \dim (R_1\cap R_2)
  \\
  &\qquad+ \dim ((L_1+R_1)\cap (L_2+R_2)) \\
          &=\lambda_1(e_1)+\lambda_2(e_2)- \dim (L_1' \cap L_2' \cap B)- \dim (R_1'\cap R_2'\cap B)
  \\
  &\qquad
          + \dim ((L_1'+R_1')\cap (L_2'+R_2')\cap B) \\
          &=\lambda_1(e_1)+\lambda_2(e_2) - \dim (\alpha_1(v_1,e_1) \cap \alpha_2(v_2,e_2)) - \dim (\alpha_1(u_1,e_1) \cap \alpha_2(u_2,e_2))  \\
  &\qquad+ \dim (U_1\cap U_2)\\
&=\lambda^+(e).
\end{align*}

Suppose exactly one of $\eta_1(e)$ or $\eta_2(e)$ is $\emptyset$.
Without loss of generality, we assume that $\eta_1(e)=\emptyset$.
Let $\eta_1(v,e)=(*,\emptyset)$, $\eta_1(u,e)=(0,\emptyset)$, $\eta_2(v,e)=(v_2,e_2)$, and $\eta_2(u,e)=(u_2,e_2)$.
We define $L_1''=\sum_{x\in A(T)}\L(x)$, $R_1''=\emptyset$, and 
let $L_1$, $R_1$, $L_2$, $L_2'$, $R_2$, $R_2'$ as above
so that $L_1=L_1''$, $R_1=R_1''$, $L_2=L_2'$, and $R_2=R_2'$.
Note that $\lambda_1(\eta_1(e))=\lambda_1(\emptyset)=0$
$\alpha_1(\vec\eta_1(u,e))=\{0\}$
and $L_1'',R_1''\subseteq\spn{\V_1}$.
Then by Lemma~\ref{lem:dim-join}
\begin{align*}
\lambda(e)&=\dim ((L_1+L_2)\cap (R_1+R_2))\\
          &=\dim (L_1\cap R_1) + \dim (L_2\cap R_2) - \dim (L_1 \cap L_2) - \dim (R_1\cap R_2) \\
  &\qquad+ \dim ((L_1+R_1)\cap (L_2+R_2)) \\ 
          &=\dim (L_1''\cap R_1'') + \dim (L_2'\cap R_2') - \dim (L_1'' \cap L_2' )- \dim (R_1''\cap R_2')
  \\
  &\qquad+ \dim ((L_1''+R_1'')\cap (L_2'+R_2') )\\ 
&=\dim (L_2'\cap R_2') - \dim (L_1'' \cap L_2' \cap B)  + \dim ((L_1''+R_1'')\cap (L_2'+R_2') \cap B) \\ 
&=\lambda_2(e_2) - \dim (\alpha_1(v_1,e_1)\cap\alpha_2(v_2,e_2)) + \dim (U_1\cap U_2) \\
&= \lambda_1(\eta_1(e))+\lambda_2(\eta_2(e)) - \dim \alpha_1(\vec\eta_1(v,e)) \cap \alpha_2(\vec\eta_2(v,e)) \\
&\quad- \dim (\alpha_1(\vec\eta_1(u,e)) \cap \alpha_2(\vec\eta_2(u,e))) +\dim (U_1\cap U_2) \\
&=\lambda^+(e).
\end{align*}

Suppose $\eta_1(e)=\eta_2(e)=\emptyset$.
If $\eta_i(e)=\emptyset$ for some $i=1,2$, then 
all leaves $x$ of~$T$ with $\L(x)\in\V_i$ are in one component of $T-e$.
As $A(T)$ is the disjoint union of $\etabar_1(A(T_1))$ and $\etabar_2(A(T_2))$
for two ends $u$, $v$ of $e$, we may assume that 
all elements in $\V_1$ are mapped from the leaves in the component of $T-e$ containing~$u$ 
and all elements in $\V_2$ are mapped from the leaves in the component of $T-e$ containing~$v$.
Then $\lambda_1(\eta_1(e))=\lambda_2(\eta_2(e))=0$, and 
$\alpha(\vec\eta_1(v,e))=\alpha(\vec\eta_2(u,e))=\{0\}$.
Thus, $\lambda^+=\lambda_1(\eta_1(e))+\lambda_2(\eta_2(e)) 
- \dim (\alpha_1(\vec\eta_1(v,e)) \cap \alpha_2(\vec\eta_2(v,e))) - \dim (\alpha_1(\vec\eta_1(u,e)) \cap \alpha_2(\vec\eta_2(u,e))) +\dim (U_1\cap U_2) = \dim (U_1\cap U_2) = \lambda(e)$.
This completes the proof.
\end{proof}

\subsection{Comparing two $\boldsymbol B$-namus}
The following lemmas are trivial by definitions.

\begin{LEM}\label{lem:transitive}
  The binary relations $\le$ and $\tle$ on $B$-namus are transitive.
\end{LEM}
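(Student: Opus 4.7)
For the relation $\le$, transitivity is essentially immediate. If $\Gamma_1\le\Gamma_2\le\Gamma_3$, then all three $B$-namus share the same tree, the same incidence map $\alpha$, and the same subspace $U$, while the edge weights satisfy $\lambda_1(e)\le\lambda_2(e)\le\lambda_3(e)$ for every edge $e$; the transitivity of the order on $\mathbb Z$ gives $\Gamma_1\le\Gamma_3$.

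For $\tle$, the plan is to unfold the definition and pass to a common refinement. Suppose $\Gamma_1\tle\Gamma_2$ and $\Gamma_2\tle\Gamma_3$, witnessed by $\Gamma_1'\le\Gamma_2'$ and $\Gamma_2''\le\Gamma_3''$, with $\Gamma_2',\Gamma_2''$ isomorphic to subdivisions of $\Gamma_2$. By transporting namu structures along those isomorphisms, and simultaneously along the same isomorphisms on the $\Gamma_1'$ and $\Gamma_3''$ sides (which preserves $\le$), I may assume that $\Gamma_2'$ and $\Gamma_2''$ are literal subdivisions of $\Gamma_2$. Then $T(\Gamma_2')$ and $T(\Gamma_2'')$ are two subdivisions of $T(\Gamma_2)$, and I build a common subdivision $\tilde T$ edge-by-edge: for each edge $e$ of $T(\Gamma_2)$, replace $e$ by a path whose interior subdividing nodes include all interior subdividing nodes of the paths replacing $e$ in $T(\Gamma_2')$ and in $T(\Gamma_2'')$, placed in orders compatible with both. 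Lift $\tilde T$ to a $B$-namu $\tilde\Gamma_2$ by giving each new edge of the path replacing $e$ the weight $\lambda(e)$ and by assigning to each new incidence the $\alpha$-value of the incidence of $e$ on the same side. One checks that $\tilde\Gamma_2$ is then simultaneously a subdivision of $\Gamma_2'$ and of $\Gamma_2''$.

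The final step is to apply the same edge-by-edge refinement to $\Gamma_1'$ and to $\Gamma_3''$, producing $\tilde\Gamma_1$ and $\tilde\Gamma_3$. Since $\Gamma_1'\le\Gamma_2'$ forces $T(\Gamma_1')=T(\Gamma_2')$, $\alpha_1'=\alpha_2'$, and $U_1'=U_2'$, this refinement is well defined, and every new edge of $\tilde\Gamma_1$ receives a weight at most the corresponding weight of $\tilde\Gamma_2$; thus $\tilde\Gamma_1\le\tilde\Gamma_2$, and symmetrically $\tilde\Gamma_2\le\tilde\Gamma_3$. A subdivision of a subdivision is a subdivision, so $\tilde\Gamma_1$ is a subdivision of $\Gamma_1$ and $\tilde\Gamma_3$ is a subdivision of $\Gamma_3$. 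The transitivity of $\le$ then yields $\tilde\Gamma_1\le\tilde\Gamma_3$, witnessing $\Gamma_1\tle\Gamma_3$.

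The main obstacle is the bookkeeping ensuring that coherent refinement preserves both the ``subdivision of'' relation and the ``$\le$'' relation. The key observation is that at any incidence inside the path subdividing an edge $e$, the $\alpha$-value is forced by the two $\alpha$-values at the incidences of $e$ in the unrefined namu and by the direction of travel, so inserting further subdividing nodes admits a unique compatible assignment; the analogous statement for $\lambda$ is immediate. Once this ``direction-of-travel'' rule is made explicit, transitivity of $\tle$ reduces cleanly to transitivity of $\le$.
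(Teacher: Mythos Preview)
Your argument is correct. The paper itself gives no proof at all: it simply states that the lemma is ``trivial by definitions'' and moves on. Your common-refinement argument is the natural way to unpack that claim, and the key point you identify---that in a subdivision $B$-namu the $\alpha$- and $\lambda$-values on the new edges are uniquely determined by the original edge, so any two subdivisions of $\Gamma_2$ admit a common refinement as $B$-namus---is exactly what makes the definition work. There is nothing to correct.
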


\begin{LEM}
If $\Gamma_1\tle\Gamma_2$,
then the width of $\Gamma_1$ is at most the width of $\Gamma_2$.
\end{LEM}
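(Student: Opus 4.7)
The plan is to unwind the definition of $\tle$ and observe that both subdivision and isomorphism preserve width, while $\le$ obviously cannot increase it. First, I would note that if $\Gamma'=(T',\alpha',\lambda',U)$ is a subdivision of $\Gamma=(T,\alpha,\lambda,U)$, then the width is preserved: by definition of a subdivision, for every edge $e$ of $T$ and every edge $e'$ on the subdivision path corresponding to $e$ we have $\lambda'(e')=\lambda(e)$, so the maximum of $\lambda'$ over $E(T')$ equals the maximum of $\lambda$ over $E(T)$. (If $T$ has no edges, then $T'$ has no edges either, and both widths are $0$ by convention.)

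Next, I would observe that isomorphic $B$-namus have the same width: an isomorphism $\phi$ from $\Gamma_1$ to $\Gamma_2$ satisfies $\lambda_1(e)=\lambda_2(\phi(e))$ for every edge $e$, so $\phi$ induces a bijection between $E(T_1)$ and $E(T_2)$ preserving $\lambda$-values, hence preserving the maximum. Finally, if $\Gamma_1'\le \Gamma_2'$ in the sense of the first relation, then $T_1'=T_2'$ and $\lambda_1'(e)\le \lambda_2'(e)$ for every edge, so taking maxima yields that the width of $\Gamma_1'$ is at most the width of $\Gamma_2'$.

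Chaining these three observations together gives the result: by definition of $\tle$, there exist $\Gamma_1'$ isomorphic to a subdivision of $\Gamma_1$ and $\Gamma_2'$ isomorphic to a subdivision of $\Gamma_2$ with $\Gamma_1'\le \Gamma_2'$; therefore
\[
\text{width}(\Gamma_1)=\text{width}(\Gamma_1')\le \text{width}(\Gamma_2')=\text{width}(\Gamma_2).
\]
There is no real obstacle here — the statement is essentially a bookkeeping check that the two operations used in the definition of $\tle$ (passing to a subdivision and passing to an isomorphic copy) are width-invariant, which is immediate from the definitions.
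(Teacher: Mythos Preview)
Your proof is correct and is precisely the unwinding that the paper has in mind: the paper simply states that this lemma is ``trivial by definitions'' and gives no further argument. You have spelled out exactly those trivialities --- width-invariance under subdivision and isomorphism, and monotonicity under $\le$ --- so there is nothing to add.
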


The following lemma is analogous to Lemma~3.9 of~\cite{BK1996}.
\begin{LEM}\label{lem:tautrim}
If $\Gamma$ is a $B$-namu, 
then $\tau(\Gamma)\tle\trim(\Gamma)$ and $\trim(\Gamma)\tle\tau(\Gamma)$. 
\end{LEM}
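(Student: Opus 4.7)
The plan is to reduce both directions to a single-compression statement and then chain by transitivity. Since $\tau(\Gamma)$ is obtained from $\trim(\Gamma)$ by finitely many compressing operations by definition, and since Lemma~\ref{lem:transitive} ensures that $\tle$ is transitive, it suffices to prove the following claim: if a $B$-namu $\Gamma'$ is obtained from a $B$-namu $\Gamma$ by one compressing operation, then $\Gamma\tle \Gamma'$ and $\Gamma'\tle \Gamma$.

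For compressing by the two-edge path $v_0,e_1,v_1,e_2,v_2$ in rule~(i), I would take the subdivision of $\Gamma'$ that reintroduces the smoothed node $v_1$ on the contracted edge. The $\alpha$-matching conditions $\alpha(v_0,e_1)=\alpha(v_1,e_2)$ and $\alpha(v_1,e_1)=\alpha(v_2,e_2)$ force the two incidences inherited at the new subdivision node to coincide with those at $v_1$ in $\Gamma$, and $\lambda(e_1)=\lambda(e_2)$ makes the two subdivided edge-labels equal to both original edge-labels; the subdivided $\Gamma'$ is thus literally isomorphic to $\Gamma$ as a $B$-namu, yielding both $\tle$ directions.

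The substantive case is rule~(ii), where $\Gamma$ contains a path $v_0,e_1,v_1,\ldots,e_n,v_n$ with $n\ge 3$, degree-$2$ interior nodes, matching $\alpha$-values forcing $\alpha$ to be constant along each side of the path, and $\lambda(e_1)\le\lambda(e_j)\le\lambda(e_n)$ for all interior $j$, while $\Gamma'$ collapses the path to $v_0,e_1,w,e_n,v_n$. To prove $\Gamma\tle\Gamma'$, I would set $\Gamma_1':=\Gamma$ and let $\Gamma_2'$ be the subdivision of $\Gamma'$ obtained by subdividing $e_n$ into $n-1$ edges so that $T(\Gamma_2')$ is isomorphic to $T(\Gamma)$. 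The constancy of $\alpha$ along the contracted path guarantees that the $\alpha$-value inherited at each newly introduced subdivision node agrees with the corresponding $\alpha$-value in $\Gamma$, and each subdivided copy of $e_n$ carries the label $\lambda(e_n)$, which by hypothesis dominates $\lambda(e_j)$ for $2\le j\le n-1$; hence $\Gamma_1'\le\Gamma_2'$. For the reverse direction $\Gamma'\tle\Gamma$, I would instead subdivide $e_1$ in $\Gamma'$ and invoke $\lambda(e_1)\le\lambda(e_j)$.

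The main obstacle is the bookkeeping for the $\alpha$-labels: for each subdivided edge in the above construction one must identify which of the two ends of the original $e_n$ (respectively $e_1$) lies on which side, and then verify that the inherited $\alpha$-value matches the $\alpha$-value on the corresponding edge of $\Gamma$. Condition~(ii) of the compressing rule is precisely what makes the matching go through: it forces only two distinct $\alpha$-values to appear along the contracted path, one toward $v_0$ and one toward $v_n$, so that both the subdivided $\Gamma'$ and the original $\Gamma$ display the same pair of labels along every edge of the collapsed path, leaving the $\lambda$-inequality as the only genuine obligation.
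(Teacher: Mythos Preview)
Your argument is correct. The paper does not supply its own proof of this lemma; it merely notes that the statement is analogous to Lemma~3.9 of Bodlaender--Kloks~\cite{BK1996} on typical sequences and leaves the verification implicit. Your reduction to a single compression step via transitivity (Lemma~\ref{lem:transitive}), together with the explicit subdivisions of $e_n$ (for $\Gamma\tle\Gamma'$) and of $e_1$ (for $\Gamma'\tle\Gamma$) in rule~(ii), is exactly the natural adaptation of the typical-sequence argument to $B$-namus, and the bookkeeping you describe for the $\alpha$-labels is precisely what condition~(ii) of the compressing rule guarantees: only two $\alpha$-values occur along the collapsed path, so any subdivision of either retained edge reproduces them.
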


\begin{LEM}\label{lem:comparetrim}
If $\Gamma_1\tle\Gamma_2$, then $\trim(\Gamma_1)\tle\trim(\Gamma_2)$.
\end{LEM}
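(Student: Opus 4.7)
By the definition of $\tle$, there exist $B$-namus $\Gamma_1'$ and $\Gamma_2'$, each isomorphic to a subdivision of the corresponding $\Gamma_i$, such that $\Gamma_1'\le\Gamma_2'$. The plan is to establish two intermediate facts and then combine them via the definition of $\tle$: (a) trimming is monotone under $\le$, i.e.\ $\trim(\Gamma_1')\le\trim(\Gamma_2')$; and (b) for every subdivision $\Gamma'$ of a $B$-namu $\Gamma$, the trim $\trim(\Gamma')$ is isomorphic to a subdivision of $\trim(\Gamma)$. Once both are proved, the pair $(\trim(\Gamma_1'),\trim(\Gamma_2'))$ directly witnesses $\trim(\Gamma_1)\tle\trim(\Gamma_2)$.

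Fact (a) is immediate: degenerate edges, guarding edges, blocking paths, and blocked nodes are all determined by $T$, $\alpha$, and $U$, and the relation $\Gamma_1'\le\Gamma_2'$ forces these three data to coincide. Hence the two trims share the same tree, $\alpha$, and $U$, and the pointwise inequality between the $\lambda$-values persists on every unblocked edge.

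For fact (b), the degenerate case is easy: a degenerate edge $e=uv$ of $\Gamma$ subdivides into edges whose two $\alpha$-values are $\alpha(u,e)$ and $\alpha(v,e)$, equal by hypothesis, so $\Gamma'$ also has a degenerate edge and both $\trim(\Gamma)$ and $\trim(\Gamma')$ collapse to the single-node namu carrying $U$. Otherwise I walk along each subdivision path $P_e=uw_1\cdots w_kv$. If $e$ is non-guarding, every new two-edge blocking path internal to $P_e$ has a degree-two subdivision node as its center, and removing its two endpoints isolates that center, so no node of $P_e$ is blocked by these internal paths. A careful examination of external guarding edges and blocking paths then shows that $P_e$ survives in $\trim(\Gamma')$ precisely when $e$ survives in $\trim(\Gamma)$, in which case $P_e$ realizes a genuine subdivision of $e$. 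If $e$ guards $v$, every edge of $P_e$ in turn guards its end closer to $v$; already the edge $uw_1$ blocks $w_2,\ldots,w_k$, the node $v$, and the entire component of $T-e$ on the $v$-side, so only $w_1$ can survive. The incidence $(w_1,uw_1)$ carries $\alpha'(w_1,uw_1)=\alpha(v,e)$, matching the incidence $(v,uv)$ of the guarded leaf $v$ in $\trim(\Gamma)$, and the relabeling $v\leftrightarrow w_1$ is an isomorphism of $B$-namus. Assembling these local identifications across every edge of $T$ exhibits $\trim(\Gamma')$ as isomorphic to a subdivision of $\trim(\Gamma)$.

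The main obstacle lies in the case analysis underlying fact (b): one must verify that blocking paths at branch nodes of $\Gamma'$ are in bijection with blocking paths of $\Gamma$ under identical $\alpha$-conditions, that no subdivision node can serve as the center of a blocking path with any blocked nodes (a consequence of its degree-two position in $T'$), and that the various configurations in which $u$ or $v$ itself is a center or endpoint of some external blocking path are reconciled so that the degree-two completion of $v$ (or $u$) inside $\trim$ matches on both sides.
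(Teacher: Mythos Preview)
Your approach is correct and matches the paper's proof: both establish (a) that $\trim$ is monotone under $\le$ (immediate since trimming depends only on $T$, $\alpha$, $U$) and (b) that the trim of a subdivision is isomorphic to a subdivision of the trim, then combine these with the definition of $\tle$. The paper compresses the argument for (b) into a single contrapositive observation---if an edge of $\Gamma$ is removed by trimming, then every edge of its subdivision path is removed as well---whereas you unroll this into an explicit case analysis on degenerate, guarding, and non-guarding edges; your treatment of the guarding case (where the subdivided path collapses back to a single edge rather than remaining a longer path) is a detail the paper's terse proof leaves implicit.
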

\begin{proof}
It is enough to prove that 
if $\Gamma_1^*=(T_1^*,\alpha_1^*,\lambda_1^*,U)$ and $\Gamma_2^*=(T_2^*,\alpha_2^*,\lambda_2^*,U)$ 
are subdivisions of $\Gamma_1$ and $\Gamma_2$, respectively, 
such that $\Gamma_1^*\le\Gamma_2^*$, then $\trim(\Gamma_1)\tle \trim(\Gamma_2)$.
It is easy to see that $\trim(\Gamma_1^*)\le\trim(\Gamma_2^*)$ because $T_1^*=T_2^*$ and $\alpha_1^*=\alpha_2^*$.

We will show that $\trim(\Gamma_1^*)$ is a subdivision of $\trim(\Gamma_1)$.
Let $\eta^*=T_1^*$ be a $T(\Gamma_1)$-model in $T_1^*$.
Let $\trim(\Gamma_1^*)=(T_1',\alpha_1',\lambda_1',U)$ and $\trim(\Gamma_1)=(T_1,\alpha_1,\lambda_1,U)$.
Let $\eta$~be a function 
defined on $E(T_1')\cup\{\emptyset\}$ %
by $\eta(e)=\eta^*(e)$.
It is enough to show that $\eta$ is a $T_1$-model in $T_1'$.
In other words, if $e$ is an edge of $T_1'$, then $\eta^*(e)$ is an edge of~$T_1$.
Suppose not. 
If $\eta^*(e)$ is not an edge of $T_1$, then it is removed by trimming~$\Gamma_1$.
Since $\Gamma_1^*$ is a subdivision of $\Gamma_1$, the edge $e$, as an edge of $T_1^*$, is also 
removed by trimming~$\Gamma_1^*$.
This contradicts the fact that 
$e$ is an edge of $T_1'$, which is the tree in $\trim(\Gamma_1^*)$.
Therefore, $\trim(\Gamma_1^*)$ is a subdivision of $\trim(\Gamma_1)$ 
and similarly $\trim(\Gamma_2^*)$ is a subdivision of $\trim(\Gamma_2)$.
So $\trim(\Gamma_1)\tle\trim(\Gamma_2)$ because $\trim(\Gamma_1^*)\le\trim(\Gamma_2^*)$.
\end{proof}

  Let us discuss the time complexity 
  when we check whether $\Delta\tle\Gamma$ for given two $B$-namus $\Delta$ and $\Gamma$.
  It can be checked 
  in time $f(\abs{V(T(\Delta))}, \abs{V(T(\Gamma))}, \dim(B), \abs{\F})$ 
  for some function $f$.
  The following lemma gives the upper bound of the size of a subdivision of $\Gamma$.

\begin{LEM}\label{lem:comptime}
  Let $\Gamma_1$ and $\Gamma_2$ be two $B$-namus.
  If $\Gamma_1\tle\Gamma_2$, 
  then there exist a $B$-namu $\Gamma_1^*$ isomorphic to a subdivision of $\Gamma_1$
  and a $B$-namu $\Gamma_2^*$ isomorphic to a subdivision of $\Gamma_2$
  satisfying that $\Gamma_1^*\le\Gamma_2^*$ and 
  \[
  \abs{V(T(\Gamma_1^*))}\le \abs{V(T(\Gamma_1))} + \abs{V(T(\Gamma_2))} + 2. 
  \]
\end{LEM}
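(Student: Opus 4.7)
The plan is to unpack the definition of $\tle$ and then aggressively shrink the common subdivision. From $\Gamma_1\tle\Gamma_2$ we obtain $B$-namus $\hat\Gamma_1=(T^*,\hat\alpha_1,\hat\lambda_1,U)$ and $\hat\Gamma_2=(T^*,\hat\alpha_2,\hat\lambda_2,U)$, isomorphic to subdivisions of $\Gamma_1$ and $\Gamma_2$ respectively, sharing a common underlying tree $T^*$ and satisfying $\hat\Gamma_1\le\hat\Gamma_2$. Writing $T_j:=T(\Gamma_j)$ for $j=1,2$, the tree $T^*$ is simultaneously (isomorphic to) a subdivision of $T_1$ and of $T_2$, which yields a $T_1$-model $\eta_1$ and a $T_2$-model $\eta_2$ in $T^*$, each simply witnessing the corresponding subdivision. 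A priori $T^*$ can be much larger than $T_1$ or $T_2$, so the task is to produce smaller common subdivisions while preserving $\le$.

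The key observation is that nodes which are subdividing in both $\eta_1$ and $\eta_2$ carry no new information. Let $S\subseteq V(T^*)$ be the set of nodes that are branch nodes of $\eta_1$ or of $\eta_2$. Since the branch nodes of $\eta_j$ biject with $V(T_j)$, we have $\abs{S}\le\abs{V(T_1)}+\abs{V(T_2)}$. Any node in $V(T^*)\setminus S$ has degree $2$ in $T^*$ and lies on the interior of both an $\eta_1$-subdivision path (sitting over a single edge of $T_1$) and an $\eta_2$-subdivision path (over a single edge of $T_2$). Hence along any maximal path $P=u_0e_1u_1\cdots e_su_s$ of $T^*$ with $u_0,u_s\in S$ and no interior node of $P$ in $S$, each $\eta_j(e_i)$ is a single edge $f_j$ of $T_j$ independent of $i$. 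By the definition of a subdivision of a $B$-namu, this forces $\hat\alpha_j(u_{i-1},e_i)$, $\hat\alpha_j(u_i,e_i)$, and $\hat\lambda_j(e_i)=\lambda_j(f_j)$ to be constant along the interior of $P$ for both $j=1,2$.

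I will then form $T'$ from $T^*$ by contracting each such maximal path $P$ down to a single edge joining $u_0$ and $u_s$, and define $\Gamma_j'$ from $\hat\Gamma_j$ by restricting $\hat\alpha_j$ and $\hat\lambda_j$ to the surviving incidences and edges (keeping $U$). The constancy observation from the previous paragraph guarantees that $\Gamma_j'$ is still isomorphic to a subdivision of $\Gamma_j$ (with the same underlying edge-map to $T_j$, but with fewer subdividing nodes) and that $\Gamma_1'\le\Gamma_2'$ is inherited directly from $\hat\Gamma_1\le\hat\Gamma_2$. By construction $\abs{V(T')}=\abs{S}\le\abs{V(T_1)}+\abs{V(T_2)}\le\abs{V(T(\Gamma_1))}+\abs{V(T(\Gamma_2))}+2$, which is the required bound.

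The step that deserves the most care is the verification that the contracted $\Gamma_j'$ actually meets the formal definition of a subdivision of $\Gamma_j$: one must check that every surviving incidence of $T'$ still corresponds to the correct incidence of $T_j$ and carries the $\alpha_j$-value prescribed there, and similarly that each surviving edge carries the $\lambda_j$-value of its corresponding edge of $T_j$. Both points are immediate from the constancy statement, so the remainder is pure bookkeeping on tree contractions. I do not foresee a substantive obstacle beyond this; indeed, the estimate $\abs{V(T_1)}+\abs{V(T_2)}$ delivered by the argument is strictly sharper than the bound claimed in the lemma, giving some slack.
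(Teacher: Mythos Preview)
Your proposal is correct and follows essentially the same approach as the paper: both arguments eliminate degree-$2$ nodes of the common subdivision that are subdividing nodes in both $\eta_1$ and $\eta_2$, and then count what remains. The paper phrases this via a minimality assumption on $\abs{E(T)}$ and then counts leaves, degree-$3$ nodes, and degree-$2$ nodes separately, whereas you perform the contraction explicitly and count branch nodes of $\eta_1$ and $\eta_2$ directly; your count $\abs{S}\le\abs{V(T_1)}+\abs{V(T_2)}$ is in fact slightly sharper than the paper's $\abs{V(T_1)}+\abs{V(T_2)}+2$ (the paper's first claim, about edges with $\eta_1(e)=\eta_2(e)=\emptyset$, is vacuous here since $T^*$ is a genuine subdivision of each $T_j$, which accounts for the looseness).
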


\begin{proof}
    We may assume that both $T(\Gamma_1)$ and $T(\Gamma_2)$ have at least one edge.
    We may assume that there exist
	a subdivision $\Gamma_1^*=(T,\alpha_1,\lambda_1,U_1)$ of $\Gamma_1$ and  
	a subdivision $\Gamma_2^*=(T,\alpha_2,\lambda_2,U_2)$ of $\Gamma_2$ 
	such that $\Gamma_1^*\le\Gamma_2^*$.
	Assume that $\abs{E(T)}$ is chosen to be minimum.
	Let $T_1=T(\Gamma_1)$ and $T_2=T(\Gamma_2)$.  
	Let $\eta_1$ be a $T_1$-model in~$T$ extending $\Gamma_1$ to $\Gamma_1^*$
	and let $\eta_2$ be a $T_2$-model in~$T$ extending $\Gamma_2$ to $\Gamma_2^*$.
	
	We first claim that there exists no edge $e$ of~$T$ 
	such that $\eta_1(e)=\eta_2(e)=\emptyset$ and $e$~is incident with some leaf $\ell$ of~$T$. 
	Suppose not.
	Let $\Gamma_1'$ be a $B$-namu $(T-\ell,\alpha_1',\lambda_1',U_1)$ 
	where $\alpha_1'(v,f)=\alpha_1(v,f)$ for all incidences $(v,f)$ of $T-\ell$ 
	and $\lambda_1'(f)=\lambda_1(f)$ for all edges $f$ of $T-\ell$.
	Similarly, we define $\Gamma_2'$.
	Then $\Gamma_1'$ is a subdivision of $\Gamma_1$ and 
	$\Gamma_2'$ is a subdivisions of $\Gamma_2$
	satisfying that $\Gamma_1'\le\Gamma_2'$.
	It is a contradiction because $\abs{E(T)}$ is chosen to be minimum.
	Thus, we may assume that there is no such edge in~$T$.
	
	We also claim that there exist no two edges $e_1$, $e_2$ of~$T$ such that 
	$\eta_1(e_1)=\eta_1(e_2)$, $\eta_2(e_1)=\eta_2(e_2)$, and 
	$e_1,e_2$ share a degree-$2$ node.
	Suppose not.
	Let $\Gamma_1''$ be a $B$-namu $(T/e_2,\alpha_1'',\lambda_1'',U_1)$
	where $\alpha_1''(v,f)=\alpha_1(v,f)$ for all incidences $(v,f)$ of $T/e_2$ 
	and $\lambda_1''(f)=\lambda_1(f)$ for all edges $f$ of $T/e_2$.
	Similarly, we define $\Gamma_2''$.
	Then $\Gamma_1''$ is a subdivision of $\Gamma_1$ and 
	$\Gamma_2''$ is a subdivision of $\Gamma_2$ 
	satisfying that $\Gamma_1''\le\Gamma_2''$.
	It is a contradiction because $\abs{E(T)}$ is chosen to be minimum.
	Thus, we may assume that there are no such two edges in~$T$.
	
	By the first claim, 
	we deduce that $\abs{A(T)}\le\abs{A(T_1)}+\abs{A(T_2)}$.
	Note that for a subcubic tree $T$, the number of degree-$3$ nodes in~$T$
	is $\abs{A(T)}-2$.
	By the second claim, 
	every degree-$2$ node of~$T$ 
	is a branch node in $\eta_1$ or a branch node in $\eta_2$.
	Since the number of degree-$2$ nodes in $T_i$ is 
	$\abs{V(T_i)}-\abs{A(T_i)}-(\abs{A(T_i)}-2)$ for $i=1,2$,
	the number of degree-$2$ nodes in~$T$ is at most 
	\[
	\abs{V(T_1)}+\abs{V(T_2)}-2\abs{A(T_1)}-2\abs{A(T_2)}+4.
	\]
	Thus, the number of nodes in~$T$ is at most 
	\begin{multline*}
	(\abs{A(T_1)}+\abs{A(T_2)})      +(\abs{V(T_1)}+\abs{V(T_2)}-2\abs{A(T_1)}-2\abs{A(T_2)}+4) \\
      + (\abs{A(T_1)}+\abs{A(T_2)}-2) 
	= \abs{V(T_1)}+\abs{V(T_2)}+2.   \qedhere
	\end{multline*}
\end{proof}

For two $B$-namus $\Gamma_1$ and $\Gamma_2$ with $\Gamma_1\tle\Gamma_2$, 
we say that the tree $T(\Gamma_1^*)$ in Lemma~\ref{lem:comptime} \emph{ensures} $\Gamma_1\tle\Gamma_2$.
  Let $\F$ be a fixed finite field and $B$ be a subspace of $\F^r$.
  For given two $B$-namus $\Delta$ and $\Gamma$,
  in order to check whether $\Delta\tle\Gamma$, we need to find 
  a $B$-namu $\Delta^*$ isomorphic to a subdivision of $\Delta$ and 
  a $B$-namu $\Gamma^*$ isomorphic to a subdivision of $\Gamma$
  such that $\Delta^*\le \Gamma^*$.
  By Lemma~\ref{lem:comptime},
  it is enough to consider finitely many subdivisions of $\Delta$ and $\Gamma$.
  Thus, the number of comparison operations needed to check whether $\Delta\tle\Gamma$ is bounded 
  by some function of $\abs{V(T(\Delta))}$, $\abs{V(T(\Gamma))}$, $\dim(B)$, and $\abs{\F}$.

\begin{LEM}\label{lem:tletime}
Let $\F$ be a finite field, let $r$ be a positive integer, and let 
$B$ be a subspace of $\F^r$. %
For two $B$-namus $\Delta$ and $\Gamma$,
we can decide whether $\Delta\tle\Gamma$ 
by executing at most  $f(\abs{V(T(\Delta))}, \abs{V(T(\Gamma))}, \dim(B), \abs{\F})$
comparison operations (on integers and on subspaces of $B$) for some function $f$.
\end{LEM}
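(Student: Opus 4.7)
The plan is to reduce the decision to a bounded brute-force search, exploiting Lemma~\ref{lem:comptime}. By the definition of $\tle$, deciding $\Delta\tle\Gamma$ amounts to exhibiting a $B$-namu $\Delta^*$ isomorphic to a subdivision of $\Delta$ and a $B$-namu $\Gamma^*$ isomorphic to a subdivision of $\Gamma$ with $\Delta^*\le\Gamma^*$. Lemma~\ref{lem:comptime} tells us that if such a witness pair exists at all, then one can be found whose common underlying tree $T$ has at most $N:=\abs{V(T(\Delta))}+\abs{V(T(\Gamma))}+2$ nodes. So the search space is finite and bounded in terms of the first two parameters.

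Concretely, I would proceed as follows. First, enumerate every subcubic tree $T$ on at most $N$ labeled nodes; the number of such trees is at most $N^{N-2}$, a function of $\abs{V(T(\Delta))}$ and $\abs{V(T(\Gamma))}$ alone. Second, for each such $T$, enumerate every pair $(\eta_\Delta,\eta_\Gamma)$ where $\eta_\Delta$ is a $T(\Delta)$-model in $T$ and $\eta_\Gamma$ is a $T(\Gamma)$-model in $T$; the count of such pairs is again bounded by a function of the tree sizes only. Third, given a pair $(\eta_\Delta,\eta_\Gamma)$, form the unique extensions $\Delta^*=(T,\alpha_\Delta,\lambda_\Delta,U_\Delta)$ of $\Delta$ through $\eta_\Delta$ and $\Gamma^*=(T,\alpha_\Gamma,\lambda_\Gamma,U_\Gamma)$ of $\Gamma$ through $\eta_\Gamma$, as prescribed by the definition of extension. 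Fourth, verify the three defining conditions of $\Delta^*\le\Gamma^*$: namely, that $U_\Delta=U_\Gamma$ (one subspace-of-$B$ comparison), that $\alpha_\Delta(v,e)=\alpha_\Gamma(v,e)$ for every incidence $(v,e)$ of $T$ (a bounded number of subspace-of-$B$ comparisons), and that $\lambda_\Delta(e)\le\lambda_\Gamma(e)$ for every edge $e$ (a bounded number of integer comparisons). Output $\Delta\tle\Gamma$ if some candidate passes every test, and $\Delta\not\tle\Gamma$ if no candidate does.

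Correctness is then immediate: the ``yes'' direction holds by construction, while the ``no'' direction is exactly the content of Lemma~\ref{lem:comptime}, which forbids larger witnesses from being essential. The overall count of atomic comparison operations is bounded by (number of labeled trees $T$) $\times$ (number of model pairs per $T$) $\times$ (comparisons per candidate), each factor being a function of $\abs{V(T(\Delta))}$ and $\abs{V(T(\Gamma))}$ alone; the parameters $\dim(B)$ and $\abs{\F}$ can be absorbed into $f$ harmlessly. The only real obstacle is bookkeeping, chiefly ensuring that the enumeration of $(\eta_\Delta,\eta_\Gamma)$ covers every pair that could possibly arise as models in some subdivision witness; since we range over all labeled trees and all labeled model pairs in them, no witness is missed, so the argument goes through.
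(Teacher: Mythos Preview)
Your approach is correct and matches the paper's own argument, which is the short discussion immediately preceding the lemma: use Lemma~\ref{lem:comptime} to bound the size of a common subdivision, then brute-force over all candidates. One terminological slip worth tightening: an \emph{extension} through a model $\eta$ is not uniquely determined on edges with $\eta(e)=\emptyset$, so ``the unique extensions'' is only well-defined when $\eta_\Delta$ and $\eta_\Gamma$ are onto $T$, i.e., when $T$ is genuinely a subdivision of each of $T(\Delta)$ and $T(\Gamma)$; you should either restrict your enumeration to onto models or phrase step three as constructing the subdivision $B$-namus directly, but this does not affect the substance of the argument.
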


\begin{PROP}\label{prop:subdivcomposition}
Let $\Gamma_1,\Gamma_2$,$\Gamma'_1$ and $\Gamma'_2$ be $B$-namus. If $\Gamma'_1\tle \Gamma_1$ and $\Gamma'_2\tle \Gamma_2$, then for every $\Gamma\in \Gamma_1 \tplus \Gamma_2$, there exists $\Gamma'\in \Gamma'_1\tplus \Gamma'_2$ such that $\Gamma'\tle \Gamma$.
\end{PROP}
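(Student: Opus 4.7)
My plan is to reduce the proposition to a concrete construction of a ``refined'' co-extending model and then invoke the monotonicity of the sum formula from Lemma~\ref{lem:sum}.

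First, I would unpack the hypotheses. Write $\Gamma=\Gamma_1+_{(\eta_1,\eta_2)}\Gamma_2$ where $(\eta_1,\eta_2)$ is a $(T_1,T_2)$-model in $T:=T(\Gamma)$, with $T_i:=T(\Gamma_i)$. Using $\Gamma_i'\tle \Gamma_i$ and the definition of $\tle$, I may choose, for each $i\in\{1,2\}$, $B$-namus $\Gamma_i^{\sharp}$ and $\Gamma_i^{\flat}$ sharing a common tree $T_i^*$, with $\Gamma_i^{\sharp}$ isomorphic to a subdivision of $\Gamma_i'$ and $\Gamma_i^{\flat}$ isomorphic to a subdivision of $\Gamma_i$, and $\Gamma_i^{\sharp}\le \Gamma_i^{\flat}$. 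Note that $T_i^*$ is simultaneously a subdivision of $T_i'=T(\Gamma_i')$ and of $T_i$, and that $\Gamma_i^{\sharp}$ and $\Gamma_i^{\flat}$ share the same $\alpha$ function and the same $U_i$, differing only in $\lambda$, with $\lambda$ of $\Gamma_i^{\sharp}$ pointwise at most $\lambda$ of $\Gamma_i^{\flat}$.

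Next I would refine the co-extending model. For each edge $f\in E(T_i)$, let $k_i(f)$ be the number of edges of $T_i^*$ that subdivide $f$, and let $\eta_i^{-1}(f)$ be the path in $T$ that is the preimage of $f$. I subdivide $T$ into $\tilde T$ by introducing enough extra nodes on each edge so that, for every $f\in E(T_i)$, the corresponding path in $\tilde T$ can be partitioned into $k_i(f)$ consecutive nonempty subpaths; new degree-$2$ nodes introduced on an edge $e$ with $\eta_i(e)=f$ are assigned to be branch nodes of the refined $\tilde\eta_i$ where they correspond to internal branch nodes of the subdivision $T_i^*\to T_i$ of $f$, and are subdividing nodes otherwise. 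I then define $\tilde\eta_i\colon E(\tilde T)\to E(T_i^*)\cup\{\emptyset\}$ so that each edge of $\tilde T$ inside such a subpath is mapped to the corresponding edge of $T_i^*$, and $\tilde\eta_i(e)=\emptyset$ whenever $e$ subdivides an edge $e'$ of $T$ with $\eta_i(e')=\emptyset$. A direct check shows that $(\tilde\eta_1,\tilde\eta_2)$ is a $(T_1^*,T_2^*)$-model in $\tilde T$: property (ii) of the model definition is inherited from $(\eta_1,\eta_2)$, and properties (i) and (iii) follow from the way the subdivision nodes were distributed between $\tilde\eta_1$ and $\tilde\eta_2$.

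Now set
\[
\Gamma'\;:=\;\Gamma_1^{\sharp}+_{(\tilde\eta_1,\tilde\eta_2)}\Gamma_2^{\sharp},\qquad \Gamma^{\flat}\;:=\;\Gamma_1^{\flat}+_{(\tilde\eta_1,\tilde\eta_2)}\Gamma_2^{\flat}.
\]
These are well defined by Lemma~\ref{lem:sum}, are built on the same tree $\tilde T$, and share the same $\alpha$ and $U$ since $\Gamma_i^{\sharp}$ and $\Gamma_i^{\flat}$ do. Inspecting the formula for $\lambda^+$ in Lemma~\ref{lem:sum}, only the $\lambda_i$ summands vary while the intersection and $U_1\cap U_2$ terms are identical, so $\Gamma'\le\Gamma^{\flat}$. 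Composing $\tilde\eta_i$ with the subdivision map $T_i^*\to T_i'$ yields a $(T_1',T_2')$-model in $\tilde T$ whose sum recovers $\Gamma'$ (all $\lambda$ and $\alpha$ values on $T_i^*$ agree with those on $T_i'$ under the subdivision identification), so $\Gamma'\in\Gamma_1'\tplus\Gamma_2'$. An entirely analogous computation, using the subdivision map $T_i^*\to T_i$, shows that $\Gamma^{\flat}$ is a subdivision of $\Gamma$. Combining $\Gamma'\le\Gamma^{\flat}$ with the fact that $\Gamma'$ is a subdivision of itself and $\Gamma^{\flat}$ is a subdivision of $\Gamma$, we conclude $\Gamma'\tle\Gamma$.

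The main technical obstacle is the refinement step: one must construct $\tilde T$ and $(\tilde\eta_1,\tilde\eta_2)$ carefully enough that (a) the three conditions defining a $(T_1^*,T_2^*)$-model all hold, (b) the refined paths $\tilde\eta_i^{-1}(f^*)$ are nonempty for every edge $f^*$ of $T_i^*$, and (c) the sum formula commutes with the subdivision maps on both sides, yielding both $\Gamma'\in\Gamma_1'\tplus\Gamma_2'$ and ``$\Gamma^{\flat}$ is a subdivision of $\Gamma$''. The rest of the argument is routine bookkeeping once this refinement is in place.
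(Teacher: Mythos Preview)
Your overall strategy—refining $T$ to a common subdivision $\tilde T$ and using monotonicity of the sum formula—is close in spirit to the paper's proof, but there is a genuine gap in the step where you claim $\Gamma'\in\Gamma_1'\tplus\Gamma_2'$. The assertion that composing $\tilde\eta_i$ with the subdivision map $T_i^*\to T_i'$ yields a $(T_1',T_2')$-model in $\tilde T$ is false in general. Condition~(iii) of a model requires every node of degree at most $2$ in $\tilde T$ to be a branch node in exactly one of the two models. A new subdivision node $w$ that you assigned to be a $\tilde\eta_1$-branch node maps to an internal subdividing node of $T_1^*\to T_1$; this node need not lie in $V(T_1')$, so after composing with $T_1^*\to T_1'$ the node $w$ becomes subdividing in $\eta_1'$. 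Since $w$ was already subdividing in $\tilde\eta_2$ (hence in $\eta_2'$), condition~(iii) fails. In fact Lemma~\ref{lem:sumsize} shows this directly: every element of $\Gamma_1'\tplus\Gamma_2'$ lives on a tree with exactly $|V(T_1')|+|V(T_2')|+2$ nodes, whereas your $\tilde T$ has $|V(T_1^*)|+|V(T_2^*)|+2$ nodes, strictly larger whenever either $T_i^*$ properly subdivides $T_i'$. So $\Gamma'\notin\Gamma_1'\tplus\Gamma_2'$ on the nose. (By contrast, your claim that $\Gamma^\flat$ is a subdivision of $\Gamma$ is correct, because $\tilde T$ was built as a subdivision of $T$ compatible with the $T_i^*\to T_i$ maps.)

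The repair is precisely Lemma~\ref{lem:joinext}: your $\Gamma'=\Gamma_1^\sharp+_{(\tilde\eta_1,\tilde\eta_2)}\Gamma_2^\sharp$ lies in $\Gamma_1^\sharp\tplus\Gamma_2^\sharp$, and since each $\Gamma_i^\sharp$ is a subdivision of $\Gamma_i'$, two applications of Lemma~\ref{lem:joinext} give some $\hat\Gamma\in\Gamma_1'\tplus\Gamma_2'$ of which $\Gamma'$ is a subdivision; then $\hat\Gamma\tle\Gamma'\le\Gamma^\flat\tle\Gamma$. The paper sidesteps the simultaneous refinement entirely: it first reduces by symmetry and transitivity of $\tle$ to the case $\Gamma_2'=\Gamma_2$, then chains Lemma~\ref{lem:naturalext} (produce a sum of $\Gamma_1^*$ and $\Gamma_2$ that is a subdivision of $\Gamma$), Lemma~\ref{lem:latticejoin} (pass from $\Gamma_1^*$ down to $\Gamma_1''$), and Lemma~\ref{lem:joinext} (contract the subdivision to land in $\Gamma_1'\tplus\Gamma_2$). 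Your construction is essentially these three lemmas fused, but the fusion does not let you bypass Lemma~\ref{lem:joinext}.
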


To prove this proposition, we use the following lemmas.
\begin{LEM}\label{lem:latticejoin}
  Let $\Gamma_1,\Gamma_1',\Gamma_2$ be $B$-namus such that $\Gamma_1'\le \Gamma_1$. 
  If $\Gamma_1+_{(\eta_1,\eta_2)} \Gamma_2$ is the sum of $\Gamma_1$ and $\Gamma_2$ by $(\eta_1,\eta_2)$,
  then $\Gamma_1'+_{(\eta_1,\eta_2)} \Gamma_2$ is well defined and 
  $\Gamma'_1+_{(\eta_1,\eta_2)} \Gamma_2 \le \Gamma_1+_{(\eta_1,\eta_2)}  \Gamma_2$.
\end{LEM}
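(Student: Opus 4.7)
The plan is to unfold the definition of $\Gamma_1' \le \Gamma_1$ and read off the inequality term by term from the formula defining a sum. First, I would observe that $\Gamma_1' \le \Gamma_1$ means precisely that $T(\Gamma_1') = T(\Gamma_1) =: T_1$, that the subspace functions $\alpha_1' = \alpha_1$ agree on every incidence, that $U_1' = U_1$, and that only the integer-valued function differs, with $\lambda_1'(f) \le \lambda_1(f)$ for every edge $f$ of $T_1$. Since the tree $T_1$ is unchanged, the same $T_1$-model $\eta_1$ in $T^+ := T(\Gamma_1 +_{(\eta_1,\eta_2)} \Gamma_2)$ still works, so $(\eta_1,\eta_2)$ is a $(T(\Gamma_1'),T(\Gamma_2))$-model in $T^+$. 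By Lemma~\ref{lem:sum}, $\Gamma_1' +_{(\eta_1,\eta_2)} \Gamma_2$ is therefore a well-defined $B$-namu; note that condition~(iv) of being a $B$-namu is built into the conclusion of Lemma~\ref{lem:sum} and does not need to be reverified here.

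Next, I would compare the two sums component by component. Write $\Gamma_1 +_{(\eta_1,\eta_2)} \Gamma_2 = (T^+,\alpha^+,\lambda^+,U^+)$ and $\Gamma_1' +_{(\eta_1,\eta_2)} \Gamma_2 = (T^+,\alpha^{+\prime},\lambda^{+\prime},U^{+\prime})$. Because $\alpha_1' = \alpha_1$ and $U_1' = U_1$, the defining formulas
\[
\alpha^+ = \alpha_1 \circ \vec\eta_1 + \alpha_2 \circ \vec\eta_2,\qquad U^+ = U_1 + U_2
\]
give immediately $\alpha^{+\prime} = \alpha^+$ and $U^{+\prime} = U^+$. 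So the first three ingredients of the pair of namus already match, which is exactly what is required for the relation $\le$.

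For the remaining ingredient, fix an edge $e = uv$ of $T^+$ and subtract the defining formulas for $\lambda^{+\prime}(e)$ and $\lambda^+(e)$ given in Lemma~\ref{lem:sum}. All the correction terms
\[
-\dim \alpha_1(\vec\eta_1(v,e)) \cap \alpha_2(\vec\eta_2(v,e)) - \dim \alpha_1(\vec\eta_1(u,e)) \cap \alpha_2(\vec\eta_2(u,e)) + \dim U_1 \cap U_2,
\]
as well as the summand $\lambda_2(\eta_2(e))$, are the same on both sides because $\alpha_1' = \alpha_1$, $U_1' = U_1$, and $\Gamma_2$ is fixed. The only remaining difference is $\lambda_1'(\eta_1(e)) - \lambda_1(\eta_1(e)) \le 0$ (using also the convention $\lambda_1'(\emptyset) = \lambda_1(\emptyset) = 0$ when $\eta_1(e) = \emptyset$). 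Hence $\lambda^{+\prime}(e) \le \lambda^+(e)$ for every edge $e$ of $T^+$, which completes the verification that $\Gamma_1' +_{(\eta_1,\eta_2)} \Gamma_2 \le \Gamma_1 +_{(\eta_1,\eta_2)} \Gamma_2$.

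There is no real obstacle here; the lemma is a bookkeeping statement that the sum operation is monotone in its first argument under $\le$, and the monotonicity is visible directly in the formula because the $\lambda_1$-dependence enters linearly and with a positive sign.
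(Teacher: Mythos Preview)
Your proof is correct and follows essentially the same approach as the paper's: both observe that $\Gamma_1'\le\Gamma_1$ forces $T_1'=T_1$, $\alpha_1'=\alpha_1$, $U_1'=U_1$, so the same model $(\eta_1,\eta_2)$ applies, and then the inequality $\lambda^{+\prime}(e)\le\lambda^+(e)$ follows directly from the sum formula in Lemma~\ref{lem:sum}. Your version is simply more explicit about the term-by-term comparison that the paper summarizes as ``easy to check.''
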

\begin{proof}
Let $\Gamma_1=(T_1,\alpha_1,\lambda_1, U_1)$ and $\Gamma_1'=(T_1',\alpha_1',\lambda'_1,U_1)$. 
From $\Gamma_1\le \Gamma'_1$, we may assume that $T_1=T'_1$ and $\alpha_1=\alpha'_1$. 
Let $T_2=T(\Gamma_2)$. 
Then $(\eta_1,\eta_2)$ is a $(T_1',T_2)$-model in~$T$ as well.
Thus, $\Gamma_1'+_{(\eta_1,\eta_2)} \Gamma_2$ is well defined.

By Lemma~\ref{lem:sum} and the fact that $\lambda_1'(e)\le\lambda_1(e)$ for all edges $e$ of $T_1=T_1'$,
it is easy to check that 
  $\Gamma'_1+_{(\eta_1,\eta_2)} \Gamma_2 \le \Gamma_1+_{(\eta_1,\eta_2)}  \Gamma_2$.
\end{proof}

\begin{LEM}\label{lem:naturalext}
 Let $\Gamma_1,\Gamma_2$ be $B$-namus and $\Gamma'_1$ be a subdivision of $\Gamma_1$. 
 Then for every sum $\Gamma$ of $\Gamma_1$ and $\Gamma_2$,
 there is a sum $\Gamma'$ of $\Gamma'_1$ and $\Gamma_2$  
such that $\Gamma'$ is a subdivision of $\Gamma$.
\end{LEM}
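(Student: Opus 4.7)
The plan is to construct a subdivision $T'$ of $T := T(\Gamma)$ together with a refined $(T'_1, T_2)$-model $(\eta'_1, \eta'_2)$ in $T'$, and set $\Gamma' := \Gamma'_1 +_{(\eta'_1, \eta'_2)} \Gamma_2$. By induction on the number of subdivision nodes added in $T'_1$ relative to $T_1$ (using that ``is a subdivision of'' is transitive on $B$-namus), it suffices to handle the case where $T'_1$ is obtained from $T_1$ by subdividing a single edge $f = q_0 q_k$ into $q_0 q$ and $q q_k$ via one new node $q$; the base case $T'_1 = T_1$ is trivial by taking $T' = T$ and $\Gamma' = \Gamma$.

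Fix the $(T_1, T_2)$-model $(\eta_1, \eta_2)$ in $T$ witnessing $\Gamma = \Gamma_1 +_{(\eta_1, \eta_2)} \Gamma_2$, and let $P := \eta_1^{-1}(f)$, which is a path in $T$ with at least one edge. Choose any edge $e_0 \in E(P)$ and obtain $T'$ from $T$ by subdividing $e_0$ with a new degree-$2$ node $\tilde q$. Define $\eta'_1$ to agree with $\eta_1$ outside $P$, to send every edge of the subdivided $P$ on the $q_0$-side of $\tilde q$ to $q_0 q$, and every edge on the $q_k$-side of $\tilde q$ to $q q_k$; and define $\eta'_2(e') := \eta_2(e)$ whenever $e'$ lies in the subdivision of an edge $e$ of $T$.

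Verifying that $(\eta'_1, \eta'_2)$ is a $(T'_1, T_2)$-model in $T'$ is routine: condition (ii) is immediate because $A(T') = A(T)$ and $T_2$ is unchanged; condition (iii) holds because $\tilde q$ is the only new node of $T'$, has degree $2$, and is a branch node of exactly $\eta'_1$ (corresponding to $q \in V(T'_1)$); condition (i) for $\eta'_1$ is inherited from $(T, \eta_1)$ at old nodes and holds vacuously at $\tilde q$ since both its incident edges lie in $\eta'_1$; condition (i) for $\eta'_2$ inherits similarly, noting that both edges incident with $\tilde q$ are simultaneously in or out of $\eta'_2$.

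Finally, to show $\Gamma'$ is a subdivision of $\Gamma$, fix any incidence $(v', e')$ of $T'$ with its corresponding incidence $(v, e)$ of $T$. Direct inspection gives $\vec{\eta'_2}(v', e') = \vec\eta_2(v, e)$, and $\vec{\eta'_1}(v', e')$ corresponds to $\vec\eta_1(v, e)$ under the subdivision $T_1 \to T'_1$; substituting these into the defining formulas of Lemma~\ref{lem:sum} for sums yields $\alpha'(v', e') = \alpha(v, e)$ and $\lambda'(e') = \lambda(e)$, while $U$ coincides on both sides. The main delicate step is the correspondence of $\vec{\eta'_1}$ at incidences involving $\tilde q$: there $\tilde q$ itself is the closest branch node of $\eta'_1$ (at distance $0$), so $\vec{\eta'_1}(\tilde q, e') = (q, \eta'_1(e')) \in \{(q, q_0 q), (q, q q_k)\}$, which under the subdivision of $f$ corresponds to $(q_0, f)$ or $(q_k, f)$, matching $\vec\eta_1(v, e_0)$ for the corresponding $(v, e_0)$ in $T$.
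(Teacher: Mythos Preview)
Your proof is correct and follows essentially the same approach as the paper: induct on the number of subdivided edges, subdivide a single edge of $T$ lying in $\eta_1^{-1}(f)$, define the new model by splitting the $\eta_1$-preimage at the new node while letting $\eta_2$ pass through unchanged, and verify both the model axioms and the subdivision identities. The paper carries out the $\alpha'$ and $\lambda'$ computations at the two new edges explicitly, whereas you summarize them, but the argument is the same.
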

\begin{proof}
Let $\Gamma_1=(T_1,\alpha_1,\lambda_1,U_1)$, $\Gamma_2=(T_2,\alpha_2,\lambda_2,U_2)$, 
$\Gamma_1'=(T_1',\alpha_1',\lambda_1',U_1')$. See Figure~\ref{fig:joinextension} for a diagram.  
We use induction on $\abs{V(T_1')}-\abs{V(T_1)}$.
If $\abs{V(T_1')}=\abs{V(T_1)}$, then it is trivial.
We may assume that $\Gamma_1'$ is obtained from $\Gamma_1$ by subdividing one edge $v_1v_3$ of $T_1$.
Let $v_2$ be the new node of $T_1'$ created by subdividing $v_1v_3$.
Let $\Gamma=\Gamma_1+_{(\eta_1,\eta_2)}\Gamma_2=(T,\alpha,\lambda,U)$.
Let $u_1u_3$ be an edge of~$T$ such that $\eta_1(u_1u_3)=v_1v_3$.
Let $T'$ be a subdivision of~$T$ obtained by subdividing $u_1u_3$ once, and denote by $u_2$ the new node.

Let us define functions $\rho_1:E(T')\cup\{\emptyset\}\rightarrow E(T_1')\cup\{\emptyset\}$ and $\rho_2:E(T')\cup\{\emptyset\}\rightarrow E(T_2')\cup\{\emptyset\}$ as follows.
\[
\rho_1(e)=
\begin{cases}
v_1v_2 &\text{if $e=u_1u_2$,} \\
v_2v_3 &\text{if $e=u_2u_3$,} \\
\eta_1(e) &\text{otherwise,}
\end{cases}
\quad\text{and}\quad
\rho_2(e)=
\begin{cases}
\eta_2(u_1u_3)&\text{if $e=u_1u_2$ or $e=u_2u_3$,} \\
\eta_2(e) &\text{otherwise.}
\end{cases}
\]
We claim that $(\rho_1,\rho_2)$ is a $(T_1',T_2)$-model in $T'$.
Since $(\eta_1,\eta_2)$ is a $(T_1,T_2)$-model in~$T$,
it is enough to check the third condition for $u_2$ whose degree is $2$ in $T'$.
It is also clear because $u_2$ is a branch node in $\eta_1'$ but not in $\eta_2'$.

The proof is completed by proving that 
the sum $\Gamma'=(T',\alpha',\lambda',U')$ of $\Gamma_1'$ and $\Gamma_2$ by $(\rho_1,\rho_2)$ is a subdivision of $\Gamma$.
It is clear that $\alpha'(v,e)=\alpha(v,e)$ and $\lambda'(e)=\lambda(e)$ if $e$ is neither $u_1u_2$ nor $u_2u_3$.
We claim that $\alpha'(u_1,u_1u_2)=\alpha'(u_2,u_2u_3)=\alpha(u_1,u_1u_3)$
and $\lambda'(u_1u_2)=\lambda'(u_2u_3)=\lambda(u_1u_3)$.
We observe that 
\begin{align*}
  \alpha'(u_1,u_1u_2)&=\alpha_1'(\vec\rho_1(u_1,u_1u_2))+\alpha_2(\vec\rho_2(u_1,u_1u_2))\\
  &
=\alpha_1'(v_1,v_1v_2)+\alpha_2(\vec\eta_2(u_1,u_1u_3))\\
&=\alpha_1(v_1,v_1v_3)+\alpha_2(\vec\eta_2(u_1,u_1u_3)), \\
  \alpha'(u_2,u_2u_3)&=\alpha_1'(\vec\rho_1(u_2,u_2u_3))+\alpha_2(\vec\rho_2(u_2,u_2u_3))\\
  &=\alpha_1'(v_2,v_2v_3)+\alpha_2(\vec\eta_2(u_1,u_1u_3))\\
&=\alpha_1(v_1,v_1v_3)+\alpha_2(\vec\eta_2(u_1,u_1u_3)), \quad\text{ and } \\
\alpha_1(v_1,v_1v_3)+ & \alpha_2(\vec\eta_2(u_1,u_1u_3)) 
=\alpha_1(\vec\eta_1(u_1,u_1u_3))+\alpha_2(\vec\eta_2(u_1,u_1u_3))
=\alpha(u_1,u_1u_3).
\end{align*}
Also, since
$\lambda_1'(\rho_1(u_1u_2))=\lambda_1'(v_1v_2)=\lambda_1(v_1v_3)=\lambda_1(\eta_1(u_1u_3))$,
we have
\begin{align*}
\lambda'(u_1u_2)
&=\lambda_1'(\rho_1(u_1u_2))+\lambda_2(\rho_2(u_1u_2)) 
-\dim (\alpha_1(\vec\rho_1(u_1,u_1u_2))\cap \alpha_2(\vec\rho_2(u_1,u_1u_2)))  \\
&\quad-\dim (\alpha_1(\vec\rho_1(u_2,u_1u_2))\cap \alpha_2(\vec\rho_2(u_2,u_1u_2)))  
+\dim (U_1'\cap U_2)\\
&=\lambda_1(\eta_1(u_1u_3))+\lambda_2(\eta_2(u_1u_3))
-\dim (\alpha_1(\vec\eta_1(u_1,u_1u_3))\cap \alpha_2(\vec\eta_2(u_1,u_1u_3)))  \\
&\quad-\dim (\alpha_1(\vec\eta_1(u_3,u_1u_3))\cap \alpha_2(\vec\eta_2(u_3,u_1u_3))) 
+\dim (U_1\cap U_2)\\
&=\lambda(u_1u_3). 
\end{align*}
Similarly, we have $\lambda'(u_2u_3)=\lambda(u_1u_3)$.
\end{proof}
\begin{figure}
  \centering
  \begin{tikzpicture}
    \node (g) at (1,3) {$\Gamma$};
    \node (g1) at (-0.5,4) {$\Gamma_1$};
    \node (g2) at (3,4) {$\Gamma_2$};
    \node (g1') at (-0.5,2) {$\Gamma_1'$};
    \node (g+) at (1,0.5) {$\Gamma'$};
    \draw [dashed] (g)  -- (g+) ;
    \draw (g1) -- (g1') ;
    \draw [dashed,<-] (g1') -- (g+)
    node [midway,label=left:$\rho_1$]{} ;
    \draw [dashed,<-] (g2) -- (g+)
    node [midway,label=right:$\rho_2$]{} ;
    \draw [<-] (g1)--(g)
    node[midway,label=above:$\eta_1$]{};
    \draw [<-] (g2)--(g)
    node[midway,label=above:$\eta_2$]{};
  \end{tikzpicture}
  \caption{A diagram for Lemma~\ref{lem:naturalext}. Solid lines represent given functions (models) and dashed lines represent functions to be constructed.}
  \label{fig:joinextension}
\end{figure}

\begin{LEM}\label{lem:joinext}
 Let $\Gamma_1,\Gamma_2$ be $B$-namus and $\Gamma'_1$ be a subdivision of $\Gamma_1$. 
 Then for every sum $\Gamma'$ of $\Gamma'_1$ and $\Gamma_2$, 
 there is a sum $\Gamma$ of $\Gamma_1$ and $\Gamma_2$ 
 such that $\Gamma'$ is a subdivision of $\Gamma$.
\end{LEM}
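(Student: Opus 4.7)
I would prove this by induction on $\abs{V(T(\Gamma_1'))}-\abs{V(T(\Gamma_1))}$, mirroring the inductive structure of the proof of Lemma~\ref{lem:naturalext}. The base case (difference zero) is immediate: take $\Gamma=\Gamma'$. For the inductive step it suffices to treat the one-step case where $\Gamma_1'$ is obtained from $\Gamma_1$ by subdividing a single edge $v_1v_3$ of $T_1:=T(\Gamma_1)$ into a path $v_1v_2v_3$; the general case follows by iterating the one-step result and composing subdivisions. Write $T':=T(\Gamma')$ and $T_2:=T(\Gamma_2)$, fix the $(T_1',T_2)$-model $(\rho_1,\rho_2)$ in $T'$ satisfying $\Gamma'=\Gamma_1'+_{(\rho_1,\rho_2)}\Gamma_2$, and set $u_2:=\overline{\rho}_1(v_2)$.

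The key structural step is to show that $u_2$ has degree exactly $2$ in $T'$. Since $v_2$ has degree $2$ in $T_1'$, precisely two edges at $u_2$ lie in $E(\rho_1)$; any hypothetical third edge at $u_2$ in $T'$ would then lie outside $E(\rho_1)$, and condition~(i) of a $(T_1',T_2)$-model would force $u_2$ to be a subdividing node of $\rho_1$, contradicting $u_2\in\overline{\rho}_1(V(T_1'))$. By condition~(iii), $u_2$ is then not a branch node of $\rho_2$, and applying condition~(i) to $\rho_2$ at $u_2$ excludes the ``mixed'' configuration of one edge at $u_2$ in $E(\rho_2)$ and the other not. Thus either (B1) both edges at $u_2$ lie in $E(\rho_2)$, in which case $u_2$ is a subdividing node of $\rho_2$ and both edges share the same $\rho_2$-image in $E(T_2)$, or (B2) $u_2\notin V(\rho_2)$.

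Let $T$ be the tree obtained from $T'$ by smoothing $u_2$; then $T'$ is a subdivision of $T$ at a single edge. Define a $(T_1,T_2)$-model $(\eta_1,\eta_2)$ in $T$ by letting $\eta_1$ agree with $\rho_1$ on every edge of $T$ not arising from contraction, and map the new merged edge of $T$ to $v_1v_3\in E(T_1)$; and letting $\eta_2$ agree with $\rho_2$ on every edge of $T$ not arising from contraction, with the merged edge inheriting the common $\rho_2$-value (an edge of $T_2$ in case~(B1), or $\emptyset$ in case~(B2)). The three defining conditions of a $(T_1,T_2)$-model are inherited from $(\rho_1,\rho_2)$ at every node of $T$ different from the image of $u_2$; at the merged incidence, (i) and (iii) follow directly from the case analysis above, and (ii) is unaffected since no leaves are altered. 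Now set $\Gamma:=\Gamma_1+_{(\eta_1,\eta_2)}\Gamma_2$.

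Finally, I would verify that $\Gamma'$ is a subdivision of $\Gamma$. The subspaces $U'=U_1+U_2=U$ agree, and $T'$ is already a subdivision of $T$ by construction. For each incidence $(v',e')$ of $T'$ with corresponding incidence $(v,e)$ of $T$, the equalities $\alpha'(v',e')=\alpha(v,e)$ and $\lambda'(e')=\lambda(e)$ are immediate when $e'$ is not incident with $u_2$; for the two edges of $T'$ at $u_2$, they follow from the subdivision relation between $\Gamma_1'$ and $\Gamma_1$ applied to the incidences at $v_1v_2, v_2v_3$ versus those at $v_1v_3$, combined with the definitions of $\vec\rho_i$ and $\vec\eta_i$ at $u_2$ and its two $T'$-neighbors on $\rho_1$. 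The main obstacle is precisely this last verification: one has to carefully match the closest-branch-node computations defining $\vec\rho_i$ versus $\vec\eta_i$ around $u_2$, and confirm in both cases~(B1) and~(B2) that the $\alpha_2$ and $\lambda_2$ contributions line up correctly across the smoothed incidence.
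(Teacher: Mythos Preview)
Your proposal is correct and follows exactly the approach the paper intends: the paper's proof consists of the single sentence ``This lemma can be proved similarly to the proof of Lemma~\ref{lem:naturalext},'' and you have carried out precisely that analogous argument, inducting on $\abs{V(T(\Gamma_1'))}-\abs{V(T(\Gamma_1))}$ and smoothing the image of the subdividing node. The structural observations you isolate (that $u_2$ has degree~$2$ in $T'$ via condition~(i), that it is not a branch node of $\rho_2$ via condition~(iii), and the dichotomy (B1)/(B2)) are exactly what is needed, and the final incidence-matching verification is the routine counterpart of the corresponding check in Lemma~\ref{lem:naturalext}.
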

\begin{proof}
This lemma can be proved similarly to the proof of Lemma~\ref{lem:naturalext}.
\end{proof}

\begin{proof}[Proof of Proposition~\ref{prop:subdivcomposition}]
  Since   $\tle$ is transitive by Lemma~\ref{lem:transitive} and $\Gamma_1\tplus \Gamma_2=\Gamma_2\tplus \Gamma_1$, it is enough to prove this proposition for the case that $\Gamma_2'=\Gamma_2$.  Let $\Gamma_1''$ be a $B$-namu isomorphic to a subdivision of $\Gamma_1'$ and $\Gamma_1^*$ be a $B$-namu isomorphic to a subdivision of $\Gamma_1$ such that $\Gamma_1''\le \Gamma_1^*$.  
We may assume that $\Gamma_1''$ is a subdivision of $\Gamma_1'$
and $\Gamma_1^*$ is a subdivision of $\Gamma_1$.
  From Lemma~\ref{lem:naturalext}, there is a sum of $\Gamma_1^*$ and $\Gamma_2$, say, 
 $\Gamma^*_1+_{(\eta_1,\eta_2)}\Gamma_2$, which is a subdivision of $\Gamma$. 
 Note that $\Gamma_1^*+_{(\eta_1,\eta_2)} \Gamma_2\tle \Gamma$. By Lemma~\ref{lem:latticejoin}, $\Gamma_1''+_{(\eta_1,\eta_2)} \Gamma_2\le \Gamma_1^*+_{(\eta_1,\eta_2)} \Gamma_2$.  By Lemma~\ref{lem:joinext}, there is $\Gamma'\in \Gamma_1'\tplus\Gamma_2$ such that $\Gamma_1''+_{(\eta_1,\eta_2)} \Gamma_2$ is a subdivision of $\Gamma'$. Note that $\Gamma' \tle \Gamma_1''+_{(\eta_1,\eta_2)} \Gamma_2$. Now,
 \[\Gamma' \tle \Gamma_1''+_{(\eta_1,\eta_2)} \Gamma_2 \le \Gamma_1^*+_{(\eta_1,\eta_2)} \Gamma_2 \tle \Gamma. \qedhere \]
\end{proof}

\begin{LEM}\label{lem:compareproj}
Let $B$ be a subspace of $\F^r$ and $B'$ be a subspace of $B$.
Let $\Gamma_1$ and $\Gamma_2$ be $B$-namus.
If $\Gamma_1\tle\Gamma_2$, then $\Gamma_1|_{B'}\tle\Gamma_2|_{B'}$.
\end{LEM}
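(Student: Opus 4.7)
The plan is to unpack the definition of $\tle$ and then verify that taking the projection onto $B'$ commutes with both the subdivision operation and the $\le$ relation. The argument is routine; no new ideas are needed beyond careful bookkeeping.

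First, since $\Gamma_1\tle \Gamma_2$, by definition there exist $B$-namus $\Gamma_1^*=(T^*,\alpha^*,\lambda_1^*,U^*)$ and $\Gamma_2^*=(T^*,\alpha^*,\lambda_2^*,U^*)$ isomorphic to subdivisions of $\Gamma_1$ and $\Gamma_2$, respectively, and such that $\Gamma_1^*\le \Gamma_2^*$. Recall that $\Gamma_1^*\le \Gamma_2^*$ means $T(\Gamma_1^*)=T(\Gamma_2^*)$, $\alpha$-functions coincide, $U$'s coincide, and $\lambda_1^*(e)\le \lambda_2^*(e)$ for every edge $e$; that is why we can write them with a common $T^*$, $\alpha^*$, and $U^*$ above.

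Next, I would show that $\Gamma_i^*|_{B'}$ is (isomorphic to) a subdivision of $\Gamma_i|_{B'}$ for $i=1,2$. This is immediate from the definition of projection and subdivision. A subdivision only inserts degree-$2$ nodes and extends $\alpha$ and $\lambda$ to the new incidences by copying the values on the corresponding incidences of the original edge. Projecting onto $B'$ replaces each subspace $\alpha(v,e)$ by $\alpha(v,e)\cap B'$ and leaves the tree and $\lambda$ untouched. These two operations commute: if $(v',e')$ in $T^*$ corresponds to $(v,e)$ in $T(\Gamma_i)$, then in the projection its label is $\alpha(v,e)\cap B' = \alpha^*(v',e')\cap B'$, which is exactly the label one obtains by first projecting $\Gamma_i$ and then subdividing.

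Finally, I would verify $\Gamma_1^*|_{B'}\le \Gamma_2^*|_{B'}$. The two projected namus share the tree $T^*$; their $\alpha$-functions both equal $\alpha^*(\cdot)\cap B'$; and their $U$-spaces both equal $U^*\cap B'$. For each edge $e$ the projection preserves $\lambda$, so $\lambda_1^*(e)\le \lambda_2^*(e)$ is inherited by the projections. Combining this inequality with the previous step yields $\Gamma_1|_{B'}\tle \Gamma_2|_{B'}$, as required. The potential nuisance is only notational, namely being careful about the ``isomorphic to'' clause in the definition of $\tle$; but since every operation involved (subdivision and projection) is compatible with tree isomorphisms that preserve all decorations, the conclusion survives intact.
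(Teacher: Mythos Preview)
Your proposal is correct and follows essentially the same approach as the paper's own proof: both take the witnessing subdivisions from the definition of $\tle$, observe that projection onto $B'$ commutes with subdivision (so that $\Gamma_i^*|_{B'}$ is isomorphic to a subdivision of $\Gamma_i|_{B'}$), and then verify directly that $\Gamma_1^*|_{B'}\le \Gamma_2^*|_{B'}$ since projection preserves $\lambda$ and intersects all subspace labels with $B'$.
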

\begin{proof}
Let $\Gamma'_1=(T'_1,\alpha'_1,\lambda'_1,U'_1)$ and $\Gamma'_2=(T'_2,\alpha'_2,\lambda'_2,U'_2)$ 
be $B$-namus isomorphic to subdivisions of $\Gamma_1=(T_1,\alpha_1,\lambda_1,U_1)$ 
and $\Gamma_2=(T_2,\alpha_2,\lambda_2,U_2)$, respectively, such that $\Gamma'_1\le \Gamma'_2$. 
From $\Gamma'_1\le \Gamma'_2$, for every incidence $(v,e)$ of $T'_1=T'_2$ 
we have $\alpha'_1(v,e)\cap B' =\alpha'_2(v,e)\cap B'$ and $\lambda'_1(e)\le \lambda'_2(e)$. 
Clearly, we have $U'_1=U_1\cap B'=U_2\cap B' =U'_2$. 
Therefore, %
$\Gamma'_1|_{B'}\le \Gamma'_2|_{B'}$. 

It is easy to see that $\Gamma_1'|_{B'}$ and $\Gamma_2'|_{B'}$ are $B'$-namus isomorphic to 
subdivisions of $\Gamma_1|_{B'}$ and $\Gamma_2|_{B'}$. %
This completes the proof.
\end{proof}
By definition, we can deduce the following lemma. %

\begin{LEM}\label{lem:tleksafe}
Let $\Gamma_1$ and $\Gamma_2$ be $B$-namus such that $\Gamma_1\tle \Gamma_2$. 
If $\Gamma_2$ is a $k$-safe extension of $\trim(\Gamma_2)$, 
then $\Gamma_1$ is a $k$-safe extension of $\trim(\Gamma_1)$.
\end{LEM}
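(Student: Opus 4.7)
The plan is to use the subdivisions witnessing $\Gamma_1\tle\Gamma_2$ as a bridge between the two trim structures. Choose $B$-namus $\Gamma_1^*$ and $\Gamma_2^*$ isomorphic to subdivisions of $\Gamma_1$ and $\Gamma_2$ respectively with $\Gamma_1^*\le\Gamma_2^*$; they share a common tree $T^*$, a common $\alpha^*$, and a common $U=U_1=U_2$, while $\lambda_1^*\le\lambda_2^*$ pointwise. The key structural observation is that degenerate edges, guarding edges, and blocking paths of a $B$-namu depend only on $\alpha$, and so do the blocked nodes; consequently the subtrees $T(\trim(\Gamma_1^*))$ and $T(\trim(\Gamma_2^*))$ coincide as a common subtree $T^{**}\subseteq T^*$. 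By the proof of Lemma~\ref{lem:comparetrim}, $T^{**}$ is simultaneously a subdivision of $T(\trim(\Gamma_1))$ and of $T(\trim(\Gamma_2))$, with the edges of each $T(\trim(\Gamma_i))$ corresponding under the subdivision map $\sigma_i\colon T^*\to T(\Gamma_i)$ precisely to the edges that lie inside $T^{**}$.

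I would take the natural inclusion $\eta_1\colon T(\trim(\Gamma_1))\hookrightarrow T(\Gamma_1)$ as the witness model for $\Gamma_1$. The structural conditions of being a $k$-safe extension are immediate since $\trim$ preserves $U$, $\alpha$, and $\lambda$ on retained incidences. For the inequality, fix an incidence $(v,e)$ of $T(\Gamma_1)$ with $\vec\eta_1(v,e)=(*,\emptyset)$; then $e\notin T(\trim(\Gamma_1))$ and $T(\trim(\Gamma_1))$ lies entirely on the $v$-side of $T(\Gamma_1)-e$. Let $v^*$ be the endpoint of the subdivision path of $e$ in $T^*$ on $v$'s side and $e^*$ the edge of that path incident to $v^*$; subdivision preserves $\alpha$ and $\lambda$ on corresponding incidences, giving $\alpha^*(v^*,e^*)=\alpha_1(v,e)$ and $\lambda_1^*(e^*)=\lambda_1(e)$. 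The trim-subdivision correspondence forces $e^*\notin T^{**}$, and since $T^{**}$ is connected and disjoint from $e^*$, all of $T^{**}$ lies on the $v^*$-side of $T^*-e^*$ (its $\sigma_1$-image $T(\trim(\Gamma_1))$ lies on the $v$-side). Mirroring on the $\Gamma_2$ side: letting $\tilde e\in E(T(\Gamma_2))$ be the edge that $e^*$ subdivides and $\tilde v$ the endpoint of $\tilde e$ on $v^*$'s side, we get $\alpha_2(\tilde v,\tilde e)=\alpha_1(v,e)$, $\lambda_2(\tilde e)\ge\lambda_1(e)$, $\tilde e\notin T(\trim(\Gamma_2))$, and (via $\sigma_2$) that $T(\trim(\Gamma_2))$ lies on the $\tilde v$-side of $T(\Gamma_2)-\tilde e$. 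Hence $\vec\eta_2(\tilde v,\tilde e)=(*,\emptyset)$ for the natural inclusion $\eta_2$, and the hypothesized $k$-safety of $\Gamma_2$ yields $\lambda_2(\tilde e)+\dim U-\dim\alpha_2(\tilde v,\tilde e)\le k$; transporting through the equalities and inequality above gives $\lambda_1(e)+\dim U_1-\dim\alpha_1(v,e)\le k$.

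The principal obstacle I anticipate is the degenerate case, in which $\trim(\Gamma_i)$ is defined to be a single abstract node and the inclusion witness is not canonical; here I would pick the chosen single node of $T(\Gamma_1)$ so that it corresponds to $T^{**}$'s single node under $\sigma_1$ (taking an endpoint of the underlying edge when $T^{**}$'s node is a subdivision node), and check that the same side-containment argument still carries the inequality across. A secondary subtlety is that the $k$-safe definition quantifies existentially over the witness model; I expect the structural conditions to force the witness essentially to the natural inclusion (making the above argument sufficient), but otherwise one transports a chosen witness for $\Gamma_2$ through $T^*$ to produce a matching witness for $\Gamma_1$.
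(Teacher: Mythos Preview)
Your proposal is correct and follows essentially the same route as the paper's proof: pass to common subdivisions $\Gamma_1^*\le\Gamma_2^*$ on a shared tree, use that the trim structure depends only on $\alpha$ (so an edge removed by trimming $\Gamma_1$ corresponds to one removed by trimming $\Gamma_2$), and transport the $k$-safe inequality back via $\lambda_1\le\lambda_2$ and $\alpha_1=\alpha_2$ on corresponding incidences. Your extra care in tracking the $(*,\emptyset)$ side through $T^{**}$ and your remarks on the degenerate case are reasonable elaborations of points the paper leaves implicit; the existential-witness worry is not a real issue, since the natural inclusion is the only relevant model here.
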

\begin{proof}
Let $\Gamma_1=(T_1,\alpha_1,\lambda_1,U)$ and $T_1'=T(\trim(\Gamma_1))$.
Let $\eta_1=T_1'$ be a $T_1'$-model in $T_1$ and 
let $(v_1,e_1)$ be an incidence of $T_1$ with $\vec\eta_1(v_1,e_1)=(*,\emptyset)$.
We will show that 
\[
\lambda_1(e_1) + \dim U - \dim\alpha_1(v_1,e_1) \le k, 
\]
which proves that $\Gamma_1$ is a $k$-safe extension of $\trim(\Gamma_1)$.

Since $\Gamma_1\tle\Gamma_2$, there exist $B$-namus $\Gamma_1'$ and $\Gamma_2'$ that are subdivisions of 
$\Gamma_1$ and $\Gamma_2=(T_2,\alpha_2,\lambda_2,U)$, respectively, such that $\Gamma_1'\le\Gamma_2'$. 
We may assume that $\Gamma_1'$ is a subdivision of $\Gamma_1$ and $\Gamma_2'$ is a subdivision of $\Gamma_2$.
Let $T=T(\Gamma_1')=T(\Gamma_2')$.
Let $\eta_2=T(\trim(\Gamma_2))$ be a $T(\trim(\Gamma_2))$-model in $T_2$.
Let $e$ be an edge of~$T$ that corresponds to $e_1$ of $T_1$
and let $e_2$ be the edge of $T_2$ corresponding to $e$.
By definition, since $e_1$ is removed by trimming $\Gamma_1$, 
we deduce that $e_2$ is removed by trimming $\Gamma_2$.
Thus, 
$\lambda_2(e_2)+\dim U - \dim \alpha_2(v_2,e_2)\le k$ 
for an end $v_2$ of $e_2$ with $\vec\eta_2(v_2,e_2)=(*,\emptyset)$.
Since $\lambda_1(e_1)\le\lambda_2(e_2)$ and $\alpha_1(v_1,e_1)=\alpha_2(v_2,e_2)$,
we have $\lambda_1(e_1)+\dim U - \dim\alpha_1(v_1,e_1) \le \nolinebreak k$.
\end{proof}

\subsection{Witnesses}\label{subsec:witness}

\paragraph{Labeling}
In order to describe a subcubic tree with no degree-$2$ node containing a subdivision of a subcubic tree $T$,
we define a \emph{labeling} of~$T$ 
as a mapping $\zeta$ defined on the union of the set of nodes of degree at most $2$ and the set of the incidences of~$T$ such that 
\begin{itemize}
\item for each node~$v$ of degree at most $2$ in~$T$, $\zeta(v)$ is 
  a rooted binary tree with the root $v$
  satisfying that 
  \[ \deg_{\zeta(v)}(v)=0 \text{ or }
    \deg_{\zeta(v)}(v)=3-\max(\deg_T(v),1),\]
\item for each incidence $(v,e)$ of~$T$, 
		$\zeta(v,e)$ is a sequence $(T_1,T_2,\ldots,T_{\ell_{v,e}})$ of
		rooted binary trees whose roots have degree $1$ where $\ell_{v,e}\ge0$.
\end{itemize}
We say that an edge or a node is \emph{in $\zeta$} if it is in $\zeta(u)$ 
or $\zeta(v,e)$ for some node $u$ or some incidence $(v,e)$ of~$T$. 

Before moving on, let us explain the intuition behind a labeling.
Suppose that 
a subcubic tree $G$ contains a subdivision $T'$ of a subcubic tree $T$ such that 
$E(T)\neq\emptyset$ and 
every node in $T'$ incident with an edge in $E(G)-E(T')$ has degree $3$ in $G$. 
We define $\zeta(v)$ to be the component of $G-E(T')$
seen as a rooted binary tree having $v$ as the root.
For each edge $uv$ of~$T$, $G$ has a path $P_{uv}$ from $u$ to $v$, as $T'$ is a subdivision of~$T$. We arbitrarily split the path $P_{uv}$ into two subpaths $P_{u,uv}$ and $P_{v,uv}$ containing~$u$ and $v$, respectively, so that they share exactly one edge.
Then we define $\zeta(u,uv)$ as the sequence of components of $G-E(T')$ having internal nodes in $P_{u,uv}$ as roots in which we ignore components having only one node.
Thus, roughly speaking, a labeling of~$T$ expresses how the  ``hidden'' rooted binary trees are attached to $T$.

We say that $v$ is a \emph{leaf} of a labeling $\zeta$ of~$T$
if it is
\begin{itemize}
\item 
 a  nonroot leaf of $\zeta(v)$ for some node $v\in V(T)$ of degree at most $2$, 
\item 
an isolated root of $\zeta(v)$ for some leaf $v$ of~$T$, or 
\item a nonroot leaf of a rooted binary tree in $\zeta(v,e)$ for some incidence $(v,e)$ of~$T$.
\end{itemize}

\paragraph{$\V$-labeling}
For a subspace arrangement $\V$, 
a \emph{$\V$-labeling} of a subcubic tree $T$ 
is a pair $(\zeta,\L)$ of a labeling $\zeta$ of~$T$
and a bijection from the set of leaves of $\zeta$ to $\V$.
Note that $\zeta(v,e)$ can be the empty sequence. %

For a $\V$-labeling $(\zeta,\L)$ of a subcubic tree $T$, 
its \emph{canonical branch-decomposition} $(\T(\zeta),\L)$ 
consists of a tree $\T(\zeta)$ 
and a bijection $\L$
where $\T(\zeta)$ is obtained from $T$ by 
\begin{enumerate}[(i)] 
\item subdividing each edge $uv$ of~$T$  $\abs{\zeta(u,uv)}+\abs{\zeta(v,uv)}$ times,
\item attaching the $i$th rooted binary tree in $\zeta(u,uv)$
  whose root is identified with the $i$th new subdividing internal node on the path from $u$ to $v$ 
  for each incidence $(u,uv)$ of~$T$ and $1\le i\le \abs{\zeta(u,uv)}$, 
\item attaching $\zeta(v)$ for all $v\in V(T)$ of degree at most $2$ 
  whose root is identified with~$v$, and 
\item smoothing $v$ for each $v\in V(T)$ with 
  $\deg_T(v)=2$ and $\abs{V(\zeta(v))}=1$.
\end{enumerate}
We remark that the set of leaves of $\T(\zeta)$ is equal to the set of
leaves of $\zeta$
and $(\T(\zeta),\L)$ is a branch-decomposition if $T$ has at least one edge and a rooted branch-decomposition if $T$ has no edges.
By (iv), if an edge $e=uv$ of $\T(\zeta)$ is not in $\zeta$, 
then $e$ corresponds to a path $P_e$ in~$T$ from $u$ to $v$ such that $\zeta(w)$ has only
one node for every internal node $w$ of $P_e$.
Furthermore for each $v\in V(T)$, $\zeta(v)$ is a subtree of $\T(\zeta)$, and for each incidence $(v,e)$ of~$T$, $\zeta(v,e)$ is a subtree of $\T(\zeta)$.

Let $(T^b,\L^b)$ be a rooted branch-decomposition of~$\V$ and $\V_0\subseteq\V$.
We say that a $\V_0$-labeling $(\zeta,\L)$ of a subcubic tree $T$ is \emph{totally pure with respect to $(T^b,\L^b)$} if its canonical branch-decomposition is totally pure with respect to $(T^b,\L^b)$.
For a node~$x$ of $T^b$ with $\V_x\subseteq \V_0$ and 
an edge $uv$ in a branch-decomposition $(T,\L)$ of $\V_0$, 
we say $(u,v)$ is \emph{$x$\=/protected} in $(T,\L)$ if
\begin{itemize}
\item $uv$ is $x$\=/blocked and $uv$ $x$\=/guards $v$, or 
\item
there exists a node~$z$ of $T^b$ with $z\le x$ such that 
$uv$ cuts $\V_z$, $(T,\L)$ is $z$-degenerate, and $\L(T,u,v)\subseteq \V_z$.
\end{itemize}
An edge $uv$ of~$T$ is \emph{$x$\=/protected} in $(T,\L)$ if $(u,v)$ or $(v,u)$ is $x$\=/protected in $(T,\L)$.
A node~$v$ of~$T$ is \emph{$x$\=/protected} in $(T,\L)$ if $(u,v)$ is $x$\=/protected in $(T,\L)$ 
for some neighbor~$u$ of $v$ in~$T$. 
See Figure~\ref{fig:protected} for an example. 

\begin{figure}
  \begin{center}
  \tikzstyle{v}=[circle,draw,fill=black,inner sep=0pt,minimum width=2pt]
\begin{tikzpicture}[grow cyclic,level distance=1cm,sibling angle=120,scale=.8]
  \node [v](u) {}
  child { 
    node [v,label=$6$] (z6) {}
  }
  child {
    node [v,label={\color{red}$1$}] (z1) {}
  }
  child {
    node [v] (v) {}
      child {
        node [v] (w) {}
          child {
            node [v,label={\color{red}$2$}] (z2) {}
          }
          child {
            node [v,label=left:$5$] (z5) {}
          }
      }
      child {
        node [v] (w2) {}
          child {
            node [v,label=$7$] (z7) {}
          }
          child {
            node [v] (w3) {}
              child {
                node [v,label={\color{red}$4$}] (z4){}
              }
              child {
                node [v,label={\color{red}$3$}] (z3){}
              }
          }
      }
  };
    \foreach \s/\t in {w2/z7,w3/z4,w3/z3,u/z1,w/z2}
      \draw [very thick,blue,->] (\s)--(\t);
    \foreach \s in {z3,z4,z7,z2,z1}
    \node [circle,draw,inner sep=0pt,minimum width=4pt] at (\s){};
  \end{tikzpicture}
\end{center}
\caption{$x$\=/protected edges and nodes from the example of Figure~\ref{fig:mixed2}.}\label{fig:protected}
\end{figure}

\begin{LEM}\label{lem:protectedconnected}
Let $x$ be a node of $T^b$ and let $(T,\L)$ be a branch-\decomposition{} of~$\V_0$ with $\V_x\subseteq\V_0$.
For two nodes $p$ and $q$ of~$T$,
if neither $p$ nor $q$ is $x$\=/protected and $v$~is a node on the unique path from $p$ to $q$,
then $v$ is not $x$\=/protected.
\end{LEM}
\begin{proof}
We may assume that $v\neq p$ and $v\neq q$.
Suppose that $(u,v)$ is $x$\=/protected in $(T,\L)$ for a neighbor $u$ of $v$.
As $v$ is an internal node of the path from $p$ to $q$, $(u,v)$ points toward $p$ or $q$. 
Assume that $(u,v)$ points toward~$q$.
Let $q'$ be the neighbor of~$q$ on~$P$.
If $uv$ is $x$\=/blocked and $uv$ $x$\=/guards $v$, 
then $q'q$ is $x$\=/blocked and $q'q$ $x$\=/guards~$q$ because $(u,v)$ points toward $q$.
If there exists a node~$z$ of $T^b$ with $z\le x$ such that $uv$ cuts $\V_z$, 
$(T,\L)$ is $z$-degenerate, and $\L(T,u,v)\subseteq\V_z$,
then $\L(T,q',q)\subseteq\V_z$ and $q'q$ cuts $\V_z$ because $(u,v)$ points toward $q$.
Therefore, $q$ is $x$\=/protected, which is a contradiction.
\end{proof}

For  convenience, let us define a few notations to denote a subspace associated to a node or an incidence.
For a $\V$-labeling $(\zeta,\L)$ of a subcubic tree $T$ and a node~$v$ of~$T$ of degree at most $2$, 
let \[\Sigma_\zeta(v)=\spn{ \{ \L(w): \text{$w$ is a
node in $\zeta(v)$ that is a leaf of $\zeta$} \} }.\]
For an incidence $(v,e)$ of~$T$
and $i\in\{1,\ldots,\abs{\zeta(v,e)}\}$, 
let \[\Sigma_\zeta(v,e,i)=\spn{ \{ \L(w): \text{$w$ is a 
nonroot leaf of the $i$th tree in $\zeta(v,e)$}\}}.\]
For an incidence $(u,uv)$ of~$T$
and $i\in\{0,1,\ldots,\abs{\zeta(u,uv)}\}$, 
let 
$L_{\zeta}(u,uv,i)$ be the sum of 
\begin{enumerate}[(i)]
\item 
all $\Sigma_\zeta(w)$ for all nodes $w$ in the component of $T-uv$ containing~$u$
with $\deg_T(w)\le 2$, 
\item 
all $\Sigma_\zeta(w,e,j)$ for all incidences $(w,e)$ in the component of $T-uv$ containing~$u$ 
and all $1\le j\le \abs{\zeta(w,e)}$,
and
\item 
all $\Sigma_{\zeta}(u,uv,j)$ for all $j\le i$,
\end{enumerate}
and let
$R_\zeta(u,uv,i)$ be the sum of 
\begin{enumerate}[(i)]
\item 
all $\Sigma_\zeta(u,uv,j)$ for all $i<j\le \abs{\zeta(u,uv)}$, and 
\item
$L_{\zeta}(v,uv,\abs{\zeta(v,uv)})$.
\end{enumerate}

\paragraph{Witness}
For a node~$x$ of $T^b$ and a vector space $B$ containing $B_x$, %
we say that 
a $\V_x$-labeling $(\zeta,\L)$ of a subcubic tree $T$ is a \emph{witness} of a $B$-namu $\Gamma=(T,\alpha,\lambda,U)$ in $B$
if all of the following hold.
\begin{enumerate}[(i)]
\item For each incidence $(u,uv)$ of~$T$ and $0\le i \le \abs{\zeta(u,uv)}$, 
 	\begin{align*}
 	\alpha(u,uv)&=L_{\zeta}(u,uv,0)\cap B=R_{\zeta}(v,uv,0)\cap B, \\
 	\lambda(uv) &\ge \dim (L_\zeta(u,uv,i) \cap R_{\zeta} (u,uv,i)).
 	\end{align*}
\item 
  $U=\spn{\V_x}\cap B$.
\item %
  $(\zeta,\L)$ is totally pure with respect to $(T^b,\L^b)$.
\item 
Every edge in $\zeta$ is $x$\=/protected in $(\T(\zeta),\L)$.
\end{enumerate}
A $\V_x$-labeling $(\zeta,\L)$ of~$T$ is said to be \emph{$k$-safe} 
if  
  \[
    \dim (\spn{L_t}\cap \spn{\V_x-L_t}) + \dim B_x - \dim (\spn{\V_x-L_t}\cap B_x) \le k
  \]
  for every nonroot node $t$ in $\zeta$,
  where 
  $L_t=\{ \L(w): w$ is a leaf that is a descendant of~$T$ or $w=t\}$.
  Let us call this inequality the \emph{$k$-safe inequality}.
  We note that if $\Gamma$ has width at most~$k$ and $(\zeta,\L)$  is a $k$-safe witness of $\Gamma$, 
  then $(\T(\zeta),\L)$ is a branch-decomposition of 
   width at most~$k$ because $\dim B_x - \dim (\spn{\V_x-L_t}\cap B_x) \ge 0$. 

\paragraph{Join}
For  two disjoint subspace arrangements $\V_1$ and $\V_2$,
let $(\zeta_1,\L_1)$ be a $\V_1$-labeling of a subcubic tree $T_1$ and let $(\zeta_2,\L_2)$ be a $\V_2$-labeling of a subcubic tree $T_2$.
Given a $(T_1,T_2)$-model $(\eta_1,\eta_2)$ in a tree $T$, we define the \emph{sum} $(\zeta,\L)$ of $(\zeta_1,\L_1)$ and $(\zeta_2,\L_2)$ by 
$(\eta_1,\eta_2)$, denoted as $(\zeta_1,\L_1)+_{(\eta_1,\eta_2)} (\zeta_2,\L_2)$, as follows. 
	For each node~$v$ of degree at most $2$ in~$T$, 
	let \[\zeta(v)=
	\begin{cases}
	\zeta_1(v_1) &\text{if $v=\bar\eta_1(v_1)$,}\\
	\zeta_2(v_2) &\text{if $v=\bar\eta_2(v_2)$.}
	\end{cases}\]
	For two sequences $s_1$ and $s_2$,
	let $s_1s_2$ denote the concatenation of $s_1$ and $s_2$.
	For each incidence $(v,e)$ in~$T$,
	let \[\zeta(v,e)=
	\begin{cases}
	\zeta_1(\vec \eta_1(v,e))\zeta_2(\vec\eta_2(v,e))
			&\text{if $v$ is a branch node in both $\eta_1$ and $\eta_2$,}\\
	\zeta_1(\vec\eta_1(v,e)) &\text{if $v$ is a branch node in only $\eta_1$,}\\
	\zeta_2(\vec\eta_2(v,e)) &\text{if $v$ is a branch node in only $\eta_2$,}\\
	\text{the empty sequence} &\text{otherwise.} %
	\end{cases}\]	
We define $\L$ as a function 
from the set of all leaves of $\zeta$ 
to $\V_1\cup \V_2$ 
such that
\[
\L(\ell)=
\begin{cases}
\L_1(\ell) & \text{if $\ell$ is a leaf of $\zeta_1$},\\
\L_2(\ell) & \text{if $\ell$ is a leaf of $\zeta_2$}.
\end{cases}
\]
Trivially  $(\zeta,\L)$ is a $\V_1\cup\V_2$-labeling of~$T$
and the set of edges in $\zeta$ is precisely the union of the set of edges in $\zeta_1$ or $\zeta_2$.

\begin{LEM}\label{lem:purejoinwitness}
Let $(T^b,\L^b)$ be a rooted branch-decomposition of a subspace arrangement $\V$ 
	and let $x$ be a node of $T^b$ with two children $x_1$ and $x_2$.
        Let $T_1$, $T_2$, $T$ be subcubic trees such that there exists a $(T_1,T_2)$-model $(\eta_1,\eta_2)$ in~$T$.
        Let $(\zeta_i,\L_i)$ be a $\V_{x_i}$-labeling of $T_i$ for $i=1,2$.
	If
	\begin{itemize}
	\item  $(\zeta_i,\L_i)$  is totally pure with respect to $(T^b,\L^b)$ for every $i=1,2$,
	\item every $y$\=/protected edge in $(\T(\zeta_i),\L_i)$ is in $\zeta_i$ for every $i=1,2$ and every node $y\leq x_i$ of $T^b$,
	\end{itemize} 
then the following hold.
\begin{enumerate}[(i)]
\item 
The sum $(\zeta,\L)$ of $(\zeta_1,\L_1)$ and $(\zeta_2,\L_2)$ by $(\eta_1,\eta_2)$ 
	is totally pure with respect to $(T^b,\L^b)$.
\item Every $y$\=/protected edge in $(\T(\zeta),\L)$ is in $\zeta$ for every node $y<x$ of $T^b$.
\end{enumerate}
\end{LEM}
\begin{proof}
For each $y<x$, every $y$\=/protected edge is in $\zeta_1$ or $\zeta_2$ and by the construction of $\zeta$, such an edge is in $\zeta$ as well. 

It remains to prove that $(\T(\zeta),\L)$ is totally pure with respect to $(T^b,\L^b)$. 
Since  $(\T(\zeta),\L)$ is trivially $x$\=/pure, it suffices to prove that $(\T(\zeta),\L)$ 
is $y$\=/pure for each node $y<x$ of $T^b$.
We may assume $y\leq x_1$ by symmetry.
We rely on the fact that $(\T(\zeta),\L)$ is $y$\=/degenerate if and only if $(\T(\zeta_1),\L_1)$ is 
$y$\=/degenerate by the definition of $y$\=/degenerate branch-decompositions.

Let us first consider the case that $(\T(\zeta),\L)$ is not
$y$\=/degenerate. Let $uv$ be a $y$\=/blocked edge of $(\T(\zeta),\L)$. We
may assume that $uv$ $y$\=/guards $v$. 
If $uv$ is in $\zeta$, then it is in $\zeta_1$ or $\zeta_2$ and so $(u,v)$
is not $y$\=/mixed because both $(\T(\zeta_1),\L_1)$ and
$(\T(\zeta_2),\L_2)$ are totally pure.
Thus we may assume that $uv$ is not in $\zeta$. 
Then $uv$ corresponds to a path $P$ from $u$ to $v$ in~$T$.
Let $e$ be an edge incident with $u$ in $P$ and 
let $u'v'=\eta_1(e)\in E(\T(\zeta_1))$ where $\vec\eta_1(u,e)=(u',u'v')$.
Because $(\T(\zeta_1),\L_1)$ is totally pure with respect 
to $(T^b,\L^b)$
and $(u',v')$ is $y$\=/protected in $(\T(\zeta_1),\L_1)$, 
it follows that $u'v'$ is in $\zeta_1$ by the assumption. However,
this contradicts our assumption that $uv$ is not in $\zeta$.

Now it remains to consider the case that $(\T(\zeta),\L)$ is $y$\=/degenerate. 
Then $(\T(\zeta_1),\L_1)$ is $y$\=/degenerate and so it is $y$\=/disjoint.
Then every edge of $T(\zeta_1)$ cutting $\V_z$ is $y$\=/protected and therefore it is in $\zeta_1$ by the assumption. So, $(\T(\zeta),\L)$ is $y$\=/disjoint by its construction.
\end{proof}
\begin{LEM}\label{lem:purejoinksafe}
Let $(T^b,\L^b)$ be a rooted branch-decomposition of a subspace arrangement $\V$ 
	and let $x$ be a node of $T^b$ with two children $x_1$ and $x_2$.
        Let $T_1$, $T_2$, $T$ be subcubic trees such that there exists a $(T_1,T_2)$-model $(\eta_1,\eta_2)$ in~$T$.
        Let $(\zeta_i,\L_i)$ be a $\V_{x_i}$-labeling of $T_i$ for $i=1,2$.
	If
  $(\zeta_1,\L_1)$ and $(\zeta_2,\L_2)$ are $k$-safe,
  then so is the sum $(\zeta,\L)$ of $(\zeta_1,\L_1)$ and $(\zeta_2,\L_2)$ by $(\eta_1,\eta_2)$.
\end{LEM}
\begin{proof}
  By symmetry between $\V_{x_1}$ and $\V_{x_2}$, it is enough to prove that
  for each $L\subseteq \V_{x_1} \subseteq \V_{x}$,
  \begin{multline}\label{eq:ksafeeq}
    \dim (\spn{L}\cap \spn{\V_{x_1}-L})
    +\dim B_{x_1} -\dim (\spn{\V_{x_1}-L} \cap B_{x_1}) \\
    =
    \dim (\spn{L}\cap \spn{\V_{x}-L})
    +\dim B_x -\dim (\spn{\V_x-L} \cap B_x).
  \end{multline}
  First observe that $\spn{\V_{x_1}-L}\cap B_{x_1}=\spn{\V_{x_1}-L}\cap\spn{\V-\V_{x_1}}$
  and $\spn{\V_{x}-L}\cap B_x= \spn{\V_x-L}\cap \spn{\V-\V_x}$.
  Now, we claim that both sides of \eqref{eq:ksafeeq} are equal to 
  \[\dim (\spn{L}\cap \spn{\V-L}).\]
  To see that, we use Lemma~\ref{lem:dim-join}.
  First we take $\spn{\V-L}=\spn{\V_{x_1}-L}+\spn{\V-\V_{x_1}}$. Then
  by Lemma~\ref{lem:dim-join}, 
  \begin{align*}
      \dim (\spn{L}\cap \spn{\V-L})
      &=
                                 \dim ((\spn{L}+0)\cap (\spn{\V_{x_1}-L}+\spn{\V-\V_{x_1}}))
                                 \\
    &=\dim (\spn{L}\cap \spn{\V_{x_1}-L})   - \dim (\spn{\V_{x_1}-L}\cap
      \spn{\V-\V_{x_1}})
      \\
      &\quad
      + \dim ((\spn{L}+\spn{\V_{x_1}-L})\cap \spn{\V-\V_{x_1}})\\
    &= \dim (\spn{L}\cap \spn{\V_{x_1}-L})  
      +\dim B_{x_1} - \dim (\spn{\V_{x_1}-L} \cap B_{x_1}).
  \end{align*}
  Now we take $\spn{\V-L}=\spn{\V_x-L}+\spn{\V-\V_x}$. By the same
  method as above with Lemma~\ref{lem:dim-join}, we deduce that 
  \begin{align*}
      \dim (\spn{L}\cap \spn{\V-L})
      &=
                                 \dim ((\spn{L}+0)\cap (\spn{\V_{x}-L}+\spn{\V-\V_{x}}))
                                 \\
    &=\dim (\spn{L}\cap \spn{\V_{x}-L})   - \dim (\spn{\V_{x}-L}\cap
      \spn{\V-\V_{x}})
      \\
      &\quad
      + \dim ((\spn{L}+\spn{\V_{x}-L})\cap \spn{\V-\V_{x}})\\
    &= \dim (\spn{L}\cap \spn{\V_{x}-L})  
      +\dim B_{x} - \dim (\spn{\V_{x}-L} \cap B_{x}).\qedhere  
  \end{align*}
\end{proof}

We prove a lemma on boundary spaces.
\begin{LEM}\label{lem:boundaryspace}
	Let $(T^b,\L^b)$ be a rooted branch-decomposition of a subspace arrangement $\V$ 
	and let $x$ be a node of $T^b$ with two children $x_1$ and $x_2$.
Let $B_{x_i}=\spn{\V_{x_i}}\cap\spn{\V-\V_{x_i}}$ for $i=1,2$.
	Then
\begin{align*}
\spn{\V_{x_1}} \cap (B_{x_1}+B_{x_2}) &= B_{x_1}, \\
\spn{\V_{x_2}} \cap (B_{x_1}+B_{x_2}) &= B_{x_2}, \\
(\spn{\V_{x_1}}+B_{x_1}+B_{x_2})\cap (\spn{\V_{x_2}}+B_{x_1}+B_{x_2})&= B_{x_1}+B_{x_2}.
\end{align*}
\end{LEM}
\begin{proof}

Let us first prove that $\spn{\V_{x_i}}\cap(B_{x_1}+B_{x_2})=B_{x_i}$. By symmetry, it is enough to prove for $i=1$.
Since $B_{x_1}$ is a subspace of $\spn{\V_{x_1}}$, 
\[B_{x_1}=\spn{\V_{x_1}}\cap B_{x_1} \subseteq  \spn{\V_{x_1}}\cap(B_{x_1}+B_{x_2}).\]
As $B_{x_1}\subseteq \spn{\V-\V_{x_1}}$ and $B_{x_2}\subseteq \spn{\V_{x_2}}\subseteq \spn{\V-\V_{x_1}}$, we have 
\[
\spn{\V_{x_1}}\cap(B_{x_1}+B_{x_2})
\subseteq 
\spn{\V_{x_1}}\cap \spn{\V-\V_{x_1}}=B_{x_1}.
\]
This proves that $\spn{\V_{x_i}}\cap(B_{x_1}+B_{x_2})=B_{x_i}$ for $i=1,2$.

Now let us prove the last equation.
Since $(\spn{\V_{x_1}}+B_{x_1}+B_{x_2})\cap (\spn{\V_{x_2}}+B_{x_1}+B_{x_2})\supseteq (B_{x_1}+B_{x_2})$,
the proof is completed by showing that the dimensions of both sides are equal.
Note that $ B_{x_1} \cap B_{x_2} =  \spn{\V_{x_1}} \cap \spn{\V-\V_{x_1}} \cap \spn{\V_{x_2}} \cap \spn{\V-\V_{x_2}}= \spn{\V_{x_1}}\cap \spn{\V_{x_2}}$ 
and therefore
\begin{align*}
&\dim ((\spn{\V_{x_1}}+B_{x_1}+B_{x_2})\cap (\spn{\V_{x_2}}+B_{x_1}+B_{x_2}))\\
&= \dim ((\spn{\V_{x_1}}+B_{x_2})\cap (\spn{\V_{x_2}}+B_{x_1})) \\
&=  \dim \spn{\V_{x_1}} + \dim B_{x_2} - \dim (\spn{\V_{x_1}}\cap B_{x_2}) \\
&\quad +  \dim \spn{\V_{x_2}} + \dim B_{x_1} - \dim (\spn{\V_{x_2}}\cap B_{x_1}) 
-\dim (\spn{\V_{x_1}}+\spn{\V_{x_2}})\\
&= \dim \spn{\V_{x_1}} + \dim \spn{\V_{x_2}} -\dim (\spn{\V_{x_1}}\cap\spn{\V_{x_2}})
- \dim(\spn{\V_{x_1}}+\spn{\V_{x_2}})  \\
&\quad+ \dim B_{x_1} + \dim B_{x_2} - \dim (\spn{\V_{x_1}}\cap\spn{\V_{x_2}}) \\
&=\dim (B_{x_1} + B_{x_2}). \qedhere
\end{align*}
\end{proof}

\begin{PROP}\label{prop:joinwitness} 
  Let $(T^b,\L^b)$ be a branch-decomposition 
  of a subspace arrangement $\V$ 
  and let $x$ be a node of $T^b$ with two children $x_1$ and $x_2$.
  For each $i=1,2$, let 
  $\Gamma_i=(T_i,\alpha_i,\lambda_i,U_i)$ be a $B_{x_i}$-namu
  and 
  $(\zeta_i,\L_i)$ be a $\V_{x_i}$-labeling of $T_i$
  witnessing $\Gamma_i$ in~$B_{x_i}$.
  Let $\Gamma$ be a $(B_{x_1}+B_{x_2})$-namu that is the 
  sum of $\Gamma_1$ and $\Gamma_2$ by a $(T_1,T_2)$-model $(\eta_1,\eta_2)$.
  Let $(\zeta,\L)$ be the sum of $(\zeta_1,\L_1)$ and $(\zeta_2,\L_2)$ by $(\eta_1,\eta_2)$.

  If every $y$\=/protected edge in $(\T(\zeta_i),\L_i)$ is in $\zeta_i$ for each $i=1,2$ and 
  every node $y\le x_i$ of $T^b$,
  then 
  $(\zeta,\L)$
  is a witness of $\Gamma$ in $B_{x_1}+B_{x_2}$
  and 
  every $y$\=/protected edge in $(\T(\zeta),\L)$ is in $\zeta$ for every
  node $y<x$ of $T^b$.
  In addition, if both $(\zeta_1,\L_1)$ and $(\zeta_2,\L_2)$ are
  $k$-safe, then so is $(\zeta,\L)$.
\end{PROP}
\begin{proof}
  Let $\Gamma=(T,\alpha,\lambda,U)$  and  $B=B_{x_1}+B_{x_2}$.
  Lemma~\ref{lem:purejoinksafe} already showed that if both $(\zeta_1,\L_1)$ and $(\zeta_2,\L_2)$ are
  $k$-safe, then so is $(\zeta,\L)$. 
  Lemma~\ref{lem:purejoinwitness}
  implies that $(\zeta,\L)$  is totally pure with respect to
  $(T^b,\L^b)$ 
  and  every $y$\=/protected edge of $(\T(\zeta),\L)$ is in $\zeta$ for every
  node $y<x$ of $T^b$.

  Now it remains to show that $(\zeta,\L)$ is a witness of $\Gamma$ in
  $B_{x_1}+B_{x_2}$.
  We need to check four properties (i)--(iv) in the definition of a witness.
  First of all, we have seen that (iii) holds.
  For (ii), by Lemmas~\ref{lem:boundaryspace} and~\ref{lem:join-key}, %
  \begin{align*}
    U&=U_1+U_2=(\spn{\V_{x_1}}\cap B_{x_1})+(\spn{\V_{x_2}}\cap B_{x_2}) \\
     &=(\spn{\V_{x_1}}+\spn{\V_{x_2}})\cap (B_{x_1}+B_{x_2})=\spn{\V_{x}}\cap B,
  \end{align*}
  and so  $(\zeta,\L)$ and $\Gamma$ satisfy (ii).

	To see (i), 
        observe that  for each incidence $(u,e)$ of~$T$, 
        \[ 
          L_{\zeta} (u,e,0)= L_{\zeta_1}(\vec \eta_1(u,e), 0) +L_{\zeta_2}(\vec \eta_2(u,e),0) .\]
        By Lemmas~\ref{lem:boundaryspace} and \ref{lem:join-key}, we deduce that 
        \[
          L_{\zeta} (u,e,0)\cap B
          = L_{\zeta_1}(\vec \eta_1(u,e), 0)\cap B
          +L_{\zeta_2}(\vec \eta_2(u,e),0)\cap B.\]
        As $L_{\zeta_i}(\vec \eta_i(u,e), 0)\subseteq \spn{\V_{x_i}}$
        and $\spn{\V_{x_i}}\cap B=B_{x_i}$ for each $i=1,2$, we deduce that 
        \[
          L_{\zeta} (u,e,0)\cap B
          = L_{\zeta_1}(\vec \eta_1(u,e), 0)\cap B_{x_1}
          +L_{\zeta_2}(\vec \eta_2(u,e),0)\cap B_{x_2},\]
        and since $(\zeta_i,\L_i)$ is a witness of $\Gamma_i$ for $i=1,2$ and 
        $\Gamma=\Gamma_1+_{(\eta_1,\eta_2)} \Gamma_2$, we have $L_{\zeta}(u,e,0)\cap B=  \alpha_1(\vec \eta_1(u,e)) + \alpha_2(\vec \eta_2(u,e)) =\alpha(u,e)$.
	Similarly, we have $R_{\zeta} (v,e,0)\cap B = \alpha(u,e)$.
	
	For $i\in\{0,\ldots,\abs{\zeta(u,e)}\}$, by the construction,
	there exist $i_1\in\{0,\ldots,\abs{\zeta_1(\vec \eta_1(u,e))}\}$ 
	and $i_2\in \{0,\ldots,\abs{\zeta_2(\vec \eta_2(u,e))}\}$ such that 
	$L_\zeta (u,e,i) = L_{\zeta_1} (\vec \eta_1(u,e),i_1) + L_{\zeta_2} (\vec \eta_2(u,e),i_2)$, and
	$R_\zeta (u,e,i) = R_{\zeta_1} (\vec \eta_1(u,e),i_1) + R_{\zeta_2} (\vec \eta_2(u,e),i_2)$.
        For $j=1,2$, let 
        \[
          L_j=L_{\zeta_j} (\vec \eta_j(u,e),i_j)
        \text{ and }R_j= R_{\zeta_j} (\vec \eta_j(u,e),i_j).
        \]
        By Lemma~\ref{lem:dim-join}, 
        \begin{align*}
          \lefteqn{\dim (L_\zeta (u,e,i)\cap R_\zeta (u,e,i)) 
          =\dim((L_1+L_2)\cap (R_1+R_2))}\\
          &= \dim (L_1\cap R_1)+\dim (L_2\cap R_2)
          -\dim (L_1\cap L_2) -\dim (R_1\cap R_2)
          \\&\quad
          +\dim ((L_1+R_1)\cap (L_2+R_2)).
        \end{align*}
        For $j=1,2$, as $(\zeta_j,\L_j)$ is a witness of $\Gamma_j$, we have 
        $\dim (L_j\cap R_j)\le \lambda_j(\eta_j(e))$ and 
        $U_j=\spn{\V_{x_j}}\cap B_{x_j}$.
        Also note that $L_1\cap L_2=(L_1\cap B_{x_1})\cap (L_2\cap B_{x_2})= \alpha_1(\vec\eta_1(u,e))\cap \alpha_2(\vec\eta_2(u,e))$
        and similarly $R_1\cap R_2=\alpha_1(\vec\eta_1(v,e))\cap \alpha_2(\vec\eta_2(v,e))$
        where $v$ is an end of $e$ other than $u$.
        Furthermore $(L_1+R_1)\cap (L_2+R_2)\subseteq B_{x_j}$ for each $j=1,2$,
        and so $(L_1+R_1)\cap (L_2+R_2)= (L_1+R_1)\cap B_{x_1}
        \cap (L_2+R_2)\cap B_{x_2} = U_1\cap U_2$.
        Thus, we deduce that 
        $
         \dim (L_\zeta (u,e,i)\cap R_\zeta (u,e,i))\le 
         \lambda_1(\eta_1(e))+\lambda_2(\eta_2(e)) -\dim (\alpha_1(\vec \eta_1(u,e))\cap\alpha_2(\vec \eta_2(u,e)))
         -\dim (\alpha_1(\vec\eta_1(v,e))\cap\alpha_2(\vec\eta_2(v,e))) + \dim (U_1 \cap U_2)
         = \lambda(e)$.
      This proves (i).

	Finally let us  verify (iv). We claim that every edge in $\zeta$ is an $x$\=/protected edge in $(\T(\zeta),\L)$. If $(\T(\zeta),\L)$ is $x$\=/degenerate, every edge of $(\T(\zeta),\L)$ 
	is $x$\=/protected and the claim trivially holds. Hence we may assume that $(\T(\zeta),\L)$ is not $x$\=/degenerate. 
	Note that every edge $e$ in $\zeta$ is either in $\zeta_1$ or 
	$\zeta_2$ and thus $e$ is $x_i$-protected in $(\T(\zeta_i),\L)$ for $i=1$ or $2$.
	By symmetry, we may assume that $i=1$. 

        If $e$ is $x_1$-blocked by an $x_1$-blocking path or an $x_1$-guarding edge 
	in $(\T(\zeta_1),\L_1)$, %
	then by construction of $(\T(\zeta),\L)$, it can be easily seen that $e$ is either $x$\=/blocked by an $x$\=/blocking path or an $x$\=/guarding edge in $(\T(\zeta),\L)$ by Lemma~\ref{lem:x-to-y}.
	Therefore, $e$ is $x$\=/protected.

	If there exists $y\le x_1$ such that $(T(\zeta),\L)$ is $y$\=/degenerate, $e=uv$ cuts $\V_y$ in $(\T(\zeta_1),\L_1)$ and $\L_1(\T(\zeta_1),u,v)\subseteq \V_y$, then it is clear that 
	$e=uv$ cuts $\V_y$ in $(\T(\zeta),\L)$ and $\L(\T(\zeta),u,v)\subseteq \V_y$ from the construction of $(\T(\zeta),\L)$ because $(\T(\zeta),\L)$ is totally pure with respect to $(T^b,\L^b)$. Hence, $e$ is $x$\=/protected. 
        This completes the proof that $(\zeta,\L)$ is a witness of $\Gamma$ in $B_{x_1}+B_{x_2}$.
\end{proof}

\paragraph{Shrink} 
It is easy to obtain a witness of $\Gamma|_{B'}$ in $B'$ if we have a witness of a $B$-namu $\Gamma$ in $B$ from the following lemma.
\begin{PROP}\label{prop:shrinkwitness}
  Let $(T^b,\L^b)$ be a branch-decomposition of a subspace arrangement $\V$ and let $x$ be a node of $T^b$.
  Let $(\zeta,\L)$ be a witness of a $B$-namu $\Gamma$ in $B$. 
  If $B'\subseteq B$, then 
  $(\zeta,\L)$ is also a witness of $\Gamma|_{B'}$ in $B'$
\end{PROP}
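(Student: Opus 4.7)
The plan is simply to unfold the definition of $\Gamma|_{B'} = (T,\alpha',\lambda',U')$ and check each of the four conditions (i)--(iv) in the definition of a witness against $(\zeta,\L)$ and $B'$. Here $\alpha'(v,e) = \alpha(v,e) \cap B'$, $\lambda'(e) = \lambda(e)$, and $U' = U \cap B'$; the tree $T(\Gamma|_{B'})$ is the same as $T(\Gamma)$, so the canonical branch-decomposition $(\T(\zeta),\L)$ is unchanged.

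For condition (i), fix an incidence $(u,uv)$ of $T$. Since $(\zeta,\L)$ is a witness of $\Gamma$ in $B$, we have $\alpha(u,uv) = L_\zeta(u,uv,0)\cap B$. Intersecting with $B'$ and using $B' \subseteq B$,
\[
\alpha'(u,uv) = \alpha(u,uv) \cap B' = L_\zeta(u,uv,0) \cap B \cap B' = L_\zeta(u,uv,0) \cap B',
\]
and the analogous identity $\alpha'(u,uv) = R_\zeta(v,uv,0) \cap B'$ follows identically. The inequality $\lambda'(uv) = \lambda(uv) \ge \dim L_\zeta(u,uv,i)\cap R_\zeta(u,uv,i)$ is immediate since $\lambda'=\lambda$ and the right-hand side does not involve $B$. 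Condition (ii) is the one-line computation
\[
U' = U \cap B' = \spn{\V_x} \cap B \cap B' = \spn{\V_x} \cap B',
\]
again using $B' \subseteq B$ and the fact that $(\zeta,\L)$ witnesses $\Gamma$ in $B$.

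Conditions (iii) and (iv) are properties of the pair $(\zeta,\L)$ and the fixed rooted branch-decomposition $(T^b,\L^b)$ alone: total purity with respect to $(T^b,\L^b)$ and $x$-protectedness of each edge in $\zeta$ inside $(\T(\zeta),\L)$ make no reference to the ambient subspace in which the witness lives. Hence they transfer verbatim from the hypothesis. This gives $(\zeta,\L)$ as a witness of $\Gamma|_{B'}$ in $B'$.

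There is no real obstacle; the only subtlety is the implicit assumption that $B_x \subseteq B'$ (required by the definition of a witness in $B'$), which must either be imposed as part of the hypothesis or inherited from $B_x \subseteq B$ together with the context in which the proposition is applied, so the verification above is exhaustive.
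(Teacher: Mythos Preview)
Your proposal is correct and follows exactly the paper's approach: the paper's proof simply reads ``We only need to check (i) and (ii) in the definition of a witness and it is trivial,'' and you have spelled out precisely those two verifications while correctly observing that (iii) and (iv) are independent of the ambient space. Your closing remark about the implicit hypothesis $B_x\subseteq B'$ is a fair observation; in the paper the proposition is only invoked with $B'=B_x$, so the issue does not arise in practice.
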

\begin{proof}
  We only need to check (i) and (ii) in the definition of a witness and it is trivial.
\end{proof}

\paragraph{Trim}
Let $(\zeta,\L)$ be a $\V_0$-labeling of a tree $T$ having at least one edge.

For an edge $e$ of $\T(\zeta)$ that is not in $\zeta$, %
we say $(\zeta',\L)$ is the \emph{$\V_0$-labeling induced by an edge $e$ from $(\zeta,\L)$}
if there is a tree $T'$ with exactly one node~$v$ such that
$\zeta'$ is a labeling of $T'$ where 
$\zeta'(v)$ is the rooted binary tree obtained from $\T(\zeta)$ by subdividing~$e$ 
to create a new degree-$2$ node~$v$, which will be the root of $\zeta'(v)$.

For a subtree $T'$ of~$T$ with $\abs{E(T')}\ge 1$, 
we say that $(\zeta',\L)$ is the \emph{$\V_0$-labeling induced by $T'$ from $(\zeta,\L)$}
if it satisfies all of the following.
\begin{itemize}
\item For each incidence $(v,e)$ of $T'$, 
  $\zeta'(v,e)=\zeta(v,e)$.
\item
  For each node~$v$ of $T'$ whose degree is at most~$2$,
  we define $\zeta'(v)=\zeta(v)$ if $\zeta(v)$ contains only one node, 
  and otherwise, 
  we define $\zeta'(v)$
  as the component of $\T(\zeta)-E'$ containing~$v$,
  where $E'$ is the set of all edges of $\T(\zeta)$ corresponding to an edge in $T'$. The set $E'$ is well defined because 
  $\T(\zeta)$ contains a subdivision of~$T$ and 
  $T'$ is a subtree of~$T$. The graph $\T(\zeta)-E'$ has a node~$v$ because $\zeta(v)$ has at least two nodes and so $v$ was not smoothed in the construction of $\T(\zeta)$.

\end{itemize}

Observe that the $\V_0$-labeling induced by $T'$ when $E(T')=\{e\}$ is different from the $\V_0$-labeling induced by an edge $e$. 
The latter will be applied in order to represent an $x$\=/degenerate branch-decomposition $(T,\L)$ with a $\V_0$-labeling of a single node. 
By definition, if $(\zeta',\L)$ is induced by a subtree $T'$ from a $\V_0$-labeling $(\zeta,\L)$, then 
$(\zeta',\L)$ is a $\V_0$-labeling of $T'$
and
$\T(\zeta)=\T(\zeta')$.
Furthermore, 
for each node~$v$ of~$T$ of degree at most $2$, 
$\zeta(v)$ is a subtree of some $\zeta'(w)$ for some $w\in V(T')$
and 
for each incidence $(v,e)$ of~$T$, 
either there exists a node $w$ of $T'$ 
such that 
all rooted binary trees in $\zeta(v,e)$
are subtrees of some $\zeta'(w)$
or 
there exists an incidence $(w,f)$ of $T'$
such that 
all rooted binary trees in $\zeta(v,e)$
are subtrees of rooted binary trees in $\zeta'(w,f)$.

\begin{PROP}\label{prop:trimwitness}
	Let $(T^b,\L^b)$ be a branch-decomposition of a subspace arrangement $\V$ and let $x$ be a node of $T^b$.
        Let $(\zeta,\L)$ be a witness of a $B_x$-namu $\Gamma$ in $B_x$ such that 
	all $y$\=/protected edges of $(\T(\zeta),\L)$ are in $\zeta$ for every node $y<x$ of $T^b$.
        Let $(\zeta',\L)$ be the $\V_x$-labeling induced by $T(\trim(\Gamma))$ if $\Gamma$ has no degenerate edge
        and a $\V_x$-labeling induced by an $x$\=/degenerate edge of $(\T(\zeta),\L)$ that is not in $\zeta$ if $\Gamma$ has a degenerate edge.

        Then $(\zeta',\L)$
	is a witness of $\trim(\Gamma)$ in $B_x$
	and 
	all $y$\=/protected edges of $(\T(\zeta'),\L)$ are in $\zeta'$ for every node $y$ of $T^b$ with $y\le x$.
        Furthermore, if $(\zeta,\L)$ is $k$-safe and $\Gamma$ is a $k$-safe extension of $\trim(\Gamma)$, then $(\zeta',\L)$ is $k$-safe.
\end{PROP}
\begin{proof}
        First suppose that $\Gamma$ has a degenerate edge and
        $(\zeta',\L)$ is a $\V_x$-labeling of $T(\trim(\Gamma))$ induced by $e$ from $(\zeta,\L)$ for 
        an $x$\=/degenerate edge $e$ of $(\T(\zeta),\L)$ that is not in $\zeta$.
        Note that such $e$ exists because $\Gamma$ has a degenerate edge and $(\zeta,\L)$ is a witness of $\Gamma$ in $B_x$.
        Then $(\T(\zeta'),\L)$ is a rooted branch-decomposition
        obtained from $(\T(\zeta),\L)$ by subdividing~$e$.
        Let $e_1$, $e_2$ be the edges of $\T(\zeta')$ obtained from $e$ by subdividing~$e$.
        As $T(\trim(\Gamma))$ has only one node, (i) in the definition of a witness trivially holds for $(\zeta',\L)$. Also (ii) is trivial.
        As $(\zeta,\L)$ is totally pure with respect to $(T^b,\L^b)$,
        so is $(\zeta',\L)$, proving (iii) in the definition of a witness.
        We claim that $(\T(\zeta'),\L)$ is $x$\=/degenerate. If not, then in $(\T(\zeta),\L)$, $e$ cuts $\V_y$ for some $y<x$ where $(\T(\zeta),\L)$ is $y$\=/degenerate.
        However, this means that $e$ is $y$\=/protected because 
        $(\T(\zeta),\L)$ is $y$\=/pure. This contradicts the assumption that every $y$\=/protected edge in $(\T(\zeta),\L)$ is 
        in $\zeta$.
        This proves the claim and we conclude that every edge of $\T(\zeta')$ is $x$\=/protected, and so (iv) holds
and furthermore every $x$\=/protected edge of $(T(\zeta'),\L)$ is in $\zeta$ trivially.
        Thus, we proved that if $\Gamma$ has a degenerate edge, then $(\zeta',\L)$ is a witness of $\trim(\Gamma)$ and every $y$\=/protected edge of $(\T(\zeta'),\L)$ is in $\zeta'$ for all $y\le x$.

        Now let us assume that $\Gamma$ has no degenerate edge.
        We claim that $(\zeta',\L)$ is a witness of $\trim(\Gamma)$.
        By definition, $T(\zeta)=T(\zeta')$ and so 
        $(\zeta',\L)$ is totally pure, proving (iii) for $(\zeta',\L)$.
        By the definition of the trim operation, (i) and (ii) hold.

	Every edge $e$ in $\zeta'$ either is in $\zeta$
	or is in~$T$ that is removed by trimming $\Gamma$.
	For the former case, 
        it is $x$\=/protected because $(\zeta,\L)$ is a witness of $\Gamma$. 
	For the latter case, a removed edge is blocked in $\Gamma$ and so 
        it is $x$\=/protected. 
        This proves that $(\zeta',\L)$ is a witness of $\trim(\Gamma)$ in $B_x$.
        
        It remains to show that if $(\zeta,\L)$ is $k$-safe and $\Gamma$ is a $k$-safe extension of $\trim(\Gamma)$, then 
        $(\zeta',\L)$ is $k$-safe.
        This follows trivially from the definition of $k$-safe extensions.
\end{proof}

\paragraph{Comparison}
For a $\V_0$-labeling $(\zeta,\L)$ of~$T$, 
a subdivision $T'$ of~$T$ induces a $\V_0$-labeling $(\zeta',\L)$ as follows.
For each node~$v$ of $T'$ having degree at most $2$, 
we define $\zeta'(v)=\zeta(v)$ if $v$ is present in~$T$
and otherwise $\zeta'(v)$ as the tree having only one vertex~$v$.
For each edge $uv$ of~$T$,
let $P=v_0v_1\cdots v_p$ be a path from $u$ to $v$ in $T'$.
We define
$\zeta'(v_0,v_0v_1)= \zeta(u,uv)$
and $\zeta'(v_p,v_{p-1}v_p)= \zeta(v,uv)$
and for all other incidences $(w,e)$ in $P$, $\zeta'(w,e)$ is the empty sequence.
We simply say that such a $\V_0$-labeling $(\zeta',\L)$ is a \emph{subdivision} of $(\zeta,\L)$ induced by $T'$.

From the definition of a subdivision, it is easy to observe the following.
\begin{LEM}\label{lem:subdivisionwitness}
  Let $(\zeta,\L)$ be a $\V_0$-labeling of~$T$
  and let $T'$ be a subdivision of~$T$.
  If $(\zeta',\L)$ is a subdivision of $(\zeta,\L)$ induced by $T'$, 
  then $(\T(\zeta'),\L)=(\T(\zeta),\L)$
  and the set of edges in $\zeta$ is equal to the set of edges in $\zeta'$.
\end{LEM}

Conversely, 
if $(\zeta',\L)$ is a $\V_0$-labeling of $T'$
and $T'$ is a subdivision of $T''$,
then $(\zeta',\L)$ induces a $\V_0$-labeling $(\zeta'',\L)$ by $T''$ as follows.
First we orient edges of $T''$ arbitrarily. 
For each node~$v$ of $T''$, we define $\zeta''(v)=\zeta'(v)$.
For each edge $uv$ of $T''$, if $uv$ is oriented toward $v$, then we define $\zeta''(v,uv)$ to be the empty sequence
and we define $\zeta''(u,uv)$ as follows.
Let $P$ be the path $v_0v_1\cdots v_p$ in $T'$ from $u$ to $v$. 
Then $\zeta''(u,uv)$ is the concatenation of the following for $i=1,2,\ldots,p$ in the order:
\begin{itemize}
\item $\zeta'(v_{i-1},v_{i-1}v_i)(\zeta'(v_i,v_{i-1}v_i))^{-1}$,
\item a sequence $(\zeta'(v_i))$ of length $1$ 
  if $i<p$ and $\zeta'(v_i)$ has more than one node.
\end{itemize}
Here we use the notation $(\zeta'(v_i,v_{i-1}v_i))^{-1}$ to denote the inverted sequence.
Again, we can easily observe the following lemma.

\begin{LEM}\label{lem:subdivisionwitness2}
  Let $(\zeta,\L)$ be a $\V_0$-labeling of $T'$
  and let $T'$ be a subdivision of $T''$.
  If $(\zeta',\L)$ induces a $\V_0$-labeling $(\zeta,\L)$ by $T''$,
  then $(\T(\zeta'),\L)=(\T(\zeta),\L)$
  and the set of edges in $\zeta$ is equal to the set of edges in $\zeta'$.
\end{LEM}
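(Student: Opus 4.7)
The plan is to verify directly from the definitions that the canonical branch-decomposition of the ``un-subdivided'' labeling coincides with that of the original, and that the sets of edges appearing in the two labelings are the same. The construction of $(\zeta'',\L)$ from $(\zeta',\L)$ essentially repackages the trees sitting along each path of $T'$ that collapses to a single edge of $T''$; the job is to check that forming the canonical branch-decomposition commutes with this repackaging.

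First I would fix an edge $uv$ of $T''$ and the corresponding path $P=v_0v_1\cdots v_p$ in $T'$ from $u=v_0$ to $v=v_p$, and analyze each construction restricted to $P$. By definition, $\zeta''(u,uv)$ is the concatenation, for $i=1,\dots,p$, of $\zeta'(v_{i-1},v_{i-1}v_i)(\zeta'(v_i,v_{i-1}v_i))^{-1}$ with an inserted singleton entry $(\zeta'(v_i))$ whenever $i<p$ and $|V(\zeta'(v_i))|>1$, while $\zeta''(v,uv)$ is empty. Thus, in $\T(\zeta'')$, the edge $uv$ is subdivided $|\zeta''(u,uv)|$ times and the rooted binary trees listed in $\zeta''(u,uv)$ are attached at the successive subdividing nodes. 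On the other hand, $\T(\zeta')$ subdivides each edge $v_{i-1}v_i$ of $P$ by $|\zeta'(v_{i-1},v_{i-1}v_i)|+|\zeta'(v_i,v_{i-1}v_i)|$ times, attaches the corresponding trees, attaches $\zeta'(v_i)$ at each internal $v_i$, and then smooths any internal $v_i$ with $\deg_{T'}(v_i)=2$ and $|V(\zeta'(v_i))|=1$ via step (iv) in the construction of $\T$.

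Second, I would match these local constructions one-for-one. The pairs $\zeta'(v_{i-1},v_{i-1}v_i)$ and $\zeta'(v_i,v_{i-1}v_i)^{-1}$ appearing in $\zeta''(u,uv)$ contribute exactly the same subdividing nodes with the same attached trees as the construction of $\T(\zeta')$ on edge $v_{i-1}v_i$. For an internal $v_i$ with $|V(\zeta'(v_i))|>1$, the singleton entry $(\zeta'(v_i))$ inserted into $\zeta''(u,uv)$ creates one subdividing node carrying $\zeta'(v_i)$, matching the attachment of $\zeta'(v_i)$ to $v_i$ in $\T(\zeta')$. For an internal $v_i$ with $|V(\zeta'(v_i))|=1$, the singleton is \emph{omitted} from $\zeta''(u,uv)$, which produces no extra subdividing node on $uv$; this corresponds precisely to the smoothing of $v_i$ by step (iv) in the construction of $\T(\zeta')$. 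For the endpoints $u=v_0$ and $v=v_p$, which are nodes of $T''$, we have $\zeta''(u)=\zeta'(u)$ and $\zeta''(v)=\zeta'(v)$, so the trees attached at $u$ and $v$ agree in both canonical branch-decompositions. Doing this matching on every edge of $T''$ simultaneously, and noting that the bijection $\L$ is the same on both sides, yields $\T(\zeta')=\T(\zeta'')$.

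Finally, for the equality of the sets of edges in $\zeta'$ and $\zeta''$: edges in a labeling are precisely the edges of all rooted binary trees $\zeta(v)$ for nodes $v$ of degree at most $2$ and of all rooted binary trees appearing in the sequences $\zeta(v,e)$. By the construction of $\zeta''$, every such rooted binary tree of $\zeta'$ reappears in $\zeta''$: either as $\zeta''(v)=\zeta'(v)$ for $v\in V(T'')$, or as a tree inside $\zeta''(u,uv)$ originating from some $\zeta'(v_{i-1},v_{i-1}v_i)$ or $\zeta'(v_i,v_{i-1}v_i)$, or as a singleton entry $(\zeta'(v_i))$ for an internal $v_i$ with $|V(\zeta'(v_i))|>1$. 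Conversely, every tree of $\zeta''$ arises in one of these ways and therefore corresponds to a tree of $\zeta'$; the trees $\zeta'(v_i)$ omitted at internal $v_i$ with $|V(\zeta'(v_i))|=1$ have no edges to contribute. The main potential nuisance is exactly this bookkeeping for the ``trivial'' internal nodes, but once one verifies that the omission in $\zeta''$ corresponds to the smoothing step in $\T(\zeta')$, both conclusions follow.
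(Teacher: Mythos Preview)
Your proposal is correct and is exactly the natural direct verification the paper has in mind; the paper itself provides no proof for this lemma, merely introducing it with ``Again, we can easily observe the following lemma.'' Your edge-by-edge matching along each path $P=v_0v_1\cdots v_p$ of $T'$ collapsing to an edge of $T''$, and in particular your observation that omitting the singleton entry for an internal $v_i$ with $|V(\zeta'(v_i))|=1$ corresponds precisely to the smoothing step (iv) in the construction of $\T(\zeta')$, is the right bookkeeping and yields both conclusions.
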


If a tree $T$ ensures $\Gamma_1\tle\Gamma_2$ 
for two $B$-namus $\Gamma_1=(T_1,\alpha_1,\lambda_1,U)$ and 
$\Gamma_2=(T_2,\alpha_2,\lambda_2,U)$, then 
$T$ is isomorphic to a subdivision of~$T_1$
and also isomorphic to a subdivision of~$T_2$. 
By the above construction, a $\V_0$-labeling $(\zeta,\L)$ of $T_1$
induces a $\V_0$-labeling $(\zeta',\L)$ of~$T$,
which in turn induces a $\V_0$-labeling $(\zeta'',\L)$ of $T_2$.
We say that $(\zeta'',\L)$ is \emph{induced} by $T$ from $(\zeta,\L)$.

\begin{PROP}\label{prop:comparewitness}
	Let $(T^b,\L^b)$ be a branch-decomposition of a subspace arrangement $\V$ and let $x$ be a node of $T^b$. 
	Let $\Gamma_1$ and $\Gamma_2$ be $B_x$-namus such that $\Gamma_1\tle\Gamma_2$. 
        Let $T$ be a tree ensuring $\Gamma_1\tle\Gamma_2$.
        
	If a $\V_x$-labeling $(\zeta_1,\L)$ is a witness of $\Gamma_1$ in $B_x$
	and 
	all $y$\=/protected edges of $(\T(\zeta_1),\L)$ are in $\zeta_1$ 
        for every node $y$ of $T^b$ with $y\le x$,
	then 
	a $\V_x$-labeling $(\zeta_2,\L)$ of $T(\Gamma_2)$ induced by $T$ from  $(\zeta_1,\L)$
	is a witness of $\Gamma_2$ in $B_x$
	and 
	all $y$\=/protected edges of $(\T(\zeta_2),\L)$ are in $\zeta_2$
        for every node $y$ of $T^b$ with $y\le x$.
        In addition, if $(\zeta_1,\L)$ is $k$-safe, then so is $(\zeta_2,\L)$.
\end{PROP}
\begin{proof}
  Let $\Gamma_1=(T_1,\alpha_1,\lambda_1,U)$ and $\Gamma_2=(T_2,\alpha_2,\lambda_2,U)$.
  We may assume, by applying the isomorphism, that
  $T$ is a subdivision of $T_1$ 
  as well as a subdivision of $T_2$.
  Let $(\zeta,\L)$ be the subdivision of $(\zeta_1,\L)$ induced by $T$
  so that $(\zeta,\L)$ induces $(\zeta_2,\L)$.
  Let $\Gamma=(T,\alpha,\lambda,U)$ be the subdivision of $\Gamma_1$. 

  It is easy to see from the definition
  that $(\zeta,\L)$ is a witness of $\Gamma$ in $B_x$.
  By Lemma~\ref{lem:subdivisionwitness},
  the set of edges in $\zeta_1$ is equal to the set of edges in $\zeta$
  and $(\T(\zeta),\L)=(\T(\zeta_1),\L)$.
  In particular, this implies that 
  for each node $y\le x$ of $T^b$, 
  every $y$\=/protected edge of $(\T(\zeta),\L)$ is in $\zeta$
  and 
  if $(\zeta_1,\L)$ is $k$-safe, then 
  so is $(\zeta,\L)$.

	Let us check that $(\zeta_2,\L)$ satisfies the conditions of the definition of witnesses.
        For each incidence $(u,uv)$ of $T_2$, 
        $T$ has a corresponding incidence $(u',u'v')$ 
        such that $L_{\zeta_2}(u,uv,0)=L_\zeta (u',u'v',0)$.
        As $\alpha(u',u'v')=\alpha_2(u,uv)$ by $\Gamma\tle \Gamma_2$
        and $\alpha(u',u'v')=L_\zeta(u',u'v',0)\cap B_x$,
        we deduce that 
        $\alpha_2(u,uv)=L_{\zeta_2}(u,uv,0)\cap B_x$.
        Similarly 
        $\alpha_2(u,uv)=R_{\zeta_2}(v,uv,0)\cap B_x$.
        
        For each incidence $(u,uv)$ of $T_2$
        and $0\le i\le \abs{\zeta_2(u,uv)}$,
	there exists an edge $u'v'$ of~$T$ and $i'\in\{0,\ldots, \abs{\zeta(u',u'v')}\}$
	such that 
        $(u,uv)$ corresponds to $(u',u'v')$ in~$T$,
        $L_{\zeta_2}(u,uv,i)= L_\zeta(u',u'v',i')$,
        and 
        $R_{\zeta_2}(u,uv,i)= R_\zeta(u',u'v',i')$.
        As $\Gamma\tle \Gamma_2$, we have 
        \begin{multline*}
          \lambda_2(uv)\ge \lambda(u'v')
        \ge \dim (L_{\zeta}(u',u'v',i')\cap R_\zeta(u',u'v',i'))\\
        = \dim (L_{\zeta_2}(u,uv,i)\cap R_{\zeta_2}(u,uv,i)).
      \end{multline*}
	
      By Lemma~\ref{lem:subdivisionwitness2}, 
      $(\T(\zeta),\L)=(\T(\zeta_2),\L)$ and 
	the set of all edges in $\zeta$ is equal to the set of all edges in $\zeta_2$.
	This implies that
	(iii) and (iv) in the definition of a witness hold, 
	all $y$\=/protected edges of $(\T(\zeta_2),\L)$ are in $\zeta_2$ 
        for every node $y$ of $T^b$ with $y\le x$, and
	if $(\zeta,\L)$ is $k$-safe, then so is 
        $(\zeta_2,\L)$.
        This completes the proof.
\end{proof}

\section{Correctness of Algorithm~\ref{alg:fullset}} \label{sec:fullset}

For a $B$-namu $\Gamma=(T,\alpha,\lambda,U)$
and a subtree $T'$ of~$T$,
we say that  a $B$-namu $\Gamma'=(T',\alpha',\lambda',U)$ is \emph{induced by $T'$ from $\Gamma$}
if 
$\alpha'(v,e)=\alpha(v,e)$ and $\lambda'(e)=\lambda(e)$
for every incidence $(v,e)$ of $T'$.

For a finite field $\F$ and a positive integer $r$,
let $\V$ be a subspace arrangement of subspaces of $\F^r$.
Let $(T^b,\L^b)$ be a rooted branch-decomposition of~$\V$
and let $k$ be a nonnegative integer.
For a node~$x$ of $T^b$ and a branch-decomposition $(T,\L)$ of $\V_x$, %
the \emph{reduced $B_x$-namu of $(T,\L)$} is the $B_x$-namu
induced by a subtree $T'$ of~$T$ from the canonical $B_x$-namu of $(T,\L)$
where $T'$ is obtained by the following rule.
\begin{itemize}
\item If $(T,\L)$ is $x$\=/degenerate, then $T'$ is a subtree having only one node of~$T$.
\item If $(T,\L)$ is not $x$\=/degenerate, then 
$T'$ is a subtree of~$T$ obtained by deleting all nodes $v$ 
such that $(u,v)$ is $x$\=/protected in $(T,\L)$.
\end{itemize}

For a node~$x$ of $T^b$ and a branch-decomposition $(T,\L)$ of $\V_x$,
we say that $(T,\L)$ is \emph{$k$-safe with respect to $x$} if
for every edge $uv$ of~$T$ such that $(u,v)$ is $x$\=/protected,
\[
\dim \left(\sum_{s\in A_v(T-uv)}
\hspace{-1.5em}
\L(s) \cap\sum_{t\in A_u(T-uv)}
\hspace{-1.5em}
\L(t)\right)+ \dim B_x - \dim \left(B_x \cap \sum_{t\in A_u(T-uv)}
\hspace{-1.5em}
\L(t)\right) \le k.
\]
For each node~$x$ of $T^b$,
we define \emph{the full set at $x$ of width~$k$ with respect to $(T^b,\L^b)$}
as the set of all $\Gamma$ in $U_k(B_x)$ 
such that $\Delta\tle\Gamma$ for the reduced $B_x$-namu $\Delta$ of some branch-\decomposition{} $(T,\L)$ of $\V_x$ having width at most~$k$ which is 
$k$-safe with respect to $x$, and 
totally pure with respect to $(T^b,\L^b)$.
We write it $\FS(x;T^b,\L^b)$ or $\FS(x)$ for brevity.
In this section, we aim to show that the full set is precisely the set computed in Algorithm~\ref{alg:fullset}. 
\begin{PROPfullset}
  For every node~$x$ of $T^b$,
  $\FS(x)$ is equal to $\FF_x$ in Algorithm~\ref{alg:fullset}.
\end{PROPfullset}

Proposition~\ref{prop:fullset} combined with 
Proposition~\ref{prop:pure} allows us to show that Algorithm~\ref{alg:fullset} is correct.

\begin{PROP}\label{prop:FSroot}
 Let $k$ be a nonnegative integer. 
 Let $(T^b,L^b)$ be a rooted branch-decomposition of a subspace arrangement $\V$ and for each node~$x$ of $T^b$, let $\FF_x$ be the set computed by Algorithm~\ref{alg:fullset}.
 The branch-width of~$\V$ is at most~$k$ 
 if and only if $\FF_{r}\neq \emptyset$ at the root node $r$ of $T^b$.
\end{PROP}
\begin{proof}
By Proposition~\ref{prop:pure}, if the branch-width of~$\V$ is at most~$k$, 
then $\V$ has a branch-\decomposition{} $(T,\L)$ of width at most~$k$
that is totally pure with respect to $(T^b,\L^b)$.
Note that $(T,\L)$ is $k$-safe with respect to the root $r$ of $T^b$ because $B_{r}=\{0\}$.
Let $\Delta$ be the reduced $B_r$-namu of $(T,\L)$.
Then $\tau(\Delta)$ is in $\FS(r)$ because $\Delta\tle\tau(\Delta)$.
By Proposition~\ref{prop:fullset}, $\FS(r)=\FF_r$ and thus $\FF_r$ is nonempty.

If $\Gamma\in\FF_r$, then $\Gamma\in\FS(r)$ by Proposition~\ref{prop:fullset}.
So there exists a branch-decomposition of $\V_r=\V$ whose width is at most~$k$.
\end{proof}

The following proposition gives a relation between full sets and witnesses.

\begin{PROP}\label{prop:witnessexist}
Let $(T^b,\L^b)$ be a branch-decomposition of a subspace arrangement $\V$ and let $x$ be a node of $T^b$
and $\Gamma \in U_k(B_x)$.
Then $\Gamma\in \FS(x)$
if and only if 
there exists a $k$-safe $\V_x$-labeling $(\zeta,\L)$ of $T(\Gamma)$ 
that is a witness of $\Gamma$ in $B_x$
such that 
every $y$\=/protected edge in $(\T(\zeta),\L)$ is in $\zeta$ for every node $y\le x$ of~$T^b$.
\end{PROP}
\begin{proof}
For the forward direction, 
suppose a $B_x$-namu $\Gamma$ is in $\FS(x)$.
Then, by definition,
there exists a branch-decomposition $(T,\L)$ of $\V_x$ such that 
\begin{itemize}
\item the width of $(T,\L)$ is at most~$k$,
\item $(T,\L)$ is $k$-safe with respect to $x$,
\item $(T,\L)$ is totally pure with respect to $(T^b,\L^b)$, and
\item $\Delta\tle\Gamma$ for the reduced $B_x$-namu $\Delta$ of $(T,\L)$.
\end{itemize}
Let $T'=T(\Delta)$ be a subtree of~$T$.
For every incidence $(v,e)$ of $T'$, let $\zeta(v,e)$ be the empty sequence,
and for every node~$v$ of degree at most $2$ in $T'$, 
let $\zeta(v)$ be the component of $T-E(T')$ containing~$v$,
viewed as a rooted binary tree with the root $v$.
Then, it is clear that $(\zeta,\L)$ is a $\V_x$-labeling of $T'$.
Furthermore $T=\T(\zeta)$ if $(T,\L)$ is not $x$\=/degenerate
and $\T(\zeta)$ is obtained from $T$ by subdividing an improper $x$\=/degenerate edge 
if $(T,\L)$ is $x$\=/degenerate.
By definition, it is easy to see that $(\zeta,\L)$ is a witness of $\Delta$ in $B_x$.
Because $(T,\L)$ is $k$-safe with respect to $x$, by definition, the $k$-safe inequality holds 
for every edge in $\zeta$,
and so $(\zeta,\L)$ is $k$-safe.

For a node $y$ of $T^b$ with $y\le x$ and a $y$\=/protected edge $uv$ in $(\T(\zeta),\L)$, 
if $uv$ is $y$\=/blocked and $uv$ $y$\=/guards $v$,
then by Lemma~\ref{lem:x-to-y}, $uv$ is $x$\=/blocked and $uv$ $x$\=/guards~$v$
because $(\T(\zeta),\L)$ is $y$\=/pure. 
So $(u,v)$ is $x$\=/protected in $(\T(\zeta),\L)$.
If there exists a node~$z$ of $T^b$ with $z\le y$ such that 
$uv$ cuts $\V_z$, $(T,\L)$ is $z$-degenerate, and $\L(\T(\zeta),u,v)\subseteq\V_z$,
then 
$(u,v)$ is $x$\=/protected in $(\T(\zeta),\L)$ because $z\le y\le x$.
Thus, %
every $y$\=/protected edge in $(\T(\zeta),\L)$ is 
$x$\=/protected in $(\T(\zeta),\L)$ and so it is in $\zeta$
because of the definitions of $T'$ and $\zeta$.

As $\Delta\tle\Gamma$, by Proposition~\ref{prop:comparewitness},
there exists a $k$-safe $\V_x$-labeling $(\zeta',\L)$ of $T(\Gamma)$ 
that is a witness of $\Gamma$ in $B_x$.
This completes the proof of the forward direction.

\smallskip
It remains to prove the backward direction.
Suppose a $B_x$-namu $\Gamma=(T,\alpha,\lambda,U)$ is in $U_k(B_x)$ and 
a $k$-safe $\V_x$-labeling $(\zeta,\L)$ of~$T$ is a witness of $\Gamma$ in $B_x$.
By the definition of witnesses,
$(\T(\zeta),\L)$ is $k$-safe with respect to $x$ 
and totally pure with respect to $(T^b,\L^b)$.
Since $(\zeta,\L)$ is $k$-safe and $\Gamma$ has width at most~$k$, 
the width of $(\T(\zeta),\L)$ is at most~$k$.

Let $\Delta$ be the reduced $B_x$-namu of $(\T(\zeta),\L)$. %
It remains to show that $\Delta\tle\Gamma$.
If $(\T(\zeta),\L)$ is $x$\=/degenerate, then both $T(\Delta)$ and $T$ have only one node 
and so $\Delta\tle\Gamma$.
Suppose $(\T(\zeta),\L)$ is not $x$\=/degenerate. 
Then, by definition, $T(\Delta)$ is a subtree of $\T(\zeta)$ obtained by deleting all nodes $v$ 
such that $(u,v)$ is $x$\=/protected in $(\T(\zeta),\L)$.
Since every $x$\=/protected edge in $(\T(\zeta),\L)$ is in $\zeta$ and 
the condition (iv) of witnesses holds, %
an edge is $x$\=/protected in $(\T(\zeta),\L)$ if and only if the edge is in $\zeta$.
Recall that 
$\T(\zeta)$ is obtained from $T$ by 
\begin{enumerate}[(i)] 
\item subdividing each edge $uv$ of~$T$  $\abs{\zeta(u,uv)}+\abs{\zeta(v,uv)}$ times,
\item attaching the $i$th rooted binary tree in $\zeta(u,uv)$
  whose root is identified with the $i$th new subdividing internal node on the path from $u$ to $v$ 
  for each incidence $(u,uv)$ of~$T$ and $1\le i\le \abs{\zeta(u,uv)}$, 
\item attaching $\zeta(v)$ for all $v\in V(T)$ of degree at most $2$ 
  whose root is identified with~$v$, and 
\item smoothing $v$ for each $v\in V(T)$ with 
  $\deg_T(v)=2$ and $\abs{V(\zeta(v))}=1$.
\end{enumerate}
Therefore, 
$T(\Delta)$ is a tree obtained from $T$ 
by subdividing each edge $uv$ of~$T$  $\abs{\zeta(u,uv)}+\abs{\zeta(v,uv)}$ times 
and smoothing $w$ for each node $w\in V(T)$ with $\deg_T(w)=2$ and $\abs{V(\zeta(w))}=1$.
Note that 
if a node $w$ of~$T$ has exactly two neighbors $w_1$ and $w_2$ in~$T$ and 
$\abs{V(\zeta(w))}=1$, then we have $\alpha(w_1,w_1w)=\alpha(w,ww_2)$ and $\alpha(w,w_1w)=\alpha(w_2,ww_2)$.
Since both $\Delta$ and $\Gamma$ are induced from the canonical $B_x$-namu of $(\T(\zeta),\L)$,
we observe that $\Delta\tle\Gamma$. This completes the proof.
\end{proof}

\subsection{At a leaf node}\label{subsec:dpleaf}

\begin{PROP}\label{prop:single}
Let $\ell$ be a leaf of $T^b$ with $V=\L^b(\ell)$ and let $k$ be a nonnegative integer.
Let $\Delta_\ell=(T,\alpha,\lambda,B_{\ell})$ be the $B_\ell$-namu such that 
$T$ is a tree with $V(T)=\{1\}$. 
Then $\FS(\ell)=\{\Delta_\ell\}$.
\end{PROP}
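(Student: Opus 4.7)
The plan is to show both inclusions of $\FS(\ell)=\{\Delta_\ell\}$ directly from the definitions, using the fact that $\V_\ell=\{V\}$ is a singleton.

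First, I would note that since $\ell$ is a leaf of $T^b$, there is a \emph{unique} branch-decomposition $(T_0,\L_0)$ of $\V_\ell$ up to isomorphism: $T_0$ consists of a single node mapped by $\L_0$ to $V$. Since $B_\ell=\spn{\V_\ell}\cap\spn{\V-\V_\ell}=V\cap\spn{\V-\{V\}}\subseteq V$, the canonical $B_\ell$-namu of $(T_0,\L_0)$ has $U=B_\ell\cap V=B_\ell$ and is thus exactly $\Delta_\ell$. Because $T_0$ has no edges, $(T_0,\L_0)$ has width $0\le k$, contains no $y$-blocking paths or $y$-guarding edges for any $y$, and contains no edge on which to check the $k$-safe inequality. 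Therefore $(T_0,\L_0)$ is vacuously totally pure with respect to $(T^b,\L^b)$ and $k$-safe with respect to $\ell$.

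Next I would compute the reduced $B_\ell$-namu of $(T_0,\L_0)$. Since $T_0$ has a single node, no $(u,v)$ can be $x$-protected, so no node is deleted; hence the reduced $B_\ell$-namu is $\Delta_\ell$ itself. It is trivially compact (there is nothing to trim or compress), so $\Delta_\ell\in U_k(B_\ell)$, and the relation $\Delta_\ell\tle\Delta_\ell$ is immediate. Thus $\Delta_\ell\in \FS(\ell)$.

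For the other direction, let $\Gamma=(T,\alpha_\Gamma,\lambda_\Gamma,U_\Gamma)\in \FS(\ell)$. By definition there is some branch-decomposition $(T',\L')$ of $\V_\ell$ satisfying the four listed conditions whose reduced $B_\ell$-namu $\Delta$ satisfies $\Delta\tle\Gamma$; but as observed above, the only branch-decomposition of $\V_\ell$ is $(T_0,\L_0)$, so $\Delta=\Delta_\ell$. Hence $\Delta_\ell\tle\Gamma$, i.e., there exist subdivisions $\Delta_\ell^*$ of $\Delta_\ell$ and $\Gamma^*$ of $\Gamma$ with $\Delta_\ell^*\le\Gamma^*$. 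Since $T(\Delta_\ell)$ has only one node, any subdivision of $\Delta_\ell$ is $\Delta_\ell$ itself, forcing $T(\Gamma^*)=T(\Delta_\ell)$; but then $T(\Gamma)$ must also consist of a single node (subdivision preserves the single-node tree). The equality $\Delta_\ell^*\le\Gamma^*$ then forces $U_\Gamma=B_\ell$ and the only nontrivial incidences $(*,\emptyset),(0,\emptyset)$ to carry the same $\alpha$-values as in $\Delta_\ell$. Consequently $\Gamma=\Delta_\ell$.

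The argument is essentially a bookkeeping exercise; the only delicate point is recognizing that $B_\ell\subseteq V$ at the leaf (so that the canonical $B_\ell$-namu has $U=B_\ell$ rather than a proper subspace), and that single-node trees are preserved under both subdivision and the trim/compress operations that define $U_k(B_\ell)$. Once these observations are in place, the equality $\FS(\ell)=\{\Delta_\ell\}$ follows directly from the uniqueness of the branch-decomposition on a single subspace.
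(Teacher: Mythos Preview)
Your proof is correct and follows essentially the same route as the paper: show $\Delta_\ell\in\FS(\ell)$ by direct verification, then use uniqueness of the branch-decomposition of a single subspace together with the fact that a one-node tree is fixed under subdivision to force $\Gamma=\Delta_\ell$. Your write-up is simply more explicit than the paper's terse version.
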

\begin{proof}
Trivially, $\Delta_\ell$ is in $\FS(\ell)$.

If $\Gamma\in \FS(\ell)$,
then 
there exists a branch-decomposition $(T,\L)$ of $\V_\ell=\{V\}$ 
such that 
$\Delta\tle\Gamma$ 
for the reduced $B_x$-namu $\Delta$ of $(T,\L)$.
Since $\Gamma\in U_k(B_\ell)$, $V(T(\Gamma))=\{1\}$ and so $V(T)=\{1\}$. 
Thus, $(T,\L)$ is the unique branch-decomposition of $\{V\}$,
which implies that $\Delta=\Delta_\ell$.
As $T(\Delta_\ell)$ has no edge, 
$\Delta_\ell\tle\Gamma$ implies that $\Delta_\ell=\Gamma$.
Therefore, 
$\FS(\ell)=\{\Delta_\ell\}$.
\end{proof}

\subsection{At an internal node}\label{subsec:dpinternal}

For a subspace $B$ of $\F^r$
and a set $\mathcal{R}$ of $B$-namus,
the set $\up(\mathcal{R},B)$ is the collection of all $B$-namus $\Gamma\in U_k(B)$ 
with $\trim(\Gamma')\tle\Gamma$ 
for some  $\Gamma'\in\mathcal{R}$ such that $\Gamma'$ is a $k$-safe extension of $\trim(\Gamma')$.
In the following two propositions, we prove that
\[\FS(x)=\up((\FS(x_1)\oplus \FS(x_2))|_{B_x},B_x)\]
if $x$ has two children $x_1$ and $x_2$ in $T^b$.
First, we prove the inclusion in one direction.

\begin{PROP}\label{prop:fullsetwitness}
Let $x_1$ and $x_2$ be two children of a node~$x$ in $T^b$
and let $k$ be a nonnegative integer.
If $\Gamma\in\up((\FS(x_1)\fplus\FS(x_2))|_{B_x},B_x)$,
then $\Gamma\in\FS(x)$.
\end{PROP}
\begin{proof}
Since $\Gamma\in\up((\FS(x_1)\fplus\FS(x_2))|_{B_x},B_x)$,
there exist $\Gamma_1\in\FS(x_1)$, $\Gamma_2\in\FS(x_2)$, and their sum $\Gamma^+$
such that $\trim(\Gamma^+|_{B_x})\tle \Gamma$
and $\Gamma^+|_{B_x}$ is a $k$-safe extension of $\trim(\Gamma^+|_{B_x})$.
By Proposition~\ref{prop:witnessexist}, for each $i=1,2$, 
there exists a $k$-safe $\V_{x_i}$-labeling $(\zeta_i,\L_i)$
that is a witness of $\Gamma_i$ in $B_{x_i}$
such that
every $y$\=/protected edge in $(\T(\zeta_i),\L_i)$ is in $\zeta_i$ for every node $y\le x_i$ of $T^b$.
Then, by Propositions~\ref{prop:joinwitness},~\ref{prop:shrinkwitness},~\ref{prop:trimwitness},
and~\ref{prop:comparewitness}, 
we can find a $k$-safe $\V_x$-labeling $(\zeta,\L)$ such that it is a witness of $\Gamma$ in $B_x$
and $(\T(\zeta),\L)$ is in $\zeta$ for every node $y\le x$ of $T^b$.
By Proposition~\ref{prop:witnessexist}, $\Gamma$ is in the full set at $x$ of width~$k$ with respect to $(T^b,\L^b)$.
\end{proof}

For the other inclusion, we will prove the following proposition.

\begin{PROP}\label{prop:divide}
Let $x_1$ and $x_2$ be two children of a node~$x$ in $T^b$
and let $k$ be a nonnegative integer.
If $\Gamma\in\FS(x)$,
then $\Gamma\in\up((\FS(x_1)\fplus\FS(x_2))|_{B_x},B_x)$.
\end{PROP}

In order to prove Proposition~\ref{prop:divide},
we need some lemmas.

\begin{LEM}\label{lem:expand}
Let $B$, $B'$ be subspaces of $\F^r$. 
Let $(T,\L)$ be a branch-decomposition of a subspace arrangement $\V$.
Let $\Gamma$ and $\Gamma'$ be the canonical $B$-namu and $B'$-namu of $(T,\L)$, respectively.
If $\spn{\V}\cap B'\subseteq B$ and $\spn{\V}\cap B\subseteq B'$, 
then $\Gamma= \Gamma'$.
\end{LEM}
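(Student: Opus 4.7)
The plan is to verify the four components of equality in the definition of a $B$-namu: the underlying tree, the function $\lambda$, the function $\alpha$, and the distinguished subspace $U$. The tree in $\Gamma$ and the tree in $\Gamma'$ are both equal to $T$, so they agree. The value $\lambda(e)$ for the canonical namu of $(T,\L)$ is defined purely in terms of the dimensions of intersections of sums of the $\L(x)$'s and does not involve the ambient subspace at all, so $\lambda$ agrees for $\Gamma$ and $\Gamma'$.

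The substance of the argument lies in showing that the $\alpha$-values and $U$ coincide. For this, I would prove the following single claim: for every subspace $W$ of $\spn{\V}$, we have $B\cap W = B'\cap W$. Granting this, applying it to $W=\sum_{x\in A_v(T-e)} \L(x)\subseteq \spn{\V}$ for every incidence $(v,e)$ yields $\alpha(v,e)=\alpha'(v,e)$, and applying it to $W=\sum_{x\in A(T)} \L(x)\subseteq \spn{\V}$ yields $U=U'$.

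To prove the claim, observe that for any $W\subseteq \spn{\V}$,
\[
B\cap W \;\subseteq\; B\cap \spn{\V} \;\subseteq\; B',
\]
by the hypothesis $\spn{\V}\cap B\subseteq B'$, so $B\cap W\subseteq B'\cap W$. The reverse inclusion $B'\cap W\subseteq B\cap W$ follows symmetrically from the other hypothesis $\spn{\V}\cap B'\subseteq B$.

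There is no real obstacle here: the lemma is essentially a bookkeeping exercise that unpacks the symmetric hypotheses on $B$ and $B'$. The only thing worth flagging is to make sure one uses that every sum $\sum_{x\in A_v(T-e)} \L(x)$ is contained in $\spn{\V}$, which is immediate from the definition of $\V$ and the fact that the leaves of $T$ are mapped bijectively onto $\V$.
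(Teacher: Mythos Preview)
Your proof is correct and follows essentially the same approach as the paper: both arguments insert $\spn{\V}$ into the intersection and use the hypotheses to pass between $B$ and $B'$, with symmetry giving the reverse inclusion. The only cosmetic difference is that you isolate the observation as a single claim about arbitrary $W\subseteq\spn{\V}$, whereas the paper carries out the same computation directly for $\alpha(v,e)$ and $U$.
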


\begin{proof}
Let $\Gamma=(T,\alpha,\lambda,U)$ and $\Gamma'=(T,\alpha',\lambda',U')$.
Let $e=uv$ be an edge of~$T$. %
Then by definition
\[
\lambda(uv)=\dim\left(\sum_{x\in A_u(T-e)}\L(x)\cap\sum_{y\in A_v(T-e)}\L(y)\right) = \lambda'(uv).
\]

By the symmetry of $B$ and $B'$, it is enough to show that for each incidence $(v,e)$ of~$T$,
$\alpha(v,e)\subseteq \alpha'(v,e)$, and $U\subseteq U'$.
Because $\L(x)$ is in $\V$ for every leaf $x$ of~$T$, we have
\begin{align*}
  \alpha(v,e)&=B\cap\sum_{x\in A_v(T-e)}\L(x)= B\cap \spn{\V} \cap \sum_{x\in A_v(T-e)} \L(x) \\
  &
\subseteq B'\cap \sum_{x\in A_v(T-e)} \L(x) =\alpha'(v,e),\\
U&=B\cap\sum_{x\in A(T)}\L(x)= B\cap\spn{\V}\cap\sum_{x\in A(T)}\L(x)\subseteq B'\cap\sum_{x\in A(T)}\L(x)=U',
\end{align*}
which complete the proof.
\end{proof}

\begin{LEM}\label{lem:assumption}
Let $x_1$ and $x_2$ be two children of a node~$x$ in $T^b$. 
Let $(T,\L)$ be a branch-\decomposition{} of $\V_x$ 
and let $(T_i,\L_i)=(T,\L)|_{\V_{x_i}}$ for every $i=1,2$.
If $\Gamma^+$ is the canonical $(B_{x_1}+B_{x_2})$-namu of $(T,\L)$ and 
$\Gamma_i$ is the canonical $B_{x_i}$-namu of $(T_i,\L_i)$ for every $i=1,2$, then
\begin{itemize}
\item $\Gamma^+|_{B_x}$ is the canonical $B_x$-namu of $(T,\L)$,  
\item $\Gamma_i$ is the canonical $(B_{x_1}+B_{x_2})$-namu of $(T_i,\L_i)$ for every $i=1,2$, and 
\item $\Gamma^+$ is a sum of $\Gamma_1$ and $\Gamma_2$.
\end{itemize}
\end{LEM}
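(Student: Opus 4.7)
The proof of Lemma~\ref{lem:assumption} consists of three assertions, each of which reduces to an appropriate earlier lemma, together with one preliminary observation.

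\textbf{Preliminary: $B_x \subseteq B_{x_1}+B_{x_2}$.} Given $v \in B_x = \spn{\V_x}\cap\spn{\V-\V_x}$, since $\spn{\V_x}=\spn{\V_{x_1}}+\spn{\V_{x_2}}$, I can write $v=v_1+v_2$ with $v_i\in\spn{\V_{x_i}}$. Because $v\in\spn{\V-\V_x}\subseteq\spn{\V-\V_{x_1}}$ and $v_2\in\spn{\V_{x_2}}\subseteq\spn{\V-\V_{x_1}}$, the vector $v_1=v-v_2$ lies in $\spn{\V_{x_1}}\cap\spn{\V-\V_{x_1}}=B_{x_1}$; symmetrically $v_2\in B_{x_2}$, so $v\in B_{x_1}+B_{x_2}$.

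\textbf{First bullet.} Since $B_x\subseteq B_{x_1}+B_{x_2}$, Lemma~\ref{lem:projcan} applied with $B=B_{x_1}+B_{x_2}$ and $B'=B_x$ immediately gives that $\Gamma^+|_{B_x}$ equals the canonical $B_x$-namu of $(T,\L)$.

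\textbf{Second bullet.} I want to apply Lemma~\ref{lem:expand} to the branch-decomposition $(T_i,\L_i)$ of $\V_{x_i}$ with $B=B_{x_i}$ and $B'=B_{x_1}+B_{x_2}$. The inclusion $\spn{\V_{x_i}}\cap B_{x_i}\subseteq B_{x_1}+B_{x_2}$ is immediate. The reverse inclusion $\spn{\V_{x_i}}\cap(B_{x_1}+B_{x_2})\subseteq B_{x_i}$ is precisely the content of the first identity of Lemma~\ref{lem:boundaryspace}. Hence Lemma~\ref{lem:expand} yields that the canonical $B_{x_i}$-namu $\Gamma_i$ of $(T_i,\L_i)$ equals the canonical $(B_{x_1}+B_{x_2})$-namu of $(T_i,\L_i)$.

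\textbf{Third bullet.} The second identity of Lemma~\ref{lem:boundaryspace} reads
\[(\spn{\V_{x_1}}+B_{x_1}+B_{x_2})\cap(\spn{\V_{x_2}}+B_{x_1}+B_{x_2})=B_{x_1}+B_{x_2},\]
which is exactly the hypothesis of Lemma~\ref{lem:decomp2sum} with $\V_1=\V_{x_1}$, $\V_2=\V_{x_2}$, and $B=B_{x_1}+B_{x_2}$. Applying Lemma~\ref{lem:decomp2sum}, the canonical $(B_{x_1}+B_{x_2})$-namu $\Gamma^+$ of $(T,\L)$ is a sum of the canonical $(B_{x_1}+B_{x_2})$-namus of $(T_1,\L_1)$ and $(T_2,\L_2)$. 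By the second bullet these are $\Gamma_1$ and $\Gamma_2$, completing the proof.

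The plan is essentially bookkeeping: each of the three claims is an instance of a tool already proved in Section~\ref{sec:namu}. The only nontrivial step is the preliminary inclusion $B_x\subseteq B_{x_1}+B_{x_2}$, which is a short direct computation using only $\V-\V_x\subseteq\V-\V_{x_i}$ and $\V_{x_2}\subseteq\V-\V_{x_1}$. I do not expect any serious obstacle.
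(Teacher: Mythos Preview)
Your proof is correct and follows essentially the same route as the paper's: both invoke Lemma~\ref{lem:boundaryspace} to feed Lemma~\ref{lem:expand} for the second bullet and Lemma~\ref{lem:decomp2sum} for the third. The only difference is that you are more explicit about the first bullet, supplying the inclusion $B_x\subseteq B_{x_1}+B_{x_2}$ and citing Lemma~\ref{lem:projcan}, whereas the paper simply says ``easy to check by definition''; your added detail is a welcome clarification since the projection $\Gamma^+|_{B_x}$ is only defined under that inclusion.
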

\begin{proof}
It is easy to check by definition that $\Gamma^+|_{B_x}$ is the canonical $B_x$-namu of $(T,\L)$.

By Lemma~\ref{lem:boundaryspace}, 
$\spn{\V_{x_1}}\cap(B_{x_1}+B_{x_2}) \subseteq B_{x_1}$.
Then, by Lemma~\ref{lem:expand},
$\Gamma_1$ is the canonical $(B_{x_1}+B_{x_2})$-namu of $(T_1,\L_1)$ and 
$\Gamma_2$ is the canonical $(B_{x_1}+B_{x_2})$-namu of $(T_2,\L_2)$.

We have 
$(\spn{\V_{x_1}}+B_{x_1}+B_{x_2})\cap (\spn{\V_{x_2}}+B_{x_1}+B_{x_2})= B_{x_1}+B_{x_2}$
by Lemma~\ref{lem:boundaryspace}.
Thus, by Lemma~\ref{lem:decomp2sum}, $\Gamma^+$ is a sum of $\Gamma_1$ and $\Gamma_2$.
\end{proof}

\begin{LEM}\label{lem:sumsafeeachsafe}
Let $x_1$ and $x_2$ be two children of a node~$x$ in $T^b$. 
Let $(T,\L)$ be a branch-\decomposition{} of $\V_x$ that is totally pure with respect to $(T^b,\L^b)$
and let $(T_i,\L_i)=(T,\L)|_{\V_{x_i}}$ for every $i=1,2$.
If $(T,\L)$ is $k$-safe with respect to $x$, 
then $(T_i,\L_i)$ is $k$-safe with respect to $x_i$ for every $i=1,2$.
\end{LEM}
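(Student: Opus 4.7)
By symmetry it suffices to show $(T_1,\L_1)$ is $k$-safe with respect to $x_1$, so fix an edge $u'v'$ of $T_1$ with $(u',v')$ being $x_1$-protected in $(T_1,\L_1)$. Set $\alpha' = \spn{\L_1(T_1,v',u')}$ and $\beta' = \spn{\L_1(T_1,u',v')}$. Using Lemma~\ref{lem:dim-join} with $(X_1,X_2,Y_1,Y_2) = (\beta',0,\alpha',\spn{\V - \V_{x_1}})$ one verifies the identity
\[
\dim \alpha'\cap \beta' + \dim B_{x_1} - \dim B_{x_1}\cap \alpha' \;=\; \dim \beta'\cap (\alpha' + \spn{\V - \V_{x_1}}),
\]
so the $k$-safe inequality at $u'v'$ is exactly $\dim \beta'\cap (\alpha' + \spn{\V - \V_{x_1}})\le k$, and the analogous identity holds for $(T,\L)$ with $B_x$ and $\spn{\V - \V_x}$.

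Because $(T_1,\L_1) = (T,\L)|_{\V_{x_1}}$, the edge $u'v'$ corresponds to a unique path $P$ in $T$ whose internal nodes are degree-$3$ branch nodes each of which hangs off a subtree containing only $\V_{x_2}$-leaves. Let $uv$ be the edge of $P$ incident with (the image of) $v'$, oriented so that the $v$-side of $uv$ in $T$ is the $v'$-side of $u'v'$ in $T_1$. Writing $\alpha = \spn{\L(T,v,u)}$ and $\beta = \spn{\L(T,u,v)}$, this choice gives $\beta = \beta'$ and $\alpha = \alpha' + \spn{\V_{x_2}}$. Since $\spn{\V - \V_{x_1}} = \spn{\V_{x_2}} + \spn{\V - \V_x}$, we compute
\[
\dim \beta'\cap (\alpha' + \spn{\V - \V_{x_1}}) \;=\; \dim \beta\cap(\alpha + \spn{\V - \V_x}),
\]
so the $k$-safe inequality at $u'v'$ in $(T_1,\L_1)$ reduces to the $k$-safe inequality at $uv$ in $(T,\L)$. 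Hence it suffices to verify that $(u,v)$ is $x$-protected in $(T,\L)$, and invoke the $k$-safety assumption on $(T,\L)$.

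I split on the definition of $x_1$-protection for $(u',v')$. In the easy case where some $z\le x_1$ witnesses protection via $(T_1,\L_1)$ being $z$-degenerate, $u'v'$ cutting $\V_z$, and $\L_1(T_1,u',v')\subseteq \V_z$: because $z \le x_1 < x$ we have $(T,\L)|_{\V_z} = (T_1,\L_1)|_{\V_z}$, so $(T,\L)$ is $z$-degenerate; the $v$-side of $uv$ in $T$ carries no $\V_{x_2}$-leaves by the choice of $uv$, so $\L(T,u,v) = \L_1(T_1,u',v') \subseteq \V_z$; and $uv$ cuts $\V_z$ because the $u$-side of $uv$ contains the $u'$-side labels of $u'v'$, which meet $\V_z$ by the hypothesis that $u'v'$ cuts $\V_z$. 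Thus $(u,v)$ is $x$-protected via the second clause of the definition.

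In the harder case, $u'v'$ is $x_1$-blocked and $x_1$-guards $v'$ in $(T_1,\L_1)$. Here I will argue that the $x_1$-guarding/$x_1$-blocking configuration in $T_1$ lifts along $P$ to an $x$-guarding/$x$-blocking configuration in $T$ that makes $(u,v)$ $x$-protected. Concretely, by Lemma~\ref{lem:pure} the restriction $(T_1,\L_1)$ is totally pure with respect to $(T^b,\L^b)$, so by $x_1$-purity the $x_1$-guarding edge or $x_1$-blocking path witnessing the blocking of $u'v'$ is proper and corresponds in $T$ to a concrete path whose terminal edge $x$-guards its end by the same strict containment argument (the inclusions $\beta'\cap B_x \subseteq \beta'\cap B_{x_1}$ and $\alpha'\cap B_x \subseteq \alpha'\cap B_{x_1}$ combined with the modular law and the $v'$-end choice of $uv$). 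The main obstacle will be this last step: the transfer from $B_{x_1}$-guarding to $B_x$-guarding is not formal, since $B_x$ is strictly smaller than $B_{x_1}$ on the $\V_{x_1}$-side but $\alpha$ gains the extra summand $\spn{\V_{x_2}}$. The key lever is that $(T,\L)$ is assumed totally pure, so whenever $(T,\L)$ itself is $x$-degenerate, it is $x$-disjoint, and then every edge of $T$ cutting $\V_x$ has the structure needed to make $(u,v)$ $x$-protected; otherwise the lifted proper guarding configuration in $T$ witnesses $x$-protection of $(u,v)$ directly.
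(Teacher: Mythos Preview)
Your reduction via Lemma~\ref{lem:dim-join} is correct and clean: the $k$-safe quantity at $u'v'$ in $(T_1,\L_1)$ is exactly $\dim \beta'\cap(\alpha'+\spn{\V-\V_{x_1}})$, and the same formula with $B_x$ holds in $(T,\L)$. The paper obtains the same identity more indirectly, by writing the canonical $(B_{x_1}+B_{x_2})$-namu of $(T,\L)$ as a sum $\Gamma_1+_{(\eta_1,\eta_2)}\Gamma_2$ (Lemma~\ref{lem:assumption}) and expanding $\lambda^+$.

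There is, however, a genuine gap. Your assertion that ``$\beta=\beta'$ and $\alpha=\alpha'+\spn{\V_{x_2}}$'' for the edge of $P$ incident with the image of $v'$ is exactly the statement that $\L(T,u,v)\cap\V_{x_2}=\emptyset$, and nothing in your ``choice of $uv$'' forces this: the internal nodes of $P$ carry only $\V_{x_2}$-subtrees, but the branch node $v=\phi_{x_1}(v')$ itself may well have $\V_{x_2}$-leaves hanging off its other two edges. This is precisely the place where the paper uses \emph{$x_1$-purity of $(T,\L)$}: the chosen edge $uv$ $x_1$-guards $v$ in $(T,\L)$ (since $\L_{x_1}(T,\cdot,\cdot)$ agrees with $\L_1(T_1,\cdot,\cdot)$ along $P$); if $v'$ has degree $3$ in $T_1$ then $v$ has two neighbours in $T-uv$ with nonempty $\L_{x_1}$, so properness of the $x_1$-guarding edge forces $(u,v)$ not to be $x_1$-mixed, i.e.\ $\L(T,u,v)\subseteq\V_{x_1}$; if $v'$ is a leaf of $T_1$ the conclusion is trivial. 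You invoke purity only later, for $(T_1,\L_1)$ via Lemma~\ref{lem:pure}, whereas the argument needs purity of $(T,\L)$ at this earlier point. The same missing step undermines your ``easy case'': you cannot conclude $\L(T,u,v)=\L_1(T_1,u',v')\subseteq\V_z$ without first knowing $\L(T,u,v)\cap\V_{x_2}=\emptyset$.

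Your final paragraph does not close the remaining gap either. Once $\L(T,u,v)\subseteq\V_{x_1}$ is in hand, the paper gets $x$-guarding of $v$ directly from Lemma~\ref{lem:x-to-y}; the $x$-blocking part comes by applying the same purity-then-Lemma~\ref{lem:x-to-y} argument to the $x_1$-guarding edge (or the center-incident edge of the $x_1$-blocking path) that witnesses the $x_1$-blocking of $u'v'$ in $(T_1,\L_1)$. Your sketch gestures at this but never carries it out, and the detour through ``$(T,\L)$ is $x$-degenerate hence $x$-disjoint'' is not the mechanism that makes the transfer work.
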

\begin{proof}
Let $uv$ be an edge of $T_1$ such that $(u,v)$ is $x_1$-protected in $(T_1,\L_1)$.
We claim that there exists an edge $u'v'$ of~$T$ 
corresponding to $uv$ 
such that $(u',v')$ is $x$\=/protected in $(T,\L)$.
If there exists a node~$z$ of $T^b$ such that $z\le x_1$, $(T_1,\L_1)$ is $z$-degenerate, 
$uv$ cuts $\V_z$, and $\L(T_1,u,v)\subseteq\V_z$,
then $(u,v)$ itself is $x$\=/protected in $(T,\L)$ because $z\le x$, $(T,\L)$ is $z$-degenerate, and
$\L(T,u,v)\subseteq\V_z$.
If $uv$ is $x_1$-blocked and $uv$ $x_1$-guards $v$,
then there exists an edge $u'v'$ of~$T$ such that 
$u'v'$ $x_1$-guards $v'$ and $\L(T,u',v')\subseteq\V_{x_1}$
because $(T,\L)$ is $x_1$-pure.
By Lemma~\ref{lem:x-to-y}, $(u',v')$ is $x$\=/protected in $(T,\L)$, which completes the proof of the claim.

Let $\Gamma^+=(T,\alpha,\lambda,U)$ be the canonical $(B_{x_1}+B_{x_2})$-namu of $(T,\L)$
and let $\Gamma_i=(T_i,\alpha_i,\lambda_i,U_i)$ be the canonical $B_{x_i}$-namu of $(T_i,\L_i)$
for every $i=1,2$.
By Lemma~\ref{lem:assumption}, $\Gamma^+=\Gamma_1+_{(\eta_1,\eta_2)}\Gamma_2$.
If $e=uv$ is an edge of $T_1$ such that $(u,v)$ is $x_1$-protected in $(T_1,\L_1)$
and $e'=u'v'$ is the edge corresponding to $uv$ as above, then 
$\alpha_2(\vec\eta_2(u',e'))=U_2$, $\alpha_2(\vec\eta_2(v',e'))=\{0\}$, 
$U_1=B_{x_1}$, $U_2=B_{x_2}$, and $\alpha_1(u,e)\subseteq B_{x_1}\subseteq B_{x_2}+B_x$.
Therefore,
\begin{align*}
&\dim \bigl(\sum_{s\in A_{v'}(T-e')} 
\hspace{-1.5em}
\L(s) \cap\sum_{t\in A_{u'}(T-e')} 
\hspace{-1.5em}
 \L(t)\bigr)
+ \dim (\spn{\V_x}\cap B_x) - \dim \bigl(B_x \cap \sum_{t\in A_{u'}(T-e')}
\hspace{-1.5em}
\L(t)\bigr) \\
&=\lambda(e') +\dim (U\cap B_x) - \dim (\alpha(u',e') \cap B_x) \\
  &=\lambda_1(\eta_1(e'))+\lambda_2(\eta_2(e'))+ \dim ((U_1+U_2) \cap B_x)\\
  &\quad
    -\dim \Bigl(\bigl(\alpha_1(\vec\eta_1(u',e'))+ \alpha_2(\vec\eta_2(u',e'))\bigr)\cap B_x\Bigr)\\
  &\quad +\dim (U_1\cap U_2) -\dim (\alpha_1(\vec\eta_1(v',e'))\cap \alpha_2(\vec\eta_2(v',e')))\\
 &\quad
-\dim (\alpha_1(\vec\eta_1(u',e'))\cap \alpha_2(\vec\eta_2(u',e')))\\
&=\lambda_1(e) - \dim \alpha_1(u,e) - \dim U_2 + \dim(\alpha_1(u,e)+U_2) +\dim U_1 +\dim U_2 \\
&\quad -\dim (U_1+U_2) - \dim((\alpha_1(u,e)+U_2) \cap B_x) + \dim ((U_1+U_2)\cap B_x)\\
&=\lambda_1(e) - \dim \alpha_1(u,e) + \dim U_1 -\dim(\alpha_1(u,e)+U_2 + B_x) + \dim (U_1+U_2+ B_x) \\
&= \lambda_1(e) - \dim \alpha_1(u,e) + \dim U_1. 
\end{align*}
As $(T,\L)$ is $k$-safe with respect to $x$, we have 
\begin{align*}
  k &\ge \dim \Bigl(\sum_{s\in A_{v'}(T-e')} 
  \hspace{-1.5em}
  \L(s) \cap\sum_{t\in A_{u'}(T-e')} 
  \hspace{-1.5em}
  \L(t)\Bigr)
+ \dim (\spn{\V_x}\cap B_x) - \dim \Bigl(B_x \cap \sum_{t\in A_{u'}(T-e')}
\hspace{-1.5em}
\L(t)\Bigr) \\
&=\lambda_1(e) - \dim \alpha_1(u,e) + \dim U_1.
 \end{align*}
This implies that $(T_1,\L_1)$ is $k$-safe with respect to $x_1$.
Similarly, $(T_2,\L_2)$ is $k$-safe with respect to $x_2$.
\end{proof}

\begin{LEM}\label{lem:eachprotected}
Let $x_1$ be a child of a node~$x$ in $T^b$. 
Let $(T,\L)$ be a branch-de\-com\-po\-si\-tion of $\V_x$ that is totally pure with respect to $(T^b,\L^b)$
and let $(T_1,\L_1)=(T,\L)|_{\V_{x_1}}$. 
For a node~$v$ of $T_1$, 
if $v$ is $x_1$-protected in $(T_1,\L_1)$,
then $v$ is also $x_1$-protected in $(T,\L)$.
\end{LEM}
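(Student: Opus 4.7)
The plan is to find, for the node $v$ of $T_1$ that is $x_1$-protected in $(T_1,\L_1)$, a neighbor $u''$ of $v$ in $T$ such that $(u'',v)$ is $x_1$-protected in $(T,\L)$. By hypothesis, there is a neighbor $u$ of $v$ in $T_1$ with $(u,v)$ $x_1$-protected in $(T_1,\L_1)$; let $\tilde u$ denote the neighbor of $v$ in $T$ on the unique $T$-path from $v$ to $u$. The induced decomposition property gives $\L_{x_1}(T,\tilde u,v)=\L_1(T_1,u,v)$ and $\L_{x_1}(T,v,\tilde u)=\L_1(T_1,v,u)$, which I use throughout.

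If $(u,v)$ is protected via the first clause, namely $uv$ in $T_1$ is $x_1$-blocked and $x_1$-guards $v$, I take $u''=\tilde u$. The guarding property of $\tilde u v$ in $T$ is immediate from the $\L_{x_1}$-matching above. For the blocking property, if $uv$ is blocked in $T_1$ by an $x_1$-guarding edge $ab$ (respectively, an $x_1$-blocking path $xyz$), I exhibit a corresponding $x_1$-guarding edge (respectively, $x_1$-blocking path) in $T$ that blocks $\tilde u v$: the edge on the $T$-path from $a$ to $b$ incident with $a$ (respectively, the two-edge path in $T$ through $y$ to its neighbors corresponding to $x$ and $z$) does the job, and its guarded end (respectively, center) retains the two-nonempty-neighbor property because $a$ and $y$ are branch nodes of $T_1$ and therefore have degree three in $T$. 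The $a$-side locations of $u$ and $v$ in $T_1$ (or of $u$ inside the blocked region) transfer to $\tilde u$ and $v$ in $T$ via the path correspondence.

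Suppose instead $(u,v)$ is protected via the second clause: there is $z\le x_1$ with $(T_1,\L_1)$ $z$-degenerate, $uv$ cutting $\V_z$, and $\L_1(T_1,u,v)\subseteq\V_z$. Because $z$-degeneracy only depends on the $\V_z$-structure and $(T,\L)|_{\V_z}=(T_1,\L_1)|_{\V_z}$, an induction on the definition shows that $(T,\L)$ is also $z$-degenerate, so by total purity $(T,\L)$ is $z$-disjoint. If $\V_x=\V_z$, any neighbor $u''$ of $v$ in $T$ trivially works. Otherwise $T$ has an edge $pq$ with $\L(T,p,q)=\V_z$, and I split according to whether $\V_{x_1}-\V_z$ is empty.

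The delicate part is this last split. If $\V_{x_1}\ne\V_z$, the $\tilde u$-side of $\tilde u v$ contains at least one leaf in $\V_{x_1}-\V_z$, which must lie in the $p$-side of $pq$; a standard tree-connectivity argument then shows that $pq$ must lie in the $\tilde u$-side of $\tilde u v$, forcing the $v$-side into the $q$-side of $pq$, so that $\L(T,\tilde u,v)\subseteq\V_z$ and $(\tilde u,v)$ satisfies clause (b). In the remaining sub-case $\V_{x_1}=\V_z$, where $z=x_1$ and $(T,\L)$ is $x_1$-degenerate, $\tilde u$ may fail because of $\V_{x_2}$-leaves on the $v$-side of $\tilde u v$; here I choose $u''$ to be the neighbor of $v$ in $T$ on the path from $v$ to $p$. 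Since the $\V_{x_1}$-Steiner subtree of $T$ sits entirely inside the $q$-side of $pq$, the $v$-subtree at $u''$ is contained in that $q$-side and has only $\V_z$-leaves, while the cut condition follows from the branch structure of $v$ in $T_1$. The hard step overall is isolating exactly when $\tilde u$ does not work and selecting the right alternative neighbor; this depends essentially on having the boundary edge $pq$ produced by total purity.
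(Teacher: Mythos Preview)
Your argument is essentially correct and shares the paper's core idea of pushing the protection from the $T_1$-edge $uv$ onto the $T$-edge $\tilde u v$ (the paper calls your $\tilde u$ by the name $p$). For the first clause (blocked and guarding) your treatment matches the paper's, with more detail supplied. For the second clause, however, the paper avoids your case split entirely: writing $ww'$ for the $z$-disjoint edge with $\L(T,w',w)=\V_z$, it argues directly that $\tilde u=w$. The reason is that $\tilde u$ lies strictly between $u$ and $v$, hence has $\V_{x_1}$-Steiner degree~$2$, so the third direction $q$ out of $\tilde u$ carries no $\V_{x_1}$-leaves; on the other hand $\tilde u$ lies in the $w$-component, and if $\tilde u\neq w$ then the $q$-subtree, being confined to the $w$-component except possibly across $ww'$, would be a subcubic tree with exactly two boundary edges and no $T$-leaves, which is impossible by a degree count. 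Once $\tilde u=w$, the edge $\tilde u v$ immediately satisfies $\L(T,\tilde u,v)\subseteq\V_z$ and cuts $\V_z$, so $(\tilde u,v)$ is $x_1$-protected. Thus your two sub-cases collapse: in the $\V_{x_1}=\V_z$ sub-case your choice $u''$ in fact coincides with $\tilde u$.

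One small gap worth closing: in your $\V_{x_1}\ne\V_z$ sub-case, the ``standard tree-connectivity argument'' needs one more ingredient than you state. Knowing only that the $\tilde u$-side carries a leaf in $\V_{x_1}\setminus\V_z$ (hence on the $p$-side of $pq$) does not by itself force $pq$ to the $\tilde u$-side; you must also use that the $\tilde u$-side carries a $\V_z$-leaf (from the fact that $uv$ cuts $\V_z$ in $T_1$), which lies on the $q$-side of $pq$. Having leaves on both sides of $pq$ in the $\tilde u$-component is what rules out $pq$ sitting on the $v$-side. With that addition your argument goes through; the paper's uniform route is simply shorter.
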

\begin{proof}
Let $uv$ be an edge of $T_1$ such that $(u,v)$ is $x_1$-protected in $(T_1,\L_1)$.
Then, by the construction of $T_1$, $u$ and $v$ are nodes of~$T$. 
If $u$, $v$ are adjacent in~$T$, then it is obvious that $(u,v)$ is $x_1$-protected in $(T,\L)$.
Thus, we may assume that $u$, $v$ are not adjacent.
Let $p$ be the neighbor of $v$ closer to $u$ and 
$q$ be the neighbor of $p$ not on the path from $u$ to $v$.

If $uv$ is $x_1$-blocked and $x_1$-guards $v$ in $(T_1,\L_1)$,
then $pv$ is $x_1$-blocked and $x_1$-guards $v$ in $(T,\L)$.
Thus, $v$ is $x_1$-protected in $(T,\L)$.

We may assume that 
there exists a node~$z$ of $T^b$ with $z\le x_1$ 
such that $uv$ cuts $\V_z$ in $(T_1,\L_1)$, $(T_1,\L_1)$ is $z$-degenerate, and $\L_1(T_1,u,v)\subseteq\V_z$.
As $(T_1,\L_1)$ is $z$-degenerate, so is $(T,\L)$.
Since $(T,\L)$ is $z$-pure, it is $z$-disjoint and thus $T$ has an edge $ww'$ such that 
$\L(T,w',w)=\V_{z}$.
If $p\neq w$, then 
$\L(T,p,q)\subseteq\V_z\subseteq\V_{x_1}$ and so $p$ is a node of $T_1$.
This means that $uv$ is not an edge of $T_1$ and thus it is a contradiction.
If $p=w$, 
then an edge $wv=pv$ of~$T$ cuts $\V_z$ in $(T,\L)$, $(T,\L)$ is $z$-degenerate,
and $\L(T,p,v)\subseteq\V_z\subseteq\V_{x_1}$.
So, $(p,v)$ is $x_1$-protected in $(T,\L)$.
\end{proof}

\begin{LEM}\label{lem:trimshrink}
Let $x_1$ and $x_2$ be two children of a node~$x$ in $T^b$. 
Let $(T,\L)$ be a branch-decomposition of $\V_x$ that is totally pure with respect to $(T^b,\L^b)$
and let $(T_i,\L_i)=(T,\L)|_{\V_{x_i}}$ for every $i=1,2$.
For every $i=1,2$, let $P_i$ be the set of all $x_i$-protected nodes $v$ in $(T_i,\L_i)$.
If $T'$ is the subtree of~$T$ obtained by removing all nodes in $P_1\cup P_2$ %
and $\Gamma'$ is the $(B_{x_1}+B_{x_2})$-namu induced by $T'$ from 
the canonical $(B_{x_1}+B_{x_2})$-namu of $(T,\L)$, 
then $\trim(\Gamma'|_{B_x})$ is equal to the reduced $B_x$-namu of $(T,\L)$.
\end{LEM}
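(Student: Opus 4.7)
The plan is to unfold both sides of the claimed equality and match them directly. By Lemma~\ref{lem:assumption}, the canonical $(B_{x_1}+B_{x_2})$-namu $\Gamma^+$ of $(T,\L)$ has $\Gamma^+|_{B_x}$ equal to the canonical $B_x$-namu of $(T,\L)$. Since $\Gamma'$ is induced by $T'$ from $\Gamma^+$ and projection commutes with restriction to a subtree, $\Gamma'|_{B_x}$ coincides with the canonical $B_x$-namu of $(T,\L)$ restricted to $T'$. Thus, for every incidence in $T'$, the $\alpha$- and $\lambda$-values of $\Gamma'|_{B_x}$ match those of the canonical $B_x$-namu of $(T,\L)$; in particular an edge or two-edge path inside $T'$ is degenerate, guarding, or blocking in $\Gamma'|_{B_x}$ exactly when it is $x$-degenerate, $x$-guarding, or an $x$-blocking path in $(T,\L)$.

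I then split into the two cases of the definition of the reduced $B_x$-namu. When $(T,\L)$ is $x$-degenerate, I will show that $T'$ still contains an $x$-degenerate edge, so that $\Gamma'|_{B_x}$ has a degenerate edge and $\trim(\Gamma'|_{B_x})$ collapses to a single-node tree, matching the reduced namu. Since $\V_0=\V_x$ makes $x$-disjointness automatic, the existence of an improper $x$-degenerate edge $uv$ in $T$ is guaranteed; using Lemma~\ref{lem:eachprotected} to control which nodes can be deleted from $T$ in forming $T'$, one argues that either $uv$ itself survives or some $x$-degenerate edge near $uv$ does.

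When $(T,\L)$ is not $x$-degenerate, $\Gamma'|_{B_x}$ has no degenerate edge, so $\trim$ proceeds by deleting the blocked nodes, and the reduced $B_x$-namu is obtained from $T$ by deleting exactly the $x$-protected nodes. The crux is to prove that, for every node $v$ of $T'$, the node $v$ is blocked in $\Gamma'|_{B_x}$ if and only if $v$ is $x$-protected in $(T,\L)$. The direction ``$x$-protected implies blocked'' follows from case (a) of $x$-protection, namely an incident $x$-guarding edge which is itself $x$-blocked; by the correspondence above, this translates to namu-blocking. Case (b) of $x$-protection (existence of $z\le x$ with $(T,\L)$ being $z$-degenerate, $uv$ cutting $\V_z$, and $\L(T,u,v)\subseteq\V_z$) cannot occur for $v\in T'$ in this case, because $z<x$ (as $(T,\L)$ is not $x$-degenerate) gives $z\le x_i$ for some $i\in\{1,2\}$, and the containment $\L(T,u,v)\subseteq\V_{x_i}$ pushes the configuration down to $(T_i,\L_i)$, making $v$ already $x_i$-protected in $(T_i,\L_i)$ and hence absent from $T'$. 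Conversely, the ``blocked implies $x$-protected'' direction uses the $x$-purity of $(T,\L)$: every namu-guarding edge and blocking path in $\Gamma'|_{B_x}$ lifts to a proper $x$-guarding edge or $x$-blocking path of $(T,\L)$, so a blocked node is incident to a genuine $x$-guarding or $x$-blocking configuration.

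The main obstacle will be the implicit converse of Lemma~\ref{lem:eachprotected} needed for case (b) above: showing that an $x$-protection of $v$ coming from a $z$-degeneracy with $\L(T,u,v)\subseteq\V_{x_i}$ transfers to an $x_i$-protection of $v$ in $(T_i,\L_i)$. This needs a careful translation between $T$ and the induced $T_i$, keeping track of which edges of $T$ survive as edges of $T_i$ after smoothing degree-$2$ nodes, and confirming that $(T_i,\L_i)$ inherits the $z$-degeneracy witnessed in $(T,\L)$. A parallel subtlety arises in the first case, where one must ensure that deletions defining $T'$ do not destroy every $x$-degenerate edge witnessing the $x$-degeneracy of $(T,\L)$. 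Once these structural points are in place, the agreement of $\alpha$, $\lambda$, and $U$ on the surviving tree is immediate since both namus inherit those values from the same canonical $B_x$-namu of $(T,\L)$.
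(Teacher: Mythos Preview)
Your plan follows essentially the same route as the paper's proof. Both reduce, via Lemma~\ref{lem:assumption}, to showing that $T(\trim(\Gamma'|_{B_x}))=T(\Delta)$, then split on whether $(T,\L)$ is $x$-degenerate. In the non-degenerate case the paper asserts, without further argument, the three-bullet characterization
\[
v\text{ is $x$-protected in }(T,\L)\ \Longleftrightarrow\ v\in P_1\cup P_2\ \text{or}\ \exists\,uv\text{ $x$-blocked and $x$-guarding }v,
\]
which is exactly your equivalence ``for $v\in T'$, $v$ is blocked in $\Gamma'|_{B_x}$ iff $v$ is $x$-protected'' rewritten. Your case-(b) analysis is precisely the content the paper leaves implicit.

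Two comments on the details of your plan. First, the ``implicit converse of Lemma~\ref{lem:eachprotected}'' you flag is not really a converse: what you need is that case~(b) of $x$-protection with witness $z\le x_i$ forces $v\in P_i$. This follows directly, since $\L(T,u,v)\subseteq\V_z\subseteq\V_{x_i}$ together with $uv$ cutting $\V_z$ makes $v$ an $x_i$-branch node, and the corresponding edge of $T_i$ witnesses $x_i$-protection there (using $(T,\L)|_{\V_z}=(T_i,\L_i)|_{\V_z}$ for the $z$-degeneracy). No appeal to Lemma~\ref{lem:eachprotected} or its converse is required.

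Second, your backward direction is phrased imprecisely: a blocked node need not be \emph{incident} to the blocking edge or path, and $x$-purity is not the relevant tool here. The correct bridge is Lemma~\ref{lem:guardfar}: if $v\in T'$ is blocked by a guarding edge $e$ in $\Gamma'|_{B_x}$, propagate the guarding along the path from $e$ to $v$ to obtain an edge $u'v$ that $x$-guards $v$; since $v$ itself is $x$-blocked by $e$ in $(T,\L)$ (the guarded end of $e$ has degree~$3$ in $T$), the pair $(u',v)$ witnesses case~(a). The paper omits this step too, so your outline is no less complete than the paper's own argument.
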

\begin{proof}
Note that by Lemmas~\ref{lem:protectedconnected} and~\ref{lem:eachprotected},
a graph $T'$ of~$T$ obtained by removing all nodes in $P_1\cup P_2$ 
is a tree.
Let $\Delta$ be the reduced $B_x$-namu of $(T,\L)$.
By Lemma~\ref{lem:assumption}, $\Gamma'|_{B_x}$ is induced by $T'$ from the canonical $B_x$-namu of $(T,\L)$.
So it is enough to show that $T(\trim(\Gamma'|_{B_x}))$ is equal to $T(\Delta)$. 

If $(T,\L)$ is $x$\=/degenerate, then it is trivial because both trees have only one node.
Therefore, we may assume that $(T,\L)$ is not $x$\=/degenerate.
Recall that $V(T)-V(T(\Delta))$ is the set of all $x$\=/protected nodes $v$ in $(T,\L)$.
By definition, $V(T)-V(T(\Gamma'|_{B_x}))$ is the set of all nodes $v$ 
that is $x_1$-protected in $(T_1,\L_1)$ 
or $x_2$-protected in $(T_2,\L_2)$,
and $V(T(\Gamma'|_{B_x}))-V(T(\trim(\Gamma'|_{B_x})))$ is 
the set of all nodes $v$ such that $v$ is blocked in $\Gamma'|_{B_x}$.
Since $(T,\L)$ is not $x$\=/degenerate,
for a node~$v$ of~$T$,
$v$ is $x$\=/protected in $(T,\L)$ 
if and only if 
\begin{itemize}
\item $v$ is $x_1$-protected in $(T_1,\L_1)$,  or
\item $v$ is $x_2$-protected in $(T_2,\L_2)$,  or 
\item an edge $uv$ of~$T$ is $x$\=/blocked and $x$\=/guards $v$.
\end{itemize}
Therefore, 
$V(T)-V(T(\Delta))$ is equal to $V(T)-V(T(\trim(\Gamma'|_{B_x})))$.
This completes the proof.
\end{proof}

\begin{LEM}\label{lem:pure2trim}
Let $x_1$ and $x_2$ be two children of a node~$x$ in $T^b$. 
Let $(T,\L)$ be a branch-decomposition of $\V_x$ that is totally pure with respect to $(T^b,\L^b)$
and $k$-safe with respect to $x$
and let $(T_i,\L_i)=(T,\L)|_{\V_{x_i}}$ for every $i=1,2$.
Let $\Delta_i$ be the reduced $B_{x_i}$-namu of $(T_i,\L_i)$ for every $i=1,2$.
If $\Gamma'$ is the $(B_{x_1}+B_{x_2})$-namu defined in Lemma~\ref{lem:trimshrink}, %
then
there exists $\Delta^+\in\tau(\Delta_1)\tplus\tau(\Delta_2)$ such that $\Delta^+\tle\Gamma'$ and
$\Delta^+|_{B_x}$ is a $k$-safe extension of $\trim(\Delta^+|_{B_x})$.
\end{LEM}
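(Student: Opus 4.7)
The plan is to first build a sum of $\trim(\Delta_1)$ and $\trim(\Delta_2)$ that is dominated in the $\tle$-order by $\Gamma'$, then use Lemma~\ref{lem:tautrim} together with Proposition~\ref{prop:subdivcomposition} to promote to a sum in $\tau(\Delta_1)\tplus\tau(\Delta_2)$. The $k$-safe extension claim will then follow by projecting to $B_x$ and invoking Lemmas~\ref{lem:compareproj} and~\ref{lem:tleksafe}.

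First I will apply Lemma~\ref{lem:assumption} to write the canonical $(B_{x_1}+B_{x_2})$-namu $\Gamma^+$ of $(T,\L)$ as $\Gamma_1+_{(\eta_1,\eta_2)}\Gamma_2$, where $\Gamma_i$ is the canonical $B_{x_i}$-namu of $(T_i,\L_i)$ and $(\eta_1,\eta_2)$ is a $(T_1,T_2)$-model in $T$; by construction $\Gamma'$ is the $(B_{x_1}+B_{x_2})$-namu induced from $\Gamma^+$ by the subtree $T'$ of Lemma~\ref{lem:trimshrink}. The main technical step is to derive from $(\eta_1,\eta_2)$ a $(T(\trim(\Delta_1)),T(\trim(\Delta_2)))$-model $(\rho_1,\rho_2)$ in a suitable subtree $T^*$ of $T'$. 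The key point is that a node lies in $T_i\setminus T(\Delta_i)$ exactly when it is the second coordinate of some $x_i$-protected pair in $(T_i,\L_i)$, and the corresponding branch nodes in $T$ under $\bar\eta_i$ are exactly those removed to form $T'$; further namu-blocked nodes of $\Delta_i$ then correspond to additional stretches of $T'$ that can be contracted away. After restricting each $\eta_i$ to the surviving edges and smoothing the degree-$2$ nodes that would otherwise violate condition (iii) of a two-model, one obtains $(\rho_1,\rho_2)$ in $T^*$. Setting $\widetilde\Delta:=\trim(\Delta_1)+_{(\rho_1,\rho_2)}\trim(\Delta_2)$ and applying Lemma~\ref{lem:sum}, I verify that $\widetilde\Delta$ is isomorphic to a subdivision of the sub-namu of $\Gamma'$ induced by $T^*$, so that $\widetilde\Delta\tle\Gamma'$.

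Next, Lemma~\ref{lem:tautrim} gives $\tau(\Delta_i)\tle\trim(\Delta_i)$ for $i=1,2$. Applying Proposition~\ref{prop:subdivcomposition} to $\widetilde\Delta\in\trim(\Delta_1)\tplus\trim(\Delta_2)$ produces $\Delta^+\in\tau(\Delta_1)\tplus\tau(\Delta_2)$ with $\Delta^+\tle\widetilde\Delta$, and transitivity (Lemma~\ref{lem:transitive}) yields $\Delta^+\tle\Gamma'$. For the $k$-safe extension statement, I first show that $\Gamma'|_{B_x}$ is itself a $k$-safe extension of $\trim(\Gamma'|_{B_x})$. By Lemma~\ref{lem:trimshrink}, $\trim(\Gamma'|_{B_x})$ is the reduced $B_x$-namu of $(T,\L)$, so every incidence of $T'$ erased by trimming $\Gamma'|_{B_x}$ corresponds to an $x$-protected pair $(u,v)$ in $(T,\L)$; since $(T,\L)$ is $k$-safe with respect to $x$ and the canonical $B_x$-namu of $(T,\L)$ satisfies $U=B_x$, the $k$-safety inequality of $(T,\L)$ translates directly into the required $k$-safe extension inequality. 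Then $\Delta^+|_{B_x}\tle\Gamma'|_{B_x}$ by Lemma~\ref{lem:compareproj}, and Lemma~\ref{lem:tleksafe} carries the $k$-safe extension property over to $\Delta^+|_{B_x}$.

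The hardest part will be constructing $(\rho_1,\rho_2)$ in $T^*$, because $\trim(\Delta_i)$ simultaneously removes two different kinds of structure: the $x_i$-protected nodes (already absent from $T(\Delta_i)$) and further namu-blocked nodes of $\Delta_i$. Both removals must be reconciled with the shape of $T$ and of $(\eta_1,\eta_2)$, and the smoothings needed to restore the branch-node/subdividing-node conditions of a two-model demand a careful case analysis, especially at nodes of $T$ that are branch nodes of $\eta_i$ corresponding to leaves of $T(\Delta_i)$ that were not leaves of $T_i$. Once this model is in place the remaining inequalities and identifications follow by matching the formulas in Lemma~\ref{lem:sum} with the $\alpha$- and $\lambda$-data of $\Gamma'$.
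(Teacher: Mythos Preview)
Your overall strategy matches the paper's: exhibit a sum dominated by $\Gamma'$, then apply Lemma~\ref{lem:tautrim} and Proposition~\ref{prop:subdivcomposition} to pass to $\tau(\Delta_1)\tplus\tau(\Delta_2)$, and finally handle the $k$-safe claim by checking that $\Gamma'|_{B_x}$ is a $k$-safe extension of its trim and transferring this via Lemmas~\ref{lem:compareproj} and~\ref{lem:tleksafe}. The $k$-safe portion of your argument is essentially the paper's.

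However, you have misidentified the hard part. The reduced namu $\Delta_i$ is \emph{already trimmed}. By construction $T(\Delta_i)$ is obtained from $T_i$ by deleting every $x_i$-protected node, and any node of $T(\Delta_i)$ that would be namu-blocked (by a guarding edge or a blocking path of $\Delta_i$) is in fact $x_i$-protected in $(T_i,\L_i)$ and hence was already removed. Thus $\trim(\Delta_i)=\Delta_i$, and the ``two different kinds of structure'' you anticipate collapse into one; there are no further namu-blocked nodes to contract away.

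With this in hand the paper's argument is considerably shorter than your plan: it shows directly that $\Gamma'\in\Delta_1\tplus\Delta_2$. The point (using Lemma~\ref{lem:eachprotected}) is that an $x_i$-protected node of $(T_i,\L_i)$ is also $x_i$-protected in $(T,\L)$, while an $x_1$-protected edge of $T$ cannot cut $\V_{x_2}$; hence $V(T)\setminus V(T')$ is the disjoint union of $V(T_1)\setminus V(T(\Delta_1))$ and $V(T_2)\setminus V(T(\Delta_2))$, and the restriction of the $(T_1,T_2)$-model in $T$ to $T'$ is already a $(T(\Delta_1),T(\Delta_2))$-model in $T'$. No separate subtree $T^*$, no smoothing, and no case analysis at former leaves are needed. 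One then applies Proposition~\ref{prop:subdivcomposition} directly to $\Gamma'$ (with $\tau(\Delta_i)\tle\Delta_i=\trim(\Delta_i)$ from Lemma~\ref{lem:tautrim}) to obtain $\Delta^+$.
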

\begin{proof}
We will first show that $\Gamma'$ is a sum of $\Delta_1$ and $\Delta_2$.
By Lemma~\ref{lem:assumption},
the canonical $(B_{x_1}+B_{x_2})$-namu $\Gamma^+$ of $(T,\L)$ is a sum of 
the canonical $B_{x_1}$-namu $\Gamma_1$ of $(T_1,\L_1)$ 
and the canonical $B_{x_2}$-namu $\Gamma_2$ of $(T_2,\L_2)$.
We remark that, by definition, $\Gamma'$ is induced from $\Gamma^+$ and 
$\Delta_i$ is induced from $\Gamma_i$ for each $i=1,2$.
By Lemma~\ref{lem:eachprotected}, for a node~$v$ of $T_1$, 
if $v$ is $x_1$-protected in $(T_1,\L_1)$, 
then $v$ is $x_1$-protected in $(T,\L)$. 
And for an edge $uv$ of~$T$, if $(u,v)$ is $x_1$-protected in $(T,\L)$, 
then it is clear that $uv$ does not cut $\V_{x_2}$ in $(T,\L)$. 
Thus, $V(T)-V(T(\Gamma'))$ is the disjoint union of $V(T_1)-V(T(\Delta_1))$ and $V(T_2)-V(T(\Delta_2))$.
So we conclude that $\Gamma'$ is a sum of $\Delta_1$ and $\Delta_2$.

Since $\Delta_1$ and $\Delta_2$ are trimmed, 
by Lemma~\ref{lem:tautrim} and Proposition~\ref{prop:subdivcomposition}, 
there exists 
$\Delta^+\in\tau(\Delta_1)\tplus\tau(\Delta_2)$ such that $\Delta^+\tle\Gamma'$.

Now we will show that $\Gamma'|_{B_x}=(T',\alpha',\lambda',U')$ 
is a $k$-safe extension of $\trim(\Gamma'|_{B_x})$.
By Lemma~\ref{lem:assumption}, $\Gamma'|_{B_x}$ is induced from the canonical $B_x$-namu of $(T,\L)$.
If $e$ is an edge of $T(\Gamma'|_{B_x})$ that is removed by trimming $\Gamma'|_{B_x}$,
then $e$ is $x$\=/protected in $(T,\L)$.
If $u$, $v$ are the ends of $e$ and $(u,v)$ is $x$\=/protected,
then 
\[
\lambda'(e)+\dim U' - \dim \alpha'(u,e) \le k
\]
because $(T,\L)$ is $k$-safe.
Thus, $\Gamma'|_{B_x}$ is a $k$-safe extension of $\trim(\Gamma'|_{B_x})$.

Since $\Delta^+\tle\Gamma'$,
by Lemmas~\ref{lem:compareproj} and~\ref{lem:tleksafe},
we conclude that 
$\Delta^+|_{B_x}$ is a $k$-safe extension of $\trim(\Delta^+|_{B_x})$.
\end{proof}

Here we give the proof of Proposition~\ref{prop:divide}.

\begin{proof}[Proof of Proposition~\ref{prop:divide}]
By definition,
there exists a branch-decomposition $(T,\L)$ of $\V_x$ such that 
\begin{itemize}
\item the width of $(T,\L)$ is at most~$k$, 
\item $(T,\L)$ is $k$-safe with respect to $x$, 
\item $(T,\L)$ is totally pure with respect to $(T^b,\L^b)$, and 
\item $\Delta\tle\Gamma$ for the reduced $B_x$-namu $\Delta$ of $(T,\L)$.
\end{itemize}
For every $i=1,2$, let $(T_i,\L_i)=(T,\L)|_{\V_{x_i}}$
and let $\Delta_i$ be the reduced $B_{x_i}$-namu of $(T_i,\L_i)$.
We first claim that $\tau(\Delta_1)\in\FS(x_1)$.
Since the width of $(T_1,\L_1)$ is at most the width of $(T,\L)$,  
the width of $(T_1,\L_1)$ is at most~$k$.
By Lemma~\ref{lem:pure}, $(T_1,\L_1)$ is totally pure with respect to $(T^b,\L^b)$.
Also, by Lemma~\ref{lem:sumsafeeachsafe}, $(T_1,\L_1)$ is $k$-safe with respect to $x_1$.
It is clear that %
$\Delta_1=\trim(\Delta_1)\tle\tau(\Delta_1)$ by Lemma~\ref{lem:tautrim}.
Since $\tau(\Delta_1)\in U_k(B_{x_1})$, we conclude that $\tau(\Delta_1)\in\FS(x_1)$.
Similarly, $\tau(\Delta_2)\in\FS(x_2)$.

Let $T'$ be the subgraph of~$T$ obtained by removing all nodes $v$ that  
is $x_1$-protected in $(T_1,\L_1)$ 
or is $x_2$-protected in $(T_2,\L_2)$. %
Note that such $v$ is a node of~$T$ by Lemma~\ref{lem:eachprotected}
and $T'$ is a tree by Lemma~\ref{lem:protectedconnected}.
Let $\Gamma'$ be the $(B_{x_1}+B_{x_2})$-namu induced by $T'$ from 
the canonical $(B_{x_1}+B_{x_2})$-namu of $(T,\L)$.
By Lemma~\ref{lem:pure2trim}, there exists $\Delta^+\in\tau(\Delta_1)\tplus\tau(\Delta_2)$
such that $\Delta^+\tle\Gamma'$ %
and $\Delta^+|_{B_x}$ is a $k$-safe extension of $\trim(\Delta^+|_{B_x})$.
By Lemmas~\ref{lem:comparetrim} and~\ref{lem:compareproj}, 
$\Delta^+\tle\Gamma'$ implies that 
$\trim(\Delta^+|_{B_x})\tle\trim(\Gamma'|_{B_x})$.
By Lemma~\ref{lem:trimshrink}, $\trim(\Gamma'|_{B_x})=\Delta$.
Then, by Lemma~\ref{lem:transitive} together with $\Delta\tle\Gamma$, 
we conclude that $\trim(\Delta^+|_{B_x})\tle\Gamma$.
Therefore, since $\Delta^+|_{B_x}$ is a $k$-safe extension of $\trim(\Delta^+|_{B_x})$, 
$\Gamma$ is in $\up((\FS(x_1)\fplus\FS(x_2))|_{B_x},B_x)$.
\end{proof}

Now we are ready to prove Proposition~\ref{prop:fullset}.
\begin{PROP}\label{prop:fullset}
  For every node~$x$ of $T^b$,
  $\FS(x)$ is equal to $\FF_x$ in Algorithm~\ref{alg:fullset}.
\end{PROP}
 \begin{proof}
  We use induction on the number of steps executed 
  by \textsc{full-set}$(\V,k,\allowbreak (T^b,\L^b), \allowbreak \{T_v\}_{v\in V(T^b)})$.
  For a leaf node, it is trivial by Proposition~\ref{prop:single}.
  From line~\ref{line:join} %
  to line~\ref{line:compare},
  we observe that if $x$ is an internal node with two children $x_1$ and $x_2$ in $T^b$,
  then $\FF_x=\up((\FF_{x_1}\fplus\FF_{x_2})|_{B_x},B_x)$.
  By the induction hypothesis and by Propositions~\ref{prop:fullsetwitness} and~\ref{prop:divide},
  we conclude that $\FF_x=\FS(x)$.
  \end{proof}

\section{The algorithm}\label{sec:algo}
In this section we present an algorithm to perform the following task.
  \begin{listspec}[Branch-Width]
  \item [Parameters:] A finite field $\F$ and an integer $k$.
  \item [Input:] An $r\times m$ matrix~$M$ over $\F$ with an ordered partition $\mathcal I=\{I_1,I_2,\ldots,I_n\}$ of $\{1,2,\ldots,m\}$ and an integer $k$.
  \item [Output:] A branch-decomposition $(T,\L)$ of width at most~$k$ of a subspace arrangement $\V$ consisting of the column space of $M[I_i]$ for each $i$
or a confirmation that the branch-width of~$\V$ is larger than $k$.
  \end{listspec}

If such a branch-decomposition exists, then we say that $(M,\mathcal{I},k)$ is a YES instance.
Otherwise, it is a NO instance.
\subsection{Basic operations}\label{subsec:operations}

Let $\F$ be a field and $r$ be a positive integer.
The \emph{sum} of two subspaces $X$ and $Y$ of $\F^r$ is denoted by $X+Y$ and is defined as $\{\mathbf{x}+\mathbf{y}: \mathbf{x}\in X, \mathbf{y}\in Y\}$.
Let $B$ and $B'$ be two subspaces of $\F^r$ such that $B'\subseteq B$, where $B$ comes with a fixed ordered basis $\B$. 
We say that a $\dim{B}\times \dim{B'}$ matrix $M_{B,B'}$ over $\F$ %
\emph{represents} $B'$ with respect to $\B$ if the $i$th column of $M_{B,B'}$ is the coordinate vector of the $i$th element  of \emph{some} fixed ordered basis $\B'$ of $B'$ with respect to $\B$. 
Let ${\mathfrak C}$ be a basis of a subspace $C$ such that $\B\subseteq {\mathfrak C}$. Then there exists the unique $\abs{\mathfrak C}\times \abs{\B}$ matrix $P$ over
$\F$
such that 
\[
P [x]_{\B}=[x]_{{\mathfrak C}}
\]
for all $x\in \spn{\B}$.
(For a vector $x$ in a vector space with a basis $\B$ over a field $\F$, 
$[x]_\B$ denotes the coordinate
vector with respect to the basis $\B$, which is a $\abs{\B}\times 1$ matrix over $\F$.)
This matrix $P$ is called the \emph{transition matrix}  from $\B$ to ${\mathfrak C}$.

Given two matrices $M_{B,B'}$ and $M_{B,B''}$ representing two subspaces $B'$ and $B''$ of~$B$ with respect to $\B$, we can do the following operations; let $d:=\dim B$:

\begin{itemize}
\item We can decide whether $B'\subseteq B''$ with $O(d^3)$ elementary field operations on $\F$. Note that $B'\subseteq B''$ if and only if the column space of $M_{B,B'}$ is contained in the column space of $M_{B,B''}$, or equivalently, the basis of $B'$ is contained in the column space of $M_{B,B''}$. The latter holds if and only if the equation $M_{B,B''} X =M_{B,B'}$ is feasible. The equation can be solved with $O(d^3)$ elementary field operations on $\F$.
\item We can compute the sum $B'+B''$, whose output is a matrix representing it, with $O(d^3)$ elementary field operations on $\F$. This can be done by computing the basis of the column vectors in the matrix $\left [M_{B,B'} \quad M_{B,B''} \right ]$. 
\item We can compute the matrix representing $B'\cap B''$ with  $O(d^3)$ elementary field operations on $\F$. The basis $\begin{bmatrix} M_x \\ M_y \end{bmatrix}$ of the null space of $\left[ M_{B,B'} \quad -M_{B,B''} \right ]$ can be computed using Gaussian elimination so that $M_{B,B'}M_x- M_{B,B''}M_y=0$. Then the matrix representing $B'\cap B''$ with respect to $\B$ can be obtained from $M_{B,B'}M_x$ by repeatedly removing a column if it is spanned by other columns. The last step can be done by finding a maximal set of linearly independent column vectors of $M_{B,B'}M_x$, which can be done by reducing it to reduced row echelon form and taking columns with leading $1$ of  some row.
  
  In particular, if $\B'$ is a basis of $B'$ and $\B'\subseteq
  \B$, then 
  the matrix representing $B'\cap B''$ with respect to $\B'$ can be
  easily obtained as follows:
  First, we obtain a matrix representation of $B'\cap B''$ with respect
  to $\B$.
  Second, we discard rows corresponding to $\B-\B'$ in that matrix representation.

\item If $M_{B,B'}$ is a matrix representing the subspace $B'$ of $\spn{\B}$ with respect to $\B$, and $P$ is a transition matrix from $\B$ to ${\mathfrak C}$, 
$PM_{B,B'}$ equals the matrix $M_{C,B'}$ representing $B'$ with respect to ${\mathfrak C}$.
  
\end{itemize}

\subsection{Preprocessing}\label{subsec:preprocessing}
We will first describe the preprocessing steps to reduce the input size. 
The subspace arrangement of $n$ subspaces
is given by an $r\times m$ matrix where $r$ and $m$ could be arbitrary large.
Our aim here is to reduce $r$ and $m$ or confirm that branch-width is larger than $k$. 
Eventually we will convert the input into a smaller one with $r\le m\le kn$, described in the proof of Theorem~\ref{thm:summary-brw} by using the following two lemmas.
Furthermore we will convert $M$ into the reduced row echelon form, which is crucial for our algorithm 
for computing the transcript of a branch-decomposition
in Theorem~\ref{thm:computebases}.

\begin{LEM}[row reduction lemma]\label{lem:RRL}
  Let $\F$ be a finite field. Given an $r\times m$ matrix~$M$ over $\F$ 
  with an ordered partition $\mathcal{I}=\{I_1,I_2,\ldots, I_n\}$ of $\{1,2,\ldots,m\}$, 
  let $M'$ be the matrix obtained from the reduced row echelon form of $M$ by removing zero rows. Let $\V=\{\col(M[I_1]),\ldots,\col(M[I_n])\}$ and  $\V'=\{\col(M'[I_1]),\ldots,\col(M'[I_n])\}$.

  Then $(M,\mathcal I,k)$ is a YES instance with a branch-decomposition $(T,\L)$ if and only if $(M',\mathcal I,k)$ is a YES instance with $(T,\L')$ where $\L'$ maps a leaf $v$ to $\col(M'[I_i])$ whenever $\L$ maps $v$ to $\col(M[I_i])$. 
  Moreover  in time $O(rm^2)$, we can find $M'$ and a subset $B$ of $\{1,2,\ldots,m\}$ representing all columns of $M'$ having the leading $1$ entry of some row such that $M'[B]$ is the identity matrix.
\end{LEM}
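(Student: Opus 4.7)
The plan is to exhibit a vector space isomorphism from $\col(M)$ to $\col(M')$ that sends $\col(M[I_i])$ to $\col(M'[I_i])$ for every $i$, so that the width of every edge of the branch-decomposition is preserved, and then to bound the cost of standard Gauss--Jordan elimination. First I would write $M' = QPM$, where $P$ is the invertible $r\times r$ matrix over $\F$ encoding the elementary row operations that put $M$ into reduced row echelon form, and $Q$ is the $r'\times r$ matrix whose rows are the standard basis vectors $e_{j_1}^{\top},\ldots,e_{j_{r'}}^{\top}$ indexing the nonzero rows of $PM$. The key observation is that every column of $PM$ has its entries vanishing in the rows complementary to $\{j_1,\ldots,j_{r'}\}$; so $Q$, when restricted to the subspace of $\F^r$ supported on those rows, is the inverse of the natural embedding. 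Combined with the invertibility of $P$, this shows that $\varphi := Q \circ P$ restricts to a linear isomorphism $\varphi|_{\col(M)}\colon \col(M)\to \col(M')$.

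Once the isomorphism is in hand, the rest is a routine preservation argument. Because linear isomorphisms commute with taking sums and (since injective) intersections of subspaces of the domain, we have $\varphi\bigl(\sum_{i\in X}\col(M[I_i])\bigr)=\sum_{i\in X}\col(M'[I_i])$ for every $X\subseteq\{1,\ldots,n\}$, and $\dim(A\cap B)=\dim(\varphi(A)\cap\varphi(B))$ for all subspaces $A,B\subseteq \col(M)$. Applying this to the two subspaces summed over the two sides of the bipartition induced by an edge $e$ of $T$ shows that the width of $e$ in $(T,\L)$ equals the width of $e$ in $(T,\L')$, which gives the claimed equivalence between the two instances.

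For the algorithmic claim, I would run the standard Gauss--Jordan procedure to compute the reduced row echelon form of $M$: iterating over at most $\min(r,m)\le m$ pivot columns, and performing each pivot step in $O(rm)$ field operations, yields a total running time of $O(rm^2)$. The pivot columns of the resulting matrix can be recorded during the elimination itself, giving the index set $B\subseteq \{1,\ldots,m\}$ in linear additional time; discarding the zero rows produces $M'$, and by construction the columns of $M'$ indexed by $B$ are exactly the columns of the identity matrix. The only subtle point, rather than a real obstacle, is to verify carefully that the composition $Q\circ P$ is bijective onto $\col(M')$ (rather than merely onto the full $\F^{r'}$ restricted image), but this follows immediately once one notes that $Qy$ is uniquely determined by $y\in \im P|_{\col(M)}$ because the entries outside $\{j_1,\ldots,j_{r'}\}$ are forced to be zero.
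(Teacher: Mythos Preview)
Your proof is correct and follows essentially the same approach as the paper: both exhibit a linear isomorphism on $\col(M)$ induced by the row operations of Gauss--Jordan elimination, use it to preserve sums and intersections (hence widths), and appeal to the standard $O(rm^2)$ cost of reduction. The only cosmetic difference is that the paper first passes to the full reduced row echelon form $M''=EM$ via an invertible $E$ and then remarks that deleting zero rows changes nothing, whereas you compose explicitly with the projection $Q$ onto the nonzero rows; these are the same argument.
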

\begin{proof}
  Let $M''$ be the reduced echelon form of $M$. Then $M''=EM$ for some invertible matrix $E$. Let $T_E:\F^r\to\F^r$ be a linear transformation such that $T_E(x)=Ex$. Then $T_E$ is one-to-one and onto, and so $T_E(M[X])=M''[X]$ for all $X\subseteq\{1,2,\ldots,m\}$.
  So it is clear that $(M,\mathcal I,k)$ is a YES instance with a branch-decomposition $(T,\L)$ if and only if $(M'',\mathcal I,k)$ is a YES instance with the corresponding $(T,\L')$. We can remove zero rows in $M''$ to obtain $M'$ without changing any outcome. The set $B$ is indeed the set of column indices to apply pivot operation for finding the reduced row echelon form, thus is easy to be obtained in the algorithm. It is well known that we can get $M''$ and $B$ in time $O(rm^2)$.
\end{proof}

  \begin{LEM}[column reduction lemma]\label{lem:CRL}
  Let $\F$ be a finite field and let $k$ be a nonnegative integer.  Let $n\ge 2$.
  Let $M$ be an $r\times m$ matrix over $\F$ in reduced row echelon form 
  with an ordered partition $\mathcal{I}=\{I_1,I_2,\ldots,I_n\}$ of $\{1,2,\ldots,m\}$.
  Let $V_i$ be the column space of $M[I_i]$ for every $i$.
  
  In time $O((k+1) r m n)$, we can either find an $r\times m'$ matrix $M'$ over $\F$ and 
  an ordered partition $\mathcal{I}'=\{I_1',I_2',\ldots,I_n'\}$ of $\{1,2,\ldots,m'\}$ such that 
  \begin{enumerate}[(i)]
  \item the column vectors of $M'[I_i']$ are linearly independent for every $i\in\{1,2,\ldots,n\}$,
  \item $\abs{I_i'}\le \min(\abs{I_i},k)$ for every $i\in\{1,2,\ldots,n\}$,
  \item $\col (M'[I_i'])\subseteq \col ( M'[\{1,2,\ldots,m'\}-I_i'])$ for every $i\in\{1,2,\ldots,n\}$,
  \item for all $k$, $(M,\mathcal{I},k)$ is a YES instance with a branch-decomposition $(T,\L)$ 
    if and only if $(M',\mathcal{I}',k)$ is a YES instance with a branch-decomposition $(T,\L')$, where $\L'$ maps a leaf $v$ to $\col(M'[I_i'])$ whenever $\L$ maps $v$ to $\col(M[I_i])$ for every $i\in\{1,2,\ldots,n\}$,
  \end{enumerate}
  or find $i\in\{1,2,\ldots,n\}$ such that $\dim (V_i\cap (\sum_{j\neq i} V_i))>k$.
\end{LEM}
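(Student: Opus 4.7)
The plan is to replace each subspace $V_i$ by its ``boundary part'' $V_i':=V_i\cap\sum_{j\neq i}V_j$ and to show that this replacement preserves the branch-width exactly. The intuition is that vectors in $V_i$ outside $\sum_{j\neq i}V_j$ cannot contribute to the intersection along any edge cut of a branch-decomposition, so they are redundant. Moreover, the edge incident with the leaf labeled $V_i$ in any branch-decomposition has width exactly $\dim V_i'$, so we can immediately return the alternative output, $\dim(V_i\cap\sum_{j\neq i}V_j)>k$, whenever $\dim V_i'>k$ for some $i$.

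The key identity is the following: for a branch-decomposition $(T,\L)$ of $\V$, an edge $e$ inducing the partition $(X,Y)$ of $\V$ with $V_i\in\V_X$, and subspaces $A=\spn{\V_X-\{V_i\}}$ and $C=\spn{\V_Y}$, we have $(A+V_i)\cap C=(A+V_i')\cap C$. The inclusion $\supseteq$ is trivial. For $\subseteq$, take $x=a+v\in(A+V_i)\cap C$ with $a\in A$ and $v\in V_i$; then $v=x-a\in C+A=\spn{\V-\{V_i\}}$, so $v\in V_i\cap\spn{\V-\{V_i\}}=V_i'$, and thus $x\in A+V_i'$. Hence the width of every edge cut is preserved, giving condition (iv). A related ``stability'' argument shows that $V_i\cap\sum_{j\neq i}V_j=V_i\cap\sum_{j\neq i}V_j'$: writing any $v\in V_i\cap\sum_{j\neq i}V_j$ as $v=\sum_{j\neq i}v_j$ with $v_j\in V_j$, each $v_j=v-\sum_{l\neq i,j}v_l$ lies in $V_j\cap\sum_{l\neq j}V_l=V_j'$. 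This yields $V_i'\subseteq\sum_{j\neq i}V_j'$, which is condition (iii); conditions (i) and (ii) become immediate once we represent $V_i'$ by a basis and halt when any $\dim V_i'$ exceeds $k$.

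Algorithmically, for each $i$ we apply Lemma~\ref{lem:basislemma} with $X=I_i$ and $Y=\{1,\ldots,m\}-I_i$: a basis of $V_i'$ is given by the columns of $I_{B\cap Y}M[P_i]$ together with $I_{B\cap I_i}M[Q_i]$, where $P_i\subseteq I_i-B$ is a column basis of $M[B\cap Y,\,I_i-B]$ and $Q_i\subseteq Y-B$ is a column basis of $M[B\cap I_i,\,Y-B]$. We compute $P_i$ by Gaussian elimination with early termination once $|P_i|>k$, in time $O((k+1)r|I_i|)$. For $Q_i$, a direct column basis computation would be too expensive since the matrix $M[B\cap I_i,Y-B]$ can have both dimensions as large as $\min(r,m)$; instead we first compute a row basis $R_i$ of this submatrix of size at most $k+1$ (halting otherwise), and then invoke Lemma~\ref{lem:column2row} to obtain $Q_i$ in $O((k+1)^2 m)$ time. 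Whenever $|P_i|+|Q_i|>k$ we report $\dim V_i'>k$; otherwise we collect the resulting basis vectors. Concatenating these bases over $i$ gives $M'$ with the induced partition $\mathcal{I}'$, and summing the per-$i$ costs over all $n$ indices yields the claimed bound $O((k+1)rmn)$.

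The key structural insight is the width-preservation identity, which immediately reduces the problem to computing $V_i'$ for each $i$. The main technical obstacle is the efficient computation of $Q_i$: without the detour through a row basis of bounded size followed by Lemma~\ref{lem:column2row}, the time bound for each $i$ would be $\Omega(rm)$ times the size of the relevant submatrix, too large for the target complexity. Everything else in the proof is bookkeeping: verifying that the columns actually assemble into an $r\times m'$ matrix respecting the ordered partition, and confirming that the asserted equivalence of YES instances with the ``same'' tree $T$ is preserved when $\L$ is replaced by the corresponding $\L'$.
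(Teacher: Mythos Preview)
Your proof is correct and follows essentially the same approach as the paper: replace each $V_i$ by $V_i'=V_i\cap\sum_{j\neq i}V_j$, verify that every edge width is preserved (the paper packages this as the single identity $(\sum_{i\in X}V_i)\cap(\sum_{j\in Y}V_j)=(\sum_{i\in X}V_i')\cap(\sum_{j\in Y}V_j')$, which is the same computation as your iterated single-replacement step), and then compute a basis of each $V_i'$ via Lemma~\ref{lem:basislemma}. One minor correction: your detour through a row basis and Lemma~\ref{lem:column2row} for $Q_i$ is unnecessary, and your claim that a direct column basis would be ``too expensive'' is mistaken---the paper computes the column basis of $M[B\cap I_i,\,Y-B]$ directly with early termination after $k+1$ independent columns, which already runs in $O((k+1)\cdot|B\cap I_i|\cdot|Y-B|)\le O((k+1)rm)$ per~$i$.
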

\begin{proof}
  We may assume that $M$ has no zero rows by deleting such rows.
  Let $B$ be the set of indices of columns having the leading $1$ of some row. Since $M$ is in reduced row echelon form with no zero rows,  $M[B]$ is an identity matrix.

  We aim to construct a new $r\times m'$ matrix $M'$ with an ordered partition $\mathcal I'=\{I_1',I_2',\ldots,I_n'\}$ such that for each $i=1,2,\ldots,n$, the set of the column vectors in~$M'[I_i']$ is a basis of $V_i'=V_i\cap \left(\sum_{j\neq i} V_j\right)$.
  \begin{itemize}
  \item If we find such a matrix $M'$ with $\mathcal I'$, then (iii) and  (iv) hold. 
    To see this, we claim that for all nonempty disjoint subsets $X$ and $Y$ of $\{1,2,\ldots,n\}$, 
    $(\sum_{i\in X}V_i)\cap (\sum_{j\in Y}V_j) = 
    (\sum_{i\in X}V_i')\cap (\sum_{j\in Y}V_j')$. This is because 
    if $a\in (\sum_{i\in X}V_i)\cap (\sum_{j\in Y}V_j)$, then $a=\sum_{i\in X} a_i$ for some $a_i\in V_i$ and yet $a\in \sum_{j\in Y} V_j$ and therefore for each $i\in X$, $a_i=a-\sum_{j\in X, j\neq i} a_j\in \sum_{j\neq i} V_j$. Thus,  $a_i\in V_i'$ and so $a\in  (\sum_{i\in X}V_i')$. Similarly $a\in  (\sum_{j\in Y}V_j')$. This proves the claim and it immediately implies (iv).
    In addition, by taking $X=\{i\}$ and $Y=\{1,2,\ldots,n\}-\{i\}$, we obtain $V_i'\subseteq  V_i\cap (\sum_{j\neq i} V_j) =V_i'\cap \sum_{j\neq i} V_j'\subseteq \sum_{j\neq i} V_j'$ and so (iii) holds.
  \item 
    If we find an index $i$ such that $\dim (V_i\cap (\sum_{j\neq i} V_i))>k$, then we can stop. This means that (ii) holds if $\dim (V_i\cap (\sum_{j\neq i} V_i))\le k$ for all $i$.
  \end{itemize}
  Let $E=\{1,2,\ldots,m\}$. 
  For each $i$,  a basis of $\col (M[I_i])\cap \col (M[E-I_i])$ can be found by computing a column basis of $M[B-I_i, I_i-B]$ and a column basis of $M[B\cap I_i, E-(I_i\cup B)]$ by Lemma~\ref{lem:basislemma}. Since we can always stop if those column bases have more than $k$ vectors, these two column bases can be found in time $O((k+1)rm)$ for each $i=1,2,\ldots,n$.  Thus the total running time is $O((k+1)rmn)$.
\end{proof}
\subsection{Data structure}\label{subsec:datastructure}
Given an input matrix, we will apply the preprocessing in Subsection~\ref{subsec:preprocessing} as follows: We first apply the algorithm in Lemma~\ref{lem:RRL}  and then apply the algorithm in Lemma~\ref{lem:CRL}, and then finally reapply the algorithm in Lemma~\ref{lem:RRL}.
After these steps, we may assume that the subspace arrangement $\V=\{V_1,V_2,\ldots , V_n\}$ is represented as an $r\times m$ matrix~$M$ and an ordered partition $\mathcal{I}=\{I_1,I_2,\ldots , I_n\}$ of $\{1,2,\ldots , m\}$ such that
\begin{itemize}
\item $r\leq m\le kn$,
\item $M$ is in reduced row echelon form with no zero rows,
\item for each $i\in\{1,2,\ldots,n\}$, the column vectors of $M[I_i]$ are linearly independent, $V_i$ is a column space of $M[I_i]$, and $\abs{I_i}\leq k$.
\end{itemize}
In a branch-decomposition $(T,\L)$ of~$\V$, although $\L$ is a mapping from the set of all leaves to $\V$, we implement $\L$ as a mapping to $\{1,2,\ldots , n\}$ so that $\L$ refers the index of the corresponding subspace in $\V$. When we carry out a computation on the actual subspaces $V_i$ of~$\V$, we  read the basis of $V_i$ stored as the column vectors of~$M[I_i]$.
Throughout the entire algorithm, we read $M[I_i]$ only in two cases: one for preprocessing, the other for computing the transcript. 
Therefore, %
we assume that $(T,\L)$ is represented using $O(n\log n)$ space.

For a subspace $B$ over $\F$ and a $B$-namu $(T,\alpha,\lambda,U)$, 
we represent $B'=\alpha(v,e)$ as a matrix $M_{B,B'}$ having $\dim B$ rows and $\dim B'$ columns;
see Subsection~\ref{subsec:operations} for details. 
The subspace $U$ of $B$ is represented in the same way. 
Therefore, every $B$-namu $(T,\alpha,\lambda,U)$ of width at most~$k$ admits a representation of 
size \[O\left(\abs{V(T)} \left((\dim B)^2\log\abs{\F} + \log(k+1) \right) \right).\]

\subsection{Computing a transcript of  a branch-decomposition}\label{subsec:bases}
Recall that Algorithm~\ref{alg:fullset} requests a given branch-decomposition $(T^b,\L^b)$ and a transcript $\Lambda$ of  $(T^b,\L^b)$; see the definition of transcripts in Subsection~\ref{subsec:transcript}. 
In this subsection, we describe how to compute a transcript $\Lambda$ in time $O(\theta^3n^2)$ when a branch-decomposition $(T^b,\L^b)$ of width at most $\theta$ is given.

Every transcript satisfies the properties in the following proposition.
\begin{PROP}\label{prop:transcript}
Let $(T^b,\L^b)$ be a rooted branch-decomposition of a subspace arrangement $\V$.
Let $\Lambda=(\{\B_v\}_{v\in V(T^b)},\{\B_v'\}_{v\in V(T^b)})$ be a transcript of $(T^b,\L^b)$.
Then the following hold.
\begin{enumerate}[(1)]

\item If a node $w$ is a parent of a node~$v$ of $T^b$, 
then
$\spn{\B_v}\subseteq \spn{\B_w'}$.

\item For each node~$v$ of $T^b$, %
\[\spn{\B_v}=\spn{\V_{v}}\cap\spn{\V-\V_{v}}\subseteq \spn{\B_v'},\]
and if $\rt$ is the root of $T^b$, then $\B_{\rt}=\emptyset$. 

\item For each node~$v$ with two children $w_1$ and $w_2$, 
  \begin{align*}
    \spn{\V_{w_i}}\cap \spn{\B_v'} &= \spn{\B_{w_i}}
                              \text{ for all $i=1,2$,}\\
    (\spn{\V_{w_1}}+\spn{\B_v'})\cap (\spn{\V_{w_2}}+\spn{\B_v'})&=\spn{\B_v'}.
  \end{align*}
\end{enumerate}
\end{PROP}

\begin{proof}
By definition, (1) and (2) are trivial. 
By Lemma~\ref{lem:boundaryspace}, (3) holds.
\end{proof}

Let $M$ be an $r\times m$ matrix of rank $r$ such that column vectors are indexed by~$E$ and there exists a subset $B$ of $E$ such that the submatrix of $M$ induced by the columns in $B$ is the identity matrix. We may also assume that the rows of $M$ are indexed by~$B$ as well. 
Let us write $M[X,Y]$ to denote a submatrix of $M$ induced by rows in~$X$ and columns in $Y$.
For simplicity, we also write $M[X]$ to denote $M[B,X]$.
For a matrix~$M$, let $\col (M)$ be the column space of $M$, that is, the span of all column vectors of $M$.
For a subset $X$ of $B$, let $I_{X}$ be the $r\times r$ diagonal matrix  whose rows and columns are indexed by~$B$ such that the diagonal entry at $v\in B$ is $1$ if $v\in X$ and $0$ otherwise.

We aim to construct a quadratic time algorithm to find a basis of $\col (M[X])\cap \col (M[Y])$ for all partitions $(X,Y)$ appearing in the branch decomposition.
First, observe the following lemma, which is folklore. Cunningham and Geelen~\cite{CG2007} presented essentially the same claim in Section 3 with a proof but unfortunately the statement of their claim is incorrect, which is probably a typo.
\begin{LEM}\label{lem:basislemma}
  Let $M$, $B$ be given as above.
  If $(X,Y)$ is a partition of $E$, then 
  \[ \dim (\col(M[X])\cap\col(M[Y])) = 
    \rank M[B\cap Y, X-B]+ \rank M[B\cap X, Y-B].\]
  Furthermore if $P\subseteq X-B$ is a column basis of $M[B\cap Y,X-B]$ and $Q\subseteq Y-B$ is a column basis of $M[B\cap X, Y-B]$, then 
  the disjoint union of the column vectors of $I_{B\cap Y} M[P]$ and the column vectors of $I_{B\cap X} M[Q]$ form a basis of $\col(M[X])\cap \col(M[Y])$.
\end{LEM}
\begin{proof}
  After rearranging columns and rows, we may write  \[
    M=\bordermatrix {
      & B\cap X & B\cap Y & X-B & Y-B \cr
      B\cap X &
      {\begin{matrix}
        1 & \\
        & 1 \\
        &&\ddots \\
        &&&1
      \end{matrix}}
      & 0 & M_{11} & M_{12} \cr
      B\cap Y& 0 & 
      {\begin{matrix}
        1 & \\
        & 1 \\
        &&\ddots \\
        &&&1
      \end{matrix}}
      & M_{21} & M_{22} \cr
    }.
  \]
  We may assume $M_{11}=0$ and $M_{22}=0$ because $\col(M[X])$ and $\col(M[Y])$ are preserved.
  Then it follows easily that $\col (M[X])\cap \col (M[Y])$ is spanned by $M[(X\cup Y)-B]$ and its dimension is equal to the rank of $
  \left(\begin{smallmatrix}
      0 & M_{12}\\ M_{21}&0
    \end{smallmatrix}\right)$, that is, the sum of $\rank (M_{21})$ and $\rank (M_{12})$.
  If $P$ is a column basis of $M_{21}$ and $Q$ is a column basis of $M_{12}$, then the basis of $M[(X\cup Y)-B]$ is the column vectors of $M[P\cup Q]$ (assuming that $M_{11}=0$ and $M_{22}=0$). This completes the proof.
\end{proof}

\begin{LEM}\label{lem:column2row}
  Let $M$, $B$ be given as above and let  $(X,Y)$ be a partition of $E$.
  If $R$~is a row basis of $M[B\cap X, Y-B]$
  and $Q$~is a column basis of $M[R,Y-B]$, 
  then 
  $Q$ is a column basis of $M[B\cap X, Y-B]$.

  In addition, if $\abs{R}\le \theta$, then $Q$ can be computed in time $O(\theta^2m)$.
\end{LEM}
\begin{proof}
  Columns in~$Q$ of $M[R, Y-B]$ span all column vectors of $M[R,Y-B]$. 
  As rows in~$R$ of $M[B\cap X,Y-B]$ span all row vectors of $M[B\cap X,Y-B]$, 
  columns in~$Q$ of $M[B\cap X, Y-B]$ also span all column vectors of $M[B\cap X,Y-B]$.
  As $\abs{R}\le \theta$, a column basis of $M[R,Y-B]$ can be computed in time $O(\theta^2m)$. 
\end{proof}

\begin{LEM}\label{lem:basisleaf}
  Let $M$, $B$ be given as above. Let $(X,Y)$ be a partition of $E$ such that $\abs{X}\le\theta$. Then in time $O(\theta^2m)$, we can find a column basis $P$ of $M[B\cap Y,X-B]$ and a row basis $R$ of $M[B\cap X, Y-B]$.
\end{LEM}
\begin{proof}
  As $M[B\cap Y, X-B]$ has at most~$k$ columns, 
  its column basis $P$ can be found in time $O(\theta^2 r)$.
  Similarly, $M[B\cap X, Y-B]$ has at most $\theta$ rows and so its row basis can be found in time $O(\theta^2m)$.  Note that $r\le m$.
\end{proof} 
\begin{LEM}\label{lem:basisinternal}
  Let $M$, $B$ be given as above.
  Let $(X_1\cup X_2,Y)$ be a partition of $E$. Suppose that 
  \begin{itemize}
  \item 
  $P_1\subseteq X_1-B$ is a column basis of $M[B\cap (Y\cup X_2),X_1-B]$ and $R_1\subseteq B\cap X_1$ is a row basis of $M[B\cap X_1, (Y\cup X_2)-B]$,
  \item 
  $P_2\subseteq X_2-B$ is a column basis of $M[B\cap (Y\cup X_1),X_2-B]$
  and $R_2\subseteq B\cap X_2$ is a row basis of $M[B\cap X_2, (Y\cup X_1)-B]$.
  \end{itemize}
  Let $P$ be a column basis of $M[B\cap Y, P_1\cup P_2]$ and 
  let $R$ be a row basis of $M[ R_1\cup R_2, Y-B]$.
  Then 
  $P$ is a column basis of $M[B\cap Y, (X_1\cup X_2)-B]$
  and $R$ is a row basis of $M[B\cap (X_1\cup X_2), Y-B]$.

  In addition, if $\abs{P_i}+\abs{R_i}\le \theta$ for each $i=1,2$, then 
  in time $O(\theta^2m)$, we can either find $P$ and $R$
  or confirm that $\abs{P}+\abs{R}>\theta$.
\end{LEM}
\begin{proof}
  The columns in $P_i$ of $M[B\cap Y,X_i-B]$ span all column vectors of $M[B\cap Y,X_i-B]$ for each $i=1,2$. Therefore 
  the columns in $P_1\cup P_2$ of $M[B\cap Y,(X_1\cup X_2)-B]$ span all column vectors of $M[B\cap Y,(X_1\cup X_2)-B]$. 
  Similarly, the rows in $R_i$ of $M[B\cap X_i,Y-B]$ span all row vectors of $M[B\cap X_i,Y-B]$ for each $i=1,2$. So, the rows in $R_1\cup R_2$ of $M[B\cap (X_1\cup X_2),Y-B]$ span all row vectors of $M[B\cap (X_1\cup X_2), Y-B]$.
  The conclusion on $P$ and $R$ follows easily.

  As $\abs{B\cap Y}\le r$ and $\abs{P_1\cup P_2}\le 2\theta$, $P$ can be computed in time $O(\theta^2r)$. Similarly as $\abs{R_1\cup R_2}\le 2\theta$ and $\abs{Y-B}\le m$, $R$ can be computed in time $O(\theta^2 m)$. Note that $r\le m$.
\end{proof}

Let $\V$ be a subspace arrangement of $\F^r$. 
Let $M$ be an $r\times m$ matrix in reduced row echelon form
with a partition $\I$ of the column index set $E$ of $M$
such that 
$\abs{\V}=\abs{\I}$, $m=\sum_{V\in \V} \dim V$, and 
for each $V\in\V$, there exists the unique $I\in \I$ with $V=\col(M[I])$.
We assume that $M$ has no zero rows by deleting such rows.
Let $B\subseteq E$ be the indices such that $M[B]$ is the identity matrix. As we did in the beginning of this section, we assume that rows of $M$ are indexed by~$B$.

Let $(T^b,\L^b)$ be a rooted branch-decomposition of~$\V$. 
For each node~$v$ of~$T$, let $E_v$ be the subset of $E$ 
indexing every column vector that corresponds to a vector in a basis of some $V\in \V_v$.
For each node~$v$ of $T^b$, let $E_v$ be the union of all $I\in \I$ satisfying that 
$\col(M[I])=V$ for some $V\in\V_v$.

Given a rooted branch-decomposition $(T^b,\L^b)$ of~$\V$, we want to construct a transcript $\Lambda=(\{\B_v\},\{\B_v'\})$ with its transition matrices $\{T_v\}$.
First we need to compute  a basis of $\spn{\V_v}\cap \spn{\V-\V_v}$, that is, 
\[
  \col( M[E_v] ) \cap \col (M[ E-E_v] )
\]
for each node~$v$ of $T^b$. 
For this step, 
Theorem~\ref{thm:computebases} shows that 
Algorithm~\ref{alg:bases} correctly computes a basis of 
$  \col( M[E_v] ) \cap \col (M[ E-E_v] )$
for all nodes $v$ of $T^b$ if $(T^b,\L^b)$ has width at most $\theta$.
Cunningham and Geelen~\cite[Section 3]{CG2007} presented an algorithm to do essentially the same task but its running time is cubic. Our algorithm runs in
quadratic time.

\begin{algorithm}

  \caption{Finding a basis for each boundary space in a rooted branch-\decomposition{} $(T^b,\L^b)$ of a subspace arrangement $\V$ of subspaces of dimension at most $\theta$ 
  or confirming that the branch-width is larger than $\theta$}
  \label{alg:bases}
  \begin{algorithmic}[1]
    \Ensure $M$ is an $r\times m$ matrix representing $\V$ in reduced row echelon form, and $E$ and its subset $B$ are given as in the beginning of Section~\ref{subsec:bases}
    \Procedure{bases}{$(T^b,\L^b)$, $\V$, $\theta$}
    \Repeat
    \State choose an unmarked node~$v$ of $T^b$ farthest from the root
    \State let $E_v$ be the subset of $E$, indexing all columns corresponding to some $V\in\V_v$
    \If {$v$ is a leaf} 
    \State compute a column basis  $P_v$  of $M[B-E_v,E_v-B]$
    \State compute  a row basis $R_v$  of $M[E_v\cap B, E-(E_v\cup B)]$
    \Else
    \State let $w_1,w_2$ be two children of $v$ in $T^b$
    \State compute a column basis $P_v$  of $M[ B-E_v, P_{w_1} \cup P_{w_2}]$
    \State compute a row basis  $R_v$ of $M[ R_{w_1}\cup R_{w_2}, E-(E_v\cup B)]$
    \EndIf
    \If {$\abs{P_v}+\abs{R_v}>\theta$}
    \State \textbf{stop}
    \EndIf
    \State compute a column basis $Q_v$ of $M[R_v,E-(E_v\cup B)]$
    \State let $B_v$ be the set of the column vectors in $I_{B-E_v} M[P_v]$ and $I_{B\cap E_v} M[Q_v]$
\State mark $v$
\Until{all nodes in $T^b$ are marked}
\EndProcedure
  \end{algorithmic}
\end{algorithm}

\begin{THM}\label{thm:computebases}
  Let $\V$ be a subspace arrangement of $\F^r$ represented by an $r\times m$ matrix~$M$ in reduced row echelon form with no zero rows
  such that each $V\in \V$ has dimension at most $\theta$.
  Let $n=\abs{\V}$.  
  Given a rooted branch-decomposition $(T^b,\L^b)$ of~$\V$,
  in time $O(\theta^3n^2)$, Algorithm~\ref{alg:bases} correctly computes a basis of 
  $ \spn{\V_v}\cap \spn{\V-\V_v}$
  for all nodes $v$ of $T^b$ 
  or confirms that $(T^b,\L^b)$ has width larger than $\theta$.
  In addition, if $(T^b,\L^b)$ has width at most $\theta$, then  we can compute the transcript $\Lambda=(\{\B_v\},\{\B_v'\})$ of $(T^b,\L^b)$ with its transition matrices in time $O(\theta^3n^2)$.
\end{THM}

To prove Theorem~\ref{thm:computebases}, we need the following lemma giving us a subroutine for finding bases.
\begin{LEM}\label{lem:computetranscript}
  Let $B$, $B_1$, $B_2$ be subspaces of $\F^r$ given by their ordered bases $\B$, $\B_1$, $\B_2$, respectively, such that $B\subseteq B_1+B_2$ and $\dim B\le \theta$, $\dim B_1\le \theta$, $ \dim B_2\le \theta$.
  Then in time $O(\theta^2 r)$, we can find an ordered basis $\B'$ of $B_1+B_2$ that extends $\B$ 
  and  matrices $T_1$ and $T_2$ (transition matrices) such that 
  $T_1 [x]_{\B_1} = [x]_{\B'}$ for all $x\in \B_{1}$ and $T_2[y]_{\B_2}=[y]_{\B'}$ for all $y\in \B_{2}$.
\end{LEM}
\begin{proof}
  Consider an $r\times (\abs{\B}+\abs{\B_1}+\abs{\B_2})$ matrix~$M$ consisting of column vectors in $\B$, $\B_1$, and $\B_2$. We find a column basis $\B'$ of $M$ extending $\B$ in time $O(\theta^2r)$.

  By symmetry, it is enough to show how to compute
  the transition matrices $T_1$ such that \[ T_1 [x]_{\B_1} = [x]_{\B'}\] for all $x\in \B_{1}$. 
This step is included in the proof of \cite[Proposition 5.3]{JKO2016} but we include its proof for the completeness.
Let $\B_1=\{b_1,b_2,\ldots,b_\ell\}\subseteq \F^r$. If $x=b_j$, then $T_1[b_j]_{\B_1}=T_1 e_j$ is the $j$th column vector of $T_1$, which is equal to the coordinate of $b_j$ with respect to $\B'$. So, \[(\B') T_1  = (\B_1),\] where $(\B')$ is an $r\times \abs{\B'}$ matrix whose column vectors are $\B'$
and similarly $(\B_1)$ is an $r\times \abs{\B_1}$ matrix corresponding to $\B_1$. This matrix equation can be solved in time $O(\theta^2 r)$. %
\end{proof}
\begin{proof}[Proof of Theorem~\ref{thm:computebases}]
  Let $E$ be the set indexing the columns of $M$ and let $B\subseteq E$ be a subset such that $M[B]$ is an $r\times r$ identity matrix.  
  Note that $r\le m$ because $r$ is equal to the rank of $M$
  and $m=\sum_{V\in \V}\dim V\le \theta n$ because each $V\in \V$ has dimension at most~$\theta$.
  Let $\B_v$ be a basis of $\spn{\V_v}\cap \spn{\V-\V_v}$.
  
  For each leaf $v$ of $T^b$, we can compute $P_v$ and $R_v$ in time $O(\theta^2m)\le O(\theta^3n)$ by Lemma~\ref{lem:basisleaf}.
  For each internal node~$v$ of $T^b$, we can compute $P_v$ and $R_v$ in time $O(\theta^2 m)\le O(\theta^3n)$ by Lemma~\ref{lem:basisinternal}.
  For each node~$v$, after computing $P_v$ and $R_v$, we can compute $Q_v$ in time $O(\theta^2m)\le O(\theta^3n)$ by Lemma~\ref{lem:column2row}
  and using that we can output $\B_v$ in time $O(\theta r)\le O(\theta^2 n)$ by Lemma~\ref{lem:basislemma}.
  Since there are $O(n)$ nodes of $T^b$, it takes $O(\theta^3n^2)$  time to compute $\B_v$ for all nodes $v$ of $T^b$.

  When we are given a basis $\B_v$ for each node~$v$ of $T^b$, 
  computing the transcript with its transition matrices can be done in time $O(\theta^2rn)\le O(\theta^3n^2)$ by  Lemma~\ref{lem:computetranscript}.
\end{proof}

\subsection{Running time analysis of Algorithm~\ref{alg:fullset}}
We recall how our algorithm works at each operation in \textsc{full-set}$(\V,k,(T^b,\L^b))$ of Algorithm~\ref{alg:fullset}  
and discuss the time complexity. 

\begin{description}
\item [Initialization:]
At line~\ref{line:initial1}, for a leaf node~$x$, 
the set $\FF_x^L$ is $\{\Delta_x\}$ as in Proposition~\ref{prop:single}.

\item [Join:]
Note that for each $i=1,2$, the size of $\FF_{x_i}$ is at most some function of $k$, $\theta$, $\abs{\F}$ 
by Lemma~\ref{lem:ukbtime} because $\FF_{x_i}$ is a subset of $U_k(B_{x_i})$. 
At line~\ref{line:join}, by Lemma~\ref{lem:generatesum}, the set $\FF_{x_1}\fplus \FF_{x_2}$ can be 
computed in time $f_1(k,\theta)$ elements for some function~$f_1$.

\item [Shrink:]
At line~\ref{line:shrink}, 
computing $\FF_x^+|_{B_x}$ takes at most $f_2(k,\theta)$ steps for 
some function $f_2$ because the number of all elements in $\FF_x^+$ is at most 
some function of $k$,~$\theta$.

\item[Trim:]
At line~\ref{line:trim},
for every $\Gamma\in\FF_x^S$, we test whether $\Gamma$ is a $k$-safe extension of $\trim(\Gamma)$
and if so put $\trim(\Gamma)$ into $\FF_x^T$.
Note that testing whether a $B$-namu is a $k$-safe extension of its trim
and trimming $\Gamma'$ can be done in time 
$f_3(k,\theta,\abs{\F})$ for some function $f_3$.

\item[Compare:]
At line~\ref{line:compare},
a $B_x$-namu $\Gamma$ is in $\FF_x$
if and only if $\Gamma\in U_k(B_x)$ and 
$\trim(\Gamma')\tle\Gamma$ for some $B_x$-namu $\Gamma'\in\FF_x^T$.
Thus, for each $\Gamma'\in\FF_x^T$ and each $\Gamma\in U_k(B_x)$,
we test whether $\trim(\Gamma')\tle\Gamma$.
Note that $\abs{\FF_x^T}$ is at most some function of $k$ and $\theta$, 
and for $\Gamma'\in\FF_x^T$, the number of edges in $T(\Gamma')$ bounded by some function of $k$ and 
$\abs{\F}$ by Lemma~\ref{lem:sumsize} 
because it comes from a sum of two compact $B$-namus of width at most~$k$.
By Lemmas~\ref{lem:ukbtime} and~\ref{lem:tletime},
$\FF_x$ can be computed in time $f_4(k,\theta,\abs{\F})$ for some function $f_4$.

\end{description}

Now we summarize the observations as follows.
Note that the number of nodes in $T^b$ is~$O(\abs{\V})$.

\begin{PROP}\label{prop:FSruntime}
Let $(T^b,\L^b)$ be a rooted branch-decomposition of~$\V$ of width at most $\theta$ for some nonnegative integer $\theta$.
In Algorithm~\ref{alg:fullset}, the procedure 
\textsc{full-set}$(\V,k,(T^b,\L^b),\{T_v\}_{v\in V(T^b)})$ runs in time
$f(k,\theta,\abs{\F})\cdot \abs{\V}$ 
for some function $f$. %
\end{PROP}

\subsection{Constructing a branch-decomposition of~$\boldsymbol\V$}\label{subsec:backtracking}
\mbox{}
We illustrate how to construct a branch-decomposition of~$\V$
when we have a nonempty full set at the root node of $T^b$ in Algorithm~\ref{alg:fullset}.

{\bf Composition trees.}
For each node~$x$ of $T^b$,
Algorithm~\ref{alg:fullset} constructs $\FF_x$ if $x$ is a leaf and  
constructs $\FF_x^+$, $\FF_x^S$, $\FF_x^T$, and $\FF_x$ otherwise.
In order to describe the $B$-namus in $\FF\in\{\FF_x,\FF_x^+, \FF_x^S, \FF_x^T\}$, 
we will use a \emph{composition tree} $T^c$ obtained from $T^b$.
The concept of a composition tree was introduced in~\cite{JKO2016},
which we modified slightly for our problem.
Note that Algorithm~\ref{alg:fullset} runs along a given rooted binary tree $T^b$ 
whose every internal node has degree $3$ except the root.
Here is the definition of a composition tree $T^c$.
(See Figure~\ref{fig:comptree}.)
\begin{itemize}
\item The node set of $T^c$ is the disjoint union of $\{v^J,v^S,v^T,v\}$ for each internal node~$v$ of $T^b$ 
and $\{v\}$ for each leaf $v$ of $T^b$.
\item For each internal node~$v$ of $T^b$ and its two children $v_1$ and $v_2$ in $T^b$, 
$v_1v^J$, $v_2v^J$, $v^Jv^S$, $v^Sv^T$, $v^Tv$ are the edges in a subtree of $T^c$ 
induced by a node set $\{v_1,v_2,v^J,v^S,v^T,v\}$.
\end{itemize}
\begin{figure}
  \centering
  \tikzstyle {u}=[draw,circle,fill=black,inner sep=1.2pt]
  \begin{tikzpicture}
  	\node [u,label=left:$v_1$] at (-1,2) (v1){};
  	\node [u,label=left:$v_2$] at (-1,0) (v2){};
  	\node [u,label=above:$v^J$] at (0,1) (vJ){};
  	\node [u,label=above:$v^S$] at (1,1) (vS){};
  	\node [u,label=above:$v^T$] at (2,1) (vT){};
  	\node [u,label=above:$v$] at (3,1) (v){};
  	\draw (v1)--(vJ)--(v2);
  	\draw (vJ)--(vS)--(vT)--(v);
  \end{tikzpicture}
  \caption{A subtree of $T^c$ for an internal node~$v$ with two children $v_1$ and $v_2$.}
  \label{fig:comptree}
\end{figure}
We remark that $T^c$ can be obtained from $T^b$ by subdividing 
every edge that is not incident with a leaf twice.
Each node of $T^c$
corresponds to each operation in Algorithm~\ref{alg:fullset}.
There are five kinds of nodes in $T^c$.
\begin{enumerate}[(1)]
\item \emph{Compare nodes}: nodes $v$ of $T^c$ for all internal nodes $v$ of $T^b$. 
								Note that the root is a compare node.
\item \emph{Trim nodes}: nodes $v^T$ of $T^c$ for all internal nodes $v$ of $T^b$.
\item \emph{Shrink nodes}: nodes $v^S$ of $T^c$ for all internal nodes $v$ of $T^b$.
\item \emph{Join nodes}: nodes $v^J$ of $T^c$ for all internal nodes $v$ of $T^b$.
\item \emph{Leaf nodes}: leaves $v$ of $T^c$. %
\end{enumerate}

\smallskip{\bf Evidence.}
Each operation in Algorithm~\ref{alg:fullset}
constructs the set $\FF\in\{\FF_x,\FF_x^+, \FF_x^S,\allowbreak \FF_x^T\}$.
Each $B$-namu $\Gamma$ in $\FF$ is associated with \emph{evidence} %
that certifies why $\Gamma$ is in~$\FF$.
When we run Algorithm~\ref{alg:fullset}, 
we store the evidence of each $B$-namu $\Gamma\in\FF$
at each operation as follows. %
\begin{enumerate}[\bf {Case} 1:]
\item (Initialization) \emph{$\FF=\FF_x$ for a leaf $x$ of $T^b$.}

	We store an index $i$ such that $V_i=\L^b(x)$.

\item (Join) \emph{$\FF=\FF_x^+$ for an internal node~$x$ of $T^b$ with two children $x_1$ and $x_2$.
Note that $x_1$ and $x_2$ are compare nodes or leaf nodes.}

	For every $\Gamma\in\FF_x^+$, we store $\Gamma_1\in\FF_{x_1}$, $\Gamma_2\in\FF_{x_2}$, 
	and a $(T(\Gamma_1),T(\Gamma_2))$-model $(\eta_1,\eta_2)$ in $T(\Gamma)$
	such that $\Gamma=\Gamma_1+_{(\eta_1,\eta_2)}\Gamma_2$. 

\item (Shrink) \emph{$\FF=\FF_x^S$ for an internal node~$x$ of $T^b$.}

	For every $\Gamma\in\FF_x^S$, we store $\Gamma'\in\FF_x^+$ such that $\Gamma=\Gamma'|_{B_x}$.
	
\item (Trim) \emph{$\FF=\FF_x^T$ for an internal node~$x$ of $T^b$.}

	For every $\Gamma\in\FF_x^T$, we store $\Gamma'\in\FF_x^S$ such that $\Gamma=\trim(\Gamma')$.
	
\item (Compare) \emph{$\FF=\FF_x$ for an internal node~$x$ of $T^b$.}

	For every $\Gamma\in\FF_x$, we store $\Gamma'\in\FF_x^T$ with $\Gamma'\tle\Gamma$,
	a tree $T$ ensuring $\Gamma'\tle\Gamma$, a $T(\Gamma')$-model in~$T$, and a $T(\Gamma)$-model in~$T$.
\end{enumerate}
We remark that all evidence  has bounded size by Lemmas~\ref{lem:ukbtime} and~\ref{lem:generatesum}.

Let $\V=\{V_1,V_2,\ldots,V_{\abs{\V}}\}$ be a subspace arrangement and 
let $(T^b,\L^b)$ be its rooted branch-decomposition of width at most $\theta$, 
which are inputs of Algorithm~\ref{alg:fullset}.
If $r$ is the root of $T^b$ and $\FF_r$ is nonempty after running Algorithm~\ref{alg:fullset},
then let $\Gamma_r\in\FF_r$. 
Afterward, 
we can find, in time $O(\abs{V(T^b)})$, the evidence corresponding to $\Gamma_r$
at each node of the composition tree $T^c$ by backtracking from the root $r$ to the leaves.
The evidence corresponding to $\Gamma_r$ is $B$-namus, and 
they are stored at all nodes of $T^c$ so that 
$\Gamma_r$ can be constructed by joining, shrinking, trimming, and comparing using them.
These will be used for Algorithm~\ref{alg:print}.

\smallskip{\bf Postorder traversal.}
We will use the \emph{postorder traversal} to express a branch-decomposition of a subspace arrangement.
For a branch-decomposition $(T,\L)$ of a subspace arrangement,
if the postorder traversal visits a leaf of~$T$, then we print (the index of) the corresponding subspace,
and if it visits an internal node of~$T$, then we print $*$.
For example, if $T$ is a rooted binary tree in Figure~\ref{fig:postorder} and 
$\L(\ell_i)=V_i$ for all $i=1,2,3$, 
then $1$, $2$, $3$, $*$, $*$ represents $(T,\L)$.
\begin{figure}
  \centering
  \tikzstyle {u}=[draw,circle,fill=black,inner sep=1.2pt]
  \begin{tikzpicture}
  	\node [u,label=below:$\ell_1$] at (-1,0) (v1){};
  	\node [u,label=below:$\ell_2$] at (0,0) (v2){};
  	\node [u,label=below:$\ell_3$] at (1,0) (v3){};
  	\node [u] at (0,2) (u1){};
  	\node [u] at (0.5,1) (u2){};
  	\draw (v1)--(u1)--(u2)--(v3);
  	\draw (v2)--(u2);
  \end{tikzpicture}
  \caption{A rooted binary tree $T$.}
  \label{fig:postorder}
\end{figure}

\smallskip{\noindent\bf Correctness.}
Let $\Gamma_x$ be the $B$-namu that is the evidence at a node~$x$ of $T^c$ 
with respect to $\Gamma_r\in\FF_r$.
Here we describe the five functions \textsc{printnode}, \textsc{printinc}, 
\textsc{printincrev}, \textsc{printsubtree}, and \textsc{printsubmid}, 
which are used in Algorithm~\ref{alg:print}.
\begin{enumerate}[(1)]
\item \textsc{printnode}$(x,v)$
\begin{description}
\item[Input] 
A node~$x$ of $T^c$ and a node~$v$ of $T(\Gamma_x)$.
\item[Output] 
A sequence of rooted binary trees whose leaves are labeled by indices of~$\V_x$.
\item[Return]
The number of output rooted binary trees.
\end{description}

\item \textsc{printinc}$(x,v,e)$, \textsc{printincrev}$(x,v,e)$
\begin{description}
\item[Input] 
A node~$x$ of $T^c$ and an incidence $(v,e)$ of $T(\Gamma_x)$.
\item[Output] 
A sequence of rooted binary trees whose leaves are labeled by indices of~$\V_x$.
\item[Return]
The number of output rooted binary trees.
\end{description}

\item \textsc{printsubtree}$(x,u,v)$, \textsc{printsubtreemid}$(x,u,v)$
\begin{description}
\item[Input] 
A node~$x$ of $T^c$ and two adjacent nodes $u$, $v$ of $T(\Gamma_x)$.
\item[Output] 
A rooted binary tree whose leaves are labeled by indices of $\V_x$.
\item[Return]
$1$ if the output is nonempty, and $0$ otherwise.
\end{description}
\end{enumerate}

Based on a composition tree $T^c$ and the evidence $\Gamma_x$ at each node~$x$ of $T^c$
with respect to the element $\Gamma_r$ in $\FF_{r}$, 
we will run Algorithm~\ref{alg:print} 
to print a branch-decomposition of~$\V$ of width at most~$k$ (using a postorder traversal).
Proposition~\ref{prop:correctness} will show that 
Algorithm~\ref{alg:print} can generate a branch-decomposition of~$\V$ correctly
using the concept of witnesses introduced in Subsection~\ref{subsec:witness}.

For each node~$x$ of $T^c$ and the corresponding node $x'$ of $T^b$,
let $\V_x=\V_{x'}$ and we define $B_x=B_{x_1'}+B_{x_2'}$ if $x$ is a join node in $T^c$ and $x_1'$, $x_2'$
are two children of $x'$ in~$T^b$,
and $B_x=B_{x'}$ otherwise.
We define $(\zeta_x,\L_x)$ as follows.
\begin{itemize}
\item
For each node~$v$ of $T(\Gamma_x)$ having degree at most $2$,
we define $\zeta_x(v)$ as 
the tree rooted at $v$ obtained by taking the disjoint union of $v$ and  
the trees output by \textsc{printnode($x,v$)}
and adding edges between $v$ and roots of trees in \textsc{printnode($x,v$)}.

\item
For each incidence $(v,e)$ of $T(\Gamma_x)$, we define $\zeta_x(v,e)$ as the sequence of 
disjoint trees output by \textsc{printinc($x,v,e$)}.

\item
Let $\L_x$ be the canonical mapping from the set of leaves of $\zeta_x$ to $\V_x$.
\end{itemize}

It is not difficult to see that $(\zeta_x,\L_x)$ is a $\V_{x}$-labeling  of $T(\Gamma_x)$. 
Lemma~\ref{lem:printzeta} shows that 
$(\zeta_x,\L_x)$ is a $k$-safe witness of $\Gamma_x$ in~$B_{x}$, from which Proposition~\ref{prop:correctness} follows.

\begin{LEM}\label{lem:printzeta}
For every node~$x$ of a composition tree $T^c$ and the evidence $\Gamma_x$ at $x$, 
$(\zeta_x,\L_x)$ is a $\V_x$-labeling of $T(\Gamma_x)$ and 
is a $k$-safe witness of $\Gamma_x$ in $B_x$.
Furthermore, for the node $x'$ of $T^b$ corresponding to $x$,
\begin{enumerate}[(1)]
\item if $x$ is a join node or a shrink node, 
then every $z$-protected edge in $(\T(\zeta_x),\L_x)$ is in $\zeta_x$ for every node $z<x'$ of $T^b$,

\item if $x$ is a trim node or compare node,
then every $z$-protected edge in $(\T(\zeta_x),\L_x)$ is in $\zeta_x$ for every node $z \le x'$ of $T^b$.
\end{enumerate}
\end{LEM}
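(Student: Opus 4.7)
The plan is to prove Lemma~\ref{lem:printzeta} by induction on the height of $x$ in the composition tree $T^c$, traversing $T^c$ from the leaves upward so that the inductive hypothesis furnishes a $k$-safe witness with the appropriate protection property at each child in $T^c$. The base case is a leaf node $x$ of $T^c$, where $\Gamma_x=\Delta_x$ as in Proposition~\ref{prop:single}, the tree $T(\Gamma_x)$ has a single node $v$, and \textsc{printnode}$(x,v)$ simply prints the index $i$ with $V_i=\L^b(x)$. In this case $(\zeta_x,\L_x)$ is trivially a $k$-safe witness of $\Delta_x$ in $B_x=B_{x'}$, and there are no $y$-protected edges to check for $y< x'$.

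For the inductive step I would handle the four internal node types in turn. The essential observation is that the recursive construction of $(\zeta_x,\L_x)$ via \textsc{printnode}, \textsc{printinc}, \textsc{printincrev}, \textsc{printsubtree}, and \textsc{printsubtreemid} reproduces exactly the abstract witness constructions of Section~\ref{sec:fullset}:
\begin{itemize}
\item If $x=v^J$ is a join node with children $y_1,y_2$ in $T^c$ (which are compare or leaf nodes of $T^c$ corresponding to the children $x_1',x_2'$ of $v$ in $T^b$), and $\Gamma_x=\Gamma_{y_1}+_{(\eta_1,\eta_2)}\Gamma_{y_2}$, then comparing the branches of \textsc{printnode} and \textsc{printinc} in the join-node case against the definition of the sum of $\V$-labelings shows $(\zeta_x,\L_x)=(\zeta_{y_1},\L_{y_1})+_{(\eta_1,\eta_2)}(\zeta_{y_2},\L_{y_2})$. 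The conclusion then follows from the inductive hypothesis applied to $y_1,y_2$ and Proposition~\ref{prop:joinwitness}.
\item If $x=v^S$ is a shrink node with child $y=v^J$, then $\Gamma_x=\Gamma_y|_{B_{v}}$ and the algorithm just forwards calls to $y$, so $(\zeta_x,\L_x)=(\zeta_y,\L_y)$. Proposition~\ref{prop:shrinkwitness} and the inductive hypothesis yield the claim.
\item If $x=v^T$ is a trim node with child $y=v^S$, then $\Gamma_x=\trim(\Gamma_y)$. When $\Gamma_y$ has no degenerate edge, the algorithm descends through edges of $T(\Gamma_y)$ that survive trimming, matching the $\V_{v}$-labeling induced by $T(\trim(\Gamma_y))$; when $\Gamma_y$ has a degenerate edge, \textsc{printsubtreemid} is invoked on both sides of the chosen degenerate edge, matching the $\V_{v}$-labeling induced by that edge. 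Proposition~\ref{prop:trimwitness} then provides what we need, the upgrade from $z<v$ to $z\le v$ coming from the trimming operation itself.
\item If $x$ is a compare node with child $y=v^T$, then $\Gamma_y\tle\Gamma_x$ is witnessed by the stored tree $T$ and two models. The loops in \textsc{printinc} and \textsc{printincrev} walk along the subdivision paths of $T(\Gamma_x)$ in $T$, concatenating sequences exactly as prescribed by the construction of the $\V_{v}$-labeling of $T(\Gamma_x)$ induced by $T$ from $(\zeta_y,\L_y)$. Proposition~\ref{prop:comparewitness} then finishes the case.
\end{itemize}

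The main obstacle will be the bookkeeping at compare and join nodes: one must verify that the pseudo-code's recursion over incidences and subtrees, including the asymmetric \textsc{printinc}/\textsc{printincrev} pair, produces the correct ordered sequences $\zeta_x(v,e)$ and the correct rooted trees $\zeta_x(v)$ relative to the inductive $\V$-labelings. I would carry this out by fixing, for each $x$, the canonical linear ordering of $V(T(\Gamma_x))$ used by Algorithm~\ref{alg:print}, then checking by a straightforward case analysis on whether a node of $T$ (in the compare case) or of $T^+$ (in the join case) is a branch node in one or both embedded models that the sequences produced agree with the concatenations in the definition of a sum and an induced labeling. Once this combinatorial identification is in place, every remaining condition—the witness conditions (i)--(iv), the $k$-safe inequality, and the preservation of protected edges inside $\zeta_x$—is an immediate consequence of Propositions~\ref{prop:joinwitness}, \ref{prop:shrinkwitness}, \ref{prop:trimwitness}, and \ref{prop:comparewitness} together with the inductive hypothesis.
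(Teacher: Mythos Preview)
Your proposal is correct and follows essentially the same approach as the paper: both argue by structural induction on the composition tree, identify the algorithm's output $(\zeta_x,\L_x)$ at each node type with the corresponding abstract witness construction (sum, projection, induced-by-subtree/edge, induced-by-ensuring-tree), and then invoke Propositions~\ref{prop:joinwitness}, \ref{prop:shrinkwitness}, \ref{prop:trimwitness}, and \ref{prop:comparewitness} respectively. The paper's proof differs only in presentation, phrasing the induction as ``on the number of descendants'' and spelling out a bit more of the trim-node case analysis, but the logical skeleton and the key observation---that the pseudo-code's recursion reproduces exactly the labeling operations of Subsection~\ref{subsec:witness}---are the same as yours.
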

\begin{proof}
We use induction on the number of descendants of a node~$x$ in $T^c$.
Let $\Gamma_x$ be the evidence at $x$ and let $T_x=T(\Gamma_x)$.

First, if $x$ is a leaf node of $T^c$,
then $T_x$ has a single node 
and  $(\zeta_x,\L_x)$ is a $\V_x$-labeling of $T_x$ and is a $k$-safe witness of $\Gamma_x$ in $B_x$.

If $x$ is a join node with two children $y_1$ and $y_2$ in $T^c$,
then $y_i$ is a leaf node or a compare node for every $i=1,2$.
Let $\Gamma_x=\Gamma_{y_1}+_{(\eta_1,\eta_2)}\Gamma_{y_2}$ 
for the evidence $\Gamma_{y_1}$ and $\Gamma_{y_2}$ at $y_1$ and $y_2$, respectively.
For a node~$v$ of degree $2$ in $T_x$, by definition, 
$v$ is a branch node in exactly one of $\eta_1$ and $\eta_2$.
Thus, 
\textsc{printnode($x,u$)} is either 
\textsc{printnode($y_1,u_1$)} if $u=\etabar_1(u_1)$
or \textsc{printnode($y_2,u_2$)} if $u=\etabar_2(u_2)$.
For an incidence $(v,e)$ of $T_x$,
\textsc{printinc($x,v,e$)} calls 
\textsc{printinc($y_1,\vec\eta_1(v,e)$)} or \textsc{printinc($y_2,\vec\eta_2(v,e)$)} 
(it may call none of them and return $0$)
from line~\ref{line:joinincstart} to line~\ref{line:joinincend}.
Then, we observe that 
$(\zeta_x,\L_x)$ is the sum of $(\zeta_{y_1},\L_{y_1})$ and $(\zeta_{y_2},\L_{y_2})$ by $(\eta_1,\eta_2)$.
Trivially, $(\zeta_x,\L_x)$ is a $\V_x$-labeling of $T_x$.
By the induction hypothesis and Proposition~\ref{prop:joinwitness},
$(\zeta_x,\L_x)$ is a $k$-safe witness of $\Gamma_x$ in $B_x$
and every $z$-protected edge in $(\T(\zeta_x),\L_x)$ is in $\zeta_x$ for every node $z< x'$ of~$T^b$.

Let $y$ be the child of $x$ in $T^c$ and $\Gamma_y$ be the evidence at $y$. Let $T_y=T(\Gamma_y)$.

If $x$ is a shrink node,
then for each node $u$ of degree at most $2$ in $T_x$ and each incidence $(v,e)$ of $T_x$,
\[\text{\textsc{printnode($x,u$)}$=$\textsc{printnode($y,u$)} and 
\textsc{printinc($x,v,e$)}$=$\textsc{printinc($y,v,e$)}.}\]
Clearly, we have $(\zeta_x,\L_x)=(\zeta_y,\L_y)$ and $\L_x$ is a $\V_x$-labeling of $T_x$.
Thus, by the induction hypothesis and Proposition~\ref{prop:shrinkwitness}, 
$(\zeta_x,\L_x)$ is a $k$-safe witness of $\Gamma_x$ in $B_x$
and since $y$ is a join node, every $z$-protected edge in $(\T(\zeta_x),\L_x)=(\T(\zeta_y),\L_y)$
is in $\zeta_x=\zeta_y$ for every node $z < x'$ of $T^b$.

If $x$ is a trim node, 
then 
for each incidence $(v,e)$ of $T_x$,
\[\text{\textsc{printinc($x,v,e$)}$=$\textsc{printinc($y,v,e$)}.}\]
Note that $T_x$ is a subtree of $T(\Gamma_y)$.
If $T_x$ has at least two nodes,
then for a node~$v$ with $\deg_{T_y}(v)=\deg_{T_x}(v)$,
\textsc{printnode($x,v$)}$=$\textsc{printnode($y,v$)}.
If $\deg_{T_y}(v)\neq\deg_{T_x}(v)=1$ and $w$ is the neighbor of $v$ in $T_y$, but not a neighbor of $v$ in $T_x$,
then \textsc{printnode($x,v$)} calls \textsc{printsubtree($y,v,w$)},
and if $\deg_{T_y}(v)\neq\deg_{T_x}(v)=2$ and $w$, $w'$ are the neighbors of $v$ in $T_y$ that are not neighbors of $v$ in $T_x$, then 
\textsc{printnode($x,v$)} calls both \textsc{printsubtree($y,v,w$)} and \textsc{printsubtree($y,v,w'$)}.
In Algorithm~\ref{alg:print}, 
it is easy to observe that the output of \textsc{printincrev($y,v,e$)} is the reversed sequence of
the output of \textsc{printinc($y,v,e$)}
by comparing two functions \textsc{printinc($y,v,e$)} and \textsc{printincrev($y,v,e$)}.
Thus, we observe that, 
from line~\ref{line:subtreestart} to line~\ref{line:subtreeend} of Algorithm~\ref{alg:print}, 
the function \textsc{printsubtree($y,v,w$)} outputs
\[\T(\zeta_y)[S\cup V(\zeta(v,vw))\cup V(\zeta(w,vw))],\] where $S$ is the set of all nodes in the component 
$\T(\zeta_y)-v$ containing $w$.
This implies that, by the induction hypothesis, 
$(\zeta_x,\L_x)$ is a $\V_x$-labeling induced by $T_x$ from $(\zeta_y,\L_y)$.
By Proposition~\ref{prop:trimwitness}, $(\zeta_x,\L_x)$ is a $k$-safe witness of $\Gamma_x$ in $B_x$
and every $z$-protected edge in $(\T(\zeta_x),\L_x)$ is in $\zeta_x$ for every node $z \le x'$ of $T^b$.

If $T_x$ has only one node,
then 
\textsc{printnode($x,v$)} calls \textsc{printsubtreemid($y,u,v$)} and \textsc{print\-subtreemid($y,v,u$)}
for an improper $y$\=/degenerate edge $uv$ in $\Gamma_y$.
In Algorithm~\ref{alg:print},
we can easily 
observe that the function \textsc{printsubtreemid($y,u,v$)} outputs
$\T(\zeta_y)[S'\cup V(\zeta(v,uv))]$ where $S'$ is the set of all nodes in the component 
$\T(\zeta_y)-u$ containing~$v$.
So \textsc{printnode($x,v$)} will output a rooted binary tree obtained from $T(\zeta_y)$ 
by subdividing $uv$.
Then, by the induction hypothesis, 
$(\zeta_x,\L_x)$ is a $\V_x$-labeling induced by $uv$ from $(\zeta_y,\L_y)$.
By Proposition~\ref{prop:trimwitness}, $(\zeta_x,\L_x)$ is a $k$-safe witness of $\Gamma_x$ in $B_x$
and every $z$-protected edge in $(\T(\zeta_x),\L_x)$ is in $\zeta_x$ for every node $z \le x'$ of $T^b$.

We assume that $x$ is a compare node.
For a node~$v$ of $T_x$ of degree at most $2$,
\textsc{printnode($x,v$)} calls \textsc{printnode($y,v'$)} if there exists a node $v'$ of $T_y$
corresponding to $v$ and returns $0$ otherwise.
From line~\ref{line:comparestart} to line~\ref{line:compareend}, for each incidence $(v,e)$ of $T_x$, 
\textsc{printinc($x,v,e$)} will output a sequence of rooted binary trees in $\zeta_y$ 
that correspond to $(v,e)$ with respect to $\Gamma_y\tle\Gamma_x$.
More precisely, it matches with the construction of a $\V_x$-labeling of $T(\Gamma_x)$ 
induced by $T$ from $(\zeta_y,\L_y)$ where $T$ is a tree ensuring $\Gamma_y\tle\Gamma_x$.
Then, by the induction hypothesis and Proposition~\ref{prop:comparewitness},
$(\zeta_x,\L_x)$ is a $\V_x$-labeling that is $k$-safe witness of $\Gamma_x$ in $B_x$ 
and every $z$-protected edge in $(\T(\zeta_x),\L_x)$ is in $\zeta_x$ for every node $z < x'$ of $T^b$.
This completes the proof.
\end{proof}

\begin{PROP}\label{prop:correctness}
Let $\V$ be a subspace arrangement and $k$ be a nonnegative integer. 
Let $(T^b,\L^b)$ be a given rooted branch-decomposition of width $\theta$ with the root $r$
and let $\{T_v\}_{v\in V(T^b)}$ be the given set of transition matrices. 
If $\FF_{r}$ in Algorithm~\ref{alg:fullset} with 
an input $(\V,k,(T^b,\L^b),\{T_v\}_{v\in V(T^b)})$ is nonempty,
then a (rooted) branch-decomposition of~$\V$ of width at most~$k$ 
can be constructed by Algorithm~\ref{alg:print}
in time $f(k,\theta,\abs{\F})\cdot \abs{\V}$ for some function $f$.
\end{PROP}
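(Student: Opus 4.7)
The plan is to reduce everything to Lemma~\ref{lem:printzeta} by observing that the root $r$ of $T^b$ is special: since $\V_r=\V$, we have $B_r=\spn{\V}\cap\spn{\V-\V}=\{0\}$, so every condition involving the boundary space trivializes. First I would pick an arbitrary $\Gamma_r\in\FF_r$ and recover, by backtracking from $r$ through the composition tree $T^c$, the evidences $\Gamma_x$ and accompanying objects (models $(\eta_1,\eta_2)$, trees ensuring $\tle$, etc.) stored at every node $x$ of $T^c$. This backtracking traverses $T^c$ once and, since each evidence has size bounded by a function of $k$, $\theta$, $\abs{\F}$ by Lemmas~\ref{lem:ukbtime} and~\ref{lem:generatesum}, it takes time $f_0(k,\theta,\abs{\F})\cdot\abs{V(T^c)}=O(f_0(k,\theta,\abs{\F})\cdot\abs{\V})$.

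Next I would apply Lemma~\ref{lem:printzeta} at $x=r$. Since $r$ is a compare node, the lemma yields that $(\zeta_r,\L_r)$ is a $k$-safe $\V$-labeling of $T(\Gamma_r)$ that witnesses $\Gamma_r$ in $B_r=\{0\}$. Because $B_r=\{0\}$, the $k$-safe inequality in the definition of a $k$-safe witness simplifies to
\[
\dim\spn{L_t}\cap\spn{\V-L_t}\le k
\]
for every non-root node $t$ in $\zeta_r$, and by the condition (i) of a witness together with the bound $\lambda(e)\le k$ inherited from $\Gamma_r\in U_k(B_r)$, this translates exactly into the statement that the canonical branch-decomposition $(\T(\zeta_r),\L_r)$ has width at most $k$. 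Since $T(\Gamma_r)$ has at most one node (because $B_r=\{0\}$ forces $T(\Gamma_r)$ to be trivial once trimmed and compacted) the whole structure of $(\T(\zeta_r),\L_r)$ is carried by $\zeta_r$ at that one node, which is precisely a rooted binary tree whose leaves are in bijection with $\V$.

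Third, I would verify that Algorithm~\ref{alg:print} prints the post-order traversal of this tree. This is the main technical obstacle: one has to match, by structural induction on $T^c$, the recursive shape of the calls \textsc{printnode}, \textsc{printinc}, \textsc{printincrev}, \textsc{printsubtree}, \textsc{printsubtreemid} with the recursive construction of $\zeta_x$ given in the definitions just above Lemma~\ref{lem:printzeta}. The inductive claim to carry is that \textsc{printnode}$(x,v)$ outputs the sequence of rooted binary trees attached at $v$ in $\zeta_x(v)$ in post-order (returning their count), and \textsc{printinc}$(x,v,e)$ outputs the sequence $\zeta_x(v,e)$ likewise, with \textsc{printincrev} producing the reverse order and \textsc{printsubtree}/\textsc{printsubtreemid} outputting the corresponding bracketed subtree. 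The base case (leaf node of $T^c$) is immediate from line~\ref{line:i}. The inductive steps for join, shrink, trim and compare nodes mirror exactly the four constructions of $(\zeta_x,\L_x)$ in the proof of Lemma~\ref{lem:printzeta}: join corresponds to the sum of $\V_{x_i}$-labelings, shrink keeps $(\zeta_x,\L_x)$ unchanged, trim corresponds to the $\V_x$-labeling induced by $T(\Gamma_x)$ (or by an improper $x$-degenerate edge when $\Gamma_x$ has a degenerate edge, which is why the code forks at \textsc{printsubtreemid}), and compare corresponds to the $\V_x$-labeling induced by the tree ensuring $\Gamma_y\tle\Gamma_x$. Lines~\ref{line:root}--\ref{line:rootend} then emit the appropriate number of $*$'s to complete the post-order code of the tree.

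Finally, the running-time bound follows by amortizing over $T^c$. Each invocation of one of the five functions at a node $x$ of $T^c$ does only a constant (depending on $k$, $\theta$, $\abs{\F}$) number of operations on the evidence stored at $x$ before delegating to child calls; furthermore each node, edge and incidence of each $T(\Gamma_x)$ is processed $O(1)$ times across the entire recursion, and $\abs{V(T(\Gamma_x))}$ is bounded by a function of $k$ and $\theta$ by Lemma~\ref{lem:ukbtime} (and by Lemma~\ref{lem:comptime} at compare nodes). Since $\abs{V(T^c)}=O(\abs{V(T^b)})=O(\abs{\V})$, the total time is $f(k,\theta,\abs{\F})\cdot\abs{\V}$ for some function $f$, as claimed. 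The main obstacle, as noted, is bookkeeping the correspondence in the compare case, where one must invoke Lemma~\ref{lem:comptime} to keep the ensuring tree of bounded size and then pedantically check that the interleaving of \textsc{printinc}, \textsc{printnode}, and \textsc{printincrev} calls along the subdivided path produces the correct sequence prescribed by the construction preceding Proposition~\ref{prop:comparewitness}.
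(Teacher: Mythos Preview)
Your approach is essentially the same as the paper's: apply Lemma~\ref{lem:printzeta} at the root, use $B_r=\{0\}$ to conclude $T(\Gamma_r)$ is a single node and that the $k$-safe witness condition collapses to width $\le k$, and bound the running time via $\abs{V(T^c)}=O(\abs{\V})$ together with Lemmas~\ref{lem:ukbtime} and~\ref{lem:generatesum}.

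Your third paragraph, however, is redundant and reflects a small confusion. You propose to ``verify that Algorithm~\ref{alg:print} prints the post-order traversal of this tree'' by an induction matching the calls \textsc{printnode}, \textsc{printinc}, etc.\ with $\zeta_x$. But $(\zeta_x,\L_x)$ is \emph{defined} in the paper as the output of these calls (see the definitions immediately preceding Lemma~\ref{lem:printzeta}): $\zeta_x(v)$ is built from the trees output by \textsc{printnode}$(x,v)$, and $\zeta_x(v,e)$ is the sequence output by \textsc{printinc}$(x,v,e)$. So there is nothing to verify here; it is tautological. What does require induction on $T^c$ is showing that this $(\zeta_x,\L_x)$ is a $k$-safe witness of $\Gamma_x$---and that is precisely the content of Lemma~\ref{lem:printzeta}, which you are already invoking. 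The paper's proof accordingly dispenses with your third paragraph entirely: once Lemma~\ref{lem:printzeta} gives that $(\zeta_r,\L_r)$ is a $k$-safe witness, lines~\ref{line:root}--\ref{line:rootend} output exactly $\zeta_r(v)$ by definition, and the proof is done.
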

\begin{proof}
From line~\ref{line:root} to line~\ref{line:rootend},
Algorithm~\ref{alg:print} outputs 
the tree rooted at $v$ obtained by taking the disjoint union of $v$ and  
the trees output by \textsc{printnode($r,v$)} 
and adding edges between $v$ and roots of trees in \textsc{printnode($r,v$)},
where $v$ is the unique node of $T(\Gamma_r)$.
The tree is nonempty because $\FF_{r}$ is nonempty.
By definition, it is equal to $\zeta_r(v)$ and by Lemma~\ref{lem:printzeta},
the $\V_r$-labeling $(\zeta_r,\L_r)$ is a $k$-safe witness of $\Gamma_r$ in~$B_r$.
Thus, $(\T(\zeta_r),\L_r)$ is a branch-decomposition of $\V_r$ whose width is at most~$k$.
As $T(\Gamma_r)$ has only one node~$v$ and $\V_r=\V$, 
$(\zeta_r(v),\L_r)$ itself is a (rooted) branch-decomposition of~$\V$ having width at most~$k$,
which means that Algorithm~\ref{alg:print} prints a (rooted) branch-decomposition of~$\V$ 
of width at most~$k$.

The running time of Algorithm~\ref{alg:print} is 
$f(k,\theta,\abs{\F})\cdot \abs{\V}$ for some function $f$
because 
the number of nodes of $T^c$ is $O(\abs{V(T^b)})=O(\abs{\V})$ 
and the size of the evidence at each node of $T^c$ is at most
$h(k,\theta,\abs{\F})$ for some function $h$ by Lemmas~\ref{lem:ukbtime} and~\ref{lem:generatesum}.
\end{proof}

\subsection{Summary}
The following theorem summarizes the procedure and the time complexity of 
our main algorithm for a subspace arrangement.

\begin{THMMAIN}%
Let $\F$ be a finite field, let $r$ be a positive integer, and let $k$ be a nonnegative integer. 
Let $\V=\{V_1,V_2,\ldots,V_n\}$ be a subspace arrangement in $\F^r$
where each $V_i$ is given by its spanning set of $d_i$ vectors and $m=\sum_{i=1}^n d_i$.
In time $O(rm^2+(k+1)rmn + k^3n^3 + f(\abs{\F},k) n^2)$ for some function $f$, one can either
\begin{itemize}
\item find a branch-decomposition of~$\V$ having width at most~$k$ or
\item confirm that no such branch-decomposition exists.
\end{itemize}
\end{THMMAIN}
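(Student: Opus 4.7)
The plan is to combine iterative compression with the dynamic programming machinery developed in Sections~\ref{sec:transcript}--\ref{sec:algo}. First I would apply the preprocessing of Subsection~\ref{subsec:preprocessing}: by Lemma~\ref{lem:RRL} we may assume $M$ is in reduced row echelon form with no zero rows, and by Lemma~\ref{lem:CRL} we either detect some $i$ with $\dim V_i\cap\spn{\V-\{V_i\}}>k$ (in which case the branch-width exceeds $k$ and we are done) or replace the instance with an equivalent one in which every $|I_i|\le k$, hence $r\le m\le kn$. Reapplying Lemma~\ref{lem:RRL} restores the reduced row echelon form. This preprocessing costs $O(rm^2+(k+1)rmn)$ and supplies, for free, the row basis $B\subseteq\{1,\ldots,m\}$ needed throughout Subsection~\ref{subsec:bases}.

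Next I would carry out the iterative compression outlined just after the statement of \textsc{Branch-Width Compression}. Set $(T_2,\L_2)$ to be the unique branch-decomposition of $\V_2$, and for $i=3,4,\ldots,n$ obtain $(T_i',\L_i')$ from $(T_{i-1},\L_{i-1})$ by subdividing an arbitrary edge and attaching a new leaf mapped to $V_i$. Since $\dim V_i\le k$ after preprocessing, $(T_i',\L_i')$ is a branch-decomposition of $\V_i$ of width at most $2k$. I then execute one round of compression at level $i$ as follows: use Theorem~\ref{thm:computebases} on $(T_i',\L_i')$ with bound $\theta=2k$ to compute a transcript and its transition matrices in time $O(k^3 i^2)$, then call \textsc{full-set}$(\V_i,k,(T_i',\L_i'),\{T_v\})$ from Algorithm~\ref{alg:fullset}. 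By Proposition~\ref{prop:FSruntime} this runs in time $g(k,|\F|)\cdot i$, and by Proposition~\ref{prop:FSroot} the full set $\FF_{\mathfrak r}$ at the root of $T_i'$ is nonempty if and only if the branch-width of $\V_i$ is at most $k$. If $\FF_{\mathfrak r}=\emptyset$, then since the branch-width is monotone under taking subsets of $\V$ we conclude that the branch-width of $\V$ also exceeds $k$ and stop; otherwise Algorithm~\ref{alg:print}, justified by Proposition~\ref{prop:correctness}, produces $(T_i,\L_i)$ of width at most $k$ in time $g'(k,|\F|)\cdot i$, and we proceed to $i+1$.

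Summing the per-iteration costs, the transcript computations contribute $\sum_{i=3}^{n}O(k^3 i^2)=O(k^3 n^3)$ and the dynamic programming plus backtracking contribute $\sum_{i=3}^{n}O(f(|\F|,k)\cdot i)=O(f(|\F|,k)\cdot n^2)$, where $f$ absorbs both $g$ and $g'$. Combined with the one-time preprocessing cost this yields the claimed bound $O(rm^2+(k+1)rmn+k^3 n^3+f(|\F|,k)n^2)$.

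The conceptually delicate point is already discharged inside the compression subroutine: correctness requires that the full set at the root captures every achievable branch-decomposition up to the relation $\tle$, which is secured by Propositions~\ref{prop:fullsetwitness} and~\ref{prop:divide} together with the existence of a totally pure optimum from Proposition~\ref{prop:pure} and the finiteness of $U_k(B)$ from Lemma~\ref{lem:ukbtime}. The remaining care is bookkeeping: the transcript for the current width-$2k$ decomposition must be recomputed from scratch at each iteration (rather than maintained incrementally), and one must ensure that when \textsc{full-set} is handed the auxiliary decomposition $(T_i',\L_i')$, the parameter $\theta$ substituted into the running time bound is $2k$ rather than $k$, so that the $f(|\F|,k)$ in the final estimate is of the form $f(|\F|,k)=h(|\F|,2k)$ for the function $h$ supplied by Proposition~\ref{prop:FSruntime}.
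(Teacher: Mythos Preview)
Your proposal is correct and follows essentially the same approach as the paper: preprocess via Lemmas~\ref{lem:RRL} and~\ref{lem:CRL} to reduce to $r\le m\le kn$ with each $\dim V_i\le k$, then run iterative compression where each round builds a width-$2k$ decomposition of $\V_i$, computes a transcript via Theorem~\ref{thm:computebases}, invokes Algorithm~\ref{alg:fullset} and Algorithm~\ref{alg:print}, and appeals to Propositions~\ref{prop:FSruntime}, \ref{prop:FSroot}, and~\ref{prop:correctness} for the running time and correctness. The only omissions are the trivial base cases ($n\le 2$, $k=0$, and discarding zero-dimensional $V_i$), which the paper handles explicitly but which do not affect the substance of the argument.
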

\begin{figure}
\footnotesize
\centering
\tikzstyle{startstop} = [rectangle, rounded corners, minimum width=5em, minimum height=0.5cm,text centered, draw=black, fill=red!30]
\tikzstyle{io} = [trapezium, trapezium left angle=70, trapezium right angle=110, minimum width=3cm, minimum height=1cm, text centered, text width=3cm, draw=black, fill=blue!30]
\tikzstyle{io3} = [trapezium, trapezium left angle=70, trapezium right angle=110, minimum width=5em, minimum height=1cm, text centered, text width=3cm, draw=black, fill=blue!30]
\tikzstyle{io2} = [trapezium, trapezium left angle=70, trapezium right angle=110, minimum width=3cm, minimum height=1cm, text centered, text width=3cm, draw=black, fill=blue!30]
\tikzstyle{io1} = [trapezium, trapezium left angle=70, trapezium right angle=110, minimum width=5em, minimum height=0.6cm, text centered, draw=black, fill=blue!30]
\tikzstyle{process} = [rectangle, minimum width=5em, minimum height=0.6cm, text centered, draw=black, fill=orange!30]
\tikzstyle{process3} = [rectangle, minimum width=3cm, minimum height=1cm, text centered, text width=9.4cm, draw=black, fill=orange!30]
\tikzstyle{process2} = [rectangle, minimum width=3cm, minimum height=1cm, text centered, text width=9.5cm, draw=black, fill=orange!30]
\tikzstyle{process4} = [rectangle, minimum width=1cm, minimum height=1cm, text centered, text width=1.5cm, draw=black, fill=orange!30]
\tikzstyle{process1} = [rectangle, minimum width=3cm, minimum height=0.6cm, text centered, text width=8.3cm, draw=black, fill=orange!30]
\tikzstyle{decision} = [diamond, aspect=5, minimum width=2cm, minimum height=0.3cm, text centered, text width=3.5cm, draw=black, fill=green!30]
\tikzstyle{decision1} = [diamond, aspect=5, minimum width=2cm, minimum height=0.5cm, text centered, text width=2cm, draw=black, fill=green!30]
\tikzstyle{decision2} = [diamond, aspect=4, minimum width=2cm, minimum height=0.5cm, text centered, text width=5.6cm, draw=black, fill=green!30]
\tikzstyle{decision3} = [diamond, aspect=5, minimum width=2cm, minimum height=0.5cm, text centered, text width=4.7cm, draw=black, fill=green!30]
\tikzstyle{arrow} = [thick,->,>=stealth]

\begin{tikzpicture}[node distance=2cm]
\node (start) [startstop] {Start};
\node (in1) [io1, below of=start, yshift=1.2cm] {Input a subspace arrangement in $\F^r$.};
\node (pro1) [process, below of=in1, yshift=1.2cm] {Run the preprocessing described in Subsection~\ref{subsec:preprocessing}.};
\node (dec3) [decision2, below of=pro1,yshift=0.2cm] {Does the preprocessing confirm that the branch-width is greater than $k$?};
\node (pro7) [process1, below of=dec3] {
Let $\V=\{V_1,V_2,\ldots,V_n\}$ be the resulting subspace arrangement.};
\node (dec4) [decision1, below of=pro7, yshift=0.6cm] {Is $\abs{\V}\le2$?};
\node (out4) [io3, right of=dec4, xshift=3cm] {Output an arbitrary branch-de\-com\-po\-si\-tion of the input.};
\node (pro2) [process1, below of=dec4] {
Let $\ell=3$. Let $(T_2,\L_2)$ be the unique
branch-decomposition of $\{V_1,V_2\}$.};
\node (pro5) [process2, below of=pro2, yshift=0.4cm] {Convert a given branch-decomposition $(T_{\ell-1},\L_{\ell-1})$ of $\{V_i\}_{i=1}^{\ell-1}$ to a branch-decomposition $(T'_\ell,\L'_\ell)$ of $\{V_i\}_{i=1}^{\ell}$ of width at most $2k$. %
Run the algorithm in Theorem~\ref{thm:computebases} to have a transcript.
Run Algorithm~\ref{alg:fullset} with $\{V_i\}_{i=1}^{\ell}$, $k$, and $(T'_\ell,\L'_\ell)$.};
\node (dec1) [decision, below of=pro5,yshift=0.2cm] {Is $\mathcal{F}_{root}$ an empty set?};
\node (pro4) [process3, below of=dec1, yshift=0.5cm] {Run Algorithm~\ref{alg:print} with an element of $\mathcal{F}_{root}$ to obtain a branch-decomposition $(T_\ell,\L_\ell)$ of $\{V_i\}_{i=1}^{\ell}$ of width at most~$k$.};
\node (dec2) [decision1, below of=pro4, yshift=0.8cm] {Is $\ell=n$?};
\node (out1) [io2, below of=dec2, yshift=0.4cm] {Output a branch-decomposition $(T_n,\L_n)$ of the input of width at most~$k$.};
\node (stop) [startstop, below of=out1, yshift=0.5cm] {Stop};
\node (out2) [io, right of=out1, xshift=2.3cm] {Confirm that the branch-width of the input is greater than $k$.};
\node (pro6) [process4, right of=dec2, xshift=2cm] {Increase $\ell$ by $1$.};

\draw [arrow] (start) -- (in1);
\draw [arrow] (in1) -- (pro1);
\draw [arrow] (pro1) -- (dec3);
\draw [arrow] (dec3) -- node[anchor=west] {No} (pro7);
\draw [arrow] (pro7) -- (dec4);
\draw [arrow] (dec4) -- node[anchor=west] {No} (pro2);
\draw [arrow] (dec3.east) node[above,xshift=1cm] {Yes}
-|([xshift=3cm,yshift=0.5cm]out2.north)-| ([xshift=2cm]out2);
\draw [arrow] (pro2) -- (pro5);
\draw [arrow] (pro5) -- (dec1);
\draw [arrow] (dec1) -- node[anchor=west] {No} (pro4);
\draw [arrow] (pro4) -- (dec2);
\draw [arrow] (dec2) -- node[anchor=west] {Yes} (out1);
\draw [arrow] (out1) -- (stop);
\draw [arrow] (out4.south)--++(0,-1cm)--++(2cm,0) |- ([yshift=-6mm]stop);
\draw [arrow] (dec4) -- node[anchor=south] {Yes} (out4);
\draw [arrow] (dec1) -| node[anchor=south, pos=0.1] {Yes} ([xshift=1cm]out2.north);
\draw [arrow] (out2) |- ([yshift=6mm]stop);
\draw [arrow] (dec2) ->node [above] {No} (pro6.west);
\draw [arrow] (pro6.east) -- ++(0.8cm,0)|- (pro5);
\end{tikzpicture}
\caption[A flowchart for an iterative compression.]
{A flowchart for the algorithm in Theorem~\ref{thm:summary-brw}.}
\label{fig:flowchart}
\end{figure}

\begin{proof}
  We may assume that $n>1$ because otherwise there is no branch-de\-com\-po\-si\-tion and the branch-width is $0$ by convention.
We preprocess the input, given as an $r\times m$ matrix~$M$ over a fixed finite field $\F$ with an ordered partition $\mathcal{I}=\{I_1,I_2,\ldots,I_n\}$ of $\{1,2,\ldots , m\}$ and an integer $k$ such that $V_i=\col(M[I_i])$ for every $i$, as follows.
We apply the reductions of Lemmas~\ref{lem:RRL} and \ref{lem:CRL} and then the reduction of Lemma~\ref{lem:RRL} again. Let $M'$ be the resulting $r'\times m'$ matrix and $\mathcal{I}'$ be the resulting ordered partition $\mathcal I'=\{I_1',I_2',\ldots,I_n'\}$ of $\{1,2, \ldots , m'\}$. By Lemmas~\ref{lem:RRL} and~\ref{lem:CRL}, it is straightforward to verify that
\begin{enumerate}[(i)]
\item $r'\leq m'\le kn$,
\item $M'$ is in reduced row echelon form with no zero rows,
\item for each $i$, the column vectors of $M'[I_i']$ are linearly independent and $\abs{I_i'}\leq k$,
\item $\col (M'[I_i']) \subseteq \col (M'[\{1,2,\ldots,m'\}-I_i'])$, %
\item   $(M,\mathcal I,k)$ is a YES instance with a branch-decomposition $(T,\L)$ if and only if $(M',\mathcal I',k)$ is a YES instance with $(T,\L')$ where $\L'$ maps a leaf $v$ to $\col(M'[I_i'])$ whenever $\L$ maps $v$ to $\col(M[I_i])$,
\end{enumerate}
and otherwise, we confirm that the branch-width of~$\V$ exceeds $k$. 
We may assume that $k>0$ 
because if $I_i'=\emptyset$ for all $i$, then every branch-decomposition has width~$0$.
We notice that the running time of preprocessing requires
$O(rm^2+(k+1)rmn)$. 
Henceforth, we assume that the $M=M'$, $\mathcal I=\mathcal I'$, $V_i=\col (M'[I_i'])$ to simplify notations. 

We may also assume that $\dim V_i\neq 0$ for all $i$ because otherwise we delete all such $V_i$ and later we can extend a branch-decomposition of $\V-\{V_i\}$ to that of~$\V$ of the same width easily. 
After the preprocessing, if $n=1$,
then an arbitrary branch-decomposition has width $0$ and so we simply output 
an arbitrary branch-decomposition of~$\V$. 
If $n=2$, then the branch-width is at most~$k$
because $\dim V_i\le k$ for each $i=1,2$ by (iii).
Thus we may assume that  $n\geq 3$.

We will apply iterative compression %
on $\V_i=\{V_1,\ldots , V_i\}$ for $i=3,\ldots , n$. 
We initially start with a trivial branch-decomposition $(T_2,\L_2)$ of $\V_2=\{V_1,V_2\}$ having width at most~$k$ such that $T_2$ is a tree with two nodes.
We  carry out a \textsc{compression step} for each  $i=3,\ldots , n$ iteratively as follows.
(See Figure~\ref{fig:flowchart} for a flowchart.)
\begin{enumerate}[(1)]
\item By adding a new leaf $v$ to $T_{i-1}$ and extending $\L_{i-1}$ to map $v$ to $V_i$, we create a branch-decomposition $(T_i',\L_i')$ of $\V_i$. %
Note that the width of $(T_i',\L_i')$ is at most $2k$ because $(T_{i-1},\L_{i-1})$ has width at most~$k$ and $V_i$ has dimension at most~$k$.
\item We use the algorithm in Theorem~\ref{thm:computebases} to compute 
transition matrices $\{T_v\}_{v\in V(T_i')}$ of the transcript corresponding to $(T_i',\L_i')$
in time $O(k^3 n^2)$. Note that the submatrix $M[I_1\cup I_2\cup\cdots\cup I_i]$ is already in reduced row echelon form and so we can apply Theorem~\ref{thm:computebases} by ignoring zero rows.
\item We compute the full set for $\V_i$ by using Algorithm~\ref{alg:fullset} 
  with an input $(\V_i,k,(T_i',\L_i'),\allowbreak
  \{T_v\}_{v\in V(T_i')})$
in time $f_1(k,2k,\abs{\F}) i$ for some function $f_1$ by Proposition~\ref{prop:FSruntime}.
  If the full set at the root is empty, 
  then the branch-width of $\V_i$ is larger than $k$.
  If so, we conclude that the branch-width of~$\V$ is larger than $k$ and stop.
\item If the full set at the root is nonempty, then we %
run Algorithm~\ref{alg:print} 
to obtain a branch-decomposition $(T_i,\L_i)$ of width at most~$k$ 
in time $f_2(k,2k,\abs{\F}) i$ for some function $f_2$
by Proposition~\ref{prop:correctness}. 
\end{enumerate}
If this algorithm finds $(T_n, \L_n)$, then $(T_n, \L_n)$ is a branch-decomposition of~$\V$ 
having width at most~$k$. 
For each $i$, (1)--(4) runs in at most $O(k^3n^2)+f(k,\abs{\F}) n$ time for some function $f$ and therefore the total running time of this step is $O(k^3n^3)+f(k, \abs{\F}) n^2$.
\end{proof}

\captionsetup[algorithm]{style=algori}
\captionof{algorithm}{Print a branch-decomposition of width at most~$k$}\label{alg:print}
	\begin{algorithmic}[1] 
	\Procedure{decomposition}{$r$}
		\State{$\alpha\leftarrow$\textsc{printnode}($r$, the unique node of $T(\Gamma_r)$)} \label{line:root}
		\If{$\alpha\ge2$}
			\State{print `$*$' $(\alpha-1)$ times}
		\EndIf \label{line:rootend}
	\EndProcedure

	\Function{printnode}{$x,v$}
		\If{$x$ is a leaf node}
                \State{print $i\in\{1,2,\ldots,\abs{\V}\}$ such that $x$ corresponds to $V_i$}	\label{line:i}
                \State{}\Return 1
		\ElsIf{$x$ is a join node}
			\State{let $y_1$ and $y_2$ be two children of $x$ in $T^c$}
			\If{the degree of $v$ in $T(\Gamma_x)$ is at most $2$}
				\If{$v$ is a branch node in $\eta_1$}
					\State{let $v'$ be a node of $T(\Gamma_y)$ such that $\etabar_1(v')=v$}
					\State{}\Return \textsc{printnode}($y_1,v'$) %
				\Else
					\State{let $v'$ be a node of $T(\Gamma_y)$ such that $\etabar_2(v')=v$}
					\State{}\Return \textsc{printnode}($y_2,v'$) %
				\EndIf		
			\EndIf
                        \State{}\Return 0
		\EndIf
		\State{let $y$ be the child of $x$ in $T^c$} 
		\If{$x$ is a compare node}
			\If{there exists the node $v'$ of $T(\Gamma_y)$ corresponding to $v$}
				\State{}\Return \textsc{printnode}($y,v'$)
			\EndIf
				\State{}\Return 0
		\ElsIf{$x$ is a shrink node} 
			\State{}\Return \textsc{printnode}($y,v$)	
		\ElsIf{$x$ is a trim node and $T(\Gamma_x)$ has at least two nodes}
			\State{$\alpha\gets 0$}
			\If{$\deg_{T(\Gamma_y)}(v)=\deg_{T(\Gamma_x)}(v)$}
				\State{$\alpha\leftarrow$ $\alpha$ $+$ \textsc{printnode}($y,v$)}
			\Else
				\For{each edge $vw$ incident with~$v$ in $T(\Gamma_y)$}
					\If{$vw$ is not in $T(\Gamma_x)$}
						\State{$\alpha\leftarrow$ $\alpha$ $+$ \textsc{printsubtree}($y,v,w$)}
					\EndIf
				\EndFor
			\EndIf
			\State{}\Return $\alpha$	
		\ElsIf{$x$ is a trim node and $T(\Gamma_x)$ has only one node}
			\State{let $uv$ be a degenerate edge of $\Gamma_y$}
			\State{\Return  \textsc{printsubtreemid}($y,u,v$)$+$\textsc{printsubtreemid}($y,v,u$)}
		\EndIf
	\EndFunction

	\Function{printinc}{$x,v,e$}
		\If{$x$ is a join node}
			\State{let $y_1$ and $y_2$ be two children of $x$ in $T^c$}
			\State{$\alpha\gets0$}
			\For{$i=1$ to $2$} \label{line:joinincstart}
				\If{$v$ is a branch-node in $\eta_i$ and $e'=\eta_i(e)$ is an edge of $T(\Gamma_{y_i})$} %
					\State{let $v'$ be a node of $T(\Gamma_{y_i})$ such that $\etabar_i(v')=v$}
					\State{$\alpha\leftarrow$ $\alpha$ $+$ \textsc{printinc}($y_i,v',e'$)}
				\EndIf
			\EndFor  \label{line:joinincend}
			\State{}\Return $\alpha$	
		\EndIf
		\State{let $y$ be the child of $x$ in $T^c$}
		\If{$x$ is a compare node} 
			\State{let $w$ be the end of $e$ other than $v$}	\label{line:comparestart}
			\If{$v<w$ (according to the canonical linear ordering of $V(T(\Gamma_x))$)}
				\State{let $v_0v_1\cdots v_\ell$ be the path from $v$ to $w$
							in the tree ensuring $\Gamma_x\tle\Gamma_y$} 
				\State{$\alpha\gets0$}
				\For{$i=1$ to $\ell$}
					\If{$v_{i-1}$ is a branch node in the subdivision of $T(\Gamma_y)$}
						\State{let $v_{i-1}'$ be the node of $T(\Gamma_y)$ corresponding to $v_{i-1}$}
						\State{let $e'$ be the edge incident with $v_{i-1}'$ that corresponds to $v_{i-1}v_i$}
						\State{$\alpha\leftarrow$ $\alpha$ $+$ \textsc{printinc}($y,v_{i-1}',e'$)}
					\EndIf
					\If{$v_{i}$ is a branch node in the subdivision of $T(\Gamma_y)$}
						\State{let $v_{i}'$ be the node of $T(\Gamma_y)$ corresponding to $v_{i}$}
						\State{let $e'$ be the edge incident with $v_{i}'$ that corresponds to $v_{i-1}v_i$}
						\State{$\alpha\leftarrow$ $\alpha$ $+$ \textsc{printincrev}($y,v_{i}',e'$)}
						\If{$i<\ell$}
							\State{$\alpha\leftarrow$ $\alpha$ $+$ \textsc{printnode}($y,v_{i}'$)}
						\EndIf
					\EndIf
				\EndFor
				\State{}\Return $\alpha$
			\EndIf		\label{line:compareend}
		\EndIf
		\If{$x$ is a trim node or a shrink node}
			\State{}\Return \textsc{printinc}($y,v,e$)
		\EndIf			
	\EndFunction

	\Function{printincrev}{$x,v,e$}
		\If{$x$ is a join node}
			\State{let $y_1$ and $y_2$ be two children of $x$ in $T^c$}
			\State{$\alpha\gets0$}
			\For{$i=2$ to $1$}
				\If{$v$ is a branch-node in $\eta_i$ and $e'=\eta_i(e)$ is an edge of $T(\Gamma_{y_i})$} %
					\State{let $v'$ be a node of $T(\Gamma_{y_i})$ such that $\etabar_i(v')=v$}
					\State{$\alpha\leftarrow$ $\alpha$ $+$ \textsc{printincrev}($y_i,v',e'$)}
				\EndIf
			\EndFor
			\State{}\Return $\alpha$
		\EndIf
		\State{let $y$ be the child of $x$ in $T^c$}
		\If{$x$ is a compare node} 
			\State{let $w$ be the end of $e$ other than $v$} 
			\If{$v<w$ (according to the canonical linear ordering of $V(T(\Gamma_x))$)}
				\State{let $v_0v_1\cdots v_\ell$ be the path from $v$ to $w$
							in the tree ensuring $\Gamma_x\tle\Gamma_y$}
				\State{$\alpha\gets0$}
				\For{$i=\ell$ to $1$}
					\If{$v_{i}$ is a branch node in the subdivision of $T(\Gamma_y)$}
						\If{$i<\ell$}
							\State{$\alpha\leftarrow$ $\alpha$ $+$ \textsc{printnode}($y,v_{i}'$)}
						\EndIf
						\State{let $v_{i}'$ be the node of $T(\Gamma_y)$ corresponding to $v_{i}$}
						\State{let $e'$ be the edge incident with $v_{i}'$ that corresponds to $v_{i-1}v_i$}
						\State{$\alpha\leftarrow$ $\alpha$ $+$ \textsc{printinc}($y,v_{i}',e'$)}
					\EndIf
					\If{$v_{i-1}$ is a branch node in the subdivision of $T(\Gamma_y)$}
						\State{let $v_{i-1}'$ be the node of $T(\Gamma_y)$ corresponding to $v_{i-1}$}
						\State{let $e'$ be the edge incident with $v_{i-1}'$ that corresponds to $v_{i-1}v_i$}
						\State{$\alpha\leftarrow$ $\alpha$ $+$ \textsc{printincrev}($y,v_{i-1}',e'$)}
					\EndIf
				\EndFor
				\State{}\Return $\alpha$
			\EndIf	 
		\EndIf
		\If{$x$ is a trim node or a shrink node}
			\State{}\Return \textsc{printincrev}($y,v,e$)
		\EndIf			
	\EndFunction

	\Function{printsubtree}{$x,u,v$} \label{line:subtreestart}
		\State{$\alpha\leftarrow$ \textsc{printinc}($x,u,uv$)}
		\State{$\alpha\leftarrow$ $\alpha$ $+$ \textsc{printincrev}($x,v,uv$)}
		\For{each neighbor $w$ of $v$ with $w\neq u$}
			\State{$\alpha\leftarrow$ $\alpha$ $+$ \textsc{printsubtree}($x,v,w$)}
		\EndFor
		\State{$\alpha\leftarrow$ $\alpha$ $+$ \textsc{printnode}($x,v$)}
		\If{$\alpha\ge1$}
			\State{print `$*$' $(\alpha-1)$ times}
			\State{}\Return $1$
		\Else
			\State{}\Return $0$
		\EndIf
	\EndFunction	\label{line:subtreeend}

	\Function{printsubtreemid}{$x,u,v$}	\label{line:subtreemidstart}
		\State{$\alpha\leftarrow$ \textsc{printincrev}($x,v,uv$)}
		\For{each neighbor $w$ of $v$ with $w\neq u$}
			\State{$\alpha\leftarrow$ $\alpha$ $+$ \textsc{printsubtree}($x,v,w$)}
		\EndFor
		\State{$\alpha\leftarrow$ $\alpha$ $+$ \textsc{printnode}($x,v$)}
		\If{$\alpha\ge1$}
			\State{print `$*$' $(\alpha-1)$ times}
			\State{}\Return $1$
		\Else
			\State{}\Return $0$
		\EndIf
	\EndFunction 	\label{line:subtreemidend}
	\end{algorithmic}

  \providecommand{\bysame}{\leavevmode\hbox to3em{\hrulefill}\thinspace}
  \providecommand{\MR}{\relax\ifhmode\unskip\space\fi MR }
  \providecommand{\MRhref}[2]{%
    \href{http://www.ams.org/mathscinet-getitem?mr=#1}{#2}
  }
  \providecommand{\href}[2]{#2}

\end{document}